\tikzset{
 overlap/.style args = {#1/#2}{minimum height=#1,
               minimum width=#2,
               rounded corners=2.2mm, fill=black!50, opacity=0.6, sloped},
               party/.style = {draw, circle, fill=black, minimum width=1mm},
obi/.style={black!50, decorate, decoration={
            snake,
            segment length=2mm,
            amplitude=0.2mm}
            }
}
\tikzset{shifted path/.style args={from #1 to #2 by #3}{insert path={
let \p1=($(#1.east)-(#1.center)$),
\p2=($(#2.east)-(#2.center)$),\p3=($(#1.center)-(#2.center)$),
\n1={veclen(\x1,\y1)},\n2={veclen(\x2,\y2)},\n3={atan2(\y3,\x3)} in
(#1.{\n3+180+asin(#3/\n1)}) to (#2.{\n3-asin(#3/\n2)})
}}}
\theoremstyle{definition}
\newtheorem{theorem}{Theorem}
\newtheorem{proposition}{Proposition}
\newtheorem{lemma}[theorem]{Lemma}
\newtheorem{definition}[theorem]{Definition}
\newtheorem{prop}{Property}
\DeclareMathOperator\tr{Tr}
\newcommand\mean[1]{\left\langle #1 \right\rangle}
\def\id{\mathbb{I}}%
\def\conv{\mathrm{Conv}}
\def\stab{\operatorname{STAB}}
\def\be{\operatorname{BETA}}
\begin{document}

\title{Simultaneous variances of Pauli strings, weighted independence numbers,\protect\\ and a new kind of perfection of graphs}

\author{Zhen-Peng Xu}\email{zhen-peng.xu@ahu.edu.cn}
\affiliation{School of Physics, Anhui University, Hefei 230601, China}
\author{Jie Wang}\email{wangjie212@amss.ac.cn}
\affiliation{State Key Laboratory of Mathematical Sciences, Academy of Mathematics and Systems Science, Chinese Academy of Sciences, Beijing, China}
\author{Qi Ye}
\affiliation{Institute for Interdisciplinary Information Sciences, Tsinghua University, Beijing 100084, China}

\author{Gereon Ko{\ss}mann}
\affiliation{Institute for Quantum Information, RWTH Aachen University, 52074 Aachen, Germany}

\author{Ren\'{e} Schwonnek}\email{rene.schwonnek@itp.uni-hannover.de}
\affiliation{Institut f\"{u}r Theoretische Physik, Leibniz Universit\"{a}t Hannover, 30167 Hannover, Germany}

\author{Andreas Winter}\email{andreas.winter@uni-koeln.de}
\affiliation{Department Mathematik/Informatik---Abteilung Informatik,\protect\\ Universit\"at zu K\"oln, Albertus-Magnus-Platz, 50923 K\"oln, Germany}
\affiliation{ICREA {\&} Grup d'Informaci\'{o} Qu\`{a}ntica, Departament de F\'{\i}sica,\protect\\ Universitat Aut\`{o}noma de Barcelona, 08193 Bellaterra (Barcelona), Spain}
\affiliation{Institute for Advanced Study, Technische Universit\"at M\"unchen,\protect\\ Lichtenbergstra{\ss}e 2a, 85748 Garching, Germany}

\begin{abstract}
A set of Pauli stings is well characterized by the graph that encodes its commutatitivity structure, i.e., by its frustration graph.  
This graph  provides a natural interface between graph theory and quantum information, which we explore in this work.
We investigate all aspects of this interface for a special class of graphs that bears tight connections between the groundstate structures of a spin systems and topological structure of a graph. We call this class $\hbar$-perfect, as it extends the class of  perfect and $h$-perfect graphs.

Having an $\hbar$-perfect graph opens up several applications: we find efficient schemes for entanglement detection, a connection to the complexity of shadow tomography, tight uncertainty relations and a construction for computing good lower on bounds ground state energies.  Conversely this also induces quantum algorithms for computing the independence number. Albeit those algorithms do not immediately promise an advantage in runtime, we show that an approximate Hamilton encoding of the independence number can be achieved with an amount of qubits that typically scales logarithmically in the number of vertices. 
We also we also determine the  behavior of $\hbar$-perfectness under basic graph operations and evaluate their prevalence among all graphs.
\end{abstract}

\date{14 November 2025}

\maketitle

\section{Introduction}
The beauty of graphs is that they often arise in the most unexpected situations as just the right structure to describe an intriguing problem. It is thus no surprise that their appearances in quantum information science are also plentiful. This includes, for example, graph states in quantum entanglement~\cite{hein2006entanglement,audenaert2005entanglement}, orthogonality graphs in quantum contextuality~\cite{cabello2014graph}, graphs in surface codes in quantum error correction \cite{Sarkar2024} and many more. 
In the present work, we investigate a special class of graphs that arise as anti-commutativity (``frustration'') graphs of spin systems~\cite{sachdev1993gapless,chapman2023unified}. These graphs, which we will call $\hbar$-perfect, have the beauty of tightly connecting a long list of physical properties of a spin system, including uncertainty relations, the ground state structure of Hamiltonians, and the complexity of shadow tomography, to its topological structure.  

A central concept in the investigation of graphs are graph properties invariant under isomorphism, and more specifically characteristic parameters, such as the chromatic number, the clique number, or the independence number~\cite{Diestel2025GraphT}. 
For others, see the encyclopedic list in \cite{ISGCI}. 
{These graph parameters allow us to sort graphs into categories and classify them by relations between different parameters, which finally facilitates the applications of graph theory.}
A common example for this procedure are perfect graphs~\cite{berge1961farbung}. 
They are those for which the chromatic number and the clique number coincide for any induced subgraphs. Considering such subclasses of graphs often turns out to be fruitful, as they usually come with their own heap of methods and theorems that are available and valid only here rather than in full generality.
This often also includes properties of a physical system described by the graph.
For example, we have that the orthogonality graph of measurement directions is perfect if and only if  there is no quantum contextuality~\cite{cabello2014graph}.

We follow a similar procedure here. We consider the weighted independence number $\alpha(G,w)$ and the recently introduced weighted beta number $\beta(G,w)$ \cite{xu2023bounding}, asking for graphs where these numbers are equal for all weights $w$: due to their similarity with perfect and $h$-perfect graphs~\cite{fonlupt1982transformations}, we will refer to such graphs as $\hbar$-perfect. 
Whereas $\alpha(G,w)$ encodes information on the topology of a graph, $\beta(G,w)$ arises as an invariant of its possible representations on spin systems.

The scope of this work is to understand basic properties of $\hbar$-perfect graphs, provide tools for handling them analytically and numerically, and shine a light on several spots in quantum information theory where they seem to give just the right structure for simplifying generally difficult problems. However, we can only make a beginning, but hope to interest the reader in the numerous questions our work raises.

\subsection{Background and motivation}
{The independence number $\alpha(G)$ is a pivotal parameter in graph theory with profound implications across theoretical and applied domains. {Remarkably, as a basic character of a graph, it determines not only the zero-error capacity of a classical communication channel~\cite{Shannon1956TheZE,Alon2006TheSC}, but also a clear separation between classical models and quantum theory in Bell scenarios~\cite{cabello2014graph}.}
Theoretically, its equivalence to the clique number of the complement graph yields the fundamental lower bound $\alpha(\bar{G})$ for the chromatic number $\chi(G)$~\cite{Diestel2025GraphT}, a core concept behind results like the four color map theorem. 
Besides, perfect graphs $G$ can be fully characterized by weighted independence numbers $\alpha(G,w)$ together with weighted Lov\'asz numbers $\vartheta(G,w)$ as those satisfying $\alpha(G,w)=\vartheta(G,w)$ for any non-negative weight vector $w$~\cite{lovasz2019Graphs}.
In application, Claude Shannon established $\alpha(G)$ as the key parameter determining the zero-error capacity of a graph in communication theory~\cite{Shannon1956TheZE}. Computationally, the problem of finding $\alpha(G)$, known as the maximum independent set problem, is not only NP-hard itself but serves as a root of intractability~\cite{Karp1972ReducibilityAC}. Many other NP-hard problems, including minimum vertex cover and maximum clique, are polynomially reducible to it, underscoring its central role in classifying computational complexity~\cite{Cormen2001IntroductionTA}.} 

Very recently, the beta number, as a characteristic parameter of graphs, and its relation to the independence number have been exhibited in a series of works~\cite{de2023uncertainty,hastings2022optimizing,xu2023bounding,moran2024Uncertainty}.
For a given set of Pauli observables ${S_1, \dots, S_n}$ acting on multiple qubits, i.e. a spin system, the graph $G$ that collects their commutation/anti-commutation relations is often also called the anti-commutation graph~\cite{gokhale2019minimizing} or the frustration graph~\cite{chapman2020characterization,chapman2023unified}. 
In Ref.~\cite{xu2023bounding}, we formally introduced the number $\beta(G,w)$ as the smallest generalized radius of an ellipsoid whose principal axes relate to each other according to a positive vector $w$, containing the set of attainable expectation value tuples $(\langle S_1 \rangle_\rho,\dots,\langle S_n \rangle_\rho)$ for arbitrary states $\rho$, see Figure~\ref{fig:jnrvsbeta}. There is, however, also a purely algebraic definition \eqref{eq:beta}.
A geometric intuition behind this number can be gained from the task of generalizing the picture of a Bloch sphere, which represents the possible expectation values triples of three Pauli operators $X$, $Y$, and $Z$, to larger number of Pauli strings and concomitant higher dimensions.
As it turns out, $\hbar$-perfect graphs are exactly those graphs for which this picture generalizes in a meaningful manner; see Section~\ref{sec:def}.

Without an explicit name, this number has been considered before in the contexts of 
uncertainty relations \cite{de2023uncertainty} and   algebraic relaxations to the Hamiltonian ground state problem \cite{hastings2022optimizing} in the Sachdev-Ye-Kitaev (SYK) model~\cite{sachdev1993gapless}. 
Moreover, a general connection between $\beta(G,w)$ and the ground state problem of a many-body Hamiltonian can be drawn \cite{xu2023bounding,hastings2022optimizing}.
Very recently, a quantity closely related to the beta numbers has turned out to be  the key parameter in the sample complexity of shadow tomography for Pauli strings,  implying more possible applications~\cite{king2024triplyefficientshadowtomography,chen2024Optimal}.

The notion of $\hbar$-perfect graphs hence represents a fertile link between different fields as illustrated in Fig.~\ref{fig:structure}: it connects central problems in many-body physics, sample complexity in shadow tomography, detection of entanglement structure, and quantum uncertainty relations with weighted independence numbers of a graph, which are well understood and broadly investigated. 
This link opens up the possibility of a transfer of methods, tools, and results. On the one hand, we can employ tools for calculating the independence number in order to get ground state energies and other interesting many-body quantities. This will be especially useful when dealing with perfect graphs, which are a notable subclass of $\hbar$-perfect graphs.
On the other hand, we can use classical methods developed in many-body physics, or even algorithms running on quantum computers, in order to now tackle a graph problem.  

A central question of this work is to determine for which graphs the link can be exploited, specifically to characterize the set of $\hbar$-perfect graphs and explore their applications. A conjecture presented in \cite{de2023uncertainty} suggests that all graphs should be $\hbar$-perfect. However, this conjecture is refuted, with the smallest counterexample of $\hbar$-imperfectness provided by the anti-heptagon~\cite{xu2023bounding}.
\begin{figure}
    \centering
    \includegraphics[width=0.85\linewidth]{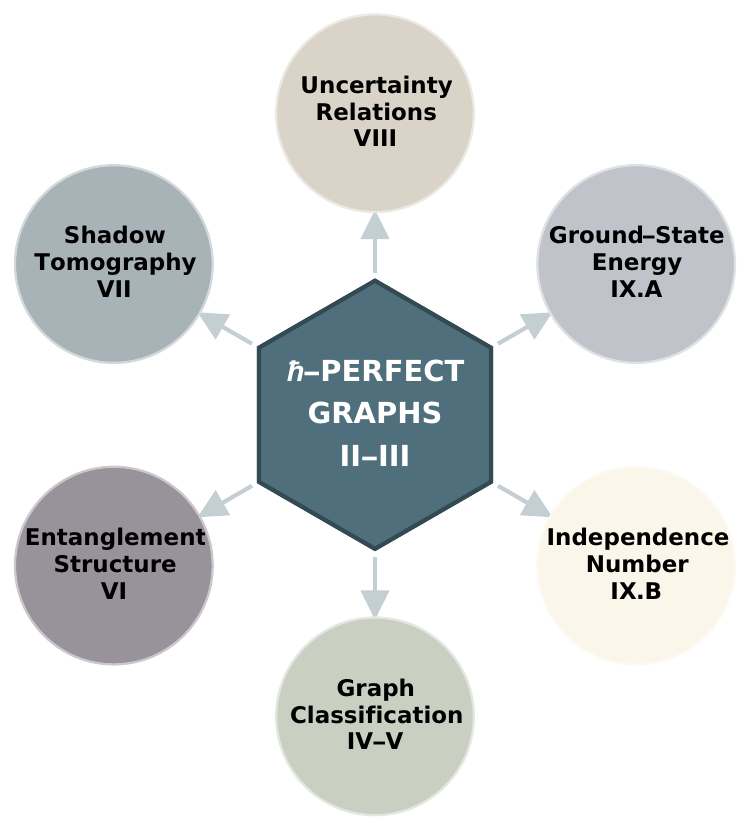}
    \caption{The graph-theoretic framework centered on $\hbar$-perfect graphs and its various applications distributed in each section.}
    \label{fig:structure}
\end{figure}

\subsection{Outline and main results}
The fact that not all graphs are  $\hbar$-perfect motivates a closer look at those which are. 
We start by introducing the exact definition of $\hbar$-perfect graphs. 
As a key technical contribution, 
we present a clear geometric picture capturing the interplay of the beta numbers $\beta(G,w)$ and independence numbers $\alpha(G,w)$ in Proposition~\ref{ob:beta} and Theorem~\ref{ob:stabbe}.  

We then turn to the behavior of $\hbar$-perfectness under typical graph operations. A list of six operations that preserve this property is collected in section~\ref{sec:props}. This is accompanied by the basic section~\ref{thm:perfect}, where we observe that all perfect and $h$-perfect graphs are also $\hbar$-perfect. When combined in a constructive manner, all these properties can serve as criteria for determining $\hbar$-perfectness of a given graph $G$. 

Unfortunately, these criteria are not sufficient in all cases. To address this problem, we introduce practical numerical methods for testing $\hbar$-perfectness. At their core stands the problem of computing $\beta(G,w)$.  
We present a toolbox containing three numerical methods for this.

(i) A complete semidefinite programming hierarchy bounding $\beta(G,w)$ from above, based on state polynomial optimization (see \cite{Klep2023StatePP}) together with a simplified version which extends the Lov\'asz-type hierarchy in \cite{moran2024Uncertainty}. 

(ii) A mean-field type approximation that gives a converging outer hierarchy of eigenvalue problems towards the beta number (Theorem~\ref{thm:deFinetti_approximation}). Moreover, using recent developments in the symmetrization of SDP's, we show that each level of the hierarchy can be computed in polynomial time in the level of the hierarchy.

(iii) A see-saw method method that complements the other methods by inner bounds. Unlike typical such approaches, here we are able to warm-start our see-saw with the outer approximations of previous methods. This leads to a reliable gap estimate. 

We then apply these methods to analyze all graphs with $n \leq 9$ vertices for $\hbar$-perfectness. The results reveal that all graphs with $n \leq 6$ vertices are $\hbar$-perfect, with only one $\hbar$-imperfect graph for $n = 7$ (up to isomorphism), and a small number of $\hbar$-imperfect graphs for $n = 8$ and $n = 9$ (more details in Section~\ref{sec:fre}).
Together with the spirit of the conjecture of \cite{de2023uncertainty} these findings may suggest that $\hbar$-perfect graphs are typical. We consider this question in more detail in  section~\ref{sec:fre}. As it turns out, the probability that a random graph on $n$ vertices is $\hbar$-perfect tends to zero as $n$ goes to infinity. A concrete upper bound on the scaling of this probability, exponential in $n^2$, i.e.~as a power of the total number of graphs, is formulated in Theorem~\ref{thm:haufigkeit}. 

Knowing that a particular graph of interest is  $\hbar$-perfect opens up substantial simplification in several applications, which we summarize in the following.
\begin{enumerate}
    \item \emph{Entanglement detection---}We develop a method for detecting entanglement in section~\ref{sec:entanglement} that, as we show, goes beyond the capabilities of linear witnesses, as it can detect more entangled states with less measurement resources. Building on this, we show how to  transform each graph state into an effective criterion for assessing multipartite entanglement structures, and estimate measures of entanglement from the violation of the criterion.
    \item \emph{--Sample complexity of shadow tomography--} We carry out the explicit formulation of the sample complexity parameter of shadow tomography in section~\ref{sec:shadow} for Pauli strings when the underlying frustration graph is $\hbar$-perfect, and provide lower and upper bounds in the general case. 
    \item \emph{Uncertainty relations---}For measurements arising from a set of Pauli-strings yielding a $\hbar$-perfect graph, we show that the computation of variance-based quantum uncertainty relations in section~\ref{sec:uncertainty_relations} turns into a linear program over the stable set polytope (see also~\cite{schwonnek2018uncertainty,kaniewski2019maximal}).  
    \item \emph{Ground state energy estimates---}Consider a graph for which the computation of $\alpha(G,w)$ is feasible, $\hbar$-perfectness of this graph now allows us to compute properties of spin-systems. In detail, we show in section~\ref{sec:app} how the ground state energy of a variant of a $XZ$-Hamiltonian can be approximated in this way. 

    \item \emph{Qubit encoding of the independence number---}As $n$ qubits have $4^n$ different Pauli strings, graphs with up to $n \sim 4^n$ vertices can be represented.
    Mapping a graph $G$ into the commutativity structure of a set of Pauli strings can  hence be a quite efficient quantum encoding. This  becomes  evident when compared to the typical one qubit to vertex encoding \cite{Farhi2014AQA}. 
    Nevertheless, a dense encoding always bears the challenge of potentially making relevant quantities of an object inaccessible. For example, we do not know how to formulate typical benchmark problems like max-cut on this level.
    A central result of this work is the insight that the independence number is an exception to this. In Section \ref{ssec:encoding} we show how to formulate the computation of $\alpha(G,w)$ as an approximate ground state problem of a single Hamiltonian. In an NISQ friendly setting we only need $O(\log(n)^2/ \epsilon^2) $ qubits for this. Given full stack quantum computing resources, this bound can even be improved further. The according computation leads to a quantum inspired classical algorithm that can compute the exact value of $\alpha(G,w)$ for $\hbar$-perfect graphs in a subexponential time $O(2^{c \sqrt{n}})$. In these instances we can hence improve on the best known general classical runtime scaling of $O(1.1996^n)$ exponentially.

    \item \emph{Generalized beta number---}In Proposition~\ref{prop:convergence_generalized_beta}, we consider a generalization $\beta(G,w,k)$, arising from approximating the joint numerical range by algebraic surfaces of order $k$. We show that the computation of this number will retrieve the independence number $\alpha(G,w)$ for the limit $k$ going to infinity for all graphs. This gives rise to qubit efficient methods for estimating independence numbers for $\hbar$-imperfect graphs. 
\end{enumerate}

The rest of the article is organized as follows. Basic definitions and concepts are stated in Section~\ref{sec:def}. We explore the properties of $\hbar$-perfect graphs in Section~\ref{sec:props}, develop numerical tools in Section~\ref{sec:numerical}, and investigate the statistical prevalence of $\hbar$-perfect graphs in Section ~\ref{sec:fre}. In Section~\ref{sec:entanglement} we explore quantum entanglement. In Section~\ref{sec:shadow}, we solve the sample complexity parameter in shadow tomography. In Section~\ref{sec:app}, we consider further implications of $\hbar$-perfect graphs. Including the task of estimating ground state energies from the clique structure of a graph, and encoding the independence number of a graph into a quantum system.  
We finally display our conclusions and outlooks in Section~\ref{sec:conclusion}.

\section{Definitions}
\label{sec:def}
In the following, we first motivate $\hbar$-perfect graphs from realizations as anti-commutation graphs for Pauli strings and then extract an abstract definition. In the following, we denote by $\mathcal{S}(\mathcal{H})$ the set of density matrices and by $\langle A \rangle_\rho$ the expectation value of the observable $A$ w.r.t. $\rho \in \mathcal{S}(\mathcal{H})$. A Pauli string $S_i$ of length $m$ is a tensor product of Pauli matrices $X, Y, Z$ or $\id$ with $m$ factors. It is straightforward to check that each pair of Pauli strings either commutes or anti-commutes. 
For a given set of Pauli strings $\mathcal{S}=\{S_1,\dots,S_n\}$, its frustration graph $G$ is defined as a graph with $n$ vertices that has an edge between vertices $i$ and $j$ if the strings $S_i$ and $S_j$ anti-commute and has no edge otherwise~\cite{chapman2020characterization}. Corresponding to a concrete realization of a frustration graph $G$ with Pauli strings $\mathcal{S}$, we make the following definition.

\begin{definition}[Weighted beta number~\cite{xu2023bounding}] 
For a given frustration graph $G$ w.r.t. $\{S_i\}$ and a vector $w \in \mathbb{R}^n_+$, we define the weighted beta number as 
	\begin{equation}\label{eq:beta}
        \beta(G,w) \coloneqq  \sup_{\rho \in \mathcal{S}(\mathcal{H})} \sum\nolimits_{i} w_i\langle S_i \rangle_\rho^2.
    \end{equation}
\end{definition}
From the definition, it is not immediately clear that $\beta(G,w)$ is indeed a universal object of the graph $G$ and not dependent on the concrete realization $\mathcal{S}$. 
However, in \cite{xu2023bounding}  it is shown that any two sets of Pauli strings $\mathcal{S}$ and $\mathcal{S}^\prime$ sharing the same frustration graph $G$, lead to the same value of the right hand in Eq.~\eqref{eq:beta}. This justifies calling the weighted beta number a graph parameter. Besides, a realization is said to be basic if the Pauli strings in the realization are of shortest length.

Given the perspective of optimizing the sum of variances in \eqref{eq:beta}, we could alternatively consider the weighted beta number as an optimization w.r.t. the following set
    \begin{equation}
	    {\cal Q}(\{S_i\}) \coloneqq \big\{ \big(\mean{S_1}_{\rho}^2, \dots, \mean{S_n}_{\rho}^2\big) \mid \rho\in \text{States} \big\}.
   \end{equation}
The set is termed the simultaneous (or joint) variance range of $\{S_1,\ldots,S_n\}$ here and the weighted beta number is the support function of this set with respect to the parameter $w \in \mathbb{R}^n_+$.

Compared to the fact that $\beta(G,w)$ is a graph parameter,
${\cal Q}(\{S_i\})$ depends on the concrete realization of the frustration graph.
For example, ${\cal S}_1 = \{XX,YY,Z Z\}$ and ${\cal S}_2= \{XX\id, YY\id, Z Z Z\}$ share the same frustration graph with three isolated vertices.
However, the point $(1,1,0)$ is in ${\cal Q}({\cal S}_2)$ but not in ${\cal Q}({\cal S}_1)$, as illustrated in Fig.~\ref{fig:jnrvsbeta}(d). 
But, since support functions only depend on the bipolar of the underlying set, we conclude directly that the bipolar of ${\cal Q}(\{S_i\})$ is independent of the concrete realization, which we call $\be(G)$ and prove a constructive description of it in the next proposition.

\begin{figure}[htpb]
	\centering
	(a)\includegraphics[width=0.2\textwidth]{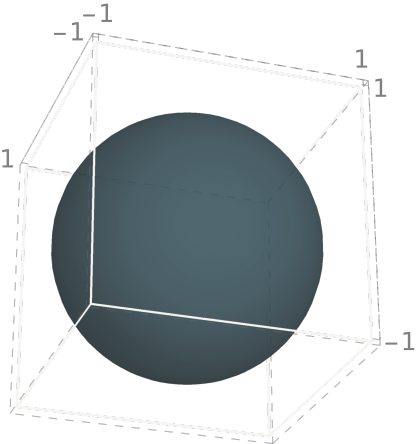}
	(b)\includegraphics[width=0.2\textwidth]{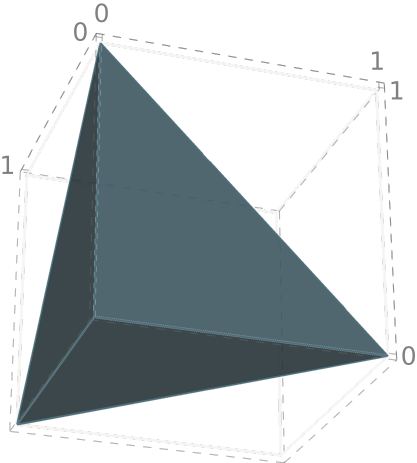}\\
	\hspace{2em} $(\langle X\rangle,\langle Y\rangle, \langle Z\rangle)$\hspace{6em} 
	$(\langle X\rangle^2,\langle Y\rangle^2, \langle Z\rangle^2)$\\ \vspace{1em}
	(c)\includegraphics[width=0.2\textwidth]{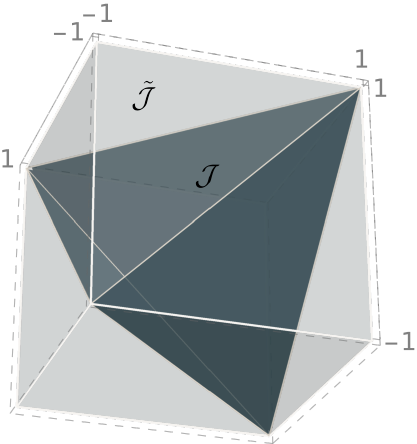}
	(d)\includegraphics[width=0.2\textwidth]{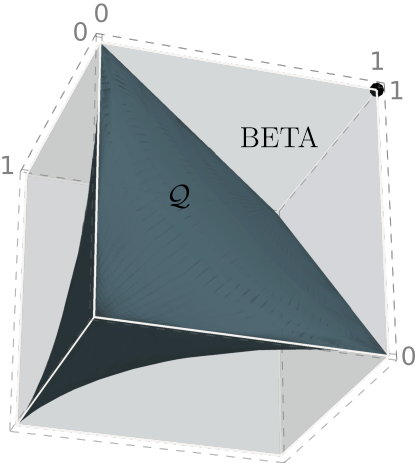}\\
	\hspace{1.5em} $(\langle XX\rangle,\langle YY\rangle, \langle ZZ\rangle)$\hspace{4em} 
	$(\langle XX\rangle^2,\langle YY\rangle^2, \langle ZZ\rangle^2)$\\ \vspace{0.5em}
	\caption{The joint numerical range ${\cal J}$, the convex hull $\tilde{\cal J}$ of the flips of ${\cal J}$ along each axis, its square ${\cal Q}$ and beta body $\be$ for Pauli strings $\{X, Y, Z\}$ and $\{XX, YY, Z Z\}$, where the beta body $\be$ is the convex hull of the corner generated by ${\cal Q}$. For $\{X, Y, Z\}$,  $\tilde{\cal J} $ coincides with ${\cal J}$ and $\be $ coincides with ${\cal Q}$. For $\{XX, YY, Z Z\}$, $\tilde{\cal J}$ (c, in gray, the cube) is strictly larger than ${\cal J}$ (c, in blue), and $\be$ (d, in gray, the cube) is strictly larger than ${\cal Q}$ (d, in blue, not convex, not containing the point $(1,1,0)$). Such a $\be$ coincides with the ${\cal Q}$ of $\{XX\id, YY\id, ZZZ\}$. In each of these two cases the frustration graph is $\hbar$-perfect, and $\tilde{J}$ is fully characterized by the intersections of the ellipsoids defined by the facets of $\be$.}
	\label{fig:jnrvsbeta}
\end{figure}

\begin{proposition}
\label{ob:beta}
For a frustration graph $G$, any realization with Pauli strings ${\cal S}$ leads to the same set
\begin{equation}
	\be(G) \coloneqq \conv(\downarrow\hspace{-0.3em} {\cal Q}({\cal S})) \cap \mathbb{R}_+^n,
\end{equation}
where we define $\downarrow\hspace{-0.3em}T \coloneqq \{x \in \mathbb{R}^n \,|\,\exists y \in T\ \mathrm{s.t.}\ y - x \in \mathbb{R}_+^n\}$ for a set $T \subseteq \mathbb{R}^n$. This is a so-called convex corner. 
Consequently, we have
\begin{align}
        \beta(G,w) = \max_{v\in \be(G)} \sum_i w_i v_i, \quad \text{for all} \  w\in \mathbb{R}_+^n
\end{align}
\end{proposition}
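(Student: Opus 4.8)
The plan is to prove the three claims in turn: (i) that $\be(G)$ is independent of the realization $\mathcal{S}$; (ii) that $\be(G)$ is a convex corner; and (iii) that $\beta(G,w)$ is the support function of $\be(G)$ over $\mathbb{R}_+^n$. The key observation underlying all three is that for nonnegative weights $w\in\mathbb{R}_+^n$ the support function of $\mathcal{Q}(\mathcal{S})$ coincides with the support function of $\conv(\downarrow\!\mathcal{Q}(\mathcal{S}))\cap\mathbb{R}_+^n$: taking convex hull never changes a support function, and both the downward closure $\downarrow$ and the intersection with $\mathbb{R}_+^n$ leave the support function unchanged \emph{when restricted to $w\ge 0$}, since for such $w$ the supremum of $\sum_i w_i v_i$ is attained at points with all coordinates as large as possible, and the downward closure only adds points that are dominated coordinatewise. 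Hence $\sup_{v\in\mathcal{Q}(\mathcal{S})}\langle w,v\rangle = \max_{v\in\be(G)}\langle w,v\rangle = \beta(G,w)$ by the definition \eqref{eq:beta}, which simultaneously gives (iii) and, combined with the already-cited fact from \cite{xu2023bounding} that $\beta(G,w)$ depends only on $G$, forces the bipolar-type body $\be(G)$ to be realization-independent: two different realizations yield convex corners with the same support function over $\mathbb{R}_+^n$, and a closed convex corner is determined by that restricted support function (this is the bipolar/antiblocker duality for convex corners).

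For the realization-independence (i), I would make the last point precise: a closed convex corner $C\subseteq\mathbb{R}_+^n$ (i.e.\ a closed convex down-closed set meeting $\mathbb{R}_+^n$) is recovered from its support function $h_C(w)=\max_{v\in C}\langle w,v\rangle$ evaluated only on $w\in\mathbb{R}_+^n$, via $C=\{v\in\mathbb{R}_+^n : \langle w,v\rangle\le h_C(w)\ \forall w\in\mathbb{R}_+^n\}$. This follows from separating hyperplane arguments together with the fact that down-closedness means any separating functional can be taken nonnegative. Since two realizations $\mathcal{S},\mathcal{S}'$ of the same $G$ give $\be$-bodies with identical support functions on $\mathbb{R}_+^n$ (both equal to $\beta(G,\cdot)$), they coincide. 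One has to check $\be(G)$ is closed and bounded: boundedness is immediate since $\mathcal{Q}(\mathcal{S})\subseteq[0,1]^n$ (as $\langle S_i\rangle_\rho^2\le 1$), and closedness follows because $\mathcal{Q}(\mathcal{S})$ is the continuous image of the compact set of states, its downward closure intersected with $\mathbb{R}_+^n$ is compact, and convex hulls of compact sets in $\mathbb{R}^n$ are compact.

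For (ii), that $\be(G)$ is a convex corner, one just assembles the defining properties: convexity from the $\conv$; down-closedness (in $\mathbb{R}_+^n$) because $\downarrow\!\mathcal{Q}(\mathcal{S})$ is down-closed and intersecting with $\mathbb{R}_+^n$ preserves this; nonemptiness because $0\in\mathcal{Q}(\mathcal{S})$ (take a maximally mixed or appropriate state) so $0\in\be(G)$; and full-dimensionality (if that is part of the intended meaning of ``convex corner'') because each standard basis vector direction can be realized — for any single $i$ there is a state with $\langle S_i\rangle_\rho^2=1$, so $e_i\in\downarrow\!\mathcal{Q}$, giving a full-dimensional body. Whichever precise axioms of ``convex corner'' the paper uses (typically: compact, convex, down-closed, full-dimensional in $\mathbb{R}_+^n$), each follows routinely from these remarks.

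The main obstacle is the clean statement and proof of the bipolar duality for convex corners that drives step (i): one must argue carefully that restricting the support function to nonnegative $w$ loses no information for a down-closed body, and conversely that passing from $\mathcal{Q}(\mathcal{S})$ to $\conv(\downarrow\!\mathcal{Q}(\mathcal{S}))\cap\mathbb{R}_+^n$ genuinely does not change the restricted support function. Both are elementary but need the sign bookkeeping done right — in particular the claim that for $w\in\mathbb{R}_+^n$, $\max$ over $\downarrow\!\mathcal{Q}(\mathcal{S})$ equals $\max$ over $\mathcal{Q}(\mathcal{S})$ uses $w\ge0$ essentially, and the intersection with $\mathbb{R}_+^n$ does not lower the max because $\mathcal{Q}(\mathcal{S})\subseteq\mathbb{R}_+^n$ already. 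Everything else (compactness, the displayed formula for $\beta(G,w)$ as a maximum rather than supremum since $\be(G)$ is compact, and citing \cite{xu2023bounding} for realization-independence of $\beta$) is bookkeeping.
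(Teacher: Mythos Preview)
Your proposal is correct and follows essentially the same route as the paper: both show realization-independence by noting that the support function on $\mathbb{R}_+^n$ equals $\beta(G,\cdot)$ (realization-independent by \cite{xu2023bounding}) and that a down-closed convex set in $\mathbb{R}_+^n$ is determined by this restricted support function, with the paper unpacking the latter via an explicit separating-hyperplane contradiction and the reduction of the separating normal $w$ to its positive part $w^+$. Your framing via antiblocker/bipolar duality for convex corners is the same argument in more abstract language; the paper does not separately verify the convex-corner axioms, so your treatment is slightly more thorough on that point.
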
	
\begin{proof}
See Appendix~\ref{sec:proofs1}.
\end{proof}
At the core of the definition of $\hbar$-perfect graphs is the comparison of the weighted beta number with the weighted independence number of a graph, which we define next. For a graph with a vertex set $V$, a subset $V' \subseteq V$ is called independent or stable if no two vertices in $V'$ are adjacent. The independence number $\alpha(G)$ denotes the maximum size of the independent set. For a given $w$, the weighted independence number is given by 
\begin{align}
    \alpha(G,w)\coloneqq\max_{V'\subseteq V \text{ independent}}   \sum\nolimits_{i\in V'} w_i.
\end{align}
Here we can also take the alternative perspective and regard $\alpha(G,w)$ as the support function of the polytope  
\begin{align}
    \stab(G)=\conv\bigl( \{(v_1,\dots,v_n)| \substack{ v_iv_j=0  \text{ if } i\sim_G j ,\\ v_i\in\{0,1\} } \} \bigr),
\end{align}
which is commonly referred to as the stable set polytope~\cite{lovasz2019Graphs}. 

The fact that $\alpha(G,w) \le \beta(G,w)$ for any $w \in \mathbb{R}_+^n$ implies the following observation.
\begin{theorem}
\label{ob:stabbe}
    Given a frustration graph $G$, we have the inclusion
	$\stab(G) \subseteq \be(G)$.
\end{theorem}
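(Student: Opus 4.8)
The plan is to show that every vertex of $\stab(G)$—that is, every indicator vector $\chi_{V'}$ of an independent set $V' \subseteq V$—already lies in $\be(G)$; since $\be(G)$ is convex (it is a convex corner by Proposition~\ref{ob:beta}), the inclusion of the convex hull follows immediately. So fix a realization of the frustration graph $G$ by Pauli strings $\mathcal{S} = \{S_1,\dots,S_n\}$ and an independent set $V'$. Because no two $S_i$ with $i \in V'$ anti-commute, the operators $\{S_i : i \in V'\}$ pairwise commute, hence are simultaneously diagonalizable, and since each is a $\pm 1$-valued involution there is a common eigenvector $\ket{\psi}$ with $S_i\ket{\psi} = \ket{\psi}$ for all $i \in V'$. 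Taking $\rho = \ket{\psi}\bra{\psi}$ gives $\langle S_i\rangle_\rho^2 = 1$ for $i \in V'$ and $\langle S_i\rangle_\rho^2 \ge 0$ for $i \notin V'$, so the point $(\langle S_1\rangle_\rho^2,\dots,\langle S_n\rangle_\rho^2)$ dominates $\chi_{V'}$ coordinatewise. Hence $\chi_{V'} \in \downarrow\! \mathcal{Q}(\mathcal{S}) \subseteq \conv(\downarrow\!\mathcal{Q}(\mathcal{S}))$, and since $\chi_{V'} \in \mathbb{R}_+^n$ we get $\chi_{V'} \in \be(G)$.

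With all vertices of $\stab(G)$ shown to lie in the convex set $\be(G)$, we conclude $\stab(G) = \conv\{\chi_{V'} : V' \text{ independent}\} \subseteq \be(G)$, which is the claim. Equivalently, one can phrase the whole argument at the level of support functions using Proposition~\ref{ob:beta}: the existence of the common $+1$-eigenvector shows $\sum_{i \in V'} w_i \le \sup_\rho \sum_i w_i \langle S_i\rangle_\rho^2 = \beta(G,w)$ for every independent $V'$ and every $w \in \mathbb{R}_+^n$, so $\alpha(G,w) \le \beta(G,w)$ for all $w \in \mathbb{R}_+^n$, and taking the support-function description of both convex corners gives the polytope inclusion. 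The paper in fact already flags the inequality $\alpha(G,w) \le \beta(G,w)$ as the input, so the cleanest write-up simply records that this inequality, valid for all non-negative $w$, is exactly the statement that the support function of $\stab(G)$ is everywhere dominated by that of $\be(G)$, whence $\stab(G) \subseteq \be(G)$.

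I do not expect a genuine obstacle here; the only point requiring a word of care is the passage from the pointwise inequality of support functions to the set inclusion, which uses that $\stab(G)$ is a closed convex set and is therefore recovered as the intersection of the half-spaces $\{v : \langle w,v\rangle \le \alpha(G,w)\}$ over $w \in \mathbb{R}_+^n$ together with the sign constraints—standard convex-geometry duality for convex corners. A second minor point is to make sure the common eigenvector exists on the given Hilbert space $\mathcal{H}$: this is automatic because commuting Hermitian matrices have a joint eigenbasis and each $S_i$ has $+1$ in its spectrum, so some joint eigenspace has all the relevant eigenvalues equal to $+1$ (indeed $\prod_{i\in V'}\tfrac{1}{2}(\id+S_i)$ is a nonzero projector onto exactly that subspace). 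Everything else is bookkeeping.
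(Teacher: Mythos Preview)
Your overall strategy is sound and matches the paper's: both routes you sketch---placing each indicator $\chi_{V'}$ directly into $\be(G)$, or comparing support functions via $\alpha(G,w)\le\beta(G,w)$---lead to the result, and the paper's own proof is precisely the support-function/separating-hyperplane argument you describe at the end.

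However, your direct construction contains a genuine error. You claim that because the commuting Pauli strings $\{S_i:i\in V'\}$ are simultaneously diagonalizable and each has spectrum $\{\pm1\}$, there is a common eigenvector with eigenvalue $+1$ for \emph{every} $S_i$, and that $\prod_{i\in V'}\tfrac12(\id+S_i)$ is a nonzero projector. This is false: take $\mathcal{S}=\{XX,YY,ZZ\}$ on two qubits. These pairwise commute (so $V'=\{1,2,3\}$ is independent), but $XX\cdot YY\cdot ZZ=-\id$, so no vector can be a $+1$-eigenvector of all three simultaneously, and the product of the three projectors is identically zero.

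The fix is immediate and makes your argument correct: you do not need eigenvalue $+1$, only a joint eigenvector. Any common eigenvector $|\psi\rangle$ satisfies $S_i|\psi\rangle=\epsilon_i|\psi\rangle$ with $\epsilon_i\in\{\pm1\}$, hence $\langle S_i\rangle_\rho^2=\epsilon_i^2=1$ for all $i\in V'$, which is all that is required to dominate $\chi_{V'}$ coordinatewise. With this correction your vertex argument goes through, and it is in fact the standard justification of the inequality $\alpha(G,w)\le\beta(G,w)$ that the paper takes as input.
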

\begin{proof}
    See Appendix~\ref{sec:proofs1}.
\end{proof}
With the above result in our hands, we can define the main object of investigation by 
\begin{definition}[$\hbar$-Perfectness]
\label{basicdef}
    A graph $G$ is called $\hbar$-perfect if 
    $\stab(G) = \be(G)$, or equivalently, for any $w\in \mathbb{R}_+^n$,
       $\beta(G,w) = \alpha(G,w)$.
\end{definition}
An  object closely related to $\cal{Q}$ is the joint numerical range 
\begin{equation}
{\cal J}(\{S_i\}) =  \big\{ \big(\mean{S_1}_{\rho}, \dots, \mean{S_n}_{\rho} \big) \mid \rho\in \text{States} \big\},
\end{equation}
i.e., the set that contains all tuples of expectation values that could be measured on a state $\rho$ using the available Pauli strings. It effectively encodes all the information about the quantum mechanical state space that could be observed in a measurement of a set of Pauli stings. Note that, for example, the expectation value of a Hamiltonian which is a linear combination of the given set of Pauli strings can be regarded as a linear functional acting on this object. Hence, all the information of ground state energies is encoded in the geometry of this set. 
Flipping the signs (e.g. by declaring up to be down and down to be up) in the outcome of a Pauli string measurement is a natural symmetry of the underlying physics. On the level of the joint numerical range, this corresponds to a reflection along the corresponding axis. 
Denote by $\tilde{\cal J}(\{S_i\})$ the convex hull of the flips of ${\cal J}(\{S_i\})$ along all axes.
Then the weighted beta number as the maximum of a linear function on $\be(G)$, i.e., $\sum_i \langle S_i\rangle^2 \le \beta(G,w)$, defines an ellipsoid containing $\tilde{\cal J}({\cal S})$ and ${\cal J}({\cal S})$. The intersection of all such ellipsoids leads to an approximation of $\tilde{\cal J}({\cal S})$. 
For the set of Pauli strings $\cal S = \{X, Y, Z\}$, the geometric object $\tilde{\cal J}({\cal S})$ coincides with ${\cal J}({\cal S})$, which is a unit ball. The surface is known as the Bloch sphere, as illustrated in Fig.~\ref{fig:jnrvsbeta}(a), which corresponds to the only non-trivial linear constraint of $\be(G)$ that $\langle X\rangle^2+\langle Y\rangle^2+\langle Z\rangle^2 \le 1$. For any ${\cal S}$ with $\hbar$-perfect frustration graph $G$, $\tilde{\cal J}({\cal S})$ is determined by the finite set of ellipsoids defined by the facets of $\be(G)$.

\section{\texorpdfstring{Properties of $\mathbf{\hbar}$-perfect graphs}{}}
\label{sec:props}
In the following, we collect some basic properties of $\hbar$-perfect graphs. 
Properties 1-6 concern the behavior of $\hbar$-perfect graphs under typical graph operations. Then the last property formulated as Theorem~\ref{thm:perfect} sets a relation to other classes of graphs. These properties can be used in a constructive manner for testing $\hbar$-perfectness. For example, these properties resolve the $\hbar$-perfectness of all but two of $853$ non-isomorphic graphs with $7$ vertices. One of the remaining graphs is the $\hbar$-imperfect graph anti-heptagon, and another one is proven to be $\hbar$-perfect in another manner in Appendix~\ref{ssec:hper2}.

To start, the union (see Fig.~\ref{fig:prod}) 
of $\hbar$-perfect graphs gives again an $\hbar$-perfect graph. Conversely, $\hbar$-perfectness is also inherited by induced subgraphs. We have:\footnote{All the proofs of the properties in this section are provided in Appendix~\ref{sec:proofs1}.}
\begin{prop}[Fully connected union]
\label{thm:connected_union}
    If $G_1, G_2$ are $\hbar$-perfect, and $G$ is the fully connected disjoint union of $G_1$ and $G_2$, then $G$ is also $\hbar$-perfect.
\end{prop}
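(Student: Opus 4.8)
Write $V_1,V_2$ for the vertex sets of $G_1,G_2$; then $G$ has vertex set $V_1\sqcup V_2$, carries all edges of $G_1$ and of $G_2$, and in addition every edge between $V_1$ and $V_2$. Split a weight $w\in\mathbb{R}^n_+$ as $w=(w_1,w_2)$ accordingly. The plan is to establish the two identities
\begin{align}
\alpha(G,w)&=\max\bigl(\alpha(G_1,w_1),\alpha(G_2,w_2)\bigr),\\
\beta(G,w)&=\max\bigl(\beta(G_1,w_1),\beta(G_2,w_2)\bigr),
\end{align}
valid for every $w\in\mathbb{R}^n_+$. Granting these, $\hbar$-perfectness of $G_1$ and $G_2$ gives $\beta(G_k,w_k)=\alpha(G_k,w_k)$ for $k=1,2$, hence $\beta(G,w)=\alpha(G,w)$ for all $w\in\mathbb{R}^n_+$, which is exactly $\hbar$-perfectness of $G$ by Definition~\ref{basicdef}.

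The $\alpha$-identity is purely combinatorial: any vertex of $V_1$ is adjacent in $G$ to any vertex of $V_2$, so an independent set of $G$ lies entirely in $V_1$ or entirely in $V_2$; since the subgraphs induced on $V_1$ and $V_2$ are $G_1$ and $G_2$, the independent sets of $G$ are precisely those of $G_1$ together with those of $G_2$, and taking maximum weights gives the identity. For the $\beta$-identity I would fix realizations $\{S_i\}_{i\in V_1}$ of $G_1$ on $m_1$ qubits and $\{T_j\}_{j\in V_2}$ of $G_2$ on $m_2$ qubits, and realize $G$ on $m_1+m_2+1$ qubits by $P_i=S_i\otimes\id^{\otimes m_2}\otimes X$ for $i\in V_1$ and $R_j=\id^{\otimes m_1}\otimes T_j\otimes Z$ for $j\in V_2$: using $X^2=Z^2=\id$ and $XZ=-ZX$ one checks that $\{P_i\}\cup\{R_j\}$ has precisely the anticommutation pattern of $G$, so by realization-independence of the beta number it computes $\beta(G,\cdot)$. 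Evaluating \eqref{eq:beta} on a state $\rho_1\otimes\tau\otimes|+\rangle\langle+|$ with $X|+\rangle=|+\rangle$ makes $\langle R_j\rangle=0$ and $\langle P_i\rangle=\langle S_i\rangle_{\rho_1}$, so $\beta(G,w)\ge\beta(G_1,w_1)$; symmetrically, using an eigenstate of $Z$ on the auxiliary qubit, $\beta(G,w)\ge\beta(G_2,w_2)$.

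The reverse inequality $\beta(G,w)\le\max(\beta(G_1,w_1),\beta(G_2,w_2))$ is the heart of the argument. I would first put the beta number in operator-norm form: from $\sum_i w'_i c_i^2=\max_{\|u\|=1}\bigl(\sum_i u_i\sqrt{w'_i}\,c_i\bigr)^2$ applied to $c_i=\langle S'_i\rangle_\rho$, exchanging the maximum over $u$ with $\sup_\rho$, and using $\sup_\rho\langle H\rangle_\rho^2=\|H\|_{\mathrm{op}}^2$, one obtains for any realization $\{S'_i\}$ of a graph $G'$ that $\beta(G',w')=\max_{y\neq0}\|\sum_i y_i S'_i\|_{\mathrm{op}}^2\big/\sum_i y_i^2/w'_i$. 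For the realization above, a general real combination is $H=A\otimes X+B\otimes Z$ with $A=\bigl(\sum_{i}y_i S_i\bigr)\otimes\id^{\otimes m_2}$ and $B=\id^{\otimes m_1}\otimes\bigl(\sum_{j}z_j T_j\bigr)$; since $A$ and $B$ act on complementary registers they commute, so the cross term in $H^2=(A^2+B^2)\otimes\id+AB\otimes(XZ+ZX)$ vanishes and $H^2=(A^2+B^2)\otimes\id$. As $A^2$ and $B^2$ are positive and supported on complementary registers, $\|H\|_{\mathrm{op}}^2=\|A^2+B^2\|_{\mathrm{op}}=\|\sum_i y_i S_i\|_{\mathrm{op}}^2+\|\sum_j z_j T_j\|_{\mathrm{op}}^2$. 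Bounding the two terms by $\beta(G_1,w_1)\sum_i y_i^2/(w_1)_i$ and $\beta(G_2,w_2)\sum_j z_j^2/(w_2)_j$ and applying the elementary mediant bound $\frac{\beta_1 s+\beta_2 t}{s+t}\le\max(\beta_1,\beta_2)$ for $s,t\ge0$ then yields the claim.

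I expect the main obstacle to be this last step, and within it the choice of realization of the join: the auxiliary qubit must carry a mutually anticommuting pair ($X$ on the $G_1$-side, $Z$ on the $G_2$-side) so that the two halves anticommute across the cut, while the two halves otherwise live on disjoint registers, so that the cross term $AB\otimes(XZ+ZX)$ in $H^2$ collapses; this collapse is precisely what decouples the halves and turns a potential \emph{sum} of the two beta numbers into their \emph{maximum}. One also needs the operator-norm reformulation of $\beta$ (alternatively one could argue directly with states by decomposing $\rho$ in the $X$-eigenbasis of the auxiliary qubit into $2\times2$ blocks and using $\rho\succeq0$ to trade off the contribution of the diagonal blocks to the $\langle P_i\rangle$ against that of the off-diagonal block to the $\langle R_j\rangle$, but this route is longer). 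The remaining ingredients---the combinatorial $\alpha$-identity, the easy direction of the $\beta$-identity, and the final assembly---are routine.
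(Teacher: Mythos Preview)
Your proof is correct and follows essentially the same route as the paper: both establish the identities $\alpha(G,w)=\max_k\alpha(G_k,w_k)$ and $\beta(G,w)=\max_k\beta(G_k,w_k)$, with the $\beta$-identity resting on the anticommutation of the two halves across the cut, and both pass through the operator-norm characterization $\beta(G',w')=\max_y\|\sum_i y_i S'_i\|^2/\sum_i y_i^2/w'_i$. The only difference is packaging: the paper works with an arbitrary realization of $G$, forms the normalized combinations $B_1,B_2$, and invokes the fact that anticommuting unit-norm Hermitian operators satisfy $\langle B_1\rangle^2+\langle B_2\rangle^2\le1$; your explicit realization on disjoint registers plus an auxiliary qubit makes the same anticommutation collapse visible as $H^2=(A^2+B^2)\otimes\id$, which is a cleaner and more self-contained way to see why the cross term dies.
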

This property holds due to the corresponding properties of the weighted beta number and the independence number.
More explicitly, it can be proven that $\beta(G,w)$ is the maximum of $\beta(G_1,w_1)$ and $\beta(G_2,w_2)$, and $\alpha(G,w)$ is the maximum of $\alpha(G_1,w_1)$ and $\alpha(G_2,w_2)$, where $w_1$ and $w_2$ are the corresponding parts of $w$ for $G_1$ and $G_2$.
Hence, the $\hbar$-perfectness of $G_1$ and $G_2$ leads to the $\hbar$-perfectness of $G$.
\begin{prop}[Fully disconnected union]
\label{thm:disconnected_union}
    If $G_1, G_2$ are $\hbar$-perfect and $G$ is the fully disconnected disjoint union of $G_1$ and $G_2$, then $G$ is also $\hbar$-perfect.
\end{prop}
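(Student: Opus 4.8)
The plan is to establish that both $\beta(\cdot,w)$ and $\alpha(\cdot,w)$ are \emph{additive} under the edgeless disjoint union, in direct parallel with the way Property~\ref{thm:connected_union} exploited that they are given by a \emph{maximum} under the fully connected union. Write $n=n_1+n_2$ and split $w\in\mathbb{R}_+^n$ as $(w_1,w_2)$ with $w_1\in\mathbb{R}_+^{n_1}$, $w_2\in\mathbb{R}_+^{n_2}$ the restrictions of $w$ to $V(G_1)$ and $V(G_2)$. By Definition~\ref{basicdef} it suffices to show $\beta(G,w)=\alpha(G,w)$ for every such $w$.

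The independence-number side is elementary: since $G$ has no edge between $V(G_1)$ and $V(G_2)$, a set $V'\subseteq V(G)$ is independent in $G$ if and only if $V'\cap V(G_i)$ is independent in $G_i$ for $i=1,2$; hence $\alpha(G,w)=\alpha(G_1,w_1)+\alpha(G_2,w_2)$ (equivalently $\stab(G)=\stab(G_1)\times\stab(G_2)$). For the beta number I would choose a convenient realization: take Pauli realizations $\{S_i^{(1)}\}$ of $G_1$ on $\mathcal{H}_1$ and $\{S_j^{(2)}\}$ of $G_2$ on $\mathcal{H}_2$, and on $\mathcal{H}_1\otimes\mathcal{H}_2$ set $T_i=S_i^{(1)}\otimes\id$ for $i\in V(G_1)$ and $T_j=\id\otimes S_j^{(2)}$ for $j\in V(G_2)$. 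These are Pauli strings whose anti-commutation graph is exactly $G$: within each block the relations are inherited, and any $T_i$ from the first block commutes with any $T_j$ from the second because they act on distinct tensor factors. By Proposition~\ref{ob:beta}, $\beta(G,w)$ is realization-independent, so it may be computed with $\{T_i\}$.

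Now for any state $\rho$ on $\mathcal{H}_1\otimes\mathcal{H}_2$, writing $\rho_1=\tr_2\rho$ and $\rho_2=\tr_1\rho$, one has $\langle T_i\rangle_\rho=\langle S_i^{(1)}\rangle_{\rho_1}$ and $\langle T_j\rangle_\rho=\langle S_j^{(2)}\rangle_{\rho_2}$, so
\[
\sum_i w_i\langle T_i\rangle_\rho^2
= \sum_{i\in V(G_1)} (w_1)_i\langle S_i^{(1)}\rangle_{\rho_1}^2
+ \sum_{j\in V(G_2)} (w_2)_j\langle S_j^{(2)}\rangle_{\rho_2}^2
\le \beta(G_1,w_1)+\beta(G_2,w_2).
\]
Conversely, choosing $\rho=\sigma_1\otimes\sigma_2$ with $\sigma_1,\sigma_2$ (near-)optimal for $\beta(G_1,w_1)$ and $\beta(G_2,w_2)$ makes the left-hand side approach $\beta(G_1,w_1)+\beta(G_2,w_2)$; hence $\beta(G,w)=\beta(G_1,w_1)+\beta(G_2,w_2)$. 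Combining this with the additivity of $\alpha$ and the hypothesis that $G_1,G_2$ are $\hbar$-perfect gives $\beta(G,w)=\beta(G_1,w_1)+\beta(G_2,w_2)=\alpha(G_1,w_1)+\alpha(G_2,w_2)=\alpha(G,w)$ for all $w\in\mathbb{R}_+^n$, so $G$ is $\hbar$-perfect.

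I do not expect a real obstacle here; the only step needing a touch of care is the legitimacy of the ``tensor-with-identity'' realization together with the appeal to realization-independence. Note that the joint variance body $\mathcal{Q}(\{T_i\})$ itself genuinely depends on the realization (as the example $\{XX,YY,ZZ\}$ vs.\ $\{XX\id,YY\id,ZZZ\}$ in Section~\ref{sec:def} shows), but $\beta(G,w)$ and $\be(G)$ do not — which is precisely what lets the additive decomposition above be read off from a single carefully chosen realization.
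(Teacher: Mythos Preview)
Your proof is correct and follows the same strategy the paper states explicitly after Property~\ref{thm:disconnected_union}: both $\beta$ and $\alpha$ are additive under the edgeless disjoint union, and $\hbar$-perfectness of $G$ follows. The only technical difference is that the paper's appendix proof works with an arbitrary realization $\{A_i\}$ of $G$ and the auxiliary operators $B_k$ (mirroring the proof of Property~\ref{thm:connected_union}), whereas you construct a specific tensor-product realization and read off additivity via marginals; your route is slightly more direct and makes the lower bound (via $\rho=\sigma_1\otimes\sigma_2$) more transparent.
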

This property holds for a similar reason that $\beta(G,w)$ is the sum of $\beta(G_1,w_1)$ and $\beta(G_2,w_2)$, and $\alpha(G,w)$ is the sum of $\alpha(G_1,w_1)$ and $\alpha(G_2,w_2)$ in this case.
\begin{prop}[Induced subgraphs]
\label{thm:subgraph}
    For an $\hbar$-perfect graph $G$, if $G'$ is an induced subgraph of $G$ with the vertex set $V'$, then $G'$ is also $\hbar$-perfect.
\end{prop}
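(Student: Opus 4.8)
\textbf{Proof proposal for Property~\ref{thm:subgraph} (Induced subgraphs).}

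The plan is to show that if $G$ is $\hbar$-perfect and $G'$ is an induced subgraph on vertex set $V' \subseteq V$, then $\stab(G') = \be(G')$. Since we already know $\stab(G') \subseteq \be(G')$ from Theorem~\ref{ob:stabbe}, it suffices to establish $\be(G') \subseteq \stab(G')$, or equivalently $\beta(G',w') \le \alpha(G',w')$ for every $w' \in \mathbb{R}_+^{V'}$. First I would fix such a $w'$ and extend it to a weight vector $w \in \mathbb{R}_+^V$ on all of $G$ by padding with zeros on $V \setminus V'$. The key observation is that a realization of $G$ by Pauli strings $\{S_i\}_{i \in V}$ restricts, by simply discarding the strings indexed by $V \setminus V'$, to a realization of $G'$ by $\{S_i\}_{i \in V'}$ — because the induced-subgraph relation on frustration graphs corresponds exactly to taking subsets of the Pauli strings, and the commutation/anti-commutation pattern among the retained strings is unchanged.

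The heart of the argument is then the identity $\beta(G',w') = \beta(G,w)$ for this zero-padded $w$. This follows directly from the definition~\eqref{eq:beta}: for any state $\rho$ on the (common) Hilbert space, $\sum_{i \in V} w_i \langle S_i \rangle_\rho^2 = \sum_{i \in V'} w'_i \langle S_i \rangle_\rho^2$ since the omitted terms carry weight zero, so the two suprema over $\mathcal{S}(\mathcal{H})$ agree. Likewise $\alpha(G',w') = \alpha(G,w)$: any independent set of $G$ contributes to $\sum_{i} w_i$ only through its intersection with $V'$, which is an independent set of $G'$ (induced subgraph), and conversely every independent set of $G'$ is an independent set of $G$; taking weighted maxima gives equality. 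Combining these with the hypothesis $\beta(G,w) = \alpha(G,w)$ yields $\beta(G',w') = \alpha(G',w')$. Since $w'$ was arbitrary in $\mathbb{R}_+^{V'}$, Definition~\ref{basicdef} gives that $G'$ is $\hbar$-perfect.

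I do not anticipate a serious obstacle here; the result is essentially a monotonicity statement, and the only point requiring a little care is the interplay between the abstract (realization-independent) definitions and a concrete Pauli realization. Concretely, one should make sure that the supremum defining $\beta(G,w)$ in a fixed realization of $G$ is unaffected by restricting to the sub-realization — which is immediate once the weights on $V \setminus V'$ vanish — rather than trying to argue at the level of the convex corners $\be(G)$ and $\be(G')$ directly, where the projection $\mathbb{R}^V \to \mathbb{R}^{V'}$ and the down-closure operation would need to be reconciled. If one prefers the polytope/convex-corner formulation, the same proof can be phrased as: the coordinate projection $\pi_{V'}$ maps $\be(G)$ onto $\be(G')$ and $\stab(G)$ onto $\stab(G')$ (using that $\be$ and $\stab$ are convex corners, so projection commutes with down-closure and intersection with the positive orthant), whence $\stab(G) = \be(G)$ forces $\stab(G') = \be(G')$. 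Either route is short; I would present the variance-based one as the primary argument for transparency.
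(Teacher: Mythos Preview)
Your proof is correct and follows exactly the same route as the paper: extend $w'$ to $w$ on $V$ by zero-padding, then use $\beta(G',w') = \beta(G,w) = \alpha(G,w) = \alpha(G',w')$. You supply more justification for the two outer equalities than the paper does (which simply asserts them ``by definition''), but the argument is the same.
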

The underlying intuition is that $\stab(G')$ and $\be(G')$ can be viewed as the 
intersection of $\stab(G)$ and $\be(G)$ 
with the lower-dimensional space specified by the vertex set $V'$. Consequently, $\stab(G)=\be(G)$ implies that $\stab(G')=\be(G')$.

There is more than one way to take the product of two graphs. As it turns out, the lexicographic product will preserve $\hbar$-perfectness, i.e. we have

\begin{prop}[Lexicographic product] \label{tmh:product}
  For two given $\hbar$-perfect graphs $G_1$ and $G_2$, denote by $G$ the lexicographic product~\cite{sabidussi1961} of $G_1$ and $G_2$, then $G$ is also $\hbar$-perfect.
\end{prop}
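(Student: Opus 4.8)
The plan is to reduce the statement about the lexicographic product $G = G_1[G_2]$ to the already-established identities $\stab(H)=\be(H)$ for $H \in \{G_1, G_2\}$, by showing that both the stable set polytope and the beta body behave multiplicatively under the lexicographic product in a compatible way. Recall that the vertex set of $G$ is $V(G_1)\times V(G_2)$, with $(i,a)$ and $(j,b)$ adjacent iff either $i\sim_{G_1} j$, or $i=j$ and $a\sim_{G_2} b$. The combinatorial fact I would use is the classical weighted-independence-number identity
\begin{equation}
\alpha(G,w) = \alpha\!\left(G_1, w^{(1)}\right), \quad\text{where}\quad w^{(1)}_i \coloneqq \alpha\!\left(G_2, w_{i,\cdot}\right),
\end{equation}
i.e. a maximum weight independent set of $G$ picks an independent set $I \subseteq V(G_1)$ and, inside each chosen ``block'' $i\in I$, a maximum-weight independent set of the corresponding copy of $G_2$. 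This shows $\alpha(G,\cdot)$ is obtained by ``composing'' $\alpha(G_2,\cdot)$ (applied blockwise) with $\alpha(G_1,\cdot)$.

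The key step is to establish the analogous composition law for the beta number: for $w\in\mathbb{R}^{V(G)}_+$,
\begin{equation}
\beta(G,w) = \beta\!\left(G_1, \widetilde w^{(1)}\right), \quad\text{where}\quad \widetilde w^{(1)}_i \coloneqq \beta\!\left(G_2, w_{i,\cdot}\right).
\end{equation}
To prove this I would exhibit an explicit Pauli realization of $G_1[G_2]$ built from basic realizations $\{S_i\}$ of $G_1$ and $\{T^{(i)}_a\}$ of (copies of) $G_2$: take the realization of $G$ to act on a tensor product of the $G_1$-register with one $G_2$-register per vertex of $G_1$, setting the Pauli string for $(i,a)$ to be $S_i$ on the $G_1$-register tensored with $T^{(i)}_a$ on the $i$-th $G_2$-register (and identity on the others). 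One checks these anti-commute exactly according to the lexicographic adjacency rule (the $S_i$ factor contributes anti-commutation when $i\sim j$; when $i=j$ the $S_i$ factors cancel and the $T^{(i)}$ factors govern it; when $i\ne j$ non-adjacent the identities on disjoint registers commute). For the upper bound $\beta(G,w)\le\beta(G_1,\widetilde w^{(1)})$, for any state $\rho$ I would group $\sum_{i,a} w_{i,a}\langle S_i\otimes T^{(i)}_a\rangle_\rho^2$ and, using a purification / Cauchy–Schwarz type argument (or by viewing $\langle S_i \otimes \cdot\rangle$ as a subnormalized state on the $i$-th $G_2$-register conditioned on the $G_1$-register), bound the inner block sum over $a$ by $\beta(G_2, w_{i,\cdot})$ times $\langle S_i\rangle^2$ evaluated on the reduced $G_1$-state — the subtlety being that $S_i$ and $T^{(i)}_a$ act on different factors so $\langle S_i\otimes T^{(i)}_a\rangle^2 \le \langle S_i^2\otimes \bullet\rangle\cdots$; I would make this rigorous by conditioning. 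For the matching lower bound, tensor an optimal $G_1$-state with, on each block, the optimal $G_2$-state achieving $\beta(G_2,w_{i,\cdot})$, chosen coherently so the signs align. Alternatively, and perhaps more cleanly, I would prove the composition law directly at the level of convex corners: show $\be(G_1[G_2])$ is the set of $v$ with $(\beta(G_2, v_{i,\cdot}))_i \in \be(G_1)$, using Proposition~\ref{ob:beta}'s description of $\be$ as $\conv(\downarrow{\cal Q})\cap\mathbb{R}^n_+$ and the fact that support functions compose.

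Granting both composition laws, the proof concludes quickly: if $G_1,G_2$ are $\hbar$-perfect then $\beta(G_2,w_{i,\cdot}) = \alpha(G_2,w_{i,\cdot})$ for every block $i$, so $\widetilde w^{(1)} = w^{(1)}$; then $\beta(G,w) = \beta(G_1, w^{(1)}) = \alpha(G_1, w^{(1)}) = \alpha(G,w)$, where the middle equality is $\hbar$-perfectness of $G_1$. Since $w$ was arbitrary in $\mathbb{R}^{V(G)}_+$, $G$ is $\hbar$-perfect. The main obstacle I anticipate is the beta composition law, specifically the upper bound direction: unlike the independence number, where the blockwise optimization decouples combinatorially, the beta number is a supremum over entangled states on the full tensor product, and one must argue that entanglement between the $G_1$-register and the $G_2$-registers (and among the $G_2$-registers) cannot help beyond what the composed formula allows. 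I expect a conditioning argument — writing the global state's action on block $i$ as a (subnormalized) ensemble and using convexity/linearity of $\beta(G_2,\cdot)$ in a suitable sense, together with the constraint that the ``$G_1$-marginal expectations'' $(\langle S_i\rangle)_i$ must themselves lie in ${\cal J}(G_1)$ — to be the crux, and it may be cleanest to carry it out in the basic-realization picture where $G_2$ copies act on genuinely disjoint qubit registers.
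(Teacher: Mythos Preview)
Your high-level plan --- prove the composition identity for $\alpha$, a matching upper bound $\beta(G,w)\le\beta(G_1,\widetilde w^{(1)})$, and sandwich --- is exactly the paper's strategy. But the specific intermediate bound you aim for is false, not merely subtle. You propose to show $\sum_a w_{i,a}\langle S_i\otimes T^{(i)}_a\rangle_\rho^2 \le \beta(G_2,w_{i,\cdot})\,\langle S_i\rangle_{\rho_1}^2$ with $\rho_1$ the reduced $G_1$-state, then sum over $i$. Take both $G_1$ and $G_2$ a single vertex, $S_1=X$, $T^{(1)}_1=Z$, and $\rho=(\id+X\otimes Z)/4$: then $\langle X\otimes Z\rangle_\rho^2=1$ while $\langle X\rangle_{\rho_1}^2=0$. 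Conditioning on the $G_1$-register, or tracking the ``$G_1$-marginal expectations'' $(\langle S_i\rangle)_i$, cannot work: entanglement can make those vanish while the block sums stay large.

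The paper's remedy is a Cauchy--Schwarz \emph{equality} rewriting that never separates $S_i$ from its block. For fixed $\rho$ and each $i$, let $x_{ia}$ be the unit vector along $(\sqrt{w_{ia}}\,\langle A_{ia}\rangle_\rho)_a$; then $\sum_a w_{ia}\langle A_{ia}\rangle_\rho^2=\lambda_i^2\,\langle\bar A_i\rangle_\rho^2$ with $\bar A_i\coloneqq(\sum_a\sqrt{w_{ia}}\,x_{ia}A_{ia})/\lambda_i$ normalized to $\|\bar A_i\|=1$. Cauchy--Schwarz gives $\lambda_i^2\le\beta(G_2,w_{i,\cdot})$, and the $\bar A_i$ obey the $G_1$ anticommutation relations. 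In your tensor realization this is transparent: $\bar A_i=S_i\otimes N_i$ with the $N_i$ supported on \emph{disjoint} $G_2$-registers, hence mutually commuting; conditioning on the joint eigenbasis of the $N_i$ (not on the $G_1$-register!) then gives $\sum_i\beta(G_2,w_{i,\cdot})\langle\bar A_i\rangle_\rho^2\le\beta(G_1,\widetilde w^{(1)})$. The paper phrases this last step by citing a structural lemma $\bar A_i\simeq B_i\otimes D_i$ from \cite{xu2023bounding}, but your explicit realization makes it immediate. Finally, you do not need the lower half of the $\beta$ composition law: $\alpha(G,w)\le\beta(G,w)\le\beta(G_1,\widetilde w^{(1)})=\alpha(G_1,w^{(1)})=\alpha(G,w)$ already closes the loop.
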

This property is related to the weighted version of the fact that $\beta(G)$ is the product of $\beta(G_1)$ and $\beta(G_2)$, $\alpha(G)$ is the product of $\alpha(G_1)$ and $\alpha(G_2)$.
 \begin{figure}[t]
     \centering
     (a)\includegraphics[width=0.3\linewidth]{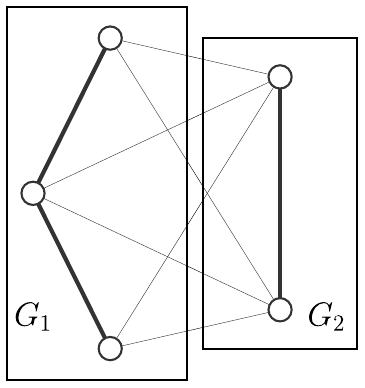} \hspace{3em} (b) \includegraphics[width=0.3\linewidth]{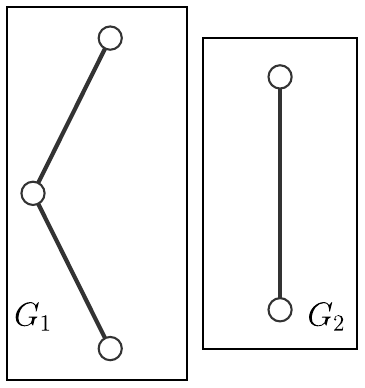}\hspace{1em}
     \\ (c) \includegraphics[width=0.5\linewidth]{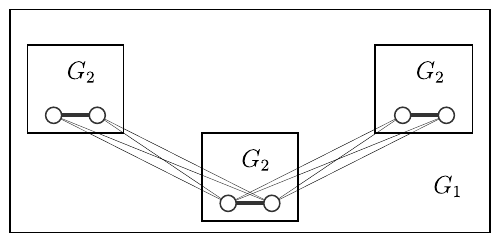}
     \caption{(a) The fully connected  and (b) the fully unconnected unions of  $\hbar$-perfect graphs are again $\hbar$-perfect. The same accounts for (c) the lexicographic product. }
     \label{fig:prod}
 \end{figure}

Instead of manipulating whole graphs, we can also consider single-vertex operations. 
For a given graph $G$ and one of its vertex $v$, a new vertex $v'$ is said to be a \textit{copy} of $v$ if $v'$ shares the same neighbors as $v$ and $v'$ is not a neighbor of $v$. If $v'$ is also a neighbor of $v$, $v'$ is said to be a \textit{split} of $v$.
In both copying and splitting operations, we use two vertices to replace one.
In the case of copying, the weights of those two vertices can accumulate for the beta number and the independence number.
As for the case of splitting, the corresponding weights compete, and the maximal one takes effect. The beta number and the independence number exhibit identical behavior under those two operations.
\begin{figure}[b]
   \centering
(a)\includegraphics[width=0.08\textwidth]{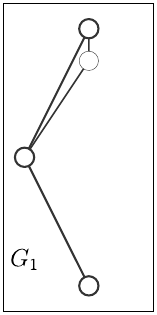}\hspace{3em}(b) \includegraphics[width=0.08\textwidth]{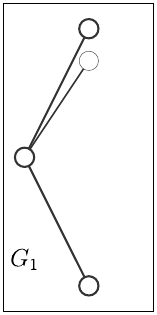}
    \caption{(a) splitting and (b) copying of vertices of an $\hbar$-perfect graph creates a new graph that is still $\hbar$-perfect.}
   \label{fig:scp}
\end{figure}
 
\begin{prop}[Copying of vertices]
\label{thm:copy}
    For an $\hbar$-perfect graph $G$, if the graph $G'$ is obtained from $G$ by copying one vertex, then $G'$ is also $\hbar$-perfect.
\end{prop}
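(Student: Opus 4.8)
The plan is to show that copying a vertex preserves $\hbar$-perfectness by proving that both the stable set polytope and the beta body transform in the same controlled way under this operation, and then invoking Definition~\ref{basicdef}. Concretely, suppose $G'$ is obtained from $G$ by replacing vertex $v$ with two non-adjacent vertices $v_1, v_2$ sharing the neighborhood of $v$. First I would establish the weighted independence number identity $\alpha(G', w') = \alpha(G, w)$, where $w$ is obtained from $w'$ by setting $w_v \coloneqq w'_{v_1} + w'_{v_2}$ and keeping all other weights. This is a standard fact: any independent set of $G'$ either contains neither of $v_1, v_2$, or contains one or both of them; since $v_1, v_2$ are twins, containing both is always possible and optimal when one is present, so the optimal stable set of $G'$ corresponds to an optimal stable set of $G$ with the $v$-contribution replaced by the full sum $w'_{v_1}+w'_{v_2}$. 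Equivalently, on the level of polytopes, $\stab(G')$ is the image of $\stab(G)$ under the linear ``duplication'' map that sends the $v$-coordinate to the pair $(v_1,v_2)$-coordinates suitably — I would phrase this as: $x' \in \stab(G')$ iff the vector $x$ with $x_v \coloneqq x'_{v_1} + x'_{v_2}$ lies in $\stab(G)$ and $x'_{v_1}, x'_{v_2} \ge 0$ (together with $x'_{v_i} \le$ the relevant bound); the cleanest route is to check the analogous relation between support functions, which is exactly the $\alpha$-identity above evaluated at all $w' \in \mathbb{R}_+^{n+1}$.

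Next I would prove the matching identity for the beta number, $\beta(G', w') = \beta(G, w)$ with the same weight correspondence. For the direction $\beta(G',w') \le \beta(G,w)$: pick any Pauli realization $\{S_i\}$ of $G$; then a realization of $G'$ is obtained by assigning $S_{v_1} \coloneqq S_v$ and taking $S_{v_2}$ to be any Pauli string anticommuting with exactly the neighbors of $v$ and commuting with $v$ (or, using a fresh ancilla qubit to engineer this — the existence of a realization of $G'$ is guaranteed since $G'$ is again a valid frustration graph). For a state $\rho$ on the realization of $G'$, Cauchy–Schwarz gives $w'_{v_1}\langle S_{v_1}\rangle_\rho^2 + w'_{v_2}\langle S_{v_2}\rangle_\rho^2 \le (w'_{v_1}+w'_{v_2}) \max(\langle S_{v_1}\rangle_\rho^2, \langle S_{v_2}\rangle_\rho^2)$; but this is not quite what I want — instead the right tool is that $\langle S_{v_1}\rangle_\rho^2$ and $\langle S_{v_2}\rangle_\rho^2$ are \emph{separately} bounded by the same quantity if we could view them as a single effective observable. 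The clean argument: since $S_{v_1}, S_{v_2}$ are twins, one checks $\beta(G',w') = \sup_\rho \big( w_v \max(\langle S_{v_1}\rangle_\rho^2, \langle S_{v_2}\rangle_\rho^2) \cdot (\text{split case}) \big)$... this is the splitting rule, not copying. For copying I should instead use convexity of $\be$ and the explicit description in Proposition~\ref{ob:beta}: $\be(G') = \conv(\downarrow \mathcal{Q}(\mathcal{S}')) \cap \mathbb{R}_+^{n+1}$, and I would show $\mathcal{Q}(\mathcal{S}')$ projects onto $\mathcal{Q}(\mathcal{S})$ in the sense that $(\ldots, q_{v_1}, q_{v_2}, \ldots) \in \downarrow\mathcal{Q}(\mathcal{S}')$ forces $(\ldots, q_{v_1}+q_{v_2}, \ldots)$ to be dominated by a point of $\downarrow\mathcal{Q}(\mathcal{S})$ — because $\langle S_v\rangle_\rho^2$ alone already dominates each of $\langle S_{v_1}\rangle_\rho^2, \langle S_{v_2}\rangle_\rho^2$ when $S_{v_1}=S_{v_2}=S_v$; but if the two copies are realized by \emph{different} Pauli strings with the same frustration pattern, the bound $\langle S_{v_1}\rangle^2 + \langle S_{v_2}\rangle^2 \le 1$ (for two anticommuting-with-the-same-set strings that commute with each other) must be used. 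Thus $\be(G')$ is the ``convex corner lift'' of $\be(G)$ along the $v$-coordinate, matching exactly the lift of $\stab(G)$.

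With both identities $\alpha(G',w') = \alpha(G,w)$ and $\beta(G',w') = \beta(G,w)$ in hand (for all $w' \in \mathbb{R}_+^{n+1}$, under the stated weight correspondence, which ranges over all of $\mathbb{R}_+^n$ for $w$), the conclusion is immediate: if $G$ is $\hbar$-perfect then $\beta(G,w) = \alpha(G,w)$ for all $w$, hence $\beta(G',w') = \beta(G,w) = \alpha(G,w) = \alpha(G',w')$ for all $w'$, so $G'$ is $\hbar$-perfect. I would package this using the support-function characterization: two convex corners coincide iff their support functions coincide on $\mathbb{R}_+^n$, so the equalities of $\alpha$ and $\beta$ directly give $\stab(G') = \be(G')$.

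The main obstacle, as the discussion above signals, is pinning down the correct behavior of $\be$ under copying — specifically, verifying that when $v_1, v_2$ are realized by genuinely different Pauli strings (not literally identical ones), one still has $\langle S_{v_1}\rangle_\rho^2 + \langle S_{v_2}\rangle_\rho^2 \le 1$, or more precisely the right joint constraint coupling the $v_1$- and $v_2$-coordinates, and that this constraint is exactly the one produced by the polytope-lift of $\stab(G)$. This hinges on the fact (from \cite{xu2023bounding}, and usable here since $\beta$ is realization-independent) that $\beta$ depends only on $G$, so I may compute with the most convenient realization, namely $S_{v_1} = S_{v_2}$, for which $\langle S_{v_1}\rangle_\rho = \langle S_{v_2}\rangle_\rho$ trivially and the copying/accumulation rule $w_v = w'_{v_1} + w'_{v_2}$ drops out by linearity. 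I expect this choice of realization to collapse the obstacle to a short computation; the remaining care is bookkeeping the coordinates of the convex corners correctly.
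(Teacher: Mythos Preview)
Your final approach --- using realization-independence of $\beta$ to take $S_{v_1} = S_{v_2}$ literally, so that the two weights simply accumulate as $w_v = w'_{v_1} + w'_{v_2}$ and $\beta(G',w') = \beta(G,w)$ drops out in one line --- is exactly the paper's proof. Drop the earlier detours: in particular, the claimed bound $\langle S_{v_1}\rangle^2 + \langle S_{v_2}\rangle^2 \le 1$ for two \emph{commuting} twin strings is false (take $S_{v_1} = X\id$, $S_{v_2} = \id X$ on $|{+}{+}\rangle$), so that route would not have closed.
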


\begin{prop}[Splitting of vertices]
\label{thm:splitting}
    For an $\hbar$-perfect graph $G$, if the graph $G'$ is obtained from $G$ by splitting one vertex, then $G'$ is also $\hbar$-perfect.
\end{prop}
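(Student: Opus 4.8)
The plan is to show that splitting a vertex collapses both the weighted independence number and the weighted beta number of $G'$ onto those of $G$, with the two split vertices merged into a single vertex whose weight is the maximum of the two. Write $V(G')=V(G)\cup\{v'\}$ where $v'$ is the split of $v$, so that $N_{G'}(v)\setminus\{v'\}=N_{G'}(v')\setminus\{v\}=N_G(v)$ and $v\sim_{G'}v'$. For $w'\in\mathbb{R}^{V(G')}_+$ define $\tilde w\in\mathbb{R}^{V(G)}_+$ by $\tilde w_v:=\max\{w'_v,w'_{v'}\}$ and $\tilde w_i:=w'_i$ for $i\neq v$. I would establish the two facts (i) $\alpha(G',w')=\alpha(G,\tilde w)$ and (ii) $\beta(G',w')\le\beta(G,\tilde w)$. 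Granting these, the chain
\[
\alpha(G',w')\ \le\ \beta(G',w')\ \le\ \beta(G,\tilde w)\ =\ \alpha(G,\tilde w)\ =\ \alpha(G',w'),
\]
where the first inequality is Theorem~\ref{ob:stabbe}, the third equality is $\hbar$-perfectness of $G$, and the outer equalities are (i)–(ii), forces equality throughout; since $w'$ was arbitrary, $\beta(G',w')=\alpha(G',w')$ for all $w'$, i.e.\ $G'$ is $\hbar$-perfect.

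Fact (i) is a short combinatorial case analysis. Because $v$ and $v'$ are adjacent in $G'$ and have the same outside neighborhood $N_G(v)$, any independent set of $G'$ can be rewritten, without lowering its $w'$-weight, as an independent set that uses $v$ whenever it uses either of $v,v'$: if it uses $v'$, replace $v'$ by $v$ and the weight changes from $w'_{v'}$ to $\tilde w_v\ge w'_{v'}$; this gives $\alpha(G',w')\le\alpha(G,\tilde w)$. Conversely, an optimal independent set of $G$ for $\tilde w$ is realized in $G'$ by keeping $v$ (if $\tilde w_v=w'_v$) or replacing $v$ by $v'$ (if $\tilde w_v=w'_{v'}$), giving $\alpha(G',w')\ge\alpha(G,\tilde w)$.

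Fact (ii) is the technical core. Fix a Pauli realization $\{S_i\}_{i\in V(G)}$ of $G$ on $m$ qubits, and build a realization of $G'$ on $m+1$ qubits by $S'_i:=S_i\otimes\id$ for $i\neq v$, $S'_v:=S_v\otimes Z$, $S'_{v'}:=S_v\otimes X$; one checks directly that the frustration graph of $\{S'_i\}$ is exactly $G'$, since the extra $Z$/$X$ make $S'_v,S'_{v'}$ anticommute while keeping their commutation with every other string identical to that of $S_v$. The key observation is that for any angle $\theta$ the combination $\cos\theta\,(S_v\otimes Z)+\sin\theta\,(S_v\otimes X)=S_v\otimes(\cos\theta\,Z+\sin\theta\,X)$ is again a $\pm1$-valued observable, and since $\cos\theta\,Z+\sin\theta\,X$ is unitarily conjugate to $Z$, the set $\{S_i\otimes\id\}_{i\neq v}\cup\{S_v\otimes(\cos\theta\,Z+\sin\theta\,X)\}$ equals $(\id\otimes W)\,\mathcal{S}^+(\id\otimes W)^\dagger$ with $\mathcal{S}^+:=\{S_i\otimes\id\}_{i\neq v}\cup\{S_v\otimes Z\}$ a fixed Pauli realization of $G$. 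Hence its simultaneous variance range coincides with $\mathcal{Q}(\mathcal{S}^+)\subseteq\be(G)$ for every $\theta$ (unitary conjugation preserves $\mathcal{Q}$). Now take any state $\rho$ on $\mathcal{H}\otimes\mathbb{C}^2$ and choose $\theta$ so that $(\cos\theta,\sin\theta)$ is aligned with $(\langle S'_v\rangle_\rho,\langle S'_{v'}\rangle_\rho)$ (any $\theta$ if both vanish); then the squared expectation of the combined observable in $\rho$ equals $\langle S'_v\rangle_\rho^2+\langle S'_{v'}\rangle_\rho^2$, while the other coordinates are $\langle S'_i\rangle_\rho^2$ unchanged. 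So the point $y\in\mathbb{R}^{V(G)}$ with $y_v=\langle S'_v\rangle_\rho^2+\langle S'_{v'}\rangle_\rho^2$ and $y_i=\langle S'_i\rangle_\rho^2$ otherwise lies in $\be(G)$, whence by Proposition~\ref{ob:beta},
\[
\sum_{i\in V(G')} w'_i\langle S'_i\rangle_\rho^2
\ \le\ \sum_{i\neq v}\tilde w_i\, y_i+\tilde w_v\,y_v
\ \le\ \beta(G,\tilde w),
\]
using $w'_i=\tilde w_i$ for $i\neq v$ and $w'_v\langle S'_v\rangle_\rho^2+w'_{v'}\langle S'_{v'}\rangle_\rho^2\le\max\{w'_v,w'_{v'}\}\bigl(\langle S'_v\rangle_\rho^2+\langle S'_{v'}\rangle_\rho^2\bigr)$. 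Taking the supremum over $\rho$ (and recalling $\beta(G',w')$ is realization-independent) yields (ii).

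The step I expect to be the main obstacle is the bookkeeping around this ``rotation trick'': making sure the rotated observable is correctly identified as a global-unitary image of a genuine Pauli realization of $G$ — so that Proposition~\ref{ob:beta} and the realization-independence of $\beta$ legitimately apply — and checking that the $\rho$-dependence of $\theta$ does no harm (it does not, precisely because the target realization $\mathcal{S}^+$ is fixed and its $\mathcal{Q}$ is invariant under conjugation by $\id\otimes W$ for every $W$). Once this is set up cleanly, the rest is routine inequality-chasing, exactly parallel to (and slightly simpler than) the argument for copying of vertices.
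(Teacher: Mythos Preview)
Your argument is correct, but it takes a different route from the paper. The paper dispatches Property~\ref{thm:splitting} in two lines: it observes that the split graph $G'$ is an induced subgraph of the lexicographic product $G[K_2]$, and then applies Property~\ref{tmh:product} (lexicographic products of $\hbar$-perfect graphs are $\hbar$-perfect, and $K_2$ is perfect) together with Property~\ref{thm:subgraph} (induced subgraphs inherit $\hbar$-perfectness). Your proof instead works directly, building an explicit realization of $G'$ on one extra qubit and using the ``rotation trick'' $\cos\theta\,(S_v\otimes Z)+\sin\theta\,(S_v\otimes X)=S_v\otimes(\cos\theta\,Z+\sin\theta\,X)$ to collapse the two split coordinates back into one. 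This is essentially the $G_2=K_2$ specialization of the key inequality in the paper's proof of Property~\ref{tmh:product}, carried out by hand: where the paper forms $\bar A_i=\sum_j\sqrt{w_{ij}}x_{ij}A_{ij}/\lambda_i$ and invokes the structural fact from~\cite{xu2023bounding} that the $\{\bar A_i\}$ are unitarily equivalent to $\{B_i\otimes D_i\}$, you make this concrete via the single-qubit rotation $W$. Your approach is more self-contained and gives the identity $\beta(G',w')=\beta(G,\tilde w)$ with $\tilde w_v=\max\{w'_v,w'_{v'}\}$ explicitly (which is the content of the informal remark preceding Property~\ref{thm:splitting} in the paper); the paper's approach is shorter and shows how splitting fits into the broader lexicographic-product picture.
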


As mentioned in the Introduction, a graph is simply said to be perfect if its chromatic number and its clique number also coincide for all induced subgraphs. 
A further common concept of `perfectness' is related to the facet structure of the stable set polytope. In detail, a graph is called $h$-perfect~\cite{chvatal1975certain,fonlupt1982transformations} if all the facets of $\stab(G)$ can be grouped into the following three classes:
    \begin{enumerate}
        \item $x_i \ge 0$, for $i$ in the vertex set of $G$;
        \item $\sum_{i\in K} x_i \le 1$, for $K$ a clique in $G$;
        \item $\sum_{i\in C} x_i \le a$, for $C$ an odd-cycle in $G$ with $2a+1$ vertices with $a$ being an integer.
     \end{enumerate}
If and only if the third class vanishes also, the graph is perfect~\cite{chvatal1975certain}.
Since the beta number equals the independence number for all these three classes, e.g., both are $1$ for the second class, we come to our last property in the following theorem. 
\begin{theorem}[Perfect and $h$-perfect graphs]
\label{thm:perfect}
    Any perfect graph and any $h$-perfect graph is $\hbar$-perfect. That is,
    \begin{align}
      G \text{ perfect } \Rightarrow \ G \text{ $h$-perfect } \Rightarrow G\ \text{$\hbar$-perfect}.
    \end{align}
\end{theorem}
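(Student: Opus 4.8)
The implication $G$ perfect $\Rightarrow$ $G$ $h$-perfect is immediate from the definitions recalled just above: if the third (odd-cycle) class of facets is empty, then $\stab(G)$ is cut out by the nonnegativity and clique inequalities, which is precisely $h$-perfectness with a vacuous third class. So the substance is the implication $G$ $h$-perfect $\Rightarrow$ $G$ $\hbar$-perfect, i.e.\ $\stab(G) = \be(G)$. By Theorem~\ref{ob:stabbe} we already have $\stab(G) \subseteq \be(G)$, and both are convex corners (downward-closed convex subsets of $\mathbb{R}_+^n$), so it suffices to show $\be(G) \subseteq \stab(G)$. Equivalently, in terms of support functions, it suffices to show $\beta(G,w) \le \alpha(G,w)$ for every $w \in \mathbb{R}_+^n$.

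The plan is to exploit the facet description of $\stab(G)$ available under $h$-perfectness. The dual characterization of the support function of a polytope given by finitely many inequalities says: $\alpha(G,w) = \max_{x \in \stab(G)} \langle w, x\rangle$ equals the minimum, over all ways of writing $w$ as a nonnegative combination of the facet normals, of the corresponding combination of right-hand sides. Concretely, since the facets of $\stab(G)$ are $x_i \ge 0$, the clique inequalities $\sum_{i \in K} x_i \le 1$, and the odd-cycle inequalities $\sum_{i \in C} x_i \le a_C$ (with $|C| = 2a_C+1$), LP duality gives
\begin{equation}
\alpha(G,w) = \min\Bigl\{ \sum_K \lambda_K + \sum_C a_C \mu_C \ \Big|\ \lambda_K,\mu_C \ge 0,\ \sum_{K \ni i}\lambda_K + \sum_{C \ni i}\mu_C \ge w_i \ \forall i \Bigr\}.
\end{equation}
Now I would fix an optimal dual solution $(\lambda_K, \mu_C)$ and bound $\beta(G,w)$ from above by the same quantity. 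Since $\beta(G,\cdot)$ is monotone in $w$, it is enough to treat the case $w_i = \sum_{K \ni i}\lambda_K + \sum_{C \ni i}\mu_C$. Then for any state $\rho$,
\begin{equation}
\sum_i w_i \langle S_i\rangle_\rho^2 = \sum_K \lambda_K \sum_{i \in K} \langle S_i\rangle_\rho^2 + \sum_C \mu_C \sum_{i \in C}\langle S_i\rangle_\rho^2 \le \sum_K \lambda_K \cdot 1 + \sum_C \mu_C \cdot a_C,
\end{equation}
where the two estimates $\sum_{i\in K}\langle S_i\rangle_\rho^2 \le 1$ and $\sum_{i\in C}\langle S_i\rangle_\rho^2 \le a_C$ are exactly the statements that $\beta$ equals $\alpha$ on a clique (both $=1$) and on an odd cycle (both $=a_C$) — the point flagged in the paragraph before the theorem. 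Taking the supremum over $\rho$ gives $\beta(G,w) \le \sum_K \lambda_K + \sum_C a_C\mu_C = \alpha(G,w)$, as desired.

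The two ingredients that need to be in place are thus: (i) $\beta(K_r, \mathbf{1}) = 1$ for a clique, which is the elementary fact that pairwise anticommuting $\pm1$-observables satisfy $\sum_i \langle S_i\rangle_\rho^2 \le 1$ (a one-line Cauchy–Schwarz / positivity-of-$(\sum_i \langle S_i\rangle S_i)^2$ argument); and (ii) $\beta(C_{2a+1},\mathbf{1}) = a$ for an odd cycle, i.e.\ the beta number of the frustration graph that is an odd cycle equals its independence number — this is known and should be cited (it is implicit in the earlier discussion that odd cycles and odd anti-cycles behave oppositely, the anti-heptagon being the first $\hbar$-imperfect graph). The \textbf{main obstacle} is making sure the clique and odd-cycle bounds hold \emph{uniformly over all realizations} $\mathcal{S}$ of the relevant induced subgraph — but this is guaranteed because, by Proposition~\ref{ob:beta}, $\beta$ is a genuine graph parameter, and by Proposition~\ref{thm:subgraph} $\hbar$-perfectness (hence the equality $\beta = \alpha$ on these small weighted patterns) passes to induced subgraphs; cliques are perfect and odd cycles are $h$-perfect, so in fact one only needs the base cases, and the LP-duality aggregation above does the rest. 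A minor technical point to check is that the dual LP optimum is attained with finitely many $\lambda_K,\mu_C$ and that restricting $w$ to the tight case is legitimate by monotonicity of $\beta$; both are routine.
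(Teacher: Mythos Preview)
Your proof is correct and uses the same two key ingredients as the paper --- $\beta(K,\mathbf{1})=1$ for cliques and $\beta(C_{2a+1},\mathbf{1})=a$ for odd cycles --- but you package them via LP duality on the support function, whereas the paper argues more directly on the primal side: since $\stab(G)$ is cut out by the nonnegativity, clique, and odd-cycle inequalities, it suffices to show every $x\in\be(G)$ satisfies each of these; and $\max_{x\in\be(G)}\sum_{i\in K}x_i=\beta(G,\mathbf{1}_K)\le 1$, $\max_{x\in\be(G)}\sum_{i\in C}x_i=\beta(G,\mathbf{1}_C)\le a_C$ follow immediately from the two facts above (with $\beta$ on an induced subgraph computed by zeroing the other weights). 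Your LP-duality aggregation is the dual restatement of exactly this facet check, so the arguments are equivalent; the paper's version just avoids introducing the dual multipliers and the monotonicity step.
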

The smallest three $h$-imperfect graphs are shown in Fig.~\ref{fig:himperfect}, all of which are $\hbar$-perfect as guaranteed by the aforementioned properties. The one in (a) is $\hbar$-perfect because it can be generated from the cycle with $5$ vertices by splitting one vertex.
The one in (b) is $\hbar$-perfect because it is an induced subgraph of another graph $G_{15}$ whose $\hbar$-perfectness is proven in Appendix~\ref{ssec:hper2}.
The one in (c) is $\hbar$-perfect since it is the fully connected union of one point and the cycle with $5$ vertices.
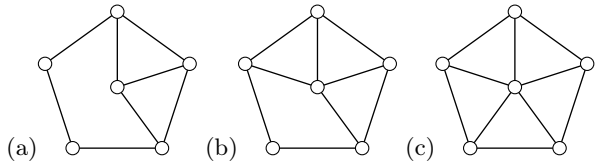
\begin{figure}[htpb]
	\centering
(a)\begin{tikzpicture}[scale=1]
  \foreach \i in {1,...,5} {
    \pgfmathsetmacro{\angle}{72*\i-54} 
    \coordinate (p\i) at (\angle:1cm);
  }
  \coordinate (p6) at (0,0);
  \draw[line width=0.5] (p1) -- (p2) -- (p3) -- (p4) -- (p5) -- (p1) -- (p6) -- (p2);
  \draw[line width=0.5] (p6) -- (p5);
  \foreach \i in {1,...,6} {
    \filldraw[fill=white, draw=black] (p\i) circle (2.5pt);
  }
\end{tikzpicture}
(b)\begin{tikzpicture}[scale=1]
  \foreach \i in {1,...,5} {\pgfmathsetmacro{\angle}{72*\i-54}
  \coordinate (p\i) at (\angle:1cm);}
  \coordinate (p6) at (0,0); 
  \draw[line width=0.5] (p1) -- (p2) -- (p3) -- (p4) -- (p5) -- (p1) -- (p6) -- (p2);
  \draw[line width=0.5] (p3) -- (p6) -- (p5);
  \foreach \i in {1,...,6} {
    \filldraw[fill=white, draw=black] (p\i) circle (2.5pt);
  }
\end{tikzpicture}
(c)\begin{tikzpicture}[scale=1]
  \foreach \i in {1,...,5} {
    \pgfmathsetmacro{\angle}{72*\i-54}
    \coordinate (p\i) at (\angle:1cm);
  }
  \coordinate (p6) at (0,0); 
  \draw[line width=0.5] (p1) -- (p2) -- (p3) -- (p4) -- (p5) -- (p1) -- (p6) -- (p2);
  \draw[line width=0.5] (p3) -- (p6) -- (p5);
  \draw[line width=0.5] (p4) -- (p6);
  \foreach \i in {1,...,6} {
    \filldraw[fill=white, draw=black] (p\i) circle (2.5pt);
  }
\end{tikzpicture}
	\caption{The smallest three $h$-imperfect  graphs.}
	\label{fig:himperfect}
\end{figure}

\section{\texorpdfstring{Checking for $\mathbf{\hbar}$-perfectness and techniques for computing  $\mathbf{\beta(G,w)}$}{}}
\label{sec:numerical}
The properties presented in the last section can be used as basic operations for determining $\hbar$-perfectness attempting to  decompose  a large graph into smaller, easier ones. In fortunate cases such a procedure will end with a decomposition of a large graph into a basic set of $\hbar$-perfect graphs. 
Examples of those are 
cliques and cycles. Nevertheless, those two classes are not big  enough for deciding 
$\hbar$-perfectness in practice. In order to enlarge this class, we will now introduce some tools for deciding $\hbar$-perfectness on graphs of moderate sizes. 

At first glance,  $\hbar$-perfectness according to definition~\ref{basicdef} requires us to compare $\alpha(G,w)$ and $\beta(G,w)$ for all non-negative weight vectors $w$. Without a closed form of these expressions at hand, this task is, however, impossible in practice. 
Nevertheless we are fortunate; it suffices to only check  equality of those two graph parameters on a finite set of  weights. 
Instead of comparing the functions $\alpha(G,w)$ and $\beta(G,w)$, we can equivalently try to decide if the sets $\be(G)$ and $ \stab(G)$ are equal to each other. 
Recall that $\stab(G)$ is a polytope and also the inclusion relation $\stab(G)\subseteq \be(G)$. Since both sets are convex, it is easy to see that only the $w$'s corresponding to the facets of $\stab(G)$ have to be checked. In detail, we have that a polytope in a finite-dimensional space is fully described by  its facets. If there is any point outside of $\stab(G)$ that is in $\be(G)$,  it can be detected by at least one of the inequalities defined by the facets. This can be exactly done by comparing  $\alpha(G,w)$ and $\beta(G,w)$ corresponding to the face-normal of this facet.
Note that this strategy is also used to prove Property~\ref{thm:perfect}.

The central problem we are left with is, therefore, the pointwise computation of  $\beta(G,w)$ as a function of the weights. 
In the following, we will present  three different algorithmic approaches for this. A practical solution for our decision problem will, in the end, consist of a clever combination of all those methods. 

The first method is an outer approximation in the form of an SDP hierarchy. Even though this hierarchy has no guarantee to stop at a finite level, it usually  performs quite well in practice and is sometimes even capable of producing tight bounds at very low levels. 
Our second method is a de Finetti type approximation used to lift the non-linearity in  $\beta(G,w)$  to a sequence of linear problems under symmetry on an extended space. We combine this with some symmetry reduction techniques and end up with an approximation routine that only relies on the computation of the largest eigenvalue of a sparse matrix. Here we can give clear performance guarantees that can be formulated as achieving $1/N$ accuracy in $poly(N)$ runtime.

These methods can be complemented by simple see-saw iterations that start from an initial state $\psi_0$ and successively improve the performance in order to find the optimum in the definition of $\beta(G,w)$. This see-saw method produces inner bounds on $\beta(G,w)$, thus in contrast to the other two methods that produce outer bounds. In practice, the see-saw method often quickly produce good approximations. In some instances, it might, however, get stuck in local minima. This behavior strongly depends on the initial state $\psi_0$. We can, however, combine it nicely with the other methods by producing warm starts, i.e. educated guesses on $\psi_0$ from outer approximations.

\subsection{\texorpdfstring{Computation of  $\mathbf{\beta(G,w)}$  via SDP hierarchies}{}}
\label{sec4a}

For a given graph $G$ and a given non-negative weight vector $w$, there is a complete hierarchy of SDP relaxations for bounding $\beta(G,w)$ from above, which is based on state polynomial optimization~\cite{moran2024Uncertainty,klep2024state}. 
The main idea is as follows.
We assign the $i$-th vertex the letter $x_i$ with the constraints that $x_i^2 = 1$ and $x_i x_j = - x_j x_i$ ($x_ix_j=x_jx_i$) whenever the vertices $i$ and $j$ are adjacent (non-adjacent) in $G$.
Then we consider the words $u=x_{i_1}\cdots x_{i_t}$ together with their pseudo-expectations $\langle u\rangle$ which are assumed to be real numbers.
The involution of $u$ is defined as $u^* = x_{i_t}\cdots x_{i_1}$.
A state monomial is of the form $u = u_0\langle u_1\rangle \cdots \langle u_s\rangle$, where $u_0, \ldots, u_s$ are words. The degree of $u$ is the sum of lengths of the $u_i$'s. For example, the degree of $x_1\langle x_1\rangle$ is $2$.
Now for a positive integer $r$, we define the $r$-th moment matrix indexed by state monomials up to degree $r$ through
$[M_r]_{u,v} = \langle u^*v\rangle$.
Then the $r$-th level of the hierarchy is given by the following SDP:
\begin{equation}\label{eq:hierarchy}
\begin{aligned}
	\lambda_r(G,w) \coloneqq & \sup \sum_{i=1}^n w_i\langle x_i\rangle^2\\
\rm{s.t.}\ & M_r\succeq0,\\
&[M_r]_{1,1}=1,\\
&[M_r]_{u,v}=[M_r]_{a,b}, \text{ if }\langle u^*v\rangle=\langle a^*b\rangle,
\end{aligned}
\end{equation}
where the conditions in the last line root in the normalization of $x_i$ and their commutation and anticommutation relations encoded in $G$. Each $\lambda_r(G,w)$ provides an upper bound on $\beta(G,w)$ and it was proved in \cite{klep2024state} that $\lim_{r\to\infty}\lambda_r(G,w)=\beta(G,w)$. Unfortunately, this complete hierarchy has a very high computational complexity and does not scale well. A simplified hierarchy with finite levels was presented in \cite{moran2024Uncertainty} by selecting a subset of state monomials to index moment matrices, which, however, does not result in exact values of $\beta(G,w)$ even for some graphs with $7$ vertices.
To better balance computational costs and the tightness of SDP relaxations, we propose a new strategy to select state monomials, leading to an enhanced hierarchy with infinite levels for bounding $\beta(G,w)$ from above. See Appendix~\ref{ssec:spo} for more details.

\subsection{\texorpdfstring{Computation of $\mathbf{\beta(G,w)}$ via a mean-field approximation}{}}

For a given graph $G$, denote by $\mathcal{S}$ a representation of $G$ with the shortest length $\ell$ of Pauli strings. Then the dimension of the elements $S_i$ is $d = 2^\ell$.
Due to the convexity of $\sum_i w_i  \langle S_i\rangle^2_{\rho}$ in $\rho$, the maximum of this quantity can always be achieved with $\rho$ being a pure state.
Hence, we can assume $\rho$ to be from the set of pure states in the following consideration.
With $H_S = \sum_i w_i S_i\otimes S_i$, we have
\begin{equation}\label{eq:derivation_deFinetti_argument}
\begin{aligned}
\beta(G,w)	= &\max_{\rho}\sum\nolimits_i w_i \langle S_i\rangle^2_{\rho} \\
	=&\max_{\rho}\sum\nolimits_i w_i \langle S_i\otimes S_i\rangle_{\rho\otimes\rho} \\
	=&\max_{\rho}\big\langle \sum\nolimits_i w_i S_i\otimes S_i\big\rangle_{\rho\otimes\rho} \\
	=&\max_{\rho} \tr[(\rho\otimes\rho) H_S]\\
\end{aligned}
\end{equation}

The optimization problem in \eqref{eq:derivation_deFinetti_argument} has a solution in the form of an SDP hierarchy. To this end, we define the following subspace
\begin{align}
  \operatorname{Sym}^m(\mathcal{H}) 
    \coloneqq \operatorname{span} \left\{ \psi^{\otimes m} \ \vert \ \psi \in \mathcal{H} \right\} \subset \mathcal{H}^{\otimes m}.
\end{align}
The operators acting on  $\operatorname{Sym}^m(\mathcal{H})$ are called Bose-symmetric, and their set is SDP representable \cite{Harrow2013_ChurchSymmetricSubspace}. In particular, each pure product state has an $m$-extension in $\operatorname{Sym}^m(\mathcal{H})$, so that 
\begin{equation}\begin{split}
\label{eq:definition_optimization_kextension}
  c^{(m)} \coloneqq \max &\tr[ \sigma_1^{m+2}  (H_S\otimes \id^{(m)}) ] \\
    &\text{s.t. }  \sigma_1^{m+2}\geq 0,\, \tr\sigma_1^{m+2}=1,\\
    &\phantom{==:}
     \sigma_1^{m+2} \in \operatorname{Sym}^{m+2}(\mathcal{H})
\end{split}\end{equation}
is an upper SDP approximation of \eqref{eq:derivation_deFinetti_argument}. Moreover, this can be converted into an eigenvalue problem for each level $m$ of the hierarchy. 

\begin{theorem}
\label{thm:deFinetti_approximation}
For integers $m$, the optimization problem in \eqref{eq:definition_optimization_kextension} is a
hierarchy of upper approximations converging to $\beta(G,w)$ from above, with the error bound 
\begin{equation}
  0 \leq c^{(m)} - \beta(G,w) \leq O(nd/m),
\end{equation}
where $n$ is the number of vertices in $G$. Moreover, each level of the hierarchy can be initialized in $\operatorname{poly}(m)$ time.
\end{theorem}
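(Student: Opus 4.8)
The plan is to prove the two claims separately: (i) the quantitative de Finetti convergence bound $0 \le c^{(m)} - \beta(G,w) \le O(nd/m)$, and (ii) the statement that each level of the hierarchy admits an eigenvalue formulation computable/initializable in $\operatorname{poly}(m)$ time.

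For (i), the lower bound $c^{(m)} \ge \beta(G,w)$ is already granted by the paragraph preceding the theorem: any optimal pure $\rho = |\psi\rangle\langle\psi|$ for \eqref{eq:derivation_deFinetti_argument} has the feasible $(m{+}2)$-extension $|\psi\rangle\langle\psi|^{\otimes(m+2)} \in \operatorname{Sym}^{m+2}(\mathcal{H})$, and tracing out $m$ copies recovers $\tr[(\rho\otimes\rho)H_S]$. For the upper bound, I would invoke a quantitative quantum de Finetti theorem in the form that for any Bose-symmetric state $\sigma^{(m+2)} \in \operatorname{Sym}^{m+2}(\mathcal{H})$ there is a probability measure $\mu$ on pure states with $\|\sigma_{12}^{(m+2)} - \int |\psi\rangle\langle\psi|^{\otimes 2}\,d\mu(\psi)\|_1 \le O(d/m)$ on the two retained copies (the Christandl–König–Mitchison–Renner bound, or the measured/PPT-norm variants). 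Then
\begin{equation}
  c^{(m)} = \tr[\sigma_{12}^{(m+2)} H_S] \le \int \tr[|\psi\rangle\langle\psi|^{\otimes 2} H_S]\,d\mu(\psi) + \|H_S\|_\infty \cdot O(d/m) \le \beta(G,w) + O(nd/m),
\end{equation}
using $\|H_S\|_\infty \le \sum_i w_i = O(n)$ (with the $w_i$ treated as $O(1)$, or keeping $\|w\|_1$ explicit) and the fact that $\int \tr[|\psi\rangle\langle\psi|^{\otimes 2}H_S]\,d\mu \le \max_\psi \sum_i w_i\langle S_i\rangle_\psi^2 = \beta(G,w)$. The main subtlety here is choosing the de Finetti variant whose error scales as $O(d/m)$ with the right trace-type norm so that the linear functional $H_S$ is controlled; a naive application of the symmetric-subspace de Finetti bound gives $O(d^2/m)$, so I would use the version tailored to bounded observables (or the one exploiting that $H_S$ acts on only two subsystems) to land the stated $O(nd/m)$.

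For (ii), the key observation is that $\operatorname{Sym}^{m+2}(\mathcal{H})$ is a subspace of dimension $\binom{d+m+1}{m+2}$, which is $\operatorname{poly}(m)$ for fixed $d$; maximizing the linear functional $\tr[\sigma^{(m+2)}(H_S\otimes\id^{(m)})]$ over density operators supported on this subspace is maximized at a rank-one $\sigma^{(m+2)}$, so the optimum equals the largest eigenvalue of $P_{\operatorname{Sym}}(H_S\otimes\id^{(m)})P_{\operatorname{Sym}}$, where $P_{\operatorname{Sym}}$ is the projector onto $\operatorname{Sym}^{m+2}(\mathcal{H})$. I would then argue that this compressed matrix can be written down explicitly in the standard monomial basis of the symmetric subspace in time polynomial in its dimension (the action of $H_S$ on the first two tensor factors has a simple combinatorial form after symmetrization), which gives the $\operatorname{poly}(m)$ initialization claim; invoking the symmetrization-of-SDP results cited in the paper (reducing the SDP over $\operatorname{Sym}^{m+2}$ to an eigenvalue problem on a space of $\operatorname{poly}(m)$ dimension) closes this part. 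The main obstacle overall is the first one — pinning down the exact de Finetti estimate that yields the advertised $O(nd/m)$ rate rather than a weaker power of $d$ — and I would spend most of the effort making sure the norm in the de Finetti bound dominates $|\tr[(\sigma_{12} - \int|\psi\rangle\langle\psi|^{\otimes2}d\mu)H_S]|$ cleanly.
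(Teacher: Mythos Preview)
Your proposal is correct and follows essentially the same route as the paper: the CKMR-type de Finetti bound controls the trace distance between the two-body marginal of the optimal Bose-symmetric extension and a convex mixture of product states, and then $\|H_S\|_\infty \le \|w\|_1 = O(n)$ converts this into the stated error; the $\operatorname{poly}(m)$ claim comes from the fact that $\dim\operatorname{Sym}^{m+2}(\mathcal{H})$ is polynomial in $m$ for fixed $d$ and the compressed matrix can be assembled in that basis (the paper cites Litjens et al.\ for this).

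The one place where you are too cautious is your worry about landing $O(d/m)$ rather than $O(d^2/m)$. The finite de Finetti theorem for states \emph{supported on the symmetric subspace} (Christandl--K\"onig--Mitchison--Renner, Thm.~II.8) already gives trace-distance error $\le 2dk/N$ for the $k$-body marginal of an $N$-copy Bose-symmetric state; here $k=2$, $N=m+2$, yielding $4d/(m+2)$ directly. No measured/PPT variant or ``bounded-observable'' refinement is needed---the trace-distance bound together with $\|H_S\|_\infty$ suffices. (The $O(d^2/m)$ rate you mention applies to merely permutation-invariant states, not Bose-symmetric ones; the restriction $\sigma \in \operatorname{Sym}^{m+2}(\mathcal{H})$ in the SDP is precisely what buys the linear-in-$d$ rate.) The paper additionally observes that the $S_i\otimes S_i$ mutually commute, which sharpens the justification of $\lambda_{\max}(H_S)\le\|w\|_1$, but your triangle-inequality bound gives the same conclusion.
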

\begin{proof}
Due to the fact that we optimize in the Bose symmetric subspace, we can use a finite de Finetti theorem customized for this.
From the finite de Finetti theorem~\cite[Thm.~II.8]{Christandl2007}, it is guaranteed that there exists a separable bipartite state $\tau$ such that the trace distance $T(\tau,\tr_m \sigma) \le 4d/(m+2)$ .
This implies that 
\begin{equation}
\begin{aligned}\label{eq:mfdiff}
  |\lambda_{\max}(H_S^{(m)}) -& \beta(G,w)| \\
	&\le \max_{\tau\in {\rm SEP}} |\tr((\tr_m \sigma - \tau)H_S)| \\
	&\le \max_{\tau\in {\rm SEP}} \lambda_{\max}(H_S) T(\tau, \tr_m \sigma) \\
	&\le 4 \|w\|_1 d/(m+2),
\end{aligned}
\end{equation}
where $\|w\|_1 = \sum_i w_1 = O(n)$, the last inequality is due to the fact that the elements in $\{S_i\otimes S_i\}$ commute with each other and each of them has eigenvalues $\pm 1$. Moreover, as we essentially aim to construct the first block of irreducible representations of the symmetric group $S_{m+2}$, we can use the techniques developed in \cite{Litjens2016} to calculate the matrix entries of the reduced SDP in the basis of the Bose symmetric subspace in $\operatorname{poly}(m)$ time. 
\end{proof}
Hence, this constitutes a convergent hierarchy, where the calculation for each level is an SDP.
For the $m$-th level, the dimension of the Bose symmetric subspace, i.e., the rank of $P_{\rm Sym}^{(m+2)}$ is $D = \binom{m+d+1}{m+2} = \binom{m+d+1}{d-1}$. Thus, $H_S^{(m)}$ represented in the Bose symmetric subspace is of dimension $D$, which is linear in the size $n$ of the graph,  polynomial in the level $m$, and polynomial in $d=2^\ell$ or exponential in length $\ell$ of the representation.
Then the value of $\lambda_{\max}(H_S^{(m)})$ can be obtained in $O(D^3)$ by full diagonalization.
To have the difference between $\lambda_{\max}(H_S^{(m)})$ and $\beta(G,w)$ smaller than $\epsilon > 0$, it suffices to set $m = O(nd/\epsilon)$. Consequently, $D = O((nd/\epsilon)^d)$ and the runtime of the whole algorithm is $O((nd/\epsilon)^{3d})$.
Since this hierarchy is polynomial in $n$ and superexponential in $d$, it is suitable for large graphs represented by short Pauli strings of length $\ell$.
Nevertheless, the number of nontrivial Pauli strings with length $\ell$ is $4^\ell - 1$, which is the maximal size of the corresponding frustration graph.
In the situation where we have small graphs represented by long Pauli strings, we switch to the hierarchy based on state polynomial optimization.

\subsection{\texorpdfstring{Computation of $\mathbf{\beta(G,w)}$ via a see-saw algorithm}{}}

There is also a comparatively lightweight see-saw method~\cite{xu2023bounding} for computing lower bounds on $\beta(G,w)$. 
For a given graph $G$ and one representation $\{S_i\}$ with the shortest length,
\begin{equation}
\begin{aligned}
    \beta(G,w) & = \max_{\rho} \sum\nolimits_i w_i \langle S_i\rangle^2_{\rho}\\
    &= \max_{\rho,\,\|b\|=1} \left(\sum\nolimits_i b_i\sqrt{w_i}\langle S_i\rangle_{\rho} \right)^2.
\end{aligned}
\end{equation}
Then for a given norm vector $b$, the optimization reduces to the calculation of the maximal singular value of $\sum_i b_i\sqrt{w_i} S_i$. For a given state $\rho$, the optimal norm vector $b$ is just the normalization of $(\langle S_1\rangle_\rho, \ldots, \langle S_n\rangle_\rho)$. This leads to a see-saw method by randomly choosing an initial state and iteratively optimizing over the norm vector $b$ and the state $\rho$. Since each step in the iteration results in a value no less than the one in the previous step, the sequence of those values converges to a limit that is a lower bound of $\beta(G,w)$.
Yet, the tightness of the lower bounds given by the see-saw method highly depends on the quality of the initially chosen quantum state.
We could then combine the see-saw method for lower bounds with the two hierarchies for upper bounds in the following way:
\begin{enumerate}
	\item Solve some level of one of the hierarchies to obtain an upper bound on $\beta(G,w)$;
	\item From the solution of the corresponding SDP, extract a legal quantum state close to the optimal state;
	\item Run the see-saw method with this approximately optimal state as the initial point, yielding a lower bound on $\beta(G,w)$.
\end{enumerate}
If the upper bound matches the lower bound (up to some numerical precision), then we obtain the exact value of $\beta(G,w)$.
For step 2, the optimal solution of the SDP at the $m$-th level of the hierarchy via mean-field approximation is a symmetric $(m+2)$-partite state $\sigma$. Then we can take the eigenstate of $\tr_{m+1}\sigma$ corresponding to the maximal eigenvalue as the initial state for the see-saw method. More details on the retrieval of a state for step 2 with the hierarchy based on state polynomial optimization are provided in Appendix~\ref{ssec:seesaw_state}.

\section{\texorpdfstring{Frequency of $\mathbf{\hbar}$-perfectness}{}}
\label{sec:fre}
Having introduced the properties and numerical tools, an immediate question to ask is: How common are $\hbar$-perfect graphs? 

As it turns out, small graphs can be tested for $\hbar$-perfectness and $\hbar$-imperfectness quite efficiently. 
We tested all graphs on up to 9 vertices (up to isomorphism) for their membership in different classes. 
To determine whether an $h$-imperfect graph is $\hbar$-perfect, we have first made use of the aforementioned properties. For the undetermined ones, we have employed the numerical methods in the previous section for the estimation. 
Then the $\hbar$-perfectness and $\hbar$-imperfectness of all graphs with $8$ vertices are determined up to a precision of $1\times 10^{-5}$. Among all connected graphs with $9$ vertices, only $78$ of them are left undetermined. 
The result of this is summarized in Table~\ref{tab:kinds_ghs}. More details are provided in Appendix~\ref{ssec:typical}.
\begin{table}[h!]
    \centering
        \begin{tabular}{r|l|l|l|l|l|l|l}
    \hline\hline
        $n$ & $3$ & $4$ & $5$ & $6$ & $7$ & $8$ & $9$ \\ \hline
        connected & $2$ & $6$ & $21$ & $112$ & $\bf{853}$ & $11117$ & $261080$ \\ \hline
        $\hbar$-perfect & $2$ & $6$ & $21$ & $\bf{112}$ & $852$ & $11099$ & $259583^*$  \\ \hline
        $h$-perfect & $2$ & $6$ & $\bf{21}$ & $109$ & $780$ & $8689$ & $146375$ \\ \hline
        perfect~\cite{oeis} & $2$ & $\bf{6}$ & $20$ & $105$ & $724$ & $7805$ & $126777$  \\ \hline
	claw-free~\cite{oeis} & $\bf{2}$ & $5$ & $14$ & $50$ & $191$ & $881$ & $4494$ \\ \hline\hline
    \end{tabular}
    \caption{The numbers of different classes of non-isomorphic graphs, where $n$ is the number of vertices. In the case of $9$ vertices, the number of $\hbar$-perfect graphs is in the range $[259583, 259661]$ since there are $78$ undetermined graphs. Claw-free~\cite{faudree1997} graphs are used in the study of solvable models~\cite{chapman2023unified}. In comparison, the ground state energy of models corresponding to $\hbar$-perfect graphs can be effectively estimates as discussed in Section~\ref{sec:app}.
    }
    \label{tab:kinds_ghs}
\end{table}

The numbers presented in Table~\ref{tab:kinds_ghs} suggest that $\hbar$-perfectness is indeed a typical feature for small graphs. 
However, the following observation suggests that the percentage of $\hbar$-perfect graphs decreases super-exponentially when the number of vertices is large enough, by noticing that the number of all graphs with $n$ vertices is $2^{n(n-1)/2}$. 
\begin{theorem}
\label{thm:haufigkeit}
	The number of $\hbar$-perfect graphs on $n\gg 1$ vertices is at most $2^{c n(n-1)/2}$ for some constant $c<1$.
\end{theorem}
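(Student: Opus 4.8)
The plan is to produce an injection (or at least a sufficiently efficient encoding) from $\hbar$-perfect graphs on $n$ vertices into a set whose cardinality is at most $2^{c\,n(n-1)/2}$. The natural source of leverage is Theorem~\ref{thm:subgraph}: $\hbar$-perfectness is hereditary under induced subgraphs, so if we can exhibit even a single ``small'' forbidden induced subgraph $F$ (for instance the anti-heptagon, which is $\hbar$-imperfect on $7$ vertices), then every $\hbar$-perfect graph is $F$-free. The number of $F$-free graphs on $n$ vertices is governed by the Erd\H{o}s--Kleitman--Rothschild / hereditary-property machinery: by the results on the speed of hereditary graph properties, a proper hereditary property $\mathcal{P}$ (one that excludes some fixed graph) has at most $2^{(1-1/k+o(1))\binom n2}$ graphs on $n$ labelled vertices, where $k$ is the so-called colouring number of $\mathcal{P}$; crucially $k$ is finite because $\mathcal{P}$ is a proper subclass of all graphs. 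Taking $c = 1 - 1/k + o(1) < 1$ for $n$ large then gives the bound, and the extra $n!$ for labelled-versus-unlabelled is absorbed into the $o(1)$ in the exponent.

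Concretely, the steps I would carry out are: \textbf{(1)} Invoke Theorem~\ref{thm:subgraph} to conclude that the class $\mathcal{H}$ of $\hbar$-perfect graphs is hereditary (closed under induced subgraphs). \textbf{(2)} Exhibit one $\hbar$-imperfect graph $F_0$ — the anti-heptagon $\overline{C_7}$ from \cite{xu2023bounding} serves — so that $\mathcal{H}$ is a \emph{proper} hereditary class, i.e.\ $\mathcal{H} \subseteq \{\text{graphs}\} \setminus \{G : F_0 \text{ is an induced subgraph of } G\}$. \textbf{(3)} Apply the Alekseev--Bollob\'as--Thomason theorem on the speed of hereditary properties: for any hereditary class $\mathcal{P}$ that is not the class of all graphs, the number of labelled graphs in $\mathcal{P}$ on $n$ vertices is $2^{(1-1/r(\mathcal{P}))\binom n2 + o(n^2)}$, where $r(\mathcal{P})\ge 1$ is a finite integer invariant of $\mathcal{P}$ (and $r\ge 2$ whenever $\mathcal{P}$ is proper, so $1-1/r \le 1/2$ here). \textbf{(4)} Set $c$ to be any constant strictly between $1-1/r(\mathcal{H})$ and $1$; then for all sufficiently large $n$ the $o(n^2)$ term is dominated and the count of labelled $\hbar$-perfect graphs, hence also of non-isomorphic ones, is at most $2^{c\binom n2}$.

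The main obstacle, or rather the main thing one must be careful about, is twofold. First, one needs to be sure that $\mathcal{H}$ really is a proper subclass — this is exactly the content of the refutation of the conjecture of \cite{de2023uncertainty}, so it is available, but the proof must cite it explicitly rather than take ``not all graphs are $\hbar$-perfect'' for granted. Second, the clean exponential-savings bound only follows from the full Alekseev--Bollob\'as--Thomason structure theorem; a more elementary self-contained argument would need to re-derive at least a weak version of it. A lighter-weight alternative that still yields \emph{some} $c<1$: since $\mathcal{H}$ is $\overline{C_7}$-free, one can pass to the complement and note that complements of $\hbar$-perfect graphs form a $C_7$-free hereditary class, then use the classical Erd\H{o}s--Frankl--R\"odl bound on the number of $H$-free graphs, $2^{(1-1/(\chi(H)-1)+o(1))\binom n2}$, with $H = C_7$ giving $\chi(C_7)-1 = 2$, hence $c = 1/2 + o(1)$. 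I would present the argument via this $C_7$-free route since it invokes only the standard Erd\H{o}s--Frankl--R\"odl / Kleitman--Winston-type counting rather than the full hereditary-property speed dichotomy, and it already suffices for the stated theorem. The only genuine computation is checking that some concrete graph on seven vertices is $\hbar$-imperfect, which is done in \cite{xu2023bounding} and need not be repeated here.
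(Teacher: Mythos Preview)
Your route via Alekseev--Bollob\'as--Thomason is correct and suffices: $\hbar$-perfectness is hereditary by Property~\ref{thm:subgraph}, the class is proper since $\bar{C}_7\notin\mathcal{H}$, and ABT then yields a finite colouring number $r(\mathcal{H})$, hence exponent $1-1/r(\mathcal{H})<1$. (Minor slip: $r\ge 2$ gives $1-1/r\ge 1/2$, not $\le$; but all you actually need is $<1$, which follows from $r<\infty$.)

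Your preferred ``lighter-weight'' route has a genuine gap, however. The Erd\H{o}s--Frankl--R\"odl count $2^{(1-1/(\chi(H)-1)+o(1))\binom{n}{2}}$ is for graphs with no copy of $H$ as a \emph{subgraph}. Passing to complements only tells you that $\bar{G}$ has no \emph{induced} $C_7$, a strictly weaker constraint: the class of induced-$C_7$-free graphs properly contains the class with no $C_7$ subgraph, so the EFR bound does not upper-bound it---the inequality runs the wrong way. For induced-forbidden counting the correct exponent is the colouring-number expression from ABT, not $1-1/(\chi(H)-1)$, and it is not $1/2$ here. So this shortcut does not work as stated; if you want something lighter than ABT you need a different elementary argument.

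The paper supplies exactly that. Rather than citing a structural theorem, it applies McDiarmid's bounded-differences inequality directly to the random variable $f$ that counts induced copies of $\bar{C}_7$ in $G(n,1/2)$. With $\mathbb{E}[f]=p_0\binom{n}{7}$ and each edge affecting $f$ by at most $\binom{n-2}{5}$, McDiarmid gives $\Pr[f=0]\le\exp\bigl(-2p_0^2\binom{n}{2}/\binom{7}{2}^2\bigr)$, yielding the explicit constant $c=1-2p_0^2(\log_2 e)/\binom{7}{2}^2$. Your ABT route would in principle give a sharper asymptotic constant but at the cost of invoking a substantial black box; the McDiarmid argument is entirely self-contained.
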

The exact proof is given in Appendix~\ref{ssec:asym}. The main idea is that there are forbidden subgraphs like $\bar{C}_7$ for $\hbar$-perfect graphs, and the chance of avoiding all such subgraphs decays exponentially in the number of edges, $N=\frac{n(n-1)}{2}$.
This is a common feature in extremal graph theory problems. 
The result extends to any graph property defined by the absence of a set of forbidden induced subgraphs. For such properties, the number of graphs satisfying the property is at most $2^{\tilde{c} N}$ for some $\tilde{c}<1$.
It is plausible that the optimal constant $\tilde{c}$ in the exponent is a characteristic of the specific graph property. For the related concepts of perfectness, $h$-perfectness, and $\hbar$-perfectness, we can speculate that the constants form a strictly increasing sequence:
\begin{equation}
\frac{1}{2} \le c_{\text{perfect}} < c_{h\text{-perfect}} < c_{\hbar\text{-perfect}} < 1,
\end{equation}
where the first inequality is an application of the fact that the number of graphs on $n$ vertices without any graph from a given set ${\cal L}$ as a subgraph (not necessarily induced) is $2^{(1-1/p)N + o(N)}$, where $p$ is the smallest chromatic number of graphs in ${\cal L}$~\cite{Erds1986TheAN,Balogh2004TheNO}.
The value of $p$ is $3$ for ${\cal L}$ the set of imperfect graphs, since all graphs with the chromatic number $2$ are bipartite graphs and consequently perfect ones.
The value of $p$ is also $3$ for ${\cal L}$ the set of $h$-imperfect graphs, since the first two $h$-imperfect graphs in Fig.~\ref{fig:himperfect} have a chromatic number of $3$. Hence, the lower bound of $c_{h-\text{perfect}}$ is also $1/2$, without any improvement over the one for $c_{\text{perfect}}$.
However, the result on all graphs with no more than $9$ vertices shows that all the $\hbar$-imperfect graphs here are with chromatic number $4$. If this is indeed the case for any $\hbar$-perfect graphs, then it leads to the lower bound of $3/4 \le c_{\hbar-\text{perfect}}$.
This speculation suggests that each successive property is ``less common'' in the random graph model.
Once the speculation holds, it implies that there are super-exponentially more $\hbar$-perfect graphs than the other two kinds of perfect graphs.

Remarkably, there are at most $n!$ possible ways to label a graph on $n$ vertices such that the resulting graphs are isomorphic to each other. Since $\lim_{n\to \infty} n! 2^{-\epsilon n^2} = 0$ for any $\epsilon>0$, the number of non-isomorphic  graphs on $n\gg 1$ vertices in the considered class is also at the scale of $2^{cN+o(N)}$ for $c$ being either $c_{\rm perfect}, c_{h\text{-perfect}}$ or $c_{\hbar\text{-perfect}}$. Since in particular the number of all isomorphism classes of graphs on $n$ vertices is at the scale $2^{N+o(N)}$, the percentage of each class is at the scale $2^{-(1-c)N+o(N)}$, which decreases super-exponentially when $n$ is large enough.

In the following, we apply the framework of $\be(G)$ and $\hbar$-perfect graphs in the detection and estimation of quantum entanglement, shadow tomography, and outline further fruitful implications in some basic applications.

\section{Entanglement detection and estimation}
\label{sec:entanglement}
One of the central tasks in the foundations of quantum information is how to detect quantum entanglement efficiently, where the main approach is entanglement witness~\cite{horodecki1996necessary,terhal2000bell,huber2013structure}, leading to entanglement estimation also~\cite{brandao2005quantifying,guhne2007estimating,sun2024bounding}.
One dimension is to detect more entangled states with the same amount of measurement resources. Another dimension is to use less measurement resources for the same detection ability.
In this section, we contribute to both of those two dimensions by employing $\hbar$-perfect graphs in the detection of high-dimensional entanglement, and turning each graph state into a tool for multipartite entanglement structures. Besides, since each observable is in the form of Pauli strings, the measurement can be implemented as a composition of the ones on single qubit.

\subsection{High-dimensional entanglement}
The fact that $\sum_i w_i \langle S_i\rangle^2 \le \beta(G,w)$ implies a tight nonlinear witness for bipartite entanglement, i.e., for any separable state $\rho$,
\begin{equation}\label{eq:nlwitness}
	\sum\nolimits_i w_i |\langle S_i\otimes S'_i\rangle_{\rho}| \le  \beta(G,w),
\end{equation}
where ${\cal S} = \{S_i\}$ and ${\cal S}' = \{S'_i\}$ are two realizations of $G$ as a frustration graph.
This inequality holds due to the convexity of the absolute value, the Cauchy-Schwarz inequality, and the fact that separable states are linear combinations of product states. 
We note that a similar but less tight inequality has been obtained~\cite{de2023uncertainty}, where the right hand is the weighted Lov\'asz number~\cite{lovasz1979shannon} $\vartheta(G,w) \ge \beta(G,w)$ and the left hand does not take an absolute value. As we can see in the last example in this section, those small differences make a distinctive improvement on the detection ability of entanglement.

By taking another perspective, define
\begin{equation}
\begin{aligned}
	&p({\cal S}, {\cal S}', \rho) \coloneqq (|\langle S_1\otimes S'_1\rangle_{\rho}|, \ldots, |\langle S_n\otimes S'_n\rangle_{\rho}|),\\
	&{\cal W}({\cal S}, {\cal S}') \coloneqq \conv(\downarrow\hspace{-0.3em}\{p({\cal S}, {\cal S}', \rho) \,|\, \rho \text{ is separable}\}).
\end{aligned}
\end{equation}
As it turns out,
\begin{equation}\label{eq:entb}
{\cal W}({\cal S}, {\cal S}') = \be(G).
\end{equation}
On the one hand,  Eq.~\eqref{eq:nlwitness} implies that 
${\cal W}({\cal S}, {\cal S}') \subseteq \be(G)$,
as $\be(G)$ is the minimal convex set containing all the points $p$ with non-negative elements satisfying the inequality $\sum_i w_i p_i \le \beta(G,w)$ by definition.
On the other hand, denote by $\bar{\cal S}$ a basic representation of $G$, then there are local unitary transformations to convert any observable pair $S_i\otimes S'_i$ in $({\cal S}, {\cal S}')$ to $(\bar{S}_i\otimes D_i)\otimes(\bar{S}_i\otimes D'_i)$, where $\bar{S}_i \in {\cal S}$, $D_i$ and $D'_i$ are diagonal matrices with diagonal elements as $\pm 1$.
Consequently, ${\cal Q}(\bar{\cal S}) \subseteq {\cal W}({\cal S}, {\cal S}')$ since all the points in ${\cal Q}(\bar{\cal S})$ can be realized by some separable state $\rho$ with the tuple $({\cal S}, {\cal S}')$. Proposition~\ref{ob:beta} directly implies that $\be(G) \subseteq {\cal W}({\cal S}, {\cal S}')$.

Hence, verifying that the point $p({\cal S}, {\cal S}', \rho)$ is outside  $\be(G)$ implies the entanglement of $\rho$.
Since the exact characterization of $\be(G)$ is difficult especially when the local dimension of $\rho$ is high, our numerical methods provide a hierarchy of outer approximations of $\be(G)$ as proven in Appendix~\ref{ssec:hierprop} and lead to effective criteria of entanglement based on SDP. To be more precise, we use the point $p({\cal S}, {\cal S}', \rho)$ of the given state $\rho$ as the weight vector $w$ in Eq.~\eqref{eq:hierarchy}. Once the optimal value is strictly smaller than the square of the norm of $p({\cal S}, {\cal S}',\rho)$, it means that this point is outside the set of feasible points allowed by the SDP in Eq.~\eqref{eq:hierarchy}. Consequently, this point is outside $\be(G)$ and the state $\rho$ should be entangled.
\begin{figure}
    \centering
    \includegraphics[width=0.875\linewidth]{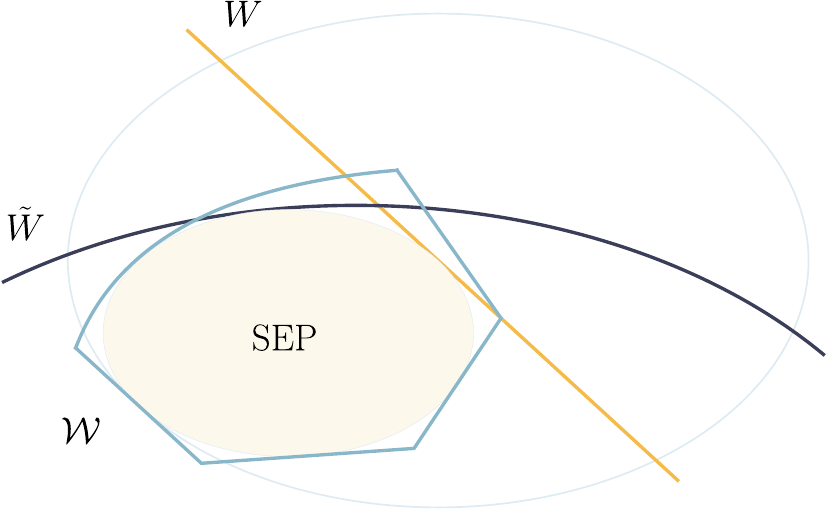}
    \caption{Entanglement detection via beta body versus linear and nonlinear witnesses. A linear witness $W$ defines a hyperplane in the space of all states, separating some entangled states from the
set ${\rm SEP}$ of separable states. Similarly, a nonlinear witness $\tilde{W}$ defines a nonlinear hypersurface. The beta body defines a closed convex set ${\cal W}$, whose surface is in general a combination of hypersurfaces and hyperplanes. For the underlying graph to be $\hbar$-perfect, there are only hyperplanes.}
    \label{fig:placeholder}
\end{figure}

The procedure can be simplified for $G$ an $\hbar$-perfect graph, since then $\be(G)=\stab(G)$ is a polytope. 
In particular, when $G$ is $h$-perfect or even perfect, $\stab(G)$ is well-characterized, see Section~\ref{sec:props}.
The detection of entanglement reduces to the separation of a point from a polytope, for which there are more economic algorithms, such as the Gilbert algorithm. In addition, we can use the symmetry of the graph $G$ to further reduce the complexity of the algorithm as in Ref.~\cite{xu2023graph}.
As a simple example, we take ${\cal S} = {\cal S}' = \{X, Y, Z\}$. The corresponding frustration graph $G$ is the triangle graph which is perfect and hence $\hbar$-perfect. Then $\be(G) = \stab(G) = \{(x,y,z)|x+y+z\le 1\} \cap \mathbb{R}^3_+$.
If $\rho$ is the density matrix of any of the four Bell states $|\Psi^\pm\rangle = (|00\rangle \pm |11\rangle)/\sqrt{2}$ and $|\Phi^\pm\rangle = (|01\rangle \pm |10\rangle)/\sqrt{2}$, then $p({\cal S}, {\cal S}',\rho) = (1,1,1)$ which is outside $\be(G)$. Furthermore, for $\rho$ to be any Bell diagonal state, i.e., the mixture of the four Bell states, it is entangled if and only if the corresponding point $p({\cal S}, {\cal S}',\rho)$ is out of $\be(G)$. More details are provided in Appendix~\ref{ssec:entanglement}. 
In comparison, any linear witness can detect at most one of them, since the maximal mixture of any two is separable.

The advantage of this new method over the linear witness becomes more evident as the local dimension of the state increases.
When the local dimension is $d=2^k$, the $4^k$ maximally entangled Bell states can be detected simultaneously by taking ${\cal S}={\cal S}'$ as all $k$-partite Pauli strings except the identity. For a dimension $d$ that is not a power of $2$, one strategy is to embed the system into a higher-dimensional space; another is to use lower-dimensional subspaces as a cover of the original. We take $d=3$ as an example. Denote by $X^{(i)}, Y^{(i)}, Z^{(i)}$ the operators $X, Y, Z$ in the two-dimensional subspace without the $i$-th dimension. Then as detailed in Appendix~\ref{ssec:entanglement}, it holds for any separable state that
\begin{equation}
\label{eq:dim3}
    \sum_{i,j=1}^3 \big[ |\langle X^{(i)} X^{(j)}\rangle| + |\langle Y^{(i)} Y^{(j)}\rangle| + |\langle Z^{(i)} Z^{(j)}\rangle|\big] \le 4.
\end{equation}
Direct calculation shows that this inequality is violated by the following state with any $p\neq 2/5$, 
\begin{equation}
    \rho(p) = (1-p) |\Psi_3\rangle\langle \Psi_3| + p |\Phi'\rangle\langle \Phi'|,
\end{equation}
where $|\Psi_3\rangle = (\sum_{i=1}^3 |ii\rangle)/\sqrt{3}$ and $|\Phi'\rangle = (|12\rangle + |21\rangle)/\sqrt{2}$. Furthermore, the criterion in Eq.~\eqref{eq:dim3} detects unfaithful entanglement~\cite{Weilenmann2019EntanglementDB} also.

Another advantage of this new method is to recognize critical measurements for the detection of entanglement, exemplified by the bipartite state $\rho(v)$ with local dimensions of $4$ from Ref.~\cite{de2023uncertainty}, where the entanglement cannot be detected by the PPT criterion. More explicitly,
\begin{equation}\label{eq:rhov}
    \rho(v) = v\sum_{i=1}^6 |\psi_{S_i}\rangle\langle \psi_{S_i}|/6 + (1-v)\id/16,
\end{equation}
where $v\in [0,1]$, $|\psi_i\rangle = \sum_{j=0}^3 |j\rangle \otimes (S_i |j\rangle)/2$ and $\{S_i\}$ are the $6$ Pauli strings $\id Y, XX, YZ, ZX, ZY$ and $ZZ$. 
It is found in Ref.~\cite{de2023uncertainty} that $\rho(v)$ is PPT entangled if and only if $v\in (3/5,1]$ by employing all the $15$ nontrivial bipartite Pauli strings as $4$-dimensional measurements per party. 

As discussed in Appendix~\ref{sec:proofs}, the frustration graph $G_{15}$ of those $15$ Pauli strings is $\hbar$-perfect, and all nontrivial facets of $\be(G_{15}) = \stab(G_{15})$ have the norm vector with either $5$ or $10$ elements being $1$ and the others being $0$. Then we only need to consider the corresponding criteria in Eq.~\eqref{eq:nlwitness}. As it turns out, the criterion in Eq.~\eqref{eq:nlwitness} with only the $5$ pairwise anticommutative measurements in ${\cal S}_5 =\{\id X, \id Z, XY, YY, ZY\}$ per party leads to the same conclusion about the entanglement of $\rho(v)$, which saves $66.6\%$ measurement resources in comparison with the original criterion in Ref.~\cite{de2023uncertainty}.
By testing all the other combinations of $6$ bipartite Pauli strings with the same frustration graph as $\{S_i\}$, there are $120$ such PPT entangled states whose entanglement can be detected by the criterion in Eq.~\eqref{eq:nlwitness} with measurements in ${\cal S}_5$. However, all the other $119$ states except the one in Eq.~\eqref{eq:rhov} cannot be detected by the original criterion in Ref.~\cite{de2023uncertainty}.
This example exactly illustrates how the new method can detect more entanglement with less measurement resources.
\subsection{Multipartite entanglement structure}
We continue to consider multipartite entanglement.
Graph states form an important class of states with high entanglement~\cite{hein2006entanglement}. For a given graph $G$, the graph state $|G\rangle$ is defined as the common eigenstate with eigenvalue $1$ of the commuting operators $\{X_i\otimes Z_{N(i)}\}_{i\in V(G)}$, where $V(G)$ is the vertex set of $G$, i.e., the number of parties, $N(i)$ is the set of vertices connecting to $i$ and $Z_{N(i)} = \otimes_{j\in N(i)} Z_j$. Those operators generate the stabilizer group of the graph state.
Graph states and their stabilizers have already been employed to construct linear witnesses in the detection of entanglement structures~\cite{jungnitsch2011entanglement}, especially in the way with minimal measurement resources~\cite{toth2005detecting,zhou2019detecting}.
GHZ states are equivalent to graph states with complete graph up to local unitaries.
However, any linear witness cannot detect the genuine multipartite entanglement of any two GHZ states, since their maximal combination is not genuinely multipartite entangled.
We extend our technique to the multipartite scenario with the ability to detect all the graph states forming a complete basis, leading to an adaptive strategy for the detection of different entanglement structures with few measurement resources.

To illustrate the main idea, we take the $8$ tripartite GHZ states $(|0ij\rangle\pm |1\bar{i}\bar{j}\rangle)/\sqrt{2}$ as an example, where $i,j \in \{0,1\}$, $\bar{i}=1-i$ and $\bar{j}=1-j$.
The stabilizer group excluding the identity contains (up to a minus sign)
\begin{equation}\label{eq:stabilizer}
	Z Z\id, Z\id Z, \id Z Z, XXX, YYX, YXY, XYY,
\end{equation}
denoted by $\{O_i\}_{i=1}^7$ in order.
For the bipartition $1|23$ of the three parities, the operators $O_i$ are divided into $S_i$ and $S'_i$. More explicitly,
\begin{align}\label{eq:splitted}
	S_i:\ & Z, Z, \id, X, Y, Y, X,\nonumber\\
	S'_i:\ & Z\id,\id Z,Z Z, XX, YX, XY, YY.
\end{align}
Notice that ${\cal S} =\{S_i\}$ and ${\cal S}' = \{S'_i\}$ share the same frustration graph denoted by $G$.
Then for any tripartite state $\rho$  separable for this bipartition, we have $p({\cal S}, {\cal S}', \rho) \in \be(G)$. Equivalently,
for any non-negative coefficient $w$,  we have
$\sum_i w_i |\langle O_i\rangle_{\rho}|	= \sum_i w_i |\langle S_i\otimes S'_i\rangle_{\rho}| \le \beta(G,w)$ with $\beta(G,w) = w_3+\max\{w_1+w_2,w_4+w_7,w_5+w_6\} \coloneqq t_1$.
Similarly for another two bipartitions, we obtain upper bounds for $\sum_i w_i |\langle O_i\rangle_{\rho}|$ in the form $w_1+\max\{w_2+w_3,w_4+w_5,w_6+w_7\} \eqqcolon t_2$ and $w_2+\max\{w_1+w_3,w_4+w_6,w_5+w_7\}\eqqcolon t_3$.
The convexity of the absolute value and the fact that any biseparable state is a mixture of the separable states for some bipartition implies that
\begin{equation}\label{eq:gmecriterion}
	\sum\nolimits_i w_i |\langle O_i\rangle_{\sigma}| \le \max\{t_1, t_2, t_3\}
\end{equation}
for any biseparable state $\sigma$.  
The reformulation of Eq.~\eqref{eq:gmecriterion} leads to a polytope defined by linear inequalities.
Thus, the detection of a point outside of this polytope, or equivalently, the violation of the inequality in Eq.~\eqref{eq:gmecriterion} indicates the genuine tripartite entanglement.
Setting all elements of $w$ to $1$, we have $\sum_i |\langle O_i\rangle_{\sigma}| \le 3$.
For $\rho$ equal to any of those $8$ GHZ states,  $\sum_i |\langle O_i\rangle_{\rho}| = 7$.
In fact, for any state $\rho$ whose fidelity with any of those $8$ GHZ states is greater than $1/2$, its genuine tripartite entanglement can be detected by $\sum_i |\langle O_i\rangle_{\rho}| > 3$. Furthermore, this criterion is tight for $\rho$ being any GHZ diagonal state, since it is genuinely tripartite entangled if and only if $\sum_i |\langle O_i\rangle_\rho| \le 3$.

The same data as in Eq.~\eqref{eq:gmecriterion} can be used for the detection of other types of entanglement structure.
Similarly, the upper bound in Eq.~\eqref{eq:gmecriterion} is $\min \{t_1, t_2, t_3\}$ for $\rho$ any fully separable state, since there is entanglement whenever the state is not separable for one partition. By taking the weight vector $w=(0,1,1,2,0,0,0)$, the upper bounds for GHZ states, biseparable states, and fully separable states are $4$, $3$ and $2$, respectively, leading to a distinction of different entanglement structures.
In comparison, the witness constructed in Ref.~\cite{zhou2019detecting} can only tell whether there is genuine entanglement or not. 
More details are provided in Appendix~\ref{ssec:entanglement}.

In the general case, any graph state $|G\rangle$ leads to a similar criterion of genuine multipartite entanglement as in Eq.~\eqref{eq:gmecriterion}. More explicitly,
\begin{equation}
	\sum\nolimits_i w_i |\langle O_i\rangle_{\sigma}| \le \max_{\rm bp} \beta(G_{\rm bp},w),
\end{equation}
where `bp' stands for bipartition and $G_{\rm bp}$ is the corresponding frustration graph of the operators in one part, e.g., $\{S_i\}$ in Eq.~\eqref{eq:splitted}.
Equivalently,
\begin{equation}\label{eq:gmecriterion2}
\{|\langle O_i\rangle_{\sigma}| \}_{i} \in  \conv\left(\bigcup_{\rm bp}{\be(G_{\rm bp})}\right).
\end{equation}
To overcome the difficulty in the characterization of $\be(G_{\rm bp})$, we can carefully choose a subset of $\{O_i\}$ such that $G_{\rm bp}$ is $\hbar$-perfect for any bipartition. This is always possible since all graphs with no more than $6$ vertices are $\hbar$-perfect. Then the right-hand side of Eq.~\eqref{eq:gmecriterion2} becomes the union of polytopes, which is much easier to characterize by using the vertex representation.

As the size of the stabilizer group increases exponentially with the number of parties, the detection with all the operators in the stabilizer group becomes heavy in practice. To reduce the cost of measurement resources, we design an adaptive strategy by adding measurements sequentially for different degree of entanglement. We illustrate this strategy still with the triparite GHZ states and the stabilizers in Eq.~\eqref{eq:stabilizer}. Notice that the mean values of $ZZ\id, Z\id Z, \id ZZ$ can be obtained from the measurement $ZZZ$ by taking the marginals. Hence, we treat them as one measurement. Then with $k\ge 2$ measurements from Eq.~\eqref{eq:stabilizer} in order, we have $\sum_{i=1}^{k+2} |\langle O_i\rangle_{\sigma}| \le 3$ for any biseparable state $\sigma$. In comparison, any of the $8$ GHZ states results in a violation with value $k+2$. This implies that  $k$ measurements are enough for the detection of entanglement when the coefficient $p > 3/(k+2)$ in the mixed state $p\, {\rm GHZ} + (1-p) \id/8$, e.g., $k=2$ for $p>3/4$. Furthermore, by considering the weighted version as in Eq.~\eqref{eq:gmecriterion}, the previous discussion with the weight vector $w=(0,1,1,2,0,0,0)$ reveals that only two measurements are sufficient for the detection of different entanglement structures. Thus, an adaptive strategy can also be constructed for this task against different degrees of entanglement.

\subsection{Entanglement estimation}
Any linear witness of entanglement can be used for entanglement estimation~\cite{sun2024bounding}. Under the assumption that an effective lower bound of the distance between a given point and the set $\be(G)$ or $\conv(\cup_{\rm bp}\be(G_{\rm bp}))$ can be obtained,
 we also convert the entanglement criteria in Eqs.~\eqref{eq:entb} and \eqref{eq:gmecriterion2} into a method of entanglement estimation.
In the case that all $G_{\rm bp}$ are all $\hbar$-perfect graphs, we can employ Gilbert algorithm to estimate the distance. In the general case, relaxations based on state polynomial optimization might be an option, which we leave for future research.
Another option is to employ maximal $\hbar$-perfect subgraphs of $G_{\rm bp}$, each of which leads to some linear constraints and their combination results in a polytope as a relaxation of $\be(G_{\rm bp})$.
Here we adopt the taxicab distance to estimate the genuine tripartite entanglement measured by the trace distance, i.e., ${\cal E}_T$, with the operators $\{O_i\}_{i=1}^7$ in Eq.~\eqref{eq:stabilizer}. Another example based on the Hilbert-Schmidt distance is provided in Appendix~\ref{ssec:entanglement}.

For the target state $\rho$, denote by $d_T(\rho)$ the distance from the point $\{|\langle O_i\rangle_{\rho}|\}$ to the set $\conv(\cup_{\rm bp}\be(G_{\rm bp}))$, which is a polytope. By definition, for any biseparable state $\sigma$, we have
\allowdisplaybreaks
\begin{align}
  d_T(\rho) &\le \sum\nolimits_i ||\langle O_i\rangle_{\rho}| - |\langle O_i\rangle_{\sigma}|| \nonumber\\
	    &\le \sum\nolimits_i |\langle O_i\rangle_{\rho} - \langle O_i\rangle_{\sigma}|  \nonumber\\
	    &= \max_{s_i = \pm 1}\sum\nolimits_i s_i \tr[O_i(\rho - \sigma)] \nonumber\\
	    &= \max_{s_i = \pm 1}\tr\big[\sum\nolimits_i s_i O_i(\rho - \sigma)\big] \nonumber\\
	    &\le \max_{s_i = \pm 1} \lambda_{\rm gap} \big( \sum\nolimits_i s_i O_i \big) T(\rho, \sigma) \nonumber\\
	    &= 8\, T(\rho,\sigma),
\end{align}
where $T(\rho,\sigma)$ is the trace distance between states $\rho$ and $\sigma$, $\lambda_{\rm gap}$ is the difference between the maximal and minimal eigenvalues.
The inequality in the second last line holds due to $\tr(\rho-\sigma)=0$, then there exists $\epsilon$ such that the maximal singular value of $\sum_i s_i O_i + \epsilon \id$ is $\lambda_{\rm gap}(\sum_i s_i O_i)/2$.  The equality in the last line comes from the fact that $\lambda_{\rm gap}=8$ for any combination of $s_i=\pm 1$ in this special case. In the general case, the calculation of $\lambda_{\rm gap}$ reduces to an integer program since $O_i$ has eigenvalues $\pm 1$ and they commute with each other.
By minimizing over the biseparable state $\sigma$, we obtain the estimation
\begin{equation}
	{\cal E}_T(\rho) \ge d_T(\rho)/8.
\end{equation}
This lower bound is tight, which can be verified by the fact that ${\cal E}_T(\rho) = 1/2$ for $\rho$ any of the $8$ GHZ states and
$d_T(\rho) \ge |\sum_i |\langle O_i\rangle_{\rho}| - \sum_i |\langle O_i\rangle_{\sigma}|| \ge 4$ for any $\sigma$ that is biseparable. 
In comparison, no linear witness can detect or be used to estimate the entanglement of the two GHZ states $(|000\rangle \pm |111\rangle)/\sqrt{2}$ at the same time.

Our method sheds new light on the detection and estimation of high-dimensional and multipartite entanglement by providing an easy but fruitful way to construct entanglement criteria, which are more effective than linear entanglement witness in the detection and estimation of entanglement.

\section{Sample complexity of shadow tomography}\label{sec:shadow}
The uncertainty relation based on variances is related to the sample complexity in the shadow tomography of Pauli strings~\cite{king2024triplyefficientshadowtomography,chen2024Optimal}. More exactly,
the sample complexity for shadow tomography of $n$ Pauli strings $\{S_i\}$ without quantum memory is lower bounded by $\Omega(1/[\epsilon^2 \delta(\{S_i\})])$ and upper bounded by $O(\log n/[\epsilon^2 \delta(\{S_i\})])$, where $\epsilon$ is the difference between the real value and the estimation, and
\begin{equation}
	\delta(\{S_i\}) = \min_{w\in {\cal D}} \max_{|\psi\rangle} \sum\nolimits_i w_i \langle \psi|S_i|\psi\rangle^2
\end{equation}
with ${\cal D}$ being the set of probability distributions, that is, $\sum_i w_i = 1$ and $w_i\ge 0$.
Hence, by definition,
\begin{equation}\label{eq:delta}
	\delta(\{S_i\}) = \min_{w\in {\cal D}} \beta(G,w),
\end{equation}
where $G$ is the frustration graph of $\{S_i\}$.
The direct implication is that the sample complexity depends only on the frustration graph $G$, but not the exact set of Pauli strings.

The parameter $\delta(\{S_i\})$ is generally hard to calculate. 
Based on the hierarchy in Eq.~\eqref{eq:hierarchy} to estimate $\beta(G,w)$, we propose an SDP hierarchy to approximate $\delta(\{S_i\})$ at the $r$-th level by $1/\omega_r$, where
\begin{align}\label{eq:deltaapp}
	\omega_r \coloneqq &\max \sum\nolimits_i w_i\nonumber\\
	\mathrm{s.t.}\ & \lambda_r(G,w) \le 1,\nonumber\\
	     & w\ge 0,
\end{align}
and $\lambda_r(G,w)$ is the optimal value of the $r$-th level in the original complete hierarchy~\cite{moran2024Uncertainty} to calculate $\beta(G,w)$.
The conditions in Eq.~\eqref{eq:deltaapp} imply that
the set of feasible points of $w$  is the polar set of feasible points in the SDP for $\lambda_r(G,w)$ projected into the dimensions related to the monomials in the form $x_i\langle x_i\rangle$ and intersected with $\mathbb{R}^n_+$, which again is characterized by conditions resulting in an SDP. More details are provided in Appendix~\ref{ssec:hierprop}.
As $\lambda_r(G,w)$ converges to $\beta(G,w)$, we have that $1/\omega_r$ converges to $\delta(\{S_i\})$.
In the case where $G$ is $\hbar$-perfect, the calculation of $\delta(\{S_i\})$ can be simplified due to the following observation.
Denote $\alpha^*$ and $\vartheta$ the fractional packing number~\cite{schrijver1979fractional} and the Lov\'{a}sz number of a graph.
\begin{theorem}\label{ob:deltabs} 
	For a given set of Pauli strings $\{S_i\}$ with $G$ being its frustration graph, the sample complexity index
	 $ \delta(\{S_i\}) \in [1/\alpha^*(\bar{G}), 1/\vartheta(\bar{G})]$,
 and the lower bound is achieved whenever $G$ is $\hbar$-perfect. 
\end{theorem}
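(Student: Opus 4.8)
The plan is to establish the chain of inequalities
\[
\frac{1}{\alpha^*(\bar G)} \leq \delta(\{S_i\}) \leq \frac{1}{\vartheta(\bar G)},
\]
by relating each quantity in Eq.~\eqref{eq:delta} to a known graph parameter via LP/SDP duality, and then to use $\hbar$-perfectness to collapse the lower bound to an equality. First I would recall that $\delta(\{S_i\}) = \min_{w \in {\cal D}} \beta(G,w)$ with ${\cal D}$ the probability simplex, and that by Definition~\ref{basicdef} and Proposition~\ref{ob:beta} we always have $\alpha(G,w) \leq \beta(G,w) \leq \vartheta(G,w)$ for all $w \in \mathbb{R}_+^n$ (the upper bound by the weighted Lov\'asz sandwich theorem, which was invoked implicitly around Eq.~\eqref{eq:nlwitness}). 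Taking the minimum over $w \in {\cal D}$ of each side, the task reduces to identifying $\min_{w\in{\cal D}} \alpha(G,w)$ and $\min_{w\in{\cal D}} \vartheta(G,w)$.

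For the upper bound, I would use the fact that $\vartheta(G,w)$ is, as a function of $w$, the support function of the convex body $\mathrm{TH}(G)$ (the theta body), so $\min_{w\in{\cal D}}\vartheta(G,w)$ is a min-max that by Sion's minimax theorem equals $\max_{x\in \mathrm{TH}(G)} \min_{w\in{\cal D}} \langle w, x\rangle = \max_{x\in\mathrm{TH}(G)}\min_i x_i$. This ``bottleneck'' value over the theta body is exactly $1/\vartheta(\bar G)$: indeed the all-ones-scaled point $x = \vartheta(\bar G)^{-1}\mathbf{1}$ lies in $\mathrm{TH}(G)$ because $\vartheta$ is the reciprocal of the optimal uniform value (this is the standard relation $\vartheta(G)\vartheta(\bar G)\geq n$ specialized, or more directly the fact that the optimal theta-body point in the all-ones direction is $\vartheta(G)$, hence the best uniform lower coordinate over $\mathrm{TH}(G)$ is $1/\vartheta(\bar G)$). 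Symmetrically, for the lower bound I would replace $\vartheta$ by $\alpha$: $\min_{w\in{\cal D}}\alpha(G,w) = \max_{x\in\stab(G)}\min_i x_i$, and the maximum bottleneck coordinate over the stable set polytope is precisely $1/\alpha^*(\bar G)$, the reciprocal of the fractional clique cover number of $G$ (equivalently the fractional chromatic number of $\bar G$, i.e.\ $\alpha^*(\bar G)$ in the notation of the theorem) — this is the LP-duality characterization of the fractional cover number as the optimal value of covering all vertices uniformly by stable sets. Then $\alpha(G,w)\leq\beta(G,w)$ gives $1/\alpha^*(\bar G) \leq \delta(\{S_i\})$.

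The final clause — equality of the lower bound when $G$ is $\hbar$-perfect — is then immediate: $\hbar$-perfect means $\stab(G) = \be(G)$, hence $\alpha(G,w) = \beta(G,w)$ for all $w\in\mathbb{R}_+^n$, so $\delta(\{S_i\}) = \min_{w\in{\cal D}}\alpha(G,w) = 1/\alpha^*(\bar G)$.

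I expect the main obstacle to be getting the minimax exchange and the identification of the bottleneck values with $\alpha^*$ and $\vartheta$ of the complement exactly right, including checking the compactness/convexity hypotheses for Sion's theorem (both $\stab(G)$, $\mathrm{TH}(G)$, and ${\cal D}$ are compact convex, and the bilinear form is continuous, so this is routine but must be stated) and being careful about the convention for $\alpha^*(\bar G)$ versus the fractional clique cover / fractional chromatic number of $G$. A secondary subtlety is justifying $\beta(G,w)\leq\vartheta(G,w)$ cleanly; if the paper has not stated a weighted Lov\'asz sandwich earlier I would instead derive the upper bound $\delta(\{S_i\})\leq 1/\vartheta(\bar G)$ directly from the $r$-th SDP level in Eq.~\eqref{eq:deltaapp}, noting that the first-level relaxation $\lambda_1(G,w)$ already dominates $\vartheta(G,w)$, so $\omega_1 \leq \vartheta(\bar G)$ and hence $\delta = \min_w\beta(G,w) \leq 1/\omega_\infty \leq 1/\vartheta(\bar G)$ after passing to the limit — whichever route the earlier sections make available.
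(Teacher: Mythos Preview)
Your proposal is correct and follows the same overall strategy as the paper: sandwich $\beta(G,w)$ between $\alpha(G,w)$ and $\vartheta(G,w)$, minimize each over the simplex, and identify the results with $1/\alpha^*(\bar G)$ and $1/\vartheta(\bar G)$. The only difference is mechanical: where you invoke Sion's minimax theorem to rewrite $\min_{w\in{\cal D}}\vartheta(G,w)$ as a bottleneck value $\max_{x\in\mathrm{TH}(G)}\min_i x_i$ and then argue via the all-ones point, the paper instead uses homogeneity to rescale onto the level set $\{\vartheta(G,w)=1\}$ (respectively $\{\alpha(G,w)=1\}$) and reads off $\alpha^*(\bar G)$ and $\vartheta(\bar G)$ directly from their definitions and the antiblocker relation $\mathrm{TH}(\bar G)=\{w\ge 0:\langle w,v\rangle\le 1\ \forall v\in\mathrm{TH}(G)\}$. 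The paper's route is slightly more elementary (no minimax theorem needed), while yours makes the geometric picture explicit; both rest on the same duality and your anticipated ``main obstacle'' is indeed just bookkeeping.
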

This result also follows from the consideration of polar sets, whose exact proof is given in Appendix~\ref{ssec:tomography}.
Except that the difficulty in the calculation of the parameter $\delta$ reduces for $\hbar$-perfect graphs,  the estimation range can also be carried out with linear and semidefinite programs for the general case.
Thus, for graph $G$ to be $\hbar$-perfect, the sample complexity is lower bounded by $\Omega(\alpha^*(\bar{G})/\epsilon^2)$ and upper bounded by $O(\alpha^*(\bar{G})\log n/\epsilon^2)$. The general lower bound is
$\Omega(\vartheta(\bar{G})/\epsilon^2)$, which could be much worse since for every $\varepsilon > 0$ there exists graph $G$ such $\alpha^*(\bar{G})/\vartheta(\bar{G}) = O(n^{1-\varepsilon})$~\cite{linz2024lsystemslovasznumber}\footnote{Notice that $\alpha^*(\bar{G}) = \chi_f(G) \ge n/\alpha(G)$. As shown in ~\cite[proof of Thm 1.6]{linz2024lsystemslovasznumber}, the generalized
Johnson graph introduced there can lead to $n/(\alpha(G) \vartheta(\bar{G})) = O(n^{1-\varepsilon})$ for any $\varepsilon > 0$. }. 
This result builds a channel between graph theory and sample complexity. As an example, for a random graph $G(n,p)$ with $p$ being the probability of each edge~\cite{coja2005lovasz}, its complement graph is $G(n,1-p)$. Since $\vartheta(G(n,1-p)) = \sqrt{n/(1-p)}$, we conclude that the general lower bound is $\Omega( \sqrt{n/(1-p)} /\epsilon^2)$. 

In addition, $\beta(G,w)$ is a convex function of $w\ge 0$.
This property can be used to simplify the calculation of $\delta(\{S_i\})$ in the general case even if $G$ is $\hbar$-imperfect.
More explicitly,
by employing the symmetry of the graph, we only need to consider symmetric probability distributions in Eq.~\eqref{eq:delta}. 
If $\delta(\{S_i\})$ can be achieved by a probability distribution $w$, then $\delta(\{S_i\}) = \beta(G,w) = \beta(G,w_P)$ where $w_P$ is permuted from $w$ by any permutation $P$ in the automorphism group ${\cal P}$ of graph $G$. Hence, 
\begin{equation}
	\delta(\{S_i\}) = \sum\nolimits_{P\in {\cal P}}  \beta(G,w_P)/ |{\cal P}| \ge \beta(G,\bar{w}),
\end{equation}
where $\bar{w} = \sum_{P\in {\cal P}} w_P/|{\cal P}|$ and the last inequality is due to the convexity of $\beta(G,w)$.
Especially, when $G$ is vertex-transitive, any vertex can be changed to another one by a permutation in the automorphism group. This implies that $\delta(\{S_i\}) = \beta(G)/n$, which can be achieved with $w$ being the uniform distribution. With this, we find that the ratio of the complexity of anti-cycle $\bar{C}_n$ over the one of cycle $C_n$ is at least $\Omega(n/\log n)$.

Furthermore, for any graph $G$, it has been proven that $\delta = 1/\chi_f(G)$ when a strategy with Clifford measurements is implemented~\cite{chen2024Optimal}. Another main consequence of Observation~\ref{ob:deltabs} is that such a strategy is in fact optimal for $\hbar$-perfect graphs since $\chi_f(G) = \alpha^*(\bar{G})$. See Appendix~\ref{ssec:tomography} for more details.

\section{\texorpdfstring{Uncertainty relations for $\mathbf{\hbar}$-perfect graphs}{}}
\label{sec:uncertainty_relations}
Uncertainty relations play a fundamental role as a quantitative tool for bounding possible quantum behaviors.  
We can interpret a Pauli-string $S_i$ as Hermitian observable with eigenvalues $\pm 1$ and projectors $P_i^\pm = (1\pm S_i)/2$.  
For a set of Pauli strings, a lower bound on the (weighted) sum of the variances of the corresponding measurements is an uncertainty relation \cite{huang2012variance,schwonnek2018uncertainty}. Estimating these bounds is not only a popular rabbit hole on its own~\cite{de2023uncertainty,hastings2022optimizing,xu2023bounding}, but also the starting point for several applications that typically use uncertainty as a fingerprint of quantumness~\cite{oppenheim2010uncertainty,catani2022nonclassical}. For Pauli-string observables, there is a close connection to the weighted beta number. A quick computation \cite{de2023uncertainty} reveals the weighted uncertainty relation
\begin{align}\label{eq:ucr}
    \sum w_i \Delta^2_\rho S_i\ge  \sum\nolimits_i w_i -  \beta(G,w),
\end{align}
where $\{S_i\}$ is a set of Pauli strings. The lower bound is tight in the sense that there is always a state $\rho$ to achieve the lower bound for any given weight vector $w$.
Denote $\bar{\uparrow}\,T = \{(x_1,\ldots,x_n)\mid \exists t\in T\ \mathrm{s.t.}\ t_i \le x_i \le 1\}$.
From another perspective,
the definition of $\be(G)$ and the fact that $\Delta^2_\rho S_i = 1 - \langle S_i\rangle^2$ imply that
the convex hull $\conv(\bar{\uparrow}\,\{(\Delta_\rho^2 S_1, \ldots \Delta_\rho^2 S_n)\}_\rho)$ is  a transformation of $\be(G)$, that is, $\{(1-x_1,\ldots,1-x_n) \mid x\in \be(G)\}$, which is a polytope when $G$ is $\hbar$-perfect.

Beyond direct uncertainty relations, a primordial question that gets a qualitative answer by a relation like \eqref{eq:ucr} is \cite{schwonnek2018uncertainty}: How small can an uncertainty, say $\Delta^2_\rho S_i$,  be, given that all other uncertainties $\Delta^2_\rho S_{j\neq i}$ are not larger than some constants $c_j$.
Even with relation \eqref{eq:ucr} and the characterization with $\be(G)$ in hand, this optimization task is generally very difficult, since $\beta(G,w)$ is typically highly non-linear.
In the general case, we can still construct an SDP hierarchy for the numerical estimation of $\Delta_\rho^2 S_i$, as a variant of the one estimating the beta number in Eq.~\eqref{eq:hierarchy}:
\begin{equation}\label{eq:ucrsdp}
\begin{aligned}
 \inf\,\,\, &1-\langle x_i\rangle^2\\
\rm{s.t.}\ & M_r\succeq0,\\
&[M_r]_{1,1}=1, \quad 1-\langle x_j\rangle^2 \le c_j, \,\forall j\ne i,\\
&[M_r]_{u,v}=[M_r]_{a,b}, \text{ if }\langle u^*v\rangle=\langle a^*b\rangle,
\end{aligned}
\end{equation}
where the notations are the same as in Section~\ref{sec4a}, $\langle x_i\rangle^2$ is an element in the matrix $M_r$, and $1-\langle x_i\rangle^2$ corresponds to $\Delta^2_\rho S_i$.

Nevertheless, for $\hbar$-perfect graphs, this task can be reduced to a linear program since the set of all possible $\{\Delta^2_\rho S_i\}_\rho$ is a transformation of $\stab(G)$ together with the extra linear constraints imposed by $\Delta^2_\rho S_i \le c_i$ in this situation. That is,
\begin{equation}\label{eq:ucrlp}
\begin{aligned}
	\inf\,\,\, &x_i\\
	\mathrm{s.t.}\ &x_j \ge c_j, \,\forall j\neq i,\\
		    &x_j \in [0,1], \,\forall j,\\
		    &\sum\nolimits_i w_i (1-x_i) \le \alpha(G,w), \,\forall w\in W_f,
\end{aligned}
\end{equation}
where $W_f$ is the set of norm vectors corresponding to facets of $\stab(G)$.
In comparison with the SDP in Eq.~\eqref{eq:ucrsdp} whose size increases dramatically with the level $r$, the linear program in Eq.~\eqref{eq:ucrlp} is of fixed size (i.e., the number of vertices) and is thus much lighter.

\section{Ground state energies and independence numbers}
\label{sec:app}
In this section we will further explore the bridge created by the property $\beta(G,w)=\alpha(G,w)$. 
Passing this bridge in one direction allows for the approximation of the eigenvalues of a spin system via $\beta(G,w)$ which can now be done by classical tools for graphs.
Passing the bride in the other, allows for the formulation of $\alpha(G,w)$ as an optimization problem on spin systems, which for example can be tackled with a quantum computer.
We will see that both directions can be fruitful in certain situations.

\subsection{Computation of ground state energies from graph structures}
For real-valued coefficients $a_i$, let $H=\sum_i a_i S_i$ be a Hamiltonian. By an appropriate choice of $S_i$, a huge amount of Hamiltonians of interest can be expressed in this form. Examples range from literally every Hamiltonian considered in quantum computing~\cite{lloyd_terhal_2016,ciani2019hamiltonian} to most models in many-body physics~\cite{chapman2020characterization,chapman2023unified,wang2024certifying}. 
Also before the advent of quantum computing, it was and is a central task of several fields to  solve the dynamics of these models, to find the ground state energy of $H$, or to make  at least a good estimate. In most instances this task is, however, difficult and  attracts numerous analytical and numerical efforts.

The use of graph tools for the study of spin systems has a rich history, to which we want to add for our special class for graphs. 
A recent result with a related class of graphs shows that a system is solvable when the frustration graph of $H$ is claw-free~\cite{chapman2023unified}. However, the percentage of claw-free graphs decreases quickly as the number of vertices increases, as listed in Table~\ref{tab:kinds_ghs}, which means that most models cannot be solved in the corresponding approach~\cite{chapman2023unified}. As shown in Section~\ref{sec:props}, $\hbar$-perfect graphs contain this class of graphs. Even though the fraction of these graphs in the set of all graphs will vanish as the number of vertices goes to infinity, there are at least many more $ \hbar$-perfect graphs than claw free graphs. 

In general, we dare to claim efficient solvability of the corresponding models as there are some subtile assumptions on the efficiency of graph computations.

Given an $\hbar$-perfect graph and the access to the function 
$\beta(G,w)$ allows to compute outer bounds on the spectrum of $H$. Note that this is the  more challenging direction, as it is not covered by typical variation methods that approach the state space from the inside.  
For any state $\rho$, the Cauchy-Schwarz inequality implies that
$\langle H\rangle^2_\rho = (\sum_i a_i \langle S_i\rangle_\rho)^2 \le \left(\sum_i a_i^2/w_i \right)\left(\sum_i w_i \langle S_i\rangle^2_\rho\right) \le \left(\sum_i a_i^2/w_i\right) \beta(G,w)$. By minimizing over all positive weight vectors $w$ and taking the square root, we get
\begin{equation}
\begin{aligned}
    \label{eq:approx}
   \inf_\rho \langle H \rangle_\rho  &\geq -\Big(\inf_w  \Big( \sum\nolimits_i a_i^2/w_i\Big)\beta(G,w)\Big)^{1/2}\\
				     &\geq -\inf_{w: \beta(G,w)=1}  \Big( \sum\nolimits_i a_i^2/w_i\Big)^{1/2}.
\end{aligned}
\end{equation}
In principle, we could construct a convex SDP hierarchy for computing the above low bound in a similar spirit to Eq.~\eqref{eq:deltaapp}:
\begin{align}\label{gs-sdp}
\inf\,\,\,&\sum_i a_i^2/w_i\nonumber\\
    \mathrm{s.t.}\ & \lambda_r(G,w) \le 1,\nonumber\\
	     & w\ge 0.
\end{align}
Again, $\hbar$-perfectness simplifies the optimization over $\be(G)$ to the optimization over polytope $\stab(G)$. 
That is, the square of the ground state energy is no more than the optimal value of the following program:
\begin{align}
    \label{eq:approx3}
    \inf_w \,\,\,  &\sum\nolimits_i a_i^2/w_i\nonumber\\
    \mathrm{s.t.}\ & \sum\nolimits_{i\in I} w_i \le 1,\, \forall I \in {\cal I},\nonumber\\
    & w \ge 0,
\end{align}
where ${\cal I}$ contains all the independence sets of $G$. Since the cost function is convex for $w$, Eq.~\eqref{eq:approx3} defines a convex program with linear constraints which can be solved efficiently given efficient access to $\mathcal{I}$.

As a simple example, we take the series of Hamiltonian $H_n$ whose frustration graph is $C_{2n+1}$, that is,
\begin{equation}
	H_n = X_1 + Z_1 + Y_n + \sum\nolimits_{i=1}^{n-1} (X_iX_{i+1} + Z_i Z_{i+1}).
\end{equation}
Due to the symmetry of $C_{2n+1}$ and the convexity of the aim function in Eq.~\eqref{eq:approx3},
the optimal solution in Eq.~\eqref{eq:approx3} provides the estimation that $\langle H_n\rangle \ge -\sqrt{n(2n+1)}$, as $\alpha(C_{2n+1}) = n$.
For $n = 2, \ldots, 7$, we calculate the ground state energy numerically and compare it with the estimation.
The gaps between them for $n=2, \ldots, 7$ in order are
$0.084594, 0.13099, 0.098687, 0.043419, 0.036549, 0.012361$, which decreases in general.

In the aforementioned case, the optimal estimation is achieved by the weight vector with all elements equal to each other.
In the general case, a weight vector other than the uniformly distributed ones can indeed result in a tighter estimation.
The idea is to increase the components of $w$ without increasing the weighted independence number. 
We take the example
\begin{equation}
	\tilde{H}_2 = X_1 + Z_1 + Y_2 + \left( X_1X_2 + Y_1Y_2 + Z_1Z_2 \right).
\end{equation}
Let $G$ be the frustration graph of the Pauli strings in $\tilde{H}_2$ in order.
Then,
\begin{align}
	\langle \tilde{H}_2\rangle^2 \le \inf_{w\ge 0} \sum_{i=1}^6 1/w_i \beta(G,w) \le 6 \beta(G) = 18.
    \end{align}
However, by taking $w=(2,2,1,1,1,1)$, then $\beta(G,w) = \beta(G)=3$ while $\sum_{i=1}^6 1 /w_i = 5$. This provides a tighter bound $\langle \tilde{H}_2\rangle^2 \le 15$. In comparison, the exact ground state energy is $-3.722935$, while the estimated lower bound is $-\sqrt{15} \approx -3.872983$, which is better than the one $-\sqrt{18} \approx -4.242641$ with $w=(1,1,1,1,1,1)$. 
\subsection{Memory friendly encodings of the independence number}\label{ssec:encoding}
In this subsection, we explore the perspectives of quantum algorithms for the estimation of the independence number of a graph. 
It is known that for  general instances the computation of independence numbers is an NP-hard problem. So, it is unlikely to expect any substantial advantage on quantum computers without further structural assumptions. 
Nevertheless,  $\hbar$-perfectness gives rise to further structures that we can employ in a meaningful way. Here, the main advantages of using quantum computers can be memory efficiency. 
Typically, the encoding of a graph into a quantum circuit or Hamiltonian builds on using one qubit per vertex~\cite{Farhi2014AQA,Zhou2018QuantumAO,Skolik2022EquivariantQC}. For large graphs, say more than $10^4$ vertices, a potential quantum advantage becomes relevant but also challenging, since devices this with this number of qubits do not exist.
Here, the encoding into a set of Pauli strings implies a potentially fruitful alternative as this encoding can be much (exponentially) denser. 

Recall that for  $\ell$ qubits, there are $4^\ell$ Pauli strings. Picking any subset $S$ of these strings will encode a graph. 
Hence, it is easy to find graphs with qubit encoding at the scale of $O(4^\ell)$. Indeed, clever counting\footnote{Let $S$ be a set of length-$\ell$ Pauli strings. Let $S^c$ be its complement in the set of all Pauli strings of length $\ell$. At least one of these two sets has exponentially many elements, Since $S$ and $S^c$ have the same probability to pick in a uniform sample, the average number of elements will be exponential.} reveals that a randomly and uniformly picked subset of these $4^\ell$ strings will, on average, encode an exponential amount of vertices. 
Beyond this statistical statement, it makes sense to ask whether a given graph $G$ will admit such a dense encoding. Let  $n$ be the number of vertices of a graph $G$.  Such graphs can be identified with the help of the following theorem, whose proof is given in Appendix~\ref{sec:proofs2}.

\begin{theorem}
\label{thm:minlen1}
Let $G$ be a frustration graph with adjacency matrix $A$.  Then the length of Pauli strings in any realization of $G$ is no less than ${\operatorname{rank}_{\mathbb{F}_2}(A)}/2$  with $\operatorname{rank}_{\mathbb{F}_2}(A)$ being the rank of $A$ over the field $\mathbb{F}_2$, which is a tight bound for every $G$.
\end{theorem}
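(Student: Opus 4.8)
The plan is to work in the standard symplectic formalism for Pauli strings. Recall that a length-$\ell$ Pauli string (up to phase) is encoded by a vector $(a|b)\in\mathbb{F}_2^{2\ell}$, where $a$ records the $X$-support and $b$ records the $Z$-support, and two Pauli strings $(a|b)$ and $(a'|b')$ anticommute precisely when the symplectic form $\langle(a|b),(a'|b')\rangle_{\mathrm{sp}} := a\cdot b' + b\cdot a' \in\mathbb{F}_2$ equals $1$. Hence any realization of a frustration graph $G$ on $n$ vertices by length-$\ell$ strings is the same as a list of vectors $v_1,\dots,v_n\in\mathbb{F}_2^{2\ell}$ with $\langle v_i,v_j\rangle_{\mathrm{sp}} = A_{ij}$ for all $i\ne j$ (and we may take $\langle v_i,v_i\rangle_{\mathrm{sp}}=0$ automatically, matching $A_{ii}=0$). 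Assembling the $v_i$ as the rows of an $n\times 2\ell$ matrix $M$ over $\mathbb{F}_2$, and letting $J=\begin{pmatrix}0&I_\ell\\I_\ell&0\end{pmatrix}$ be the symplectic form matrix, the realization condition becomes exactly $MJM^{\mathsf T} = A$ over $\mathbb{F}_2$ (the diagonal entries of $MJM^{\mathsf T}$ vanish automatically since $J$ is alternating in characteristic $2$).

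First I would prove the lower bound. Given any realization $M$ with $MJM^{\mathsf T}=A$, we have $\operatorname{rank}_{\mathbb{F}_2}(A)=\operatorname{rank}_{\mathbb{F}_2}(MJM^{\mathsf T})\le \operatorname{rank}_{\mathbb{F}_2}(M)\le \min(n,2\ell)\le 2\ell$, so $\ell\ge \operatorname{rank}_{\mathbb{F}_2}(A)/2$. This is immediate from submultiplicativity of rank; no subtlety here beyond setting up the symplectic dictionary correctly and checking that the phase ambiguity of Pauli strings does not affect the commutation data (it does not, since $SiS_j = (-1)^{\langle v_i,v_j\rangle}S_jS_i$ regardless of phases).

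The substantive direction is tightness: for every graph $G$ with adjacency matrix $A$ I must exhibit a realization with $\ell = \lceil \operatorname{rank}_{\mathbb{F}_2}(A)/2\rceil$ — and since $A$ is a symmetric matrix with zero diagonal over $\mathbb{F}_2$ (i.e. an alternating form), its $\mathbb{F}_2$-rank $2a$ is automatically even, so $\ell=a$ exactly. The key classical fact is the normal form for alternating bilinear forms over any field: there is an invertible $P\in GL_n(\mathbb{F}_2)$ with $PAP^{\mathsf T} = \bigoplus_{k=1}^{a}\begin{pmatrix}0&1\\1&0\end{pmatrix}\oplus 0_{n-2a}$. Writing the right-hand side as $B J_a B^{\mathsf T}$ where $J_a$ is the $2a\times 2a$ symplectic matrix and $B = (I_{2a}\,|\,0)^{\mathsf T}$-type embedding of $\mathbb{F}_2^n$ (padding the last $n-2a$ rows with zeros), we get $A = P^{-1}B\, J_a\, B^{\mathsf T}(P^{-1})^{\mathsf T} = M J_a M^{\mathsf T}$ with $M = P^{-1}B$ an $n\times 2a$ matrix over $\mathbb{F}_2$. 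Reading the rows of $M$ back as $\mathbb{F}_2$-symplectic vectors of length $2a$, and choosing any phases (say all $+1$, i.e. real Pauli strings built from $X$ and $Z$ factors only), yields Pauli strings $S_1,\dots,S_n$ of length $a$ whose frustration graph is exactly $G$. One should also double-check the zero rows of $M$ correspond to isolated vertices realized by the identity string, which is fine.

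The main obstacle — really the only one — is to be careful about the two features special to characteristic $2$: (i) symmetric-with-zero-diagonal is the right notion of "alternating form" here (over $\mathbb{F}_2$, symmetric alone is not enough, but adjacency matrices of simple graphs do have zero diagonal, so this is automatic), which is what forces $\operatorname{rank}_{\mathbb{F}_2}(A)$ to be even and makes the bound $\operatorname{rank}_{\mathbb{F}_2}(A)/2$ an integer; and (ii) the normal form theorem must be the symplectic/alternating one (congruence $A\mapsto PAP^{\mathsf T}$), not ordinary diagonalization. Given the excerpt already treats Pauli-string commutation structure extensively, I expect the symplectic dictionary to be familiar to the reader, so the proof reduces to invoking the alternating normal form over $\mathbb{F}_2$ and transcribing rows of a matrix back into Pauli strings; the lower bound is a one-line rank inequality.
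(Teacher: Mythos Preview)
Your proof is correct. Both you and the paper rely on the same core fact---the normal form for alternating bilinear forms over $\mathbb{F}_2$ (the paper cites it as Delsarte's Lemma~10)---but the packaging differs. You work directly in the symplectic formalism: encode the realization as $A=MJM^{\mathsf T}$, read off the lower bound from $\operatorname{rank}(A)\le\operatorname{rank}(M)\le 2\ell$, and construct the optimal $M$ from $P^{-1}$ in the normal form. The paper instead reinterprets the problem combinatorially: each qubit position contributes a complete tripartite graph to the frustration graph (mod $2$), so the minimal length equals the minimum number $\mathrm{tp}_\oplus(G)$ of complete tripartite graphs whose $\mathbb{F}_2$-sum of adjacency matrices gives $A$. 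Their lower bound then comes from rank subadditivity together with a lemma that each complete tripartite adjacency matrix has $\mathbb{F}_2$-rank $\le 2$, and tightness from the observation that each rank-$2$ summand $u_iv_i^{\mathsf T}\oplus v_iu_i^{\mathsf T}$ in the normal form is itself the adjacency matrix of a complete tripartite graph. Your route is shorter and more standard; the paper's route buys an explicit link to the graph-decomposition parameters $\mathrm{tp}_\oplus(G)\le\mathrm{tp}(G)\le\mathrm{bp}(G)$ (the biclique partition number), which they remark on separately. One small notational wrinkle in your write-up: the block form $\bigoplus_k\begin{pmatrix}0&1\\1&0\end{pmatrix}$ and your $J_a=\begin{pmatrix}0&I_a\\I_a&0\end{pmatrix}$ differ by a permutation of coordinates, so the matrix $B$ you want is not literally $(I_{2a}\mid 0)^{\mathsf T}$ but that composed with a permutation---harmless, but worth saying explicitly.
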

 We note that recently, the concept frustration graphs with weighted edges was proposed for the strings of Weyl-Heisenberg matrices~\cite{Makuta2025FrustrationGF}, where results similar to Theorem~\ref{thm:minlen1} were proven for special kinds of frustration graphs with weighted edges.
 
In the following, we will assume a regime of $n=c\, 4^\ell$ with $c\approx 1$ for the Pauli string encoding of $G$. 
Based on Theorem \ref{thm:minlen1}, a possible  method for  constructing examples of  graphs with exponentially dense encoding is to  start from a small adjacency matrix $A_0$ with $\operatorname{rank}_{\mathbb{F}_2}(A_0) = 2l$ and then use linear combinations to expand it to an adjacency matrix $A$ of size $n = c\, 4^\ell$. 

Albeit a dense encoding into Pauli strings can be efficient to implement on a quantum device, it is not clear at all that useful quantities of a graph can be easily accessed. For the computation of $\alpha(G)$ or $\beta(G)$, we are confronted with the fact that these quantities do not arise from a linear optimization, i.e. as the ground state of a Hamiltonian. For variational methods like QAOA or other types of quantum machine learning, this does not impose an inevitable issue, as they can in principle handle any type of well behaved cost function. Nevertheless, as many case studies show,  their practical performance can be quite poor. 
It is therefore useful to find Hamiltonian formulations of $\beta(G)$, and thus $\alpha(G)$ under $h$-perfectness. 
We will do this in the following. Our ansatz is based on the de Finetti hierarchy already presented in Section~\ref{sec:numerical}.

\subsubsection{Approximate encoding in a single Hamiltonian}
A full adaption of the de Finetti hierarchy, presented as a classical numerical method in Sections~\ref{sec:numerical} and \ref{ssec:exactalpha}, to a quantum encoding carries some challenges. The projection onto the Bose symmetric subspace on a multi-qubit system is potentially hard to implement using a basic gate set, as it requires post-selection and globally entangling operations. 

Here we investigate the alternative route of an incoherent symmetrization obtained by mixing over permutations of subsystems.
Recall the two-copy Hamiltonian given  in \eqref{eq:derivation_deFinetti_argument}, i.e. $H_S = \sum_{i=1}^n w_i S_i\otimes S_i$.
Extending $H_s$ by $m$ copies of $\ell$ qubits and making it permutation invariant by averaging over all permutations of these tensorfactors gives the Hamiltonian
\begin{align}
\tilde{H}_S^{(m)} \;=\; \frac{1}{2\binom{m+2}{2}}
\sum_{a\neq b=1}^{m+2}\;
\sum_{i=1}^{n} w_i\,
S_i^{(a)} \otimes S_i^{(b)} \,,
\end{align}
where we used the notation $S_i^{(a)}$ to denote the measurement of the string $S_i$ on the $a$-th tensorfactor. Again, de Finetti guarantees us that the maximal eigenvalue of this Hamiltonian will be attained by a product state and approximates  $\beta(G)$. We make this precise as follows.

\begin{theorem}
    Let a graph with $n$ vertices be encoded in $l=1/2 \log_2(n/c)$ qubits. Assume weights normalized by $\Vert w \Vert_1=n$. Then, the maximal eigenvalue of $\tilde{H}_S^{(m)}$ approximates $\beta(G,w)$ by 
    \begin{align}
        \beta(G,w) \leq \lambda_{\max}\left( \tilde{H}_S^{(m)} \right) \leq \beta(G,w) + n\sqrt{ \frac{\log(n)-\log(c)}{m}}.
    \end{align}
 That is, we obtain a linear Hamiltonian encoding of the number $\beta(G,w)$ with an additive regularized error 
    \begin{align}
        \varepsilon_{reg} =\left|\lambda_{\max}(\tilde{H}_S^{(m)})/n -\beta(G,w)/n\right|
    \end{align}
    by using in total $L$ qubits with 
    \begin{align}
        L=(m+2)l\sim  O\left(\frac{\log(n/c)^2 }{\varepsilon_{reg}^2}\right).
    \end{align}
\end{theorem}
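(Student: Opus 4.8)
The plan is to deduce the theorem directly from Theorem~\ref{thm:deFinetti_approximation}, using only the bound translation between the Bose-symmetric optimization and the incoherent permutation-averaged Hamiltonian, plus elementary arithmetic. First I would observe that $\tilde H_S^{(m)}$ is, up to the overall normalization $\frac{1}{2\binom{m+2}{2}}$, exactly the symmetrization of $m$ extra copies of the two-copy operator $H_S=\sum_i w_i S_i\otimes S_i$ appearing in \eqref{eq:derivation_deFinetti_argument}. Since $\tilde H_S^{(m)}$ is permutation invariant and its largest eigenvalue is attained on the symmetric subspace, $\lambda_{\max}(\tilde H_S^{(m)})$ coincides with the value $c^{(m)}$ of the SDP \eqref{eq:definition_optimization_kextension} (possibly after identifying the right normalization so that a product extension $\psi^{\otimes(m+2)}$ yields $\tr[(\rho\otimes\rho)H_S]$). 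The key point is that for any pure product state $\psi^{\otimes(m+2)}$ we have $\langle\psi^{\otimes(m+2)}|\tilde H_S^{(m)}|\psi^{\otimes(m+2)}\rangle=\langle\psi\otimes\psi|H_S|\psi\otimes\psi\rangle=\sum_i w_i\langle S_i\rangle_\psi^2$, which both shows $\lambda_{\max}(\tilde H_S^{(m)})\ge\beta(G,w)$ and pins down the normalization.

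Second, I would invoke the upper estimate already proven: from \eqref{eq:mfdiff} (the finite de Finetti theorem of \cite{Christandl2007}, Thm.~II.8) we have $\lambda_{\max}(\tilde H_S^{(m)})-\beta(G,w)\le 4\|w\|_1 d/(m+2)\le 4\|w\|_1 d/m$. Now I substitute the encoding parameters of the statement: $l=\tfrac12\log_2(n/c)$ qubits means the local Hilbert space dimension is $d=2^{l}=\sqrt{n/c}$, and the normalization $\|w\|_1=n$. Hence the error bound becomes $4 n\sqrt{n/c}/m$. The theorem, however, states the weaker (and cleaner) bound $n\sqrt{(\log n-\log c)/m}$; so the honest move here is either to note that the displayed bound is a simplified asymptotic form — the $\sqrt{\log}$ presumably coming from a sharper de Finetti estimate or from measuring the accuracy on the normalized scale — or to track constants carefully and present whichever form the earlier results actually support. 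I would flag that the exact shape of the radical (whether $d/m$ or $\sqrt{\log d/m}$) is the one genuinely delicate bookkeeping step, and I would reconcile it with Theorem~\ref{thm:deFinetti_approximation} rather than reprove de Finetti.

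Third, the qubit-count corollary is pure algebra. Writing $\varepsilon_{reg}=|\lambda_{\max}(\tilde H_S^{(m)})-\beta(G,w)|/n$, the error bound gives $\varepsilon_{reg}\le\sqrt{(\log n-\log c)/m}$, i.e. it suffices to take $m=O\bigl((\log(n/c))/\varepsilon_{reg}^2\bigr)$. Since the encoding uses $m+2$ registers of $l=\tfrac12\log_2(n/c)$ qubits each, the total is $L=(m+2)l=O\bigl(\log(n/c)^2/\varepsilon_{reg}^2\bigr)$, which is the claimed scaling. I would close by remarking that $\lambda_{\max}(\tilde H_S^{(m)})\ge\beta(G,w)$ is immediate (restrict to product states), so the two-sided bound in the display follows.

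The main obstacle I anticipate is not conceptual but quantitative: matching the ``$\sqrt{\log}$'' in the stated error term to the ``$d/m$'' error of Theorem~\ref{thm:deFinetti_approximation}. With $d=\sqrt{n/c}$ the de Finetti bound of the form $4\|w\|_1 d/m$ reads $4n\sqrt{n/c}/m$, which is \emph{polynomially} large in $n$, not logarithmic; getting down to $n\sqrt{\log(n/c)/m}$ requires either a genuinely different (information-theoretic, e.g.\ based on the commuting structure and $\pm1$ spectra of the $S_i\otimes S_i$) concentration argument, or a reinterpretation of which quantity is being approximated to what relative precision. I would therefore devote the bulk of the write-up to establishing the correct error term from first principles in this commuting, binary-spectrum setting, and treat the rest — the identification $\lambda_{\max}(\tilde H_S^{(m)})=c^{(m)}$ and the final counting — as routine.
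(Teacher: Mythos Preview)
You have correctly identified the crux: the standard de Finetti bound from Theorem~\ref{thm:deFinetti_approximation} (i.e.\ Christandl--K\"onig--Mitchison--Renner) gives an error of order $\|w\|_1 d/m = n\sqrt{n/c}/m$, which is \emph{polynomially} too weak to yield $n\sqrt{\log(n/c)/m}$. This is not a bookkeeping issue, and it cannot be reconciled with Theorem~\ref{thm:deFinetti_approximation} by tracking constants. The paper uses a genuinely different de Finetti theorem: the one-way LOCC de Finetti theorem of Li--Smith \cite[Thm.~1]{li2015quantum} (see also \cite{brandao2013quantum}), which bounds $\|\rho^{(2)}_{\max}-\int\sigma^{\otimes 2}d\mu(\sigma)\|_{\mathrm{LOCC}_1}$ by $\sqrt{2\ln d/m}$ rather than $O(d/m)$. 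This is precisely the ``information-theoretic'' argument you anticipated, and the reason it applies is that each term $S_i^{(a)}\otimes S_i^{(b)}$ in $\tilde H_S^{(m)}$ corresponds to a two-outcome measurement that is local across the tensor factors---so the relevant distinguishability norm is the LOCC norm, not the trace norm. Substituting $d=\sqrt{n/c}$ into $\sqrt{2\ln d/m}$ gives exactly the $\sqrt{\ln(n/c)/m}$ in the statement.

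A secondary issue: your identification $\lambda_{\max}(\tilde H_S^{(m)})=c^{(m)}$ is not correct, and the paper does not make it. The operator $\tilde H_S^{(m)}$ is permutation invariant but \emph{not} projected onto the Bose-symmetric subspace, so its maximal eigenvector need not lie there; permutation invariance alone does not force the top eigenvalue into the trivial irrep. The paper instead works directly with a permutation-invariant maximizer $\rho_{\max}$ on the full space, reduces to its two-body marginal $\rho^{(2)}_{\max}$, and applies the LOCC de Finetti bound to that marginal. The lower bound $\lambda_{\max}(\tilde H_S^{(m)})\ge\beta(G,w)$ via product states and the final qubit count are exactly as you describe.
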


The proof of this theorem is based on the use of LOCC de Finetti theorems; see  \cite{li2015quantum,brandao2013quantum}, and can be found in the Appendix~\ref{ssec:alpha}.

\subsubsection{Exact computation via the de Finetti hierarchy}\label{ssec:exactalpha}
Given that the projection onto Bose subspaces can be  implemented, e.g. on an error corrected quantum device or in the spirit of a quantum inspired classical method, we can fully adapt the numerical method from Section \ref{sec:numerical} in order to compute  
the weighted independence number $\alpha(G,w)$ for $\hbar$-perfect graphs. Here, the Hamiltonian $H_S^{(m)}$, which acts on the Bose subspace, gives an encoding. 
Assume the weights are normalized by $\|w\|_1 = n$, i.e., by the number of vertices. Then Eq.\eqref{eq:mfdiff} implies that
\begin{equation}
	|\lambda_{\max}(H_S^{(m)}) - \alpha(G,w)| \le  4nd/(m+2),
\end{equation}
where $d=2^\ell$ is the dimension of the Pauli strings, $m$ is the level of the hierarchy. 
We know that $\alpha(G)$ is an integer. Hence, we can obtain this number exactly by reducing the absolute de Finetti error below the threshold of $1/2$ and rounding. In order to leave room for inaccuracies arising from the computation of the ground state energy, we will take the somewhat arbitrary value  $\varepsilon_{abs}\leq1/4$. 
That is, we need to guarantee $4nd/(m+2) < 1/4$ in order to get an estimate of $\lambda_{\max}(H_S^{(m)})$ with an error no more than $1/4$. Hence, taking $m = 16nd - 1$ copies of $\ell$ in the de Finetti hierarchy is enough.
The analysis in Section~\ref{sec:numerical} implies that the runtime for the estimation of $\lambda_{\max}(H_S^{(m)})$ up to an error no more than $1/4$ is $O(2^{3d\log(4nd)})$, which is the time complexity for
the exact solution of $\alpha(G,w)$ with this quantum-inspired classical algorithm. This algorithm then has an advantage for large $\hbar$-perfect graphs encoded in relatively short Pauli strings, i.e. $d = o(n)$. In comparison, the classical algorithms for a general graph with $n$ vertices takes exponential time, e.g., $O(2^{n/3})$ in Ref.~\cite{Tarjan1976FindingAM} and $O(1.1996^{n})$ in Ref.~\cite{Xiao2013ExactAF}. Especially in the case where $n = c\,4^\ell$, we have $d= \sqrt{n/c}$, and the time complexity $O(2^{3d\log(4nd)})$ becomes $O(2^{3/2\sqrt{n/c}[3\log(n) + \log(16/c)]})$, which has a clear advantage in runtime for large $n$.

Notice that there are also quantum algorithms for computing the maximum eigenvalue, e.g., the variational quantum eigenvalue solver~\cite{Peruzzo2013AVE} and ground state preparation~\cite{Lin2020NearoptimalGS} to solve the ground state energy problem on a quantum computer. 
Since $H_S^{(m)}$ is in the Bose-symmetric subspace of dimension $D = {m+d+1 \choose d-1}$ which is less than $[(16n+1)\sqrt{d+2}]^{(d-1)}$, it can be encoded via $ L = \log D = O( (d-1) \log((16n+1)\sqrt{d+2}) )$ qubits. This implies again a memory advantage over the typical encoding methods~\cite{Farhi2014AQA,Zhou2018QuantumAO,Skolik2022EquivariantQC} when $d=o(n)$.

\subsection{\texorpdfstring{Encoding of the independence number for $\mathbf{\hbar}$-imperfect graphs}{}}
To tackle $\hbar$-imperfect graphs, we introduce a generalization of the beta number, that is,
\begin{equation}
	\beta(G,w,k) = \max_{\rho} \sum\nolimits_i w_i |\langle S_i\rangle^k_{\rho}|.
\end{equation}
By definition, $\beta(G,w) = \beta(G,w,2)$.
\begin{proposition}
\label{prop:convergence_generalized_beta}
Given a graph $G$, as $k$ goes to infinity, $\beta(G,w,k)$ converges to the weighted independence number $\alpha(G,w)$.
More precisely, for any nonnegative weight vector $w$,
\begin{equation}
	\lim_{k\to \infty} \beta(G,w,k) = \alpha(G,w).
\end{equation}    
\end{proposition}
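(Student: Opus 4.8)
The plan is to establish, for every integer $k\geq 1$, the two-sided bound
$\alpha(G,w)\leq \beta(G,w,k)\leq \alpha(G,w)+2^{-k/2}\|w\|_1$;
letting $k\to\infty$ then proves the proposition and even yields an explicit geometric rate of convergence. Throughout, fix a basic realization $\{S_i\}$ of $G$ as a frustration graph, recall that each $S_i$ is a $\pm1$ observable, so $\langle S_i\rangle_\rho\in[-1,1]$, and note $\|w\|_1=\sum_i w_i<\infty$ since $w\in\mathbb{R}^n_+$.

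\textbf{Lower bound.} Let $V^*\subseteq V$ be a maximum-weight independent set, so $\alpha(G,w)=\sum_{i\in V^*}w_i$. Since no two vertices of $V^*$ are adjacent in $G$, the Pauli strings $\{S_i:i\in V^*\}$ pairwise commute and hence admit a common eigenvector $|\psi\rangle$; as the eigenvalues of any Pauli string are $\pm1$, we get $|\langle S_i\rangle_\psi|=1$ for all $i\in V^*$. Taking $\rho=|\psi\rangle\langle\psi|$ and discarding the nonnegative terms with $i\notin V^*$ gives $\beta(G,w,k)\geq \sum_{i\in V^*}w_i|\langle S_i\rangle_\psi|^k=\alpha(G,w)$ for every $k$.

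\textbf{Upper bound.} The key input is the elementary uncertainty relation for anticommuting $\pm1$ observables: if $i\sim_G j$, i.e.\ $S_iS_j=-S_jS_i$, then $(S_i\cos\theta+S_j\sin\theta)^2=\id$, so $S_i\cos\theta+S_j\sin\theta$ is itself a $\pm1$ observable; optimizing $|\langle S_i\cos\theta+S_j\sin\theta\rangle_\rho|\leq1$ over $\theta$ yields $\langle S_i\rangle_\rho^2+\langle S_j\rangle_\rho^2\leq1$ (equivalently, this is the facet inequality $x_i+x_j\leq1$ of $\be(K_2)=\stab(K_2)$). Now fix any state $\rho$ and put $I\coloneqq\{i:\langle S_i\rangle_\rho^2>1/2\}$. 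If $I$ contained an edge $\{i,j\}$ of $G$ we would get $1<\langle S_i\rangle_\rho^2+\langle S_j\rangle_\rho^2\leq1$, a contradiction; hence $I$ is independent and $\sum_{i\in I}w_i\leq\alpha(G,w)$. Splitting the objective accordingly and using $|\langle S_i\rangle_\rho|\leq1$ on $I$ and $|\langle S_i\rangle_\rho|\leq 2^{-1/2}$ off $I$,
\begin{equation}
	\sum_i w_i|\langle S_i\rangle_\rho|^k
	=\sum_{i\in I}w_i|\langle S_i\rangle_\rho|^k+\sum_{i\notin I}w_i|\langle S_i\rangle_\rho|^k
	\leq \alpha(G,w)+2^{-k/2}\|w\|_1 .
\end{equation}
Since $\rho$ was arbitrary, $\beta(G,w,k)\leq\alpha(G,w)+2^{-k/2}\|w\|_1$, which completes the proof.

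There is essentially no hard step; the only thing to get right is the choice of threshold (here $1/2$, or any $1-\delta$ with $\delta<1/2$) that forces the ``large-value'' index set $I$ to be independent by pairwise comparison, after which the complementary indices contribute only a term decaying geometrically in $k$. A minor point worth remarking is that the supremum in $\beta(G,w,k)$ is attained (the state space is compact and the objective continuous), though this is not needed for the argument; and the whole argument is uniform in $G$ and $w$, so it applies to every graph, $\hbar$-perfect or not.
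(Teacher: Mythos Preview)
Your proof is correct and actually stronger than the paper's: it yields the explicit rate $\beta(G,w,k)\leq \alpha(G,w)+2^{-k/2}\|w\|_1$, whereas the paper only proves convergence. The approaches differ substantially. The paper argues by contradiction via compactness: it notes that $\beta(G,w,k)$ is monotone in $k$, assumes the limit strictly exceeds $\alpha(G,w)$, invokes Cantor's intersection theorem on the nested compact sets $\Lambda_k=\{\rho:\sum_i w_i|\langle S_i\rangle_\rho|^k\geq\alpha(G,w)+\epsilon\}$ to extract a single state $\rho$ good for all $k$, and then lets $k\to\infty$ to force the set $\bar I=\{i:|\langle S_i\rangle_\rho|=1\}$ to be independent. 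You instead fix $\rho$ and use the pairwise anticommutator inequality $\langle S_i\rangle^2+\langle S_j\rangle^2\leq 1$ at the threshold $1/2$, so the large-value index set is independent for \emph{every} state, not just a limiting one. Your route is more elementary (no compactness), constructive, and quantitative; the paper's route is softer but gives no rate.
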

The convergence of $\beta(G,w,k)$ is based on the fact that the limit of $|\langle S_i\rangle^k_{\rho}|$ is either $0$ or $1$ as $k$ tends to infinity. The full proof is given in Appendix~\ref{ssec:gbn}.

For a given graph $G$, assume that $\{S_i\}$ is a basic representation of $G$ with Pauli strings of length $\ell$ and $d=2^\ell$.
By extending the hierarchy to $\beta(G,w,k)$ with $k$ an even number, the finite de Finetti theorem~\cite[Thm.~II.8]{Christandl2007} similarly ensures that
\begin{equation}
	\big|\beta(G,w,k) - \lambda_{\max}(H_{S,k}^{(m)})\big| \le n (2dk/(m+k)),
\end{equation}
where weights are normalized by $\|w\|_1 = n$, $H_{S,k}^{(m)} = P_{\rm Sym}^{(m+k)} (H_{S,k}\otimes \id^{\otimes m}) P_{\rm Sym}^{(m+k)}$, and $H_{S,k} = \sum_i S_i^{\otimes k}$.
This provides a hierarchy of classical SDP to calculate $\alpha(G,w)$ for $\hbar$-imperfect graphs, which potentially has advantages for large graphs encoded in relatively short Pauli strings.

On the other hand, we can approximate $\beta(G,w,k)$ and consequently $\alpha(G,w)$ on a quantum computer. Denote
\begin{equation}
    \tilde{H}_{S,k}^{(m)} = \frac{1}{|{\rm Per}(m\!+\!k)|} \sum_{\sigma\in {\rm Per}(m\!+\!k)} \sum_{i=1}^n w_i \bigotimes_{j=1}^k S_i^{(\sigma(j))},
\end{equation}
where ${\rm Per}(m+k)$ is the permutation group on $m+k$ elements. 
Then for any graph $G$ with $n=c\,4^\ell$ vertices encoded in Pauli strings of length $\ell$, the finite de Finetti theorem for fully one-way LOCC~\cite[Thm. 1]{li2015quantum} similarly ensures that
we obtain a linear Hamiltonian encoding of the number $\beta(G,w,k)$ with an additive regularized error $\varepsilon_{reg}$
    by using in total $L$ qubits with 
    \begin{align}
        L=(m+k)l\sim  O\left(\frac{[(k-1)\log(n/c)]^2 }{\varepsilon_{reg}^2}\right).
    \end{align}
\section{Outlook and Conclusion}\label{sec:conclusion}
For some of the graphs considered here, beneficial aspects and surprising properties have been reported before. However, a structured investigation was missing. This work provides a first step towards this. 
We formalize the class of $\hbar$-perfect graphs and study their properties, implications, and applications. 

A central question arising in this context is to decide the $\hbar$-perfectness of a given graph. Using the properties developed, this task can now be tackled in a constructive manner. 
Another aspect is the role of $\hbar$-perfectness in the analysis of problems related to optimization over Pauli strings. Here, having an $\hbar$-perfect frustration graph bears huge simplifications for convex optimization problems formulated in terms of $\beta(G,w)$, since they can be cast as linear programs on the stable set polytope. 
That being said, one should not leave aside that finding this polytope is a problem that can indeed have hard instances when tackled classically. 
Here, the bridge set by $\hbar$-perfect graphs gives another interesting perspective by opening up the use of quantum optimization methods. 

There are many questions that remain for future work:

$\bullet$ The definition of $\hbar$-perfect graphs actually does not depend on the concrete form of Pauli strings, but only depends on the algebraic relations that they satisfy. It would be very interesting to generalize the definition of $\hbar$-perfect graphs to other relevant operators in quantum physics, which would also extend the applications of $\hbar$-perfect graphs.

$\bullet$ The frustration graph with weighted edges encoding the commutation relations of generalized high-dimensional Pauli strings has been recently introduced in Ref.~\cite{Makuta2025FrustrationGF}, can the framework developed in this work be generalized for the high-dimensional Pauli strings?

$\bullet$ The characterization of $\hbar$-perfect graphs is not yet completed. The list of properties provided in this work can be very likely extended. One special task is to determine the $\hbar$-perfectness of $G_7$ in Appendix~\ref{ssec:hper2} by using properties only.

$\bullet$ Though we have made some advances in approximating weighted beta numbers, the design of more efficient algorithms will be essential for determining $\hbar$-perfect graphs numerically, especially when it comes to larger graphs. 
By this we can also extend the study on the typicality of $\hbar$-perfectness, to larger graphs.

$\bullet$ Are there any connections between weighted beta numbers and other quantum graph parameters, e.g., the ones studied in \cite{gribling2018bounds}?

$\bullet$ Is there another graph parameter $\alpha'(G)$ between the independence number $\alpha(G)$ and the beta number $\beta(G)$, which is relatively easy to compute, e.g., it can be computed in polynomial time? Then we can extend the sandwich theorem $\alpha(G)\le \vartheta(G) \le \theta(G)$ to $\alpha(G)\le \alpha'(G) \le \beta(G) \le \vartheta(G) \le \theta(G)$, where $\theta(G)$ is the clique covering number~\cite{Diestel2025GraphT}. 

$\bullet$ Can the independence number for $\hbar$-imperfect graphs be solved in bounded-error quantum polynomial time?

$\bullet$ If we randomly take ${\rm poly}(m)$ Pauli strings of length $m$, how would the sample complexity parameter $\delta$ change with the length $m$?

$\bullet$ Is the hierarchy of SDP relaxations for approximating weighted beta numbers proposed in this work complete? That is, does the sequence of upper bounds produced by the hierarchy converge to the weighted beta number? If not, can we make it complete by adding more but not whole state monomials to the monomial basis? Can the structures (say, sparsity or symmetry) of the graph be exploited to reduce the size of SDP relaxations arising from the hierarchy? 

$\bullet$ All $\hbar$-imperfect graphs with up to $9$ vertices have a chromatic number at least $4$, and it is natural to conjecture that it holds in general. In comparison, all imperfect graphs have a chromatic number at least $3$. If the conjecture is true, then the results~\cite{Balogh2004TheNO} in graph theory imply a tighter estimation of the ratio of $\hbar$-perfect graphs among all graphs.

$\bullet$ The examples provided in this work mainly serve the purpose of showcasing the use of $\hbar$-perfectness in possible applications. Using the tools developed in this work to tackle actual problem instances of relevant size is now open for future work, 
where the extra structures of the considered Pauli strings, e.g. $p$-local Pauli strings~\cite{anschuetz2024bounds}, would be helpful.

\section*{Acknowledgments}
The authors thank Dong-Ling Deng, Felix Huber, and Sixia Yu for inspiring discussions and comments.
Z.P.X. is supported by {National Natural Science Foundation of China} (Grant No.~12305007), 
Anhui Provincial Natural Science Foundation (Grant No.~2308085QA29, No.~2508085Y003), Anhui Province
Science and Technology Innovation Project (No.~202423r06050004). 
J.W. is funded by National Key R\&D Program of China under grant No.~2023YFA1009401 and Natural Science Foundation of China under grant No.~12571333.
GK acknowledges support from the Excellence Cluster - Matter and Light for Quantum Computing (ML4Q). GK acknowledges funding by the European Research Council (ERC Grant Agreement No. 948139).
A.W. is supported by the Spanish MICIN (project PID2022-141283NB-I00) with the support of FEDER funds, by the Spanish MICIN with funding from European Union NextGenerationEU (PRTR-C17.I1) and the Generalitat de Catalunya, by the Spanish MTDFP through the QUANTUM ENIA project: Quantum Spain, funded by the European Union NextGenerationEU within the framework of the ``Digital Spain 2026 Agenda'', by the Alexander von Humboldt Foundation, and by the Institute for Advanced Study of the Technical University Munich. 

\appendix
\onecolumngrid
\pagebreak
\begin{center}
    {\Large \bf{Appendix}}\\ \vspace{1em}
\end{center}
The Appendix is structured as follows: In Appendix~\ref{sec:proofs}, we gather the proofs related to the properties of $\be(G)$ and $\hbar$-perfect graphs in Appendix~\ref{sec:proofs1}, the $\hbar$-perfectness of $G_7$ and $G_{15}$ in Appendix~\ref{ssec:hper2}, the relation between the length of representation and the adjacency matrix in Appendix~\ref{sec:proofs2}, and the convergence of generalized beta numbers in Appendix~\ref{ssec:gbn}. In Appendix~\ref{sec:numtools}, we provide the details of the new hierarchy based on state polynomial optimization in Appendix~\ref{ssec:spo}, its combination with a see-saw method in Appendix~\ref{ssec:seesaw_state}, the numerical results on all graphs up to $9$ vertices in Appendix~\ref{ssec:typical}, the further application of this hierarchy in shadow tomography in Appendix~\ref{ssec:hierprop}, and the asymptotic prevalence of $\hbar$-perfectness in Appendix~\ref{ssec:asym}. In Appendix C, we supplement the details of the applications. Especially, Appendix~\ref{ssec:entanglement} discusses entanglement detection and estimation, Appendix~\ref{ssec:tomography} analyzes the sample complexity of shadow tomography, Appendix~\ref{ssec:ground} focuses on ground state energy problems, and Appendix~\ref{ssec:alpha} explores the quantum encoding of the independence number.

\section{Proofs related to beta number and $\hbar$-perfectness}
\label{sec:proofs}

\subsection{\texorpdfstring{Proposition 1, Theorem 1 and Theorem 2, Properties 1 to 6}{}}\label{sec:proofs1}
\setcounter{theorem}{0}
\setcounter{proposition}{0}
\begin{proposition}\label{thm:beta}
For given $G$ as a frustration graph, any realization ${\cal S}$ leads to the same set  defined as
	\begin{equation}
		\be(G) \coloneqq \conv(\downarrow\hspace{-0.3em} {\cal Q}({\cal S})) \cap \mathbb{R}_+^n,
	\end{equation}
    where $\downarrow\hspace{-0.3em} T \coloneqq \{x|\exists y \in T\ s.t.\ x_i\le y_i\}$ for a given set $T$.
	Besides, $\beta(G,w) = \max_{v\in \be(G)} \sum_i w_i v_i$, $\forall w\in \mathbb{R}_+^n$.
\end{proposition}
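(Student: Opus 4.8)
The plan is to deduce the realization-independence of $\be(G)$ from that of $\beta(G,w)$ (established in \cite{xu2023bounding}), using the principle that a \emph{convex corner}---a compact, convex, down-closed subset of $\mathbb{R}_+^n$---is uniquely determined by the restriction of its support function to $\mathbb{R}_+^n$, together with the observation that $\be(G)$ is exactly the convex corner generated by ${\cal Q}({\cal S})$.

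First I would record the structural facts. Since $\langle S_i\rangle_\rho^2\in[0,1]$, the set ${\cal Q}({\cal S})\subseteq[0,1]^n$ is compact, being the continuous image of the state space under $\rho\mapsto(\langle S_1\rangle_\rho^2,\dots,\langle S_n\rangle_\rho^2)$. Writing $\downarrow\hspace{-0.3em}T=T-\mathbb{R}_+^n$, the identity $\conv(A+B)=\conv A+\conv B$ gives $\conv(\downarrow\hspace{-0.3em}{\cal Q}({\cal S}))=\conv({\cal Q}({\cal S}))-\mathbb{R}_+^n$, which is convex and closed (a compact set plus a closed convex cone); intersecting with $\mathbb{R}_+^n$ then yields a compact, convex set $\be(G)$ that contains $0$ and is down-closed in $\mathbb{R}_+^n$, i.e.\ a convex corner. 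Moreover, for $w\ge0$ the functional $v\mapsto\langle w,v\rangle$ does not increase when $v$ is decreased coordinatewise, so the supremum of $\langle w,\cdot\rangle$ over $\conv(\downarrow\hspace{-0.3em}{\cal Q}({\cal S}))$ is attained already on $\conv({\cal Q}({\cal S}))$; hence for every $w\in\mathbb{R}_+^n$
\[
  \max_{v\in\be(G)}\langle w,v\rangle=\sup_{v\in{\cal Q}({\cal S})}\langle w,v\rangle=\sup_{\rho}\sum\nolimits_i w_i\langle S_i\rangle_\rho^2=\beta(G,w),
\]
the maximum being attained by compactness. This is the ``consequently'' clause, and it also furnishes realization-independence once the uniqueness principle is available.

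The core step is the lemma: if two convex corners $K,K'$ satisfy $\sup_{x\in K}\langle w,x\rangle=\sup_{x\in K'}\langle w,x\rangle$ for all $w\in\mathbb{R}_+^n$, then $K=K'$. Suppose $x\in K\setminus K'$. As $K'$ is closed convex and $\{x\}$ is compact, strict separation gives $w\in\mathbb{R}^n$ with $\langle w,x\rangle>\sup_{y\in K'}\langle w,y\rangle$. Let $w^+$ denote the coordinatewise positive part of $w$; since $x\ge0$ we have $\langle w^+,x\rangle\ge\langle w,x\rangle$. For any $y\in K'$, zeroing those coordinates of $y$ on which $w$ is negative produces, by down-closedness, a point $y'\in K'$ with $\langle w,y'\rangle\ge\langle w,y\rangle$ and $\langle w,y'\rangle=\langle w^+,y'\rangle$; conversely $\langle w^+,y\rangle=\langle w,y'\rangle$ for this same $y'$. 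Hence $\sup_{K'}\langle w,\cdot\rangle=\sup_{K'}\langle w^+,\cdot\rangle$, and the same for $K$. Therefore $\langle w^+,x\rangle\ge\langle w,x\rangle>\sup_{K'}\langle w,\cdot\rangle=\sup_{K'}\langle w^+,\cdot\rangle=\sup_{K}\langle w^+,\cdot\rangle\ge\langle w^+,x\rangle$, a contradiction. Applying the lemma with $K$ and $K'$ the sets $\be(G)$ built from two realizations ${\cal S},{\cal S}'$---whose support functions on $\mathbb{R}_+^n$ both equal $\beta(G,\cdot)$ by the displayed identity and the realization-independence of $\beta$ from \cite{xu2023bounding}---shows the two sets coincide, completing the proof.

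I expect the main obstacle to be this lemma, specifically arguing that the separating functional may be taken in $\mathbb{R}_+^n$ by exploiting down-closedness (and invoking strict separation in the right form); the topological claims (compactness of ${\cal Q}({\cal S})$, closedness of $\conv(\downarrow\hspace{-0.3em}{\cal Q}({\cal S}))$) are routine, and the ``consequently'' clause is then immediate from the displayed identity.
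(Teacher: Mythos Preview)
Your proposal is correct and follows essentially the same approach as the paper: both proofs separate a hypothetical point $x\in K\setminus K'$ by some $w$, replace $w$ by its positive part $w^+$ using down-closedness and nonnegativity, and derive a contradiction from the realization-independence of $\beta(G,\cdot)$. Your version is slightly more polished in that it isolates the ``convex corners with equal support functions on $\mathbb{R}_+^n$ coincide'' statement as an explicit lemma and is more careful about the topological preliminaries, but the core argument is identical to the paper's.
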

\begin{proof}
    For a given graph $G$ and any two realizations ${\cal S}$ and ${\cal S}'$, let 
    \begin{equation}
        T = \conv(\downarrow\hspace{-0.3em} {\cal Q}({\cal S})) \cap \mathbb{R}_+^n,\ 
        T' = \conv(\downarrow\hspace{-0.3em} {\cal Q}({\cal S}')) \cap \mathbb{R}_+^n.
    \end{equation}
    By definition, both $T$ and $T'$ are convex sets. If there is a point $v \in T'\setminus T$, then there exists a hyperplane to separate $v$ from $T$ in the way that
    \begin{equation}\label{seq:tv}
        \max_{u\in T}\sum_i w_i u_i  < \sum_i w_i v_i,
    \end{equation}
    where $w$ is the norm vector of the corresponding hyperplane.

    Notice that all the elements  $v$ are nonnegative. Hence,
    \begin{equation}\label{seq:ww1}
        \sum_i w_i v_i \le \sum_i w^+_i v_i,
    \end{equation}
    where $w^+_i = \tau_i w_i$ with $\tau_i = \max\{0,{\rm sign}(w_i)\}$.

    Similarly,
    \begin{align}
        \max_{u\in T}\sum_i w_i u_i &\le \max_{u\in T}\sum_i w^+_i u_i\nonumber\\
        &= \max_{u\in T}\sum_{i} w_i \tau_{i} u_i\nonumber\\
        &= \max_{u\in T}\sum_{i} w_i (\tau_{i} u_i)\nonumber\\
        &\le \max_{u'\in T}\sum_{i} w_i u'_i,
    \end{align}
    where the last inequality follows from the fact that $u' = (u'_1,\ldots) \in T$ whenever $u \in T$ since $T$ is a downset. 
    This implies that
    \begin{equation}\label{seq:ww2}
        \max_{u\in T}\sum_i w_i u_i = \max_{u\in T}\sum_i w^+_i u_i = \beta(G,w^+).
    \end{equation}
    By combining Eqs.(\ref{seq:tv},\ref{seq:ww1},\ref{seq:ww2}), we conclude that
    \begin{equation}\label{eq:contradiction}
        \beta(G,w^+) < \sum_i w^+_i v_i,
    \end{equation}
    which is a contradiction by noticing that $ \beta(G,w^+) = \max_{v'\in T'} \max_i w^+_i v'_i \ge \sum_i w^+_i v_i$.

    Hence, $T'\subseteq T$, and for the same reason $T\subseteq T'$, which means that $T=T'$ and leads to the quantity $\be(G)$ independent of the realization.
\end{proof}
\begin{theorem}
	For any graph $G$, $\stab(G) \subseteq \be(G)$.
\end{theorem}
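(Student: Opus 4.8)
The plan is to show that every vertex of $\stab(G)$ lies in $\be(G)$; since $\be(G)$ is convex, this at once gives $\stab(G) \subseteq \be(G)$. Recall that the vertices of $\stab(G)$ are exactly the characteristic vectors $\chi_{V'} \in \{0,1\}^n$ of independent sets $V' \subseteq V$. So fix such a $V'$ and an arbitrary Pauli-string realization $\mathcal{S} = \{S_1,\dots,S_n\}$ of $G$ (by Proposition~\ref{ob:beta} the set $\be(G)$ does not depend on which realization we pick).

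The key point is that, because $V'$ is independent, the strings $\{S_i : i \in V'\}$ pairwise commute. Commuting Hermitian operators on a finite-dimensional space can be simultaneously diagonalized, so there is a unit vector $|\psi\rangle$ with $S_i|\psi\rangle = \pm|\psi\rangle$ for every $i \in V'$; in particular $\langle S_i\rangle_\psi^2 = 1$ for all $i\in V'$. Setting $\rho = |\psi\rangle\langle\psi|$, the point $q := \bigl(\langle S_1\rangle_\rho^2,\dots,\langle S_n\rangle_\rho^2\bigr)$ belongs to $\mathcal{Q}(\mathcal{S})$ by definition, has all coordinates in $[0,1]$, satisfies $q_i = 1 = (\chi_{V'})_i$ for $i\in V'$, and $q_i \ge 0 = (\chi_{V'})_i$ otherwise. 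Hence $\chi_{V'} \le q$ coordinatewise, i.e.\ $\chi_{V'} \in {\downarrow}\,\mathcal{Q}(\mathcal{S})$, and since $\chi_{V'} \ge 0$ we conclude $\chi_{V'} \in \conv({\downarrow}\,\mathcal{Q}(\mathcal{S})) \cap \mathbb{R}_+^n = \be(G)$.

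Taking the convex hull over all independent sets then gives $\stab(G) = \conv\{\chi_{V'}\} \subseteq \be(G)$. Equivalently, this argument establishes the inequality $\alpha(G,w) \le \beta(G,w)$ for every $w \in \mathbb{R}_+^n$ flagged just before the statement: for a maximizing independent set $V'$ one has $\alpha(G,w) = \sum_{i\in V'} w_i \le \sum_i w_i\langle S_i\rangle_\rho^2 \le \beta(G,w)$ with $\rho$ as above, and the inclusion of the polytopes follows since both $\stab(G)$ and $\be(G)$ are convex corners, so their containment is detected by support functions on $\mathbb{R}_+^n$ alone via Proposition~\ref{ob:beta}.

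I do not expect a genuine obstacle here: the only substantive input is the dictionary ``no edge between $i$ and $j$'' $\Leftrightarrow$ ``$S_i$ and $S_j$ commute'', which makes the sub-collection indexed by an independent set simultaneously diagonalizable and hence equipped with a common $\pm1$-eigenstate realizing $\langle S_i\rangle^2 = 1$ on $V'$; everything else is bookkeeping with the down-closure and convexity.
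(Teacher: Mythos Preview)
Your proof is correct. It takes a somewhat different route from the paper, though. The paper argues by duality/support functions: assuming a point $v\in\stab(G)\setminus\be(G)$, it separates $v$ from $\be(G)$ by a hyperplane with normal $w$, reduces to $w^+\ge 0$ using that $\stab(G)$ is a downset (exactly the machinery already built for Proposition~\ref{ob:beta}), and then reaches a contradiction from the inequality $\alpha(G,w^+)\le\beta(G,w^+)$. Your argument is instead primal and constructive: you place each extreme point $\chi_{V'}$ of $\stab(G)$ inside $\be(G)$ directly, by exhibiting a common $\pm1$-eigenstate of the commuting sub-family $\{S_i:i\in V'\}$ so that $\chi_{V'}$ is coordinatewise dominated by a point of $\mathcal{Q}(\mathcal{S})$. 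This avoids the separating-hyperplane step entirely and simultaneously yields $\alpha(G,w)\le\beta(G,w)$ as a by-product, whereas the paper treats that inequality as input. The paper's approach has the virtue of recycling verbatim the proof template of Proposition~\ref{ob:beta}; yours is more self-contained and makes the physical mechanism (simultaneous diagonalizability on independent sets) explicit.
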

\begin{proof}
    Notice that $\be(G)$ and $\stab(G)$ are both convex sets. Besides, $\stab(G) = \downarrow\hspace{-0.3em} \stab(G)$. By replacing $T$ by $\be(G)$ and $T'$ by $\stab(G)$ in the proof of Theorem~\ref{thm:beta}, we arrive at a contradiction as in Eq.~\eqref{eq:contradiction} since $\sum_i w_i^+ v_i \le \alpha(G,w^+) \le \beta(G,w^+)$.
\end{proof}

\setcounter{prop}{0}
\begin{prop}[Fully connected union]
\label{thm:connected_union2}
If $G_1, G_2$ are $\hbar$-perfect and $G$ is the fully connected disjoint union of $G_1$ and $G_2$, then $G$ is also $\hbar$-perfect.
\end{prop}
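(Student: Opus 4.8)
The plan is to establish two identities describing how the weighted beta number and the weighted independence number behave under the fully connected union, and then to combine them with the hypothesis. Write the vertex set of $G$ as $V=V_1\sqcup V_2$, with $V_i$ the vertex set of $G_i$, and for $w\in\mathbb{R}_+^n$ let $w_1,w_2$ be its restrictions to $V_1,V_2$. I will aim to show that for every $w\in\mathbb{R}_+^n$,
\[
\alpha(G,w)=\max\{\alpha(G_1,w_1),\,\alpha(G_2,w_2)\}
\qquad\text{and}\qquad
\beta(G,w)=\max\{\beta(G_1,w_1),\,\beta(G_2,w_2)\}.
\]
Granting both, applying the $\hbar$-perfectness of $G_1$ and of $G_2$ at the weight vectors $w_1$ and $w_2$ gives $\beta(G,w)=\max\{\alpha(G_1,w_1),\alpha(G_2,w_2)\}=\alpha(G,w)$ for all $w\in\mathbb{R}_+^n$, which is exactly the $\hbar$-perfectness of $G$ by Definition~\ref{basicdef}.

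The identity for $\alpha$ is elementary: in the fully connected union every vertex of $G_1$ is adjacent to every vertex of $G_2$, so no independent set of $G$ can meet both $V_1$ and $V_2$; since $G[V_i]=G_i$, a maximum weight independent set of $G$ lies entirely in $G_1$ or entirely in $G_2$, and conversely every independent set of $G_i$ is still independent in $G$. (Equivalently, $\overline{G}$ is the disjoint union of $\overline{G_1}$ and $\overline{G_2}$, and a maximum clique lies in a single component.)

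For $\beta$ I would work with a concrete Pauli realization of $G$. Taking realizations $\{A_i\}_{i\in V_1}$ of $G_1$ and $\{B_j\}_{j\in V_2}$ of $G_2$, I adjoin one fresh qubit and set $S_i:=A_i\otimes\id\otimes X$ for $i\in V_1$ and $T_j:=\id\otimes B_j\otimes Z$ for $j\in V_2$; a short check shows these are Pauli strings whose frustration graph is exactly $G$, since the within-block relations are inherited from $\{A_i\}$ resp.\ $\{B_j\}$ because $X^2=Z^2=\id$, while every $S_i$ anticommutes with every $T_j$ because $XZ=-ZX$. Then I would invoke the see-saw reformulation of the beta number recorded in Section~\ref{sec:numerical}, namely $\beta(G,w)=\sup_{v:\,\|v\|=1}\bigl\|\sum_k v_k\sqrt{w_k}\,O_k\bigr\|^2$ with $\|\cdot\|$ the operator norm and $O_k$ running over the realization, noting that the same formula applied to the sub-realization $\{A_i\}$ returns $\beta(G_1,w_1)$ and applied to $\{B_j\}$ returns $\beta(G_2,w_2)$. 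Writing $v=(c,d)$ with $c\in\mathbb{R}^{V_1}$, $d\in\mathbb{R}^{V_2}$, and setting $M:=\sum_i c_i\sqrt{w_i}S_i=M_1\otimes\id\otimes X$, $N:=\sum_j d_j\sqrt{w_j}T_j=\id\otimes N_2\otimes Z$ with $M_1=\sum_i c_i\sqrt{w_i}A_i$ and $N_2=\sum_j d_j\sqrt{w_j}B_j$, the anticommutation relations annihilate the cross terms in $(M+N)^2$, while $M^2=M_1^2\otimes\id\otimes\id$ and $N^2=\id\otimes N_2^2\otimes\id$ are positive semidefinite and supported on disjoint tensor factors. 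Hence the spectrum of $(M+N)^2$ is the set of all sums of an eigenvalue of $M_1^2$ with an eigenvalue of $N_2^2$, so $\|M+N\|^2=\|M_1\|^2+\|N_2\|^2$. Scaling out $\|c\|$ and $\|d\|$, bounding $\|M_1\|^2\le\|c\|^2\beta(G_1,w_1)$ and $\|N_2\|^2\le\|d\|^2\beta(G_2,w_2)$, and maximizing over $\|c\|^2+\|d\|^2=1$ then gives $\beta(G,w)\le\max\{\beta(G_1,w_1),\beta(G_2,w_2)\}$; the reverse inequality is immediate by putting all the norm on a single block.

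The main obstacle---and essentially the only non-routine ingredient---is the $\beta$-identity: one must commit to a convenient realization and then push through the operator-norm bookkeeping, the key point being that $M^2$ and $N^2$ \emph{commute}, so that the top eigenvalue of their sum is the sum of the top eigenvalues; this is precisely where the ``fully connected'' hypothesis is used. One should also double-check that the see-saw/operator-norm identity for $\beta$ is valid for an arbitrary realization (it is, being $\sup_\rho\langle\,\cdot\,\rangle_\rho^2=\|\cdot\|^2$ together with Cauchy--Schwarz). The $\alpha$-identity and the final combination with the hypothesis are routine.
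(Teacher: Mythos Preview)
Your proposal is correct and follows essentially the same route as the paper: both arguments split the weight vector over $V_1$ and $V_2$, use the anticommutation between the two blocks to decouple them, bound each block by $\beta(G_k,w_k)$ via the see-saw/Cauchy--Schwarz identity $\beta=\sup_{\|v\|=1}\|\sum v_i\sqrt{w_i}S_i\|^2$, and then combine with the elementary $\alpha(G,w)=\max_k\alpha(G_k,w_k)$. The only cosmetic difference is that you construct an explicit realization via an extra $X/Z$ qubit and compute $(M+N)^2=M^2+N^2$ in operator norm, whereas the paper takes an arbitrary realization of $G$ and uses the expectation-value inequality $\langle B_1\rangle^2+\langle B_2\rangle^2\le1$ for normalized anticommuting observables; both amount to the same structural fact.
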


\begin{proof}[Proof of Property \ref{thm:connected_union2}] 
Denote by $\{A_{i}\}$ any realization of $G$, and $V_1, V_2$ the vertex sets of $G_1$ and $G_2$. 
Notice that for any nonnegative weight vector $w$ we have
\begin{align}
      \sum_{i\in V_k} w_i \mean{A_i}^2 &= \max_{\rho}\mean{\sum_{i\in V_k} \sqrt{w_{i}}  x_{i} A_{i}}_{\rho}^2 \nonumber\\
      &\le \max_{\rho} \lambda_k^2 \mean{B_k}_{\rho}^2,
\end{align}
where $B_k = \sum_{i\in V_k} \sqrt{w_{i}}  x_{i} A_{i}/ \lambda_k$, $(x_i)_{i\in V_k}$ is the normalized vector of $\left(\sqrt{w_{i}} \mean{A_{i}}\right)_{i\in V_k}$, and $\lambda_k$ is the maximal eigenvalue of $\sum_{i\in V_k} \sqrt{w_{i}}x_{i}A_{i}$.

By definition, $\{B_1, B_2\} = 0$. Hence,
\begin{align}
      \beta(G,w) &= \sum_{i\in V_1} w_i \mean{A_i}^2 + \sum_{i\in V_2} w_i \mean{A_i}^2 \nonumber\\
      &\le \max_{\rho} \left[\lambda_1^2 \mean{B_1}_{\rho}^2 + \lambda_2^2 \mean{B_2}_{\rho}^2\right]\nonumber\\
      &\le \max_k \lambda_k^2.
\end{align}
However,
\begin{equation}
      \max_k \lambda_k^2 \le \max_k \alpha(G_k,(w_i)_{i\in V_k}) = \alpha(G,w),
\end{equation}
where the first inequality is from the facts that
\begin{equation}
    \lambda_k^2 \le \sum_{i\in V_k} w_i \mean{A_i}^2 \le \beta(G_k,(w_i)_{i\in V_k}),
\end{equation}
and $G_1, G_2$ are $\hbar$-perfect. The last equality is from the assumption that $G$ is the fully connected disjoint union of $G_1$ and $G_2$.
\end{proof}

\begin{prop}[Fully disconnected union]
\label{thm:disconnected_union2}
If $G_1, G_2$ are $\hbar$-perfect and $G$ is the fully disconnected disjoint union of $G_1$ and $G_2$, then $G$ is also $\hbar$-perfect.
\end{prop}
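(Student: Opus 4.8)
The plan is to reduce $\hbar$-perfectness of $G$ to two additivity identities for the component parameters and then combine them with the hypothesis. Write $V = V_1 \sqcup V_2$ for the vertex sets, $n = |V_1| + |V_2|$, and for $w \in \mathbb{R}^n_+$ let $w_1, w_2$ be its restrictions to $V_1, V_2$. The goal is to prove
\begin{equation}
\alpha(G,w) = \alpha(G_1,w_1) + \alpha(G_2,w_2), \qquad \beta(G,w) = \beta(G_1,w_1) + \beta(G_2,w_2).
\end{equation}
Once these hold, $\hbar$-perfectness of $G_1$ and $G_2$ gives $\beta(G_k,w_k) = \alpha(G_k,w_k)$ for $k=1,2$, hence $\beta(G,w) = \alpha(G,w)$ for every $w \in \mathbb{R}^n_+$, which is exactly Definition~\ref{basicdef}.

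The identity for $\alpha$ is the routine part: since the disjoint union has no edges between $V_1$ and $V_2$, a set $V'\subseteq V$ is independent in $G$ iff $V'\cap V_1$ is independent in $G_1$ and $V'\cap V_2$ is independent in $G_2$, so maximizing $\sum_{i\in V'} w_i$ splits as a sum of two independent maximizations.

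For $\beta$ I would invoke realization-independence (Proposition~\ref{ob:beta}) to pick a convenient representation. Take basic realizations $\{T_i\}_{i\in V_1}$ of $G_1$ on $\ell_1$ qubits and $\{U_j\}_{j\in V_2}$ of $G_2$ on $\ell_2$ qubits, and set $S_i = T_i \otimes \id$ for $i\in V_1$, $S_j = \id \otimes U_j$ for $j\in V_2$, acting on $\mathcal{H} = \mathcal{H}_1\otimes\mathcal{H}_2$. Strings from different blocks commute, so this realizes exactly $G$. For any state $\rho$ on $\mathcal{H}$ with marginals $\rho_1 = \tr_{\mathcal{H}_2}\rho$ and $\rho_2 = \tr_{\mathcal{H}_1}\rho$ one has $\mean{S_i}_\rho = \mean{T_i}_{\rho_1}$ and $\mean{S_j}_\rho = \mean{U_j}_{\rho_2}$, whence
\begin{equation}
\sum_i w_i \mean{S_i}_\rho^2 = \sum_{i\in V_1} (w_1)_i \mean{T_i}_{\rho_1}^2 + \sum_{j\in V_2} (w_2)_j \mean{U_j}_{\rho_2}^2 \le \beta(G_1,w_1) + \beta(G_2,w_2),
\end{equation}
giving ``$\le$''; for ``$\ge$'' plug $\rho = \rho_1^\star\otimes\rho_2^\star$ into the same expression, where $\rho_k^\star$ is optimal for $\beta(G_k,w_k)$.

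The only point that needs care — rather than a real obstacle — is the very first move: a priori a realization of a disjoint union need not factorize across qubit registers, and cross terms between the two blocks might seem able to inflate the variance sum. Proposition~\ref{ob:beta} dispels this, since it licenses computing $\beta(G,w)$ in the factored realization above, in which each $\mean{S_i}_\rho$ sees only the relevant marginal of $\rho$ and the sum decouples exactly. Equivalently, the whole argument can be phrased at the level of convex corners: the factored realization shows $\be(G) = \be(G_1)\times\be(G_2)$ and $\stab(G) = \stab(G_1)\times\stab(G_2)$, so $\be(G_k) = \stab(G_k)$ for $k=1,2$ immediately yields $\be(G) = \stab(G)$.
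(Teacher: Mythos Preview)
Your proof is correct and follows the same high-level strategy as the paper: reduce to the additivity identities $\alpha(G,w)=\alpha(G_1,w_1)+\alpha(G_2,w_2)$ and $\beta(G,w)=\beta(G_1,w_1)+\beta(G_2,w_2)$, then invoke $\hbar$-perfectness of the pieces. The only substantive difference lies in how the $\beta$-additivity is argued. The paper works with an \emph{arbitrary} realization $\{A_i\}$ of $G$, packages each block into a normalized operator $B_k$, and uses that $[B_1,B_2]=0$ (since cross-block strings commute) to bound $\lambda_1^2\langle B_1\rangle^2+\lambda_2^2\langle B_2\rangle^2\le\lambda_1^2+\lambda_2^2\le\sum_k\alpha(G_k,w_k)$. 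You instead invoke Proposition~\ref{ob:beta} up front to pick the factored realization $T_i\otimes\id$, $\id\otimes U_j$, after which the decoupling is immediate via marginals. Your route is a bit more direct and yields the exact equality for $\beta$ rather than just the needed inequality; the paper's route avoids appealing to realization-independence and stays parallel to the fully-connected case. Both are perfectly valid, and your convex-corner reformulation $\be(G)=\be(G_1)\times\be(G_2)$ is a nice way to see the whole statement at once.
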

\begin{proof}[Proof of Property \ref{thm:disconnected_union2}]
The proof is similar to the one of Property~\ref{thm:connected_union2}. The difference is that in this case, $ [B_1,B_2] =0$ which leads to 
    \begin{equation}
        \beta(G,w) = \sum_k \lambda_k^2 \le \sum_k \alpha(G_k, (w_i)_{i\in V_k}) = \alpha(G,w).
    \end{equation} 
\end{proof}

\begin{prop}[Induced subgraphs]
\label{thm:subgraph2}
    For a $\hbar$-perfect graph $G$, if graph $G'$ is an induced subgraph of $G$ with the vertex set $V'$, then $G'$ is also $\hbar$-perfect.
\end{prop}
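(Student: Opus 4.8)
The plan is to deduce $\hbar$-perfectness of $G'$ directly from that of $G$ by extending weight vectors supported on $V'$ to weight vectors on all of $V$ with zeros off $V'$. Since the inclusion $\stab(G')\subseteq\be(G')$ already holds by Theorem~\ref{ob:stabbe} applied to $G'$, it suffices to establish the reverse inclusion, or equivalently $\beta(G',w')=\alpha(G',w')$ for every $w'\in\mathbb{R}_+^{V'}$.

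First I would fix an arbitrary $w'\in\mathbb{R}_+^{V'}$ and define $w\in\mathbb{R}_+^{V}$ by $w_i=w'_i$ for $i\in V'$ and $w_i=0$ otherwise. The heart of the argument is that neither graph parameter ``sees'' the padding. On the independence side, because $G'$ is an \emph{induced} subgraph, a subset of $V$ that is independent in $G$ and contained in $V'$ is precisely an independent set of $G'$, and conversely; hence the zero weights contribute nothing and $\alpha(G,w)=\alpha(G',w')$. On the beta side, pick any Pauli realization $\{S_i\}_{i\in V}$ of $G$; its restriction $\{S_i\}_{i\in V'}$ is a realization of $G'$, again because inducedness means the anticommutation pattern among the $S_i$ with $i\in V'$ is exactly that of $G'$. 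Since $w_i=0$ off $V'$,
\begin{equation}
  \beta(G,w)=\sup_{\rho}\sum_{i\in V}w_i\mean{S_i}_\rho^2=\sup_{\rho}\sum_{i\in V'}w'_i\mean{S_i}_\rho^2=\beta(G',w'),
\end{equation}
where the last equality uses Proposition~\ref{ob:beta}, i.e.\ that $\beta(G',\cdot)$ may be computed with this particular restricted realization. Combining these two identities with the hypothesis $\beta(G,w)=\alpha(G,w)$ gives $\beta(G',w')=\alpha(G',w')$; as $w'$ was arbitrary, $G'$ is $\hbar$-perfect.

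An equivalent geometric route, matching the intuition stated in the main text, is to verify that $\be(G')$ is the coordinate projection onto $\mathbb{R}^{V'}$ of the slice $\be(G)\cap\{x\in\mathbb{R}^V\mid x_i=0\ \forall i\notin V'\}$, and that $\stab(G')$ is the same slice-and-project applied to $\stab(G)$; here one uses that $\downarrow\hspace{-0.3em}{\cal Q}(\{S_i\}_{i\in V})$ is a down-set, so pushing the coordinates outside $V'$ down to zero stays inside it. Since both bodies undergo the same operation, $\stab(G)=\be(G)$ forces $\stab(G')=\be(G')$. The argument carries no genuine obstacle; the only points demanding care are that \emph{inducedness} of $G'$ is exactly what makes the restricted Pauli family a legitimate realization of $G'$ and equates independent sets of $G'$ with the $V'$-supported independent sets of $G$, and that Proposition~\ref{ob:beta} must be invoked so that $\beta(G',w')$ can be evaluated on the restricted realization rather than an abstract one. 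Everything else is routine bookkeeping with vanishing weights.
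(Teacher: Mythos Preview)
Your proof is correct and follows essentially the same approach as the paper: extend any weight $w'$ on $V'$ to $w$ on $V$ by padding with zeros, then use $\beta(G',w')=\beta(G,w)=\alpha(G,w)=\alpha(G',w')$. The paper states this chain tersely without spelling out the realization-restriction and independent-set arguments you provide, but the underlying idea is identical.
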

\begin{proof}[Proof of Property \ref{thm:subgraph2}] 
    For any nonnegative weight $w'$ of $G'$, denote by $w$ the weight of $G$, where $w_i = w'_i$ if $i \in V'$, otherwise $w_i = 0$. By definition, $w$ is a nonnegative weight of $G$. Consequently,
    \begin{align}
        \beta(G',w') = \beta(G,w) = \alpha(G,w) = \alpha(G',w'),
    \end{align}
    which implies that $G'$ is also $\hbar$-perfect.
\end{proof}

\begin{prop}[Lexicographic product] \label{thm:product2}
  For two given $\hbar$-perfect graphs $G_1$ and $G_2$, denote by $G = G_1[G_2]$ the lexicographic product~\cite{sabidussi1961} of $G_1$ and $G_2$, then $G$ is also $\hbar$-perfect.
\end{prop}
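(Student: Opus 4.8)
The plan is to route everything through the behaviour of the two graph parameters $\alpha(\cdot,\cdot)$ and $\beta(\cdot,\cdot)$ under the lexicographic product, as hinted by the multiplicativity remark. Write $V_k=V(G_k)$, so that $G=G_1[G_2]$ has vertex set $V_1\times V_2$, and for $w\in\mathbb{R}_+^{V_1\times V_2}$ let $w^{(i)}\coloneqq(w_{ij})_{j\in V_2}$ be the slice at $i\in V_1$; put $\hat w_i\coloneqq\beta(G_2,w^{(i)})$ and $\tilde w_i\coloneqq\alpha(G_2,w^{(i)})$. I will establish the two decomposition identities
\begin{align}
\beta\bigl(G_1[G_2],w\bigr)=\beta(G_1,\hat w),\qquad\alpha\bigl(G_1[G_2],w\bigr)=\alpha(G_1,\tilde w),
\end{align}
valid for arbitrary graphs. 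The $\hbar$-perfectness hypothesis then enters only at the end: it gives $\hat w_i=\tilde w_i$ for every $i$ (perfectness of $G_2$) and $\beta(G_1,\hat w)=\alpha(G_1,\hat w)$ (perfectness of $G_1$), whence $\beta(G_1[G_2],w)=\alpha(G_1[G_2],w)$ for all $w\in\mathbb{R}_+^n$, i.e.\ $G_1[G_2]$ is $\hbar$-perfect.

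The identity for $\alpha$ is elementary. Given an independent set $U$ of $G_1[G_2]$, the projection $I=\{i:\exists j,(i,j)\in U\}$ is independent in $G_1$ (two projections adjacent in $G_1$ would force a pair of vertices of $U$ adjacent in $G_1[G_2]$) and each fibre $U_i=\{j:(i,j)\in U\}$ is independent in $G_2$, so $w(U)=\sum_{i\in I}w^{(i)}(U_i)\le\sum_{i\in I}\tilde w_i\le\alpha(G_1,\tilde w)$; the reverse inequality follows by gluing a $\tilde w$-maximal independent set of $G_1$ with $w^{(i)}$-maximal independent sets of $G_2$ in the chosen fibres. For $\beta$ I work with a realization adapted to the product: fix realizations $\{A_i\}_{i\in V_1}$ of $G_1$ and $\{B_j\}_{j\in V_2}$ of $G_2$, take $|V_1|$ disjoint copies of the $G_2$-register, and set $C_{ij}\coloneqq A_i\otimes E_{ij}$ where $E_{ij}$ is the copy of $B_j$ on the $i$-th register. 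A check of (anti)commutators shows that the frustration graph of $\{C_{ij}\}$ is exactly $G_1[G_2]$; what matters is that for fixed $i$ the operators $\{E_{ij}\}_j$ realize $G_2$, while $E_{ij}$ and $E_{i'j'}$ commute for $i\ne i'$, acting on disjoint registers. The inequality $\beta(G_1[G_2],w)\ge\beta(G_1,\hat w)$ is immediate from product states: for states $\sigma$ on $\mathcal{H}_1$ and $\tau_i$ on $\mathcal{H}_2$ the state $\rho=\sigma\otimes\bigotimes_i\tau_i$ gives $\langle C_{ij}\rangle_\rho=\langle A_i\rangle_\sigma\langle B_j\rangle_{\tau_i}$, hence $\sum_{ij}w_{ij}\langle C_{ij}\rangle_\rho^2=\sum_i\langle A_i\rangle_\sigma^2\sum_j w_{ij}\langle B_j\rangle_{\tau_i}^2$, and optimising the $\tau_i$ then $\sigma$ yields $\beta(G_1,\hat w)$.

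The opposite inequality $\beta(G_1[G_2],w)\le\beta(G_1,\hat w)$ is the crux, and I would derive it from the singular-value form of the beta number recalled in the main text, $\beta(G,w)=\max_{\|b\|=1}\bigl\|\sum_i b_i\sqrt{w_i}\,S_i\bigr\|^2$ (operator norm). Thus $\beta(G_1[G_2],w)=\max_{\|c\|=1}\bigl\|\sum_i A_i\otimes F_i^{c}\bigr\|^2$ with $F_i^{c}\coloneqq\sum_j c_{ij}\sqrt{w_{ij}}\,E_{ij}$, a Hermitian operator on the $i$-th register; since these live on disjoint registers they pairwise commute, so fix a product orthonormal eigenbasis $\{|e_\alpha\rangle\}$ with $F_i^{c}|e_\alpha\rangle=f_i^{\alpha}|e_\alpha\rangle$. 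A Cauchy--Schwarz estimate on an eigenvector, using that $\{E_{ij}\}_j$ realizes $G_2$, gives $|f_i^{\alpha}|\le\|F_i^{c}\|\le r_i\sqrt{\hat w_i}$ where $r_i\coloneqq\|(c_{ij})_j\|$, so $\sum_i r_i^2=1$ (assume w.l.o.g.\ $\hat w_i>0$ for all $i$; a vertex with $\hat w_i=0$ has $w^{(i)}=0$ and can be deleted). Writing an arbitrary unit vector as $|\Psi\rangle=\sum_\alpha|\psi_\alpha\rangle\otimes|e_\alpha\rangle$ with $\sum_\alpha\||\psi_\alpha\rangle\|^2=1$,
\begin{align}
\Bigl\|\Bigl(\sum_i A_i\otimes F_i^{c}\Bigr)|\Psi\rangle\Bigr\|^2
&=\sum_\alpha\Bigl\|\Bigl(\sum_i f_i^{\alpha}A_i\Bigr)|\psi_\alpha\rangle\Bigr\|^2
\;\le\;\max_\alpha\Bigl\|\sum_i f_i^{\alpha}A_i\Bigr\|^2 \nonumber\\
&\le\;\max_{\,\sum_i g_i^2/\hat w_i\le 1}\Bigl\|\sum_i g_i A_i\Bigr\|^2\;=\;\beta(G_1,\hat w),
\end{align}
where the middle step uses $\sum_i (f_i^{\alpha})^2/\hat w_i\le\sum_i r_i^2=1$, and the last equality holds because $g\mapsto\bigl\|\sum_i g_iA_i\bigr\|^2$ is convex, so its maximum over the ellipsoid $\{\sum_i g_i^2/\hat w_i\le1\}$ is attained on the bounding sphere, which is parametrised by $g_i=b_i\sqrt{\hat w_i}$ with $\|b\|=1$.

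I expect the main obstacle to be exactly this upper bound. The decisive point is that the lexicographic realization confines the ``$G_2$-parts'' $F_i^{c}$ to mutually disjoint registers, so they are simultaneously diagonalisable; diagonalising collapses all $G_2$-degrees of freedom into a single real vector $(f_i^{\alpha})_i$ that is feasible for the $G_1$-ellipsoid problem, after which only convexity of the operator norm is needed. Everything else is routine: manipulation of independent sets for the $\alpha$-identity, product states for the easy half of the $\beta$-identity, and Cauchy--Schwarz for the bound on $\|F_i^c\|$.
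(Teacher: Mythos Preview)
Your proof is correct and takes a genuinely different route from the paper's. The paper works with an \emph{arbitrary} realization $\{A_{ij}\}$ of $G_1[G_2]$, groups the inner sum via Cauchy--Schwarz into normalized operators $\bar A_i=\sum_j\sqrt{w_{ij}}x_{ij}A_{ij}/\lambda_i$, and then invokes a structural lemma from \cite{xu2023bounding} (that any such family with the frustration graph $G_1$ is unitarily equivalent to $B_i\otimes D_i$ with $B_i$ a shortest realization of $G_1$ and $D_i$ diagonal with entries of modulus $\le 1$) to pass to $\beta(G_1,\cdot)$. You instead \emph{construct} a product realization $C_{ij}=A_i\otimes E_{ij}$ with the $G_2$-factors on disjoint registers, and exploit that the resulting $F_i^c$ commute to simultaneously diagonalise them and reduce the operator-norm problem to the $G_1$-ellipsoid. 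Your argument is fully self-contained (no external structural lemma) and in fact establishes the decomposition identities $\beta(G_1[G_2],w)=\beta(G_1,\hat w)$ and $\alpha(G_1[G_2],w)=\alpha(G_1,\tilde w)$ as equalities for \emph{arbitrary} $G_1,G_2$, whereas the paper's chain only needs (and only directly proves) the upper bound $\beta(G_1[G_2],w)\le\beta\bigl(G_1,(\beta(G_2,w^{(i)}))_i\bigr)$. The trade-off is that the paper's method applies uniformly to any realization, while yours relies on the freedom to choose one; since $\beta$ is realization-independent this costs nothing here.
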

\begin{proof}[Proof of Property \ref{thm:product2}]
  Let $\{A_{ij}\}$ be any realization of $G$, where $A_{ij}$ represents the vertex in $G$ which corresponds to $i$ in $G_1$ and $j$ in $G_2$. 
  For any nonnegative weight $w = (w_{ij})_{ij}$, let $\bar{A}_i = \sum_j \sqrt{w_{ij}}x_{ij}A_{ij}/\lambda_i$, where $(x_{ij})_{j}$ is the normalized vector of $\left(\sqrt{w_{ij}} \mean{A_{ij}}\right)_{j}$, and $\lambda_i$ is the maximal eigenvalue of $\sum_j \sqrt{w_{ij}}x_{ij}A_{ij}$.

  By definition,
  \begin{align}\label{eq:product_upper_bound}
    \beta(G,w) &= \max_{\rho} \sum_{i} \sum_j w_{ij}\mean{A_{ij}}_{\rho}^2\nonumber\\
             &= \max_{\rho}\sum_i \mean{\sum_j \sqrt{w_{ij}}  x_{ij} A_{ij}}_{\rho}^2\nonumber\\
             &= \max_{\rho}\sum_i \lambda_i^2 \mean{\bar{A}_i}_{\rho}^2\nonumber\\
             &\le \max_{\rho}\sum_i \beta(G_2, (w_{ij})_j) \mean{\bar{A}_i}_{\rho}^2\nonumber\\
             &\le \beta(G_1, (\beta(G_2, (w_{ij})_j))_i )\nonumber\\
             &= \alpha(G_1, (\alpha(G_2, (w_{ij})_j))_i )\nonumber\\
             &= \alpha(G,w),
  \end{align}
  where the first inequality in the forth line is from the following fact: 
  \begin{equation}
    \lambda_i^2 = \max_{\rho}\mean{\sum_j \sqrt{w_{ij}} x_{ij} A_{ij}}_{\rho}^2 \le \beta(G_2, (w_{ij})_j).
  \end{equation}
  The second inequality in the fifth line relies on the fact~\cite{xu2023bounding} that $\{\bar{A}_i\} = \{ U (B_i\otimes D_i) U^\dagger\}$ for some unitary $U$, where $B_i$ is the shortest realization of $G_1$ and $D_i$ is diagonal with the absolute values of diagonal elements no more than $1$.
  In the second last equality, we have made use of the fact that both $G_1$ and $G_2$ are $\hbar$-perfect.
  However, $\beta(G,w)$ is always not less than $\alpha(G,w)$. Hence, $\beta(G,w) = \alpha(G,w)$.
\end{proof}

\begin{prop}[Splitting of vertices]
\label{thm:splitting2}
For a $\hbar$-perfect graph $G$, if the graph $G'$ is obtained from $G$ by splitting one vertex, then $G'$ is also $\hbar$-perfect.
\end{prop}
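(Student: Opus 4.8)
The plan is to fold the split graph back onto $G$. Write $V = V(G)$, let $v \in V$ be the split vertex and $v'$ the new one, so that $V(G') = V \cup \{v'\}$, $v \sim_{G'} v'$, $G'[V] = G$, and $v$ and $v'$ have identical neighbourhoods inside $V \setminus \{v\}$. Given a nonnegative weight $w'$ on $G'$, I would set $w_v := \max\{w'_v, w'_{v'}\}$ and $w_u := w'_u$ for $u \in V \setminus \{v\}$, and then establish the two estimates
\[
\alpha(G', w') = \alpha(G, w) \qquad \text{and} \qquad \beta(G', w') \le \beta(G, w).
\]
Granting these, $\hbar$-perfectness of $G$ together with the universal inequality $\alpha \le \beta$ gives $\alpha(G', w') \le \beta(G', w') \le \beta(G, w) = \alpha(G, w) = \alpha(G', w')$, so all four quantities coincide for every $w'$, and $G'$ is $\hbar$-perfect by Definition~\ref{basicdef}.

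The identity $\alpha(G', w') = \alpha(G, w)$ is pure bookkeeping on independent sets: any independent set of $G'$ contains at most one of $v, v'$ (they are adjacent), and since $v$ and $v'$ have the same neighbours in $V \setminus \{v\}$, replacing an occurrence of $v'$ by $v$ carries independent sets of $G'$ to independent sets of $G$ of no smaller $w$-weight, while relabelling $v$ as $v'$ (when $w'_{v'} \ge w'_v$) realizes the converse. Tracking in each of the finitely many cases whether $w'_v$ or $w'_{v'}$ is larger gives the equality.

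The inequality $\beta(G', w') \le \beta(G, w)$ is the heart of the argument, and I would prove it by a rotation trick at the level of realizations. Fix a Pauli-string realization $\{S_i\}_{i \in V(G')}$ of $G'$ and a state $\rho$; then $\{S_i\}_{i \in V}$ is a realization of $G$. Put $a = \langle S_v\rangle_\rho$, $b = \langle S_{v'}\rangle_\rho$; if $(a,b)=(0,0)$ the $v$ and $v'$ terms drop out and the bound is trivial, so assume otherwise. Since $S_v, S_{v'}$ anticommute and square to $\id$, the operator $H := -\mathrm{i}\, S_v S_{v'}$ is a Hermitian involution, and --- this is exactly where the defining property of splitting is used --- every other $S_u$ commutes or anticommutes with \emph{both} $S_v$ and $S_{v'}$ (because $v$ and $v'$ have equal neighbourhoods), hence commutes with $H$ and with the unitary $R_\phi := \exp(\mathrm{i}\phi H/2)$. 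A short computation shows $R_\phi S_v R_\phi^\dagger = \cos\phi\, S_v - \sin\phi\, S_{v'}$, so choosing $\phi$ with $(\cos\phi, -\sin\phi) = (a,b)/\sqrt{a^2+b^2}$ yields $T := R_\phi S_v R_\phi^\dagger$ with $\langle T\rangle_\rho = \sqrt{a^2+b^2}$, i.e.\ $\langle T\rangle_\rho^2 = a^2+b^2$. Setting $\tilde\rho := R_\phi^\dagger \rho R_\phi$, one has $\langle S_u\rangle_{\tilde\rho} = \langle S_u\rangle_\rho$ for $u \in V \setminus \{v\}$ and $\langle S_v\rangle_{\tilde\rho} = \langle T\rangle_\rho$, and therefore
\[
\sum_{i \in V(G')} w'_i \langle S_i\rangle_\rho^2 \le \sum_{u \in V \setminus \{v\}} w_u \langle S_u\rangle_\rho^2 + w_v\bigl(a^2+b^2\bigr) = \sum_{i \in V} w_i \langle S_i\rangle_{\tilde\rho}^2 \le \beta(G,w),
\]
using $w_v = \max\{w'_v, w'_{v'}\}$ in the first step and that $\{S_i\}_{i\in V}$ realizes $G$ in the last. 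Taking the supremum over $\rho$ gives $\beta(G', w') \le \beta(G, w)$.

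The main obstacle is making this last step airtight. One has to verify carefully that $R_\phi$ leaves every remaining generator fixed (this is precisely the role of the hypothesis that $v$ and $v'$ share a neighbourhood, and is what makes splitting special), and that although $T = cS_v + sS_{v'}$ is no longer a single Pauli string, it may still be used to bound $\beta(G,w)$: the family $\{S_u\}_{u \in V \setminus \{v\}} \cup \{T\}$ is the conjugate by the unitary $R_\phi$ of the genuine Pauli realization $\{S_i\}_{i \in V}$, and that conjugation is absorbed into the substitution $\rho \mapsto \tilde\rho$, so nothing beyond the defining supremum for $\beta$ is invoked. The degenerate case $a = b = 0$ and the symmetry between $v$ and $v'$ (which of them carries the larger weight) are routine and should just be flagged, not belaboured.
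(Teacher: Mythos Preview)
Your proof is correct and takes a genuinely different route from the paper's. The paper dispatches this property in two lines: observe that $G'$ is an induced subgraph of the lexicographic product $G[K_2]$, then invoke Property~\ref{thm:subgraph2} (induced subgraphs) and Property~\ref{thm:product2} (lexicographic products). Your argument is instead a direct operator-algebraic reduction: the explicit rotation $R_\phi = \exp(\mathrm{i}\phi H/2)$ generated by the anticommuting pair $S_v,S_{v'}$ folds their two contributions into a single direction while fixing every other $S_u$ (this being exactly where the equal-neighbourhood hypothesis enters), so that $\beta(G',w')\le\beta(G,w)$ with $w_v=\max\{w'_v,w'_{v'}\}$.

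What each approach buys: the paper's proof is economical once the lexicographic-product machinery is in place, and it exhibits splitting as a shadow of a single multiplicative operation. Your argument is self-contained --- it does not rely on Property~\ref{thm:product2}, whose proof in turn invokes the nontrivial structural result from \cite{xu2023bounding} about diagonal dilations --- and the rotation trick makes transparent at the level of states why splitting is harmless: two anticommuting Pauli strings span a plane that can always be rotated onto a single axis. Your route is thus more elementary in its dependencies, and the technique could in principle be used to re-derive the relevant special case of the lexicographic product property without the external citation.
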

\begin{proof} {Proof of Property \ref{thm:splitting2}.}
Notice that $G'$ is an induced subgraph of the lexicographic product of $G$ and the complete graph $K_2$ with two vertices. Hence, the $\hbar$-perfectness of $G'$ follows from Properties~\ref{thm:subgraph2} and \ref{thm:product2}.
\end{proof}

\begin{prop}[Copying of vertices]
\label{thm:copy2}
For a $\hbar$-perfect graph $G$, if the graph $G'$ is obtained from $G$ by copying one vertex, then $G'$ is also $\hbar$-perfect.
\end{prop}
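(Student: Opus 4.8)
The plan is to mirror the proof of Property~\ref{thm:splitting2} almost verbatim, simply replacing the complete graph $K_2$ by the edgeless graph $\overline{K_2}$ on two vertices. First I would note that $\overline{K_2}$ — two isolated vertices — is bipartite, hence perfect, hence $\hbar$-perfect by Theorem~\ref{thm:perfect}. Since both factors of the lexicographic product $G[\overline{K_2}]$ are then $\hbar$-perfect, Property~\ref{thm:product2} yields that $G[\overline{K_2}]$ is $\hbar$-perfect.

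Next I would observe that the copying operation is realized as an induced subgraph of $G[\overline{K_2}]$. In the lexicographic product, $(u,i)$ is adjacent to $(u',j)$ iff $u\sim_G u'$, or $u=u'$ and $i\sim_{\overline{K_2}}j$; because $\overline{K_2}$ has no edge, the second alternative never occurs, so $(u,i)\sim(u',j)$ iff $u\sim_G u'$. Consequently, for the vertex $v$ that is to be copied, the two vertices $(v,1)$ and $(v,2)$ share exactly the neighbourhood $\{(w,k):w\sim_G v\}$ and are themselves non-adjacent — which is precisely the defining property of a copy, as opposed to a split. Taking the induced subgraph of $G[\overline{K_2}]$ on the vertex set $\{(u,1):u\in V(G)\}\cup\{(v,2)\}$ therefore gives a graph isomorphic to $G'$ (via $(u,1)\mapsto u$, $(v,2)\mapsto v'$). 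Property~\ref{thm:subgraph2} then shows that $G'$, as an induced subgraph of the $\hbar$-perfect graph $G[\overline{K_2}]$, is itself $\hbar$-perfect.

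The argument is essentially bookkeeping, and I do not expect a genuine obstacle: the only two points needing a line of care are confirming that the edgeless graph on two vertices is $\hbar$-perfect (immediate) and checking that the lexicographic-product adjacency rule with an edgeless second factor duplicates a neighbourhood without inserting an edge between the two copies. An alternative would be a direct argument showing that under copying the weights on the two copies accumulate for both $\beta(G',w)$ and $\alpha(G',w)$ — reducing to the value on $G$ with the merged coordinate carrying weight $w_v+w_{v'}$ — but routing through Properties~\ref{thm:product2} and \ref{thm:subgraph2} avoids repeating that computation and keeps the proof short.
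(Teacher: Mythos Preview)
Your proof is correct but takes a different route from the paper. The paper proves copying directly via the weight-accumulation argument you mention at the end as an alternative: it realizes $G'$ by reusing the same Pauli string $A_n$ for both the original vertex and its copy, so that $\sum_{i=1}^{n+1} w_i \langle A_i\rangle_\rho^2 = \sum_{i=1}^{n} \tilde{w}_i \langle A_i\rangle_\rho^2$ with $\tilde{w}_n = w_n + w_{n+1}$, giving $\beta(G',w) = \beta(G,\tilde{w}) = \alpha(G,\tilde{w}) = \alpha(G',w)$. You instead mirror the paper's proof of splitting (Property~\ref{thm:splitting2}), replacing $K_2$ by $\overline{K_2}$ and invoking Properties~\ref{thm:product2} and~\ref{thm:subgraph2}. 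Both arguments are short and valid; your route is more uniform with the treatment of splitting and avoids touching realizations at all, while the paper's direct computation is self-contained and makes the ``weights accumulate'' intuition announced in the main text explicit.
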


\begin{proof} {Proof of Property \ref{thm:copy2}.}
    Without loss of generality, we label the vertex which has been copied and its copy with $n$ and $n+1$, respectively, where $n$ is the size of $G$. Furthermore, we take the realization $\{A_i\}_{i=1}^n \cup \{A_n\}$ of $G'$, where $\{A_i\}_{i=1}^n$ is one of $G$. For any nonnegative weight vector $w$ for $G'$, we have
    \begin{equation}
        \max_{\rho} \sum_{i=1}^{n+1} w_i \mean{A_i}^2_{\rho} = \max_{\rho} \sum_{i=1}^n \tilde{w}_i \mean{A_i}^2_{\rho},
    \end{equation}
    where $\tilde{w}_i = w_i$ for $i=1, \ldots, n-1$ and $\tilde{w}_n = w_n + w_{n+1}$.
    This implies that
    \begin{align}
        \beta(G,w)  = \beta(G,\tilde{w}) = \alpha(G,\tilde{w}) = \alpha(G,w).
    \end{align}
\end{proof}

\begin{theorem}[Perfect and $h$-Perfect graphs]\label{thm:perfect2}
    Any perfect graph and any $h$-perfect graph is $\hbar$-perfect. That is,
    \begin{align}
      \text{perfect }  \Rightarrow \text{$h$-perfect }  \Rightarrow \text{$\hbar$-perfect }.
    \end{align}
    In addition, odd cycles are all $h$-perfect graphs. 
\end{theorem}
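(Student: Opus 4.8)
The plan is to establish the chain from right to left, the substantive link being \emph{$h$-perfect $\Rightarrow$ $\hbar$-perfect}. By Theorem~\ref{ob:stabbe} we already have $\stab(G)\subseteq\be(G)$, and by Proposition~\ref{ob:beta} both are convex corners; since a polytope is the intersection of the halfspaces defined by its facets, the reverse inclusion $\be(G)\subseteq\stab(G)$ — hence $\stab(G)=\be(G)$, i.e.\ $G$ is $\hbar$-perfect — will follow once every facet-defining inequality of $\stab(G)$ is shown valid on $\be(G)$. Assume $G$ is $h$-perfect, so each facet is of one of the three listed types. Facets $x_i\ge 0$ hold on $\be(G)$ by construction, since $\be(G)\subseteq\mathbb R^n_+$. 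A facet of the second or third type reads $\sum_{i\in A}x_i\le b$, with $A$ a clique ($b=1$) or an odd hole with $|A|=2a+1$ ($b=a$); in both cases $b=\max_{x\in\stab(G)}\sum_{i\in A}x_i=\alpha(G,\mathbb 1_A)$, so by Proposition~\ref{ob:beta} validity on $\be(G)$ amounts to $\beta(G,\mathbb 1_A)\le b$. Restricting any Pauli realization of $G$ to the vertices of $A$ gives a realization of $G[A]$, so $\beta(G,\mathbb 1_A)=\beta(G[A])$, and since $G[A]$ is $K_{|A|}$ (clique case) or $C_{2a+1}$ (odd-hole case), and $\alpha(K_r)=1$, $\alpha(C_{2a+1})=a$, $\alpha\le\beta$ always, the implication reduces to the two identities $\beta(K_r)=1$ and $\beta(C_{2a+1})=a$.

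For the clique, a realization of $K_r$ consists of pairwise anticommuting observables $S_1,\dots,S_r$ with $S_i^2=\id$. For any unit vector $c$ the cross terms cancel by anticommutation, so $\bigl(\sum_i c_i S_i\bigr)^2=\|c\|^2\,\id=\id$; hence $\bigl\|\sum_i c_i S_i\bigr\|\le 1$ and $\bigl|\mean{\sum_i c_iS_i}_\rho\bigr|\le 1$ for every state $\rho$. Using the see-saw identity $\beta(G,w)=\max_{\|b\|=1}\bigl\|\sum_i b_i\sqrt{w_i}\,S_i\bigr\|^2$ of Section~\ref{sec:numerical} and optimizing over $c$ gives $\beta(K_r)\le 1$, i.e.\ $\beta(K_r)=1$.

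The odd hole is where the real work sits. Since $\beta$ is realization-independent (Proposition~\ref{ob:beta}), I would evaluate $\beta(C_{2a+1})$ in a Jordan--Wigner/Majorana realization: take Majorana operators $\gamma_1,\dots,\gamma_{2a+1}$ written as Pauli strings and set $S_j=i\gamma_j\gamma_{j+1}$ with indices cyclic mod $2a+1$; two such quadratics anticommute exactly when they share a Majorana, i.e.\ when $j,k$ are cyclically adjacent, so this indeed realizes $C_{2a+1}$. For any state $\rho$ form the real antisymmetric matrix $M$ with $M_{jk}=i\mean{\gamma_j\gamma_k}_\rho$; it is a contraction, $\|M\|_{\mathrm{op}}\le 1$, because a real orthogonal change of Majorana basis brings it to the block form $\bigoplus_{k=1}^{a}\bigl(\begin{smallmatrix}0&\mu_k\\-\mu_k&0\end{smallmatrix}\bigr)\oplus(0)$ with $\mu_k=\bigl|\mean{i\tilde\gamma_{2k-1}\tilde\gamma_{2k}}_\rho\bigr|\le 1$, the last bound since $i\tilde\gamma_{2k-1}\tilde\gamma_{2k}$ is Hermitian and squares to $\id$. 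As $\mean{S_j}_\rho=M_{j,j+1}$ and the cyclically adjacent off-diagonal entries contribute $2\sum_j M_{j,j+1}^2$ to $\|M\|_F^2=2\sum_{k=1}^a\mu_k^2$, we get
\[
\sum_{j=1}^{2a+1}\mean{S_j}_\rho^2
=\sum_{j=1}^{2a+1}M_{j,j+1}^2
\le \tfrac12\|M\|_F^2
=\sum_{k=1}^{a}\mu_k^2\le a ,
\]
so $\beta(C_{2a+1})=a$. (Alternatively one may simply quote the evaluation of $\beta$ on cycles from~\cite{xu2023bounding}.)

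It remains to dispatch the two bookkeeping claims. \emph{Perfect $\Rightarrow$ $h$-perfect}: by Chv\'atal's characterization~\cite{chvatal1975certain}, $G$ is perfect iff $\stab(G)$ is cut out by the nonnegativity and clique inequalities alone, so a perfect graph is $h$-perfect with the third facet class empty. \emph{Odd cycles are $h$-perfect}: the maximal cliques of $C_{2a+1}$ are its edges (its unique triangle when $a=1$), and it is classical~\cite{chvatal1975certain} that $\stab(C_{2a+1})=\{x\ge 0:\ x_i+x_{i+1}\le 1\ \forall i,\ \sum_i x_i\le a\}$, whose facets are precisely of the three permitted types. Combining the three steps gives perfect $\Rightarrow$ $h$-perfect $\Rightarrow$ $\hbar$-perfect, with odd cycles $h$-perfect and hence $\hbar$-perfect. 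The single genuine obstacle is the odd-hole bound $\beta(C_{2a+1})\le a$; the clique bound and the two polytope facts are routine.
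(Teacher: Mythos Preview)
Your proof is correct and follows essentially the same approach as the paper: verify that each facet-defining inequality of $\stab(G)$ for an $h$-perfect $G$ is valid on $\be(G)$, reducing to $\beta(K_r)=1$ and $\beta(C_{2a+1})=a$, and invoke Chv\'atal for the perfect $\Rightarrow$ $h$-perfect step. The only difference is that you supply self-contained arguments for these two $\beta$-values (anticommutation for cliques, a Majorana covariance-matrix bound for odd holes), whereas the paper simply asserts the clique case and cites~\cite{xu2023bounding} for the odd-cycle identity.
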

\begin{proof}
As stated in the main text, for a given graph $G$, if all the facets of $\stab(G)$ can be grouped into the following three classes:
    \begin{enumerate}
        \item $x_i \ge 0$, for $i$ in the vertex set of $G$;
        \item $\sum_{i\in K} x_i \le 1$, for $K$ a clique in $G$;
        \item $\sum_{i\in C} x_i \le a$, for $C$ an odd-cycle in $G$ with $2a+1$ vertices with $a$ being an integer,
     \end{enumerate}
    then it is $h$-perfect~\cite{chvatal1975certain,fonlupt1982transformations}. 
If and only if the third class vanishes as well, the graph is perfect~\cite{chvatal1975certain}. Thus, it is clear that perfect graphs are always $h$-perfect.

    For a given $h$-perfect graph $G$, to prove its $\hbar$-perfectness is equivalent to prove that any point $x$ in $\be(G)$ satisfies all the constraints corresponding to facets of $\stab(G)$. 
    By definition, $\beta(G,w) \ge 0$ for any nonnegative weight vector $w$, hence $x_i \ge 0$ holds always.
    Since $\beta(K) = 1$ for any $K$ a clique in $G$, we know that $\sum_{i\in K} x_i \le 1$. Similarly, $\beta(C) = a$ for any odd-cycle $C$ in $G$ with $2a+1$ vertices~\cite{xu2023bounding}, which implies that $\sum_i x_i \le a$.
\end{proof}

\subsection{$\hbar$-perfectness of $G_7$ and $G_{15}$}
\label{ssec:hper2}
Among all graphs on $7$ vertices there are two whose $\hbar$-perfectness cannot be determined using the properties in the previous section. One of them is $\bar{C}_7$, which is known to be $\hbar$-imperfect. Another one is shown in Fig.~\ref{fig:g7}, which is named as $G_7$ here and we prove that it is $\hbar$-perfect. Notice that all the graphs with no more than $6$ vertices are $\hbar$-perfect, $G_7$ has only one facet whose norm vector contains no zero, and this norm vector is full of $1$. To prove the $\hbar$-perfectness of $G_7$, it is sufficient to prove $\beta(G_7)=\alpha(G_7)=2$.
\begin{proof}[Proof of $\beta(G_7)=2$]
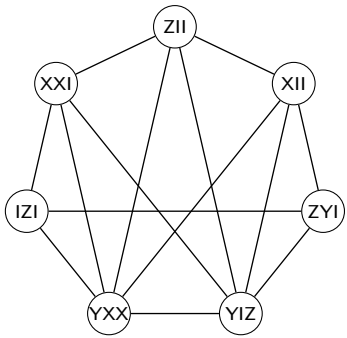
\begin{figure}[ht]
    \centering
    \begin{tikzpicture}[scale=0.8,every node/.style={draw, circle,minimum size=16, inner sep=-2, font=\sffamily\scriptsize}]
  \foreach \i/\lab in {0/ZII,1/XII,2/ZYI,3/YIZ,4/YXX,5/IZI,6/XXI} {
    \node (v\i) at ({90 - \i*360/7}:2.5cm) {\lab};
  }
  \foreach \a/\b in {0/1,1/2,2/3,3/4,4/5,5/6,6/0,0/3,0/4,1/3,1/4,6/3,6/4,2/5} {
    \draw[line width=0.5] (v\a) -- (v\b);
  }
\end{tikzpicture}
    \caption{ \label{fig:ZweiGraphenUndEineTabelle2}The graph $G_7$.}
    \label{fig:g7}
\end{figure}

A standard realization $(S_i)_{i=1}^7$ of $G_7$ in the order as in Fig.~\ref{fig:ZweiGraphenUndEineTabelle2} is
\begin{align}
    Z_1, X_1X_2, Z_2, Y_1X_2X_3, Y_1Z_3, Z_1Y_2, X_1.
\end{align}
For any real unit vector $\vec{w} = (w_i)_{i=1}^7$, we can find out the maximal singular value $\delta(\vec{w})$ of $\sum_i w_i S_i$ analytically as a function of $\vec{w}$. More explicitly,
\begin{align}
    \Delta(\vec{w}) \coloneqq\frac{[\delta(\vec{w})^2-1]^2}{4} = &w_1^2 (w_3^2+w_6^2)+w_3^2 (w_5^2+w_7^2) +w_6^2 (w_2^2+w_4^2)+w_2^2 w_7^2 +2 |w_3 w_6| \sqrt{w_5^2 (w_2^2+w_4^2)+w_4^2 w_7^2}.
\end{align}

Define 
\begin{equation}
    x_0 = \omega_1^2, x_1 = \omega_3^2, x_2 = \omega_6^2, y_1 = \omega_5^2+ \omega_7^2, y_2 = \omega_2^2+ \omega_4^2, z = \sqrt{y_1 y_2-\omega^2 \omega^4}.
\end{equation}
Then we have
\begin{equation}\label{eq:complicated}
   \Delta(\vec{w}) =  x_0 x_1+ x_0 x_2 + x_1 y_1 + x_2 y_2 + 2\sqrt{x_1 x_2}z+y_1y_2-z^2,
\end{equation}
where $x_0 + x_1 + x_2 + y_1 + y_2 = 1$ and $0 \leq z \leq \sqrt{y_1 y_2}$.

Since the right-hand side of Eq.~\eqref{eq:complicated} is a quadratic form in $z$, its maximum is achieved when $z = \min \{\sqrt{y_1 y_2}, \sqrt{x_1x_2} \}$.

We consider the first case that  $y_1 y_2 \leq x_1 x_2$. In this case, by setting $z = \sqrt{y_1y_2}$, we have
\begin{align}
     \Delta(\vec{w}) &= x_0 x_1+ x_0 x_2 + x_1 y_1 + x_2 y_2 + 2\sqrt{x_1 x_2 y_1 y_2}\nonumber\\
     &\le x_0 x_1+ x_0 x_2 + x_1 y_1 + x_2 y_2 + x_1 y_2 + x_2 y_1\nonumber\\
     &= (x_0 + y_1 + y_2)(x_1 + x_2)\nonumber\\
     &\le [(x_0 + y_1 + y_2) + (x_1 + x_2)]^2/4\nonumber\\
     &=1/4.
\end{align}
In the second case that  $y_1 y_2 \geq x_1 x_2$, by setting $z = \sqrt{x_1x_2}$ we have
\begin{align}
     \Delta(\vec{w}) &= x_0 x_1+ x_0 x_2 + x_1 y_1 + x_2 y_2 + x_1 x_2 + y_1 y_2.
\end{align}
For any given $x_1, x_2$ and $y_1$, the value of $x_0 + y_1$ is also fixed and $\Delta(\vec{w})$ is a linear function of $x_0$ and $y_1$. Hence, the maximum of $\Delta(\vec{w})$ is achieved either when $x_0 = 0$ or $y_1=x_1x_2/y_2$, where the second situation is covered by the previous case. For $x_0 = 0$, 
\begin{align}
     \Delta(\vec{w}) &= x_1 y_1 + x_2 y_2 + x_1 x_2 + y_1 y_2 = (x_1 + y_2) (x_2 + y_1)  \le 1/4.
\end{align}
In summary, we have proven that $\Delta(\vec{w}) \le 1/4$ and consequently $\delta(\vec{w})^2 \le 2$, which implies that $\beta(G_7) \le 2$. Since $\beta(G_7) \ge \alpha(G_7) = 2$, we conclude that $\beta(G_7) = 2$.
\end{proof}
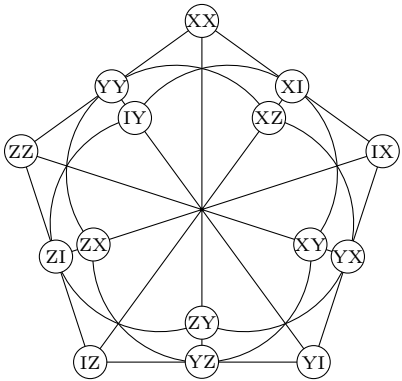
\begin{figure}[htpb]
	\centering
	\begin{tikzpicture}[scale=1.25]
\foreach \i in {1,...,5} {
    \pgfmathsetmacro{\angleA}{72*(\i-1) + 18}  
    \pgfmathsetmacro{\angleB}{72*(\i-1) - 18}  
    \coordinate (ps\i) at (\angleA:2); 
    \coordinate (ps2\i) at (\angleB:1.615);  
    \coordinate (ps3\i) at (\angleB:1.2);}
\draw (ps1) -- (ps2) -- (ps3) -- (ps4) -- (ps5) -- cycle;
\draw (ps1) -- (ps24) (ps2) -- (ps25) (ps3) -- (ps21) (ps4) -- (ps22) (ps5) -- (ps23);
\foreach \i in {1,...,5} {
    \pgfmathsetmacro{\rotangle}{72*(\i-1)}
    \begin{scope}[rotate=\rotangle]
        \draw (0, -0.47) + (180:1.15) arc (180:360:1.15);
    \end{scope}}
\def\labels{{"IX", "XX", "ZZ", "IZ", "YI", "YX", "XI", "YY", "ZI", "YZ", "XY", "XZ", "IY", "ZX", "ZY"}}
\foreach \i in {1,...,5} {
    \filldraw[fill=white, draw=black] (ps\i) circle (0.175);
    \pgfmathparse{\labels[\i-1]}
    \node[font=\scriptsize, inner sep=0.5pt] at (ps\i) {\pgfmathresult}; 
    \filldraw[fill=white, draw=black] (ps2\i) circle (0.175);
    \pgfmathparse{\labels[\i+4]}
    \node[font=\scriptsize, inner sep=0.5pt] at (ps2\i) {\pgfmathresult};
    \filldraw[fill=white, draw=black] (ps3\i) circle (0.175);
    \pgfmathparse{\labels[\i+9]}
    \node[font=\scriptsize, inner sep=0.5pt] at (ps3\i) {\pgfmathresult};}
\end{tikzpicture}
	\caption{The complementary graph of $G_{15}$ where all the vertices on the same line or half-circle are connected and the operators on the adjacent vertices commute with each other.}
	\label{fig:g15c}
\end{figure}
Denote by $G_{15}$ the frustration graph of all the $15$ Pauli strings with length $2$ except the identity, then $G_{15}$ is $\hbar$-perfect.
\begin{proof}[Proof of $\hbar$-perfectness of $G_{15}$]
	Due to the high symmetry in $G_{15}$ as illustrated in Fig.~\ref{fig:g15c}, all the facets of $G_{15}$ can be classified into 
	\begin{enumerate}
		\item $x_i \ge 0$,
		\item $\sum_{i\in C} x_i \le 1$, for all cliques $C$,
		\item $\sum_{i\in G'} x_i \le 2$, where $G'$ is any subgraph  isomorphic to the one without vertices labeled by $\{\id X, \id Z, XY, ZY, YY\}$. 
	\end{enumerate}
	For the first two classes, the corresponding beta number equals the independence number naturally. For the third class, the beta number is also $2$ as the independence number, which has been proven in Ref.~\cite{xu2023bounding}. Hence, $G_{15}$ is $\hbar$-perfect by definition.

\end{proof}

\subsection{Length of the realization of a frustration graph}\label{sec:proofs2}
\begin{theorem}\label{thm:minlen}
Let $G$ be a frustration graph with an adjacency matrix $A$.  Then the length of Pauli strings in any realization of $G$ is no less than ${\operatorname{rank}_{\mathbb{F}_2}(A)}/2$  with $\operatorname{rank}_{\mathbb{F}_2}(A)$ being the rank of $A$ over the field $\mathbb{F}_2$, which is achieved by the standard realization~\cite{xu2023bounding}.
\end{theorem}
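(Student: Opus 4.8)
The plan is to reduce the statement to a rank inequality over $\mathbb{F}_2$ via the standard binary (``symplectic'') encoding of Pauli strings. Recall that, up to an overall phase, a Pauli string $S$ of length $\ell$ is determined by a vector $v(S)=(a\mid b)\in\mathbb{F}_2^{\ell}\oplus\mathbb{F}_2^{\ell}$, where $a$ marks the tensor factors carrying $X$ or $Y$ and $b$ those carrying $Z$ or $Y$; two Pauli strings anticommute precisely when $v(S)^{\mathsf T}Jv(S')=1$ over $\mathbb{F}_2$, where $J=\left(\begin{smallmatrix}0&I_\ell\\ I_\ell&0\end{smallmatrix}\right)$ is the standard symplectic form. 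First I would translate a given realization $\{S_i\}_{i=1}^n$ of $G$ by length-$\ell$ strings into the matrix $M\in\mathbb{F}_2^{n\times 2\ell}$ whose $i$-th row is $v(S_i)^{\mathsf T}$, and observe that $MJM^{\mathsf T}=A$ over $\mathbb{F}_2$: the off-diagonal entries reproduce the anticommutation data $A_{ij}$, while the diagonal entries vanish automatically because $J$ is alternating, matching $A_{ii}=0$.

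The lower bound is then immediate: $\operatorname{rank}_{\mathbb{F}_2}(A)=\operatorname{rank}_{\mathbb{F}_2}(MJM^{\mathsf T})\le\operatorname{rank}_{\mathbb{F}_2}(M)\le 2\ell$, hence $\ell\ge\operatorname{rank}_{\mathbb{F}_2}(A)/2$. I would also note that, since $A$ is symmetric with zero diagonal over $\mathbb{F}_2$ it is an alternating matrix, so its rank is even; therefore $\operatorname{rank}_{\mathbb{F}_2}(A)/2$ is an integer and the bound loses nothing to rounding.

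For tightness I would exhibit a realization on $r\coloneqq\operatorname{rank}_{\mathbb{F}_2}(A)/2$ qubits, which is the standard realization of \cite{xu2023bounding}. The alternating form $(x,y)\mapsto x^{\mathsf T}Ay$ on $\mathbb{F}_2^n$ has a radical of dimension $n-2r$ and induces a nondegenerate alternating form on the $2r$-dimensional quotient; by the symplectic normal form over $\mathbb{F}_2$, this nondegenerate form is congruent to $J$ on $\mathbb{F}_2^{2r}$, so there is a linear map $\psi\colon\mathbb{F}_2^n\to\mathbb{F}_2^{2r}$ with $x^{\mathsf T}Ay=\psi(x)^{\mathsf T}J\psi(y)$ for all $x,y$. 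Taking $S_i$ to be any length-$r$ Pauli string with $v(S_i)=\psi(e_i)$ then gives a realization of $G$, so the bound $\operatorname{rank}_{\mathbb{F}_2}(A)/2$ is attained.

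I expect the main work to be bookkeeping rather than a single deep obstacle: the care points are (i) checking that the Pauli-to-$\mathbb{F}_2^{2\ell}$ dictionary sends anticommutation exactly to the symplectic inner product and forces the diagonal of $MJM^{\mathsf T}$ to vanish, and (ii) invoking cleanly the congruence normal form for alternating matrices over $\mathbb{F}_2$ in the tightness step. One mild subtlety worth flagging is that $\psi$ may collapse two ``twin'' vertices (equal rows of $A$, necessarily nonadjacent) to the same vector, so the corresponding Pauli strings coincide; this does not affect the frustration-graph structure and is handled exactly as in \cite{xu2023bounding}.
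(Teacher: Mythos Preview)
Your argument is correct and uses the same key tool for tightness (the congruence normal form of alternating matrices over $\mathbb{F}_2$), but the packaging differs from the paper's in an instructive way. You work globally with the symplectic encoding: stack the binary vectors $v(S_i)$ into an $n\times 2\ell$ matrix $M$, read off $A=MJM^{\mathsf T}$, and get the lower bound in one line from $\operatorname{rank}_{\mathbb{F}_2}(A)\le\operatorname{rank}_{\mathbb{F}_2}(M)\le 2\ell$. The paper instead works qubit by qubit: it observes that the $k$-th tensor factor alone contributes a complete tripartite graph $T_k$ (parts indexed by which of $X,Y,Z$ sits there), so $A=\bigoplus_k A(T_k)$ over $\mathbb{F}_2$; since each $A(T_k)$ has $\mathbb{F}_2$-rank at most $2$, subadditivity gives $\operatorname{rank}_{\mathbb{F}_2}(A)\le 2\ell$. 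For tightness both of you invoke the same symplectic normal form, the paper phrasing it as writing $A$ as a sum of $r$ terms $u_iv_i^{\mathsf T}\oplus v_iu_i^{\mathsf T}$, each a tripartite adjacency matrix. Your route is shorter and closer to how quantum information practitioners think about Pauli strings; the paper's route buys a combinatorial invariant $\mathrm{tp}_\oplus(G)$ (minimum number of complete tripartite graphs XOR-summing to $A$) and a natural comparison with the biclique partition number $\mathrm{bp}(G)$, which may be of independent graph-theoretic interest.
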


We first notice that each realization $\{S_i\}_i$ with length $t$ of $G$ corresponds to a collection of complete tripatite graphs $T_1, \ldots, T_t$, in the way that $T_k$ is the frustration graph of $\{S_{i,k}\}_i$ with $S_{i,k}$ the $k$-th Pauli matrix of $S_i$.
In addition, $\{S_i, S_j\} = 0$ if and only if there is an odd number of $k$ such that $\{S_{i,k}, S_{j,k}\} = 0$. Consequently,
\begin{equation}\label{eq:decompA}
A = A(T_1) \oplus A(T_2) \oplus \cdots \oplus A(T_t),
\end{equation}
where $\oplus$ is modulo $2$, i.e., in the field $\mathbb{F}_2$.

Denote by $\mathrm{tp}_{\oplus}(G)$ the minimum number $t$ of complete tripartite graphs $T_1, \ldots, T_t$ satisfying Eq.~\eqref{eq:decompA}.
Clearly, $\mathrm{tp}_{\oplus}(G)$ is the shortest length of Pauli strings in any realization, which is thus the length of Pauli strings in the standard realization. Then we only need to prove that $\mathrm{tp}_{\oplus}(G)$ is given by half the $\mathbb{F}_2$-rank of the adjacency matrix $A$, for which we need the following facts.
\begin{lemma}[Ref.~\cite{Delsarte1975AlternatingBF} Lemma~10]\label{lm:standsymp}
	Any skew-symmetric matrix $A$ of order $n$ over the field $\mathbb{F}_q$ has even rank. If $\operatorname{rank}_{\mathbb{F}_q}(A) = r = 2k$, there exists a nonsingular matrix $L$ such that the transformed matrix $L^\top A L$ takes the canonical form $B$:
\begin{equation}
	L^\top A L = B = \bigoplus_{i=1}^k J, \quad \text{where } J = \begin{pmatrix} 0 & 1 \\ 1 & 0 \end{pmatrix}.
\end{equation}
\end{lemma}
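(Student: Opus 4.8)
The plan is to read the statement as the classification of alternating bilinear forms and to prove it by a symplectic Gram--Schmidt argument. First I would associate to $A$ the bilinear form $\phi(x,y)=x^\top A y$ on $V=\mathbb{F}_q^n$, for which skew-symmetry gives $\phi(y,x)=-\phi(x,y)$. The point I would stress at the outset is that in characteristic $2$ skew-symmetry coincides with symmetry, so ``skew-symmetric'' must be understood as \emph{alternating}: the diagonal vanishes, equivalently $\phi(x,x)=0$ for every $x$. This is precisely the hypothesis satisfied by a graph adjacency matrix, and it is the alternating condition, not antisymmetry alone, that powers the whole proof. Since a congruence $A\mapsto L^\top A L$ is a change of basis of $V$ that preserves both the alternating property and the rank, it suffices to produce a basis in which the Gram matrix of $\phi$ is a direct sum of copies of $J$ together with a zero block.

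The core is an induction on $n$ exhibiting $V$ as an orthogonal sum of hyperbolic planes and a radical. If $A=0$ there is nothing to prove and $k=0$. Otherwise some entry is nonzero, so there exist $u,v$ with $\phi(u,v)=c\neq0$; rescaling $v$ by $c^{-1}$ I may assume $\phi(u,v)=1$. Because $\phi(u,u)=\phi(v,v)=0$, the restriction of $\phi$ to $H=\operatorname{span}\{u,v\}$ has Gram matrix $\begin{pmatrix}0&1\\-1&0\end{pmatrix}$, which in the relevant case $q=2$ is exactly the displayed block $J=\begin{pmatrix}0&1\\1&0\end{pmatrix}$ since $-1=1$ there (in odd characteristic one simply carries the antisymmetric block instead). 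As this block is invertible, $H$ is non-degenerate, and I would then verify that $V=H\oplus H^\perp$ via the explicit projection $w\mapsto w-\phi(w,v)\,u+\phi(w,u)\,v$, whose image lies in $H^\perp=\{x:\phi(x,u)=\phi(x,v)=0\}$; non-degeneracy of $H$ forces $H\cap H^\perp=0$, making the sum direct. The form restricted to $H^\perp$ is again alternating on a space of dimension $n-2$, so the inductive hypothesis applies.

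Collecting the hyperbolic pairs $(u_1,v_1),\dots,(u_k,v_k)$ produced by the induction together with a basis of the terminal radical $V_0$ (on which $\phi$ vanishes identically, i.e.\ $V_0\subseteq\ker A$) yields an ordered basis of $V$; taking $L$ to be the matrix whose columns are these vectors gives $L^\top A L=\bigl(\bigoplus_{i=1}^k J\bigr)\oplus 0_{n-2k}$, the cross terms between distinct blocks and into $V_0$ vanishing by orthogonality. Because congruence preserves rank, $\operatorname{rank}_{\mathbb{F}_q}(A)=2k$ is even, and dropping the rank-zero radical block (equivalently, restricting $L$ to its first $2k$ columns) leaves the canonical form $\bigoplus_{i=1}^k J$ with $k=r/2$. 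The main obstacle I anticipate is conceptual rather than computational: one must isolate that the argument hinges on $\phi(x,x)=0$---needed both to give $H$ the hyperbolic Gram matrix with zero diagonal and to make the restriction to $H^\perp$ alternating---since mere antisymmetry is vacuous when $\operatorname{char}=2$; once this is pinned down, the splitting $V=H\oplus H^\perp$ and the induction are entirely routine.
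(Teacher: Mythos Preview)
The paper does not supply its own proof of this lemma; it is quoted as Lemma~10 of Delsarte's 1975 paper and used as a black box in the proof of Theorem~\ref{thm:minlen}. Your symplectic Gram--Schmidt argument is the standard proof and is correct: peel off hyperbolic planes one at a time using the alternating property, and the residual radical accounts for the zero block. You are also right to flag that in characteristic~$2$ the hypothesis must be read as \emph{alternating} (zero diagonal) rather than merely $A^\top=-A$; this is exactly the situation for adjacency matrices over $\mathbb{F}_2$, which is the only case the paper actually needs. One cosmetic point: the lemma as written has $L^\top A L$ equal to the $2k\times 2k$ block $B$, whereas with $L$ nonsingular of order $n$ the result is $B\oplus 0_{n-2k}$; you handle this correctly by noting the radical block, and the paper's subsequent use of the lemma implicitly adopts the same convention.
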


\begin{lemma}\label{lm:triadj}
The matrix $M = u v^\top \oplus v u^\top$ over $\mathbb{F}_2$, for any $u, v \in \mathbb{F}_2^n$, is the adjacency matrix of a complete tripartite graph together with other isolated vertices.
\end{lemma}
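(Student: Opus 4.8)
The plan is to read the graph $H$ with adjacency matrix $M$ directly off the $\mathbb{F}_2^2$-valued labelling $\phi\colon\{1,\dots,n\}\to\mathbb{F}_2^2$, $\phi(i)=(u_i,v_i)$. First I would check that $M$ is a legitimate adjacency matrix: it is symmetric, and its diagonal vanishes over $\mathbb{F}_2$ because $M_{ii}=u_iv_i+v_iu_i=2u_iv_i=0$. So $M$ is the adjacency matrix of a simple graph $H$ on the vertex set $\{1,\dots,n\}$.

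The crucial observation is that, working in characteristic $2$ where signs are immaterial, $M_{ij}=u_iv_j+v_iu_j=\det\left(\begin{smallmatrix}u_i&u_j\\v_i&v_j\end{smallmatrix}\right)$; hence $\{i,j\}$ is an edge of $H$ if and only if $\phi(i)$ and $\phi(j)$ are linearly independent in $\mathbb{F}_2^2$. Over $\mathbb{F}_2$ the only scalars are $0$ and $1$, so two vectors are linearly independent precisely when they are distinct and both nonzero. I would then partition the vertices into the four fibres $V_{00},V_{10},V_{01},V_{11}$ of $\phi$ (indexed by the elements of $\mathbb{F}_2^2$) and conclude from this criterion: every vertex in $V_{00}$ is isolated; there are no edges within any single fibre; and between any two distinct fibres among $V_{10},V_{01},V_{11}$ all possible edges are present. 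This says exactly that $H$ is the complete tripartite graph with parts $V_{10},V_{01},V_{11}$ together with the isolated set $V_{00}$, which is the claim.

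I do not anticipate a genuine difficulty; the computation is a one-liner once the $2\times 2$ determinant / linear-independence reformulation is made. The only point that needs a word of care is the degenerate cases in which one or two of the three parts $V_{10},V_{01},V_{11}$ happen to be empty, so that $H$ is actually complete bipartite or edgeless. I would handle this by allowing a complete tripartite graph to have empty parts (equivalently, by absorbing any empty part into the isolated vertices), which is also the reading under which the lemma is used, together with Lemma~\ref{lm:standsymp}, to write $A$ as an $\mathbb{F}_2$-sum of $\operatorname{rank}_{\mathbb{F}_2}(A)/2$ matrices of the form $uv^\top\oplus vu^\top$ and hence as a sum of that many complete tripartite adjacency matrices.
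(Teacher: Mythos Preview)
Your proof is correct and follows essentially the same route as the paper: both partition the vertex set into the four fibres of $(u_i,v_i)\in\mathbb{F}_2^2$ and check directly that $M_{ij}=u_iv_j\oplus v_iu_j$ is $1$ exactly when $i,j$ lie in two distinct nonzero fibres. Your determinant/linear-independence reformulation is a tidy way to package the case analysis, but it is the same argument.
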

\begin{proof}
Partitioning the vertex set $V$ based on the supports of $u$ and $v$ into $V_A = \{p: u_p=1, v_p=0\}$, $V_B = \{p: u_p=0, v_p=1\}$, $V_C = \{p: u_p=1, v_p=1\}$, and $V_D = \{p: u_p=0, v_p=0\}$, a case analysis shows that the entry $M_{pq} = u_p v_q \oplus v_p u_q$ is 1 if and only if $p$ and $q$ belong to different sets among $V_A, V_B, V_C$. This corresponds precisely to the adjacency matrix of the complete tripartite graph $K_{V_A, V_B, V_C}$ together with the isolated vertices in $V_D$.
\end{proof}

\begin{lemma}\label{lm:trirank}
For any complete tripartite graph $T$, it holds that ${\operatorname{rank}_{\mathbb{F}_2}(A(T)) \le 2}$.
\end{lemma}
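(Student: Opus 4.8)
The plan is to invoke Lemma~\ref{lm:triadj} in the reverse direction: instead of starting from vectors $u,v$ and producing a complete tripartite graph, I would start from the complete tripartite graph $T=K_{A,B,C}$ on $n$ vertices and exhibit vectors $u,v\in\mathbb{F}_2^n$ for which $A(T)=uv^\top\oplus vu^\top$. Once this is established, the rank bound is immediate: over any field $\operatorname{rank}(uv^\top)\le 1$ and $\operatorname{rank}(vu^\top)\le 1$, and rank is subadditive under addition, so $\operatorname{rank}_{\mathbb{F}_2}(A(T))\le 2$.

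To find the right $u$ and $v$, I would read off the partition used in the proof of Lemma~\ref{lm:triadj}, namely $V_A=\{p:u_p=1,v_p=0\}$, $V_B=\{p:u_p=0,v_p=1\}$, $V_C=\{p:u_p=1,v_p=1\}$, and match it to the given tripartition $(A,B,C)$. This forces the choice $u=\mathbf{1}_{A\cup C}$ (the indicator vector of $A\cup C$) and $v=\mathbf{1}_{B\cup C}$. With this choice the leftover set $V_D=\{p:u_p=0,v_p=0\}$ of isolated vertices is empty, so Lemma~\ref{lm:triadj} yields exactly $uv^\top\oplus vu^\top=A(K_{A,B,C})=A(T)$, with no spurious isolated vertices. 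For completeness I would also spell out the $3\times3$ block check — intra-part entries vanish, inter-part entries equal $1$ — but this is precisely the case analysis already carried out for Lemma~\ref{lm:triadj}.

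Finally I would dispatch the degenerate cases: if one of the three parts is empty, $T$ is a complete bipartite graph and the same formula applies (with one support a single part); if two or three parts are empty, $A(T)=0$ and the bound is trivial. The only point requiring any care is making sure the indicator-vector bookkeeping is consistent with the orientation of the partition in Lemma~\ref{lm:triadj}, i.e.\ that $C$ — the part lying in the support of both $u$ and $v$ — is indeed the part adjacent to both of the others; this is the step I would double-check, though it is a routine verification rather than a genuine obstacle.
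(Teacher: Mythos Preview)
Your proposal is correct and essentially the same as the paper's proof: the paper also exhibits $A(T)=uw^\top\oplus wu^\top$ with $u=\mathbf{1}_A\oplus\mathbf{1}_C$ and $w=\mathbf{1}_B\oplus\mathbf{1}_C$, which over $\mathbb{F}_2$ (since $A,B,C$ are disjoint) are exactly your $\mathbf{1}_{A\cup C}$ and $\mathbf{1}_{B\cup C}$. The only cosmetic difference is that you invoke Lemma~\ref{lm:triadj} as a black box for the identity, whereas the paper writes out the six-term expansion of $A(T)$ and collapses it directly; the underlying computation is identical.
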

\begin{proof}
Let $T$ be a complete tripartite graph with vertex sets $A, B, C $, whose adjacency matrix $A(T)$ over $\mathbb{F}_2$ is defined in the way that
 $(A(T))_{ij} = 1$ if $i$ and $j$ belong to different parts among $A, B, C $, and $(A(T))_{ij} = 0$ otherwise.

Let $n$ be the number of vertices. Define the indicator vectors $\mathbf{1}_A, \mathbf{1}_B, \mathbf{1}_C \in \mathbb{F}_2^n$ as the characteristic vectors of $A, B, C$, respectively.
Then, by definition,
\begin{equation}
A(T) = \mathbf{1}_A \mathbf{1}_B^\top \oplus \mathbf{1}_B \mathbf{1}_A^\top \oplus \mathbf{1}_A \mathbf{1}_C^\top \oplus \mathbf{1}_C \mathbf{1}_A^\top \oplus \mathbf{1}_B \mathbf{1}_C^\top \oplus \mathbf{1}_C \mathbf{1}_B^\top.
\end{equation}
Define $u = \mathbf{1}_A \oplus \mathbf{1}_C,  w = \mathbf{1}_B \oplus \mathbf{1}_C$, the direct calculation leads to
$A(T) = u w^\top \oplus w u^\top$. That is, $A(T)$ is the sum of two rank-1 matrices modulo $2$. Therefore,
$\operatorname{rank}_{\mathbb{F}_2}(A(T)) \le 2.$
\end{proof}

\begin{proof}
Let $r = \operatorname{rank}_{\mathbb{F}_2}(A) = 2k$ and $t = \mathrm{tp}_{\oplus}(G)$.
We first prove that $t\ge k$.
By definition, $A = \bigoplus_{i=1}^t A(T_i)$. Applying the rank subadditivity property and Lemma~\ref{lm:trirank}, we have
\begin{equation}
2k = \operatorname{rank}_{\mathbb{F}_2}(A) \le \sum_{i=1}^t \operatorname{rank}_{\mathbb{F}_2}(A(T_i)) \le 2t.
\end{equation}
That is, $k \le t$.

We turn to prove that $t\le k$. Since $A$ is an alternating matrix with $\operatorname{rank}_{\mathbb{F}_2}(A)=2k$, then Lemma~\ref{lm:standsymp} guarantees that $A$ is congruent to $B$ via a nonsingular matrix $L$, i.e., $A = L^{-\top} B L^{-1}$, where 
\begin{equation}
    B = \sum_{i=1}^{k} \left( e_{2i-1} e_{2i}^\top \oplus e_{2i} e_{2i-1}^\top \right),
\end{equation}
with $e_j$ the standard basis vectors.
Defining the vectors $u_i = L^{-\top} e_{2i-1}$ and $v_i = L^{-\top} e_{2i}$, we have
\begin{equation}
    A = \sum_{i=1}^{k} M_i, \quad \text{where } M_i = u_i v_i^\top \oplus v_i u_i^\top.
\end{equation}
By Lemma~\ref{lm:triadj}, each matrix $M_i$ is the adjacency matrix of a complete tripartite graph $T_i$. Therefore, $t\le k$ by the definition of $t$.
\end{proof}
Similarly, denote by $\mathrm{tp}(G)$ the minimum number $t$ of complete tripartite graphs $T_1, \ldots, T_t$ satisfying 
\begin{equation}
A = A(T_1) + A(T_2) + \cdots + A(T_t).
\end{equation}
And denote by $\mathrm{bp}(G)$ the minimum number $t$ of complete bipartite graphs $B_1, \ldots, B_t$ satisfying 
\begin{equation}
A = A(B_1) + A(B_2) + \cdots + A(B_t).
\end{equation}
Then, by definition, 
\begin{equation}
\mathrm{tp}_{\oplus}(G) \le \mathrm{tp}(G) \le \mathrm{bp}(G),
\end{equation}
where the last inequality is due to the fact that any complete bipartite graph can be viewed as a complete tripartite graph with one part being empty.
The number $\mathrm{bp}(G)$ is also known as the biclique partition number~\cite{graham1971addressing} of the graph $G$, which has applications in telephone switching circuitry~\cite{graham1971addressing} and its relation with other graph parameters has been established~\cite{lyu2023finding}. 

Here we have focused on arbitrary frustration graphs representing the anticommutation and commutation relations of Pauli strings. Recently, the concept of frustration graph with weighted edges was proposed for the strings of Weyl-Heisenberg matrices~\cite{Makuta2025FrustrationGF}. It is interesting to see that results similar to Theorem~\ref{thm:minlen} hold also for special kinds of frustration graphs with weighted edges.

\subsection{Convergence of generalized beta numbers}
\label{ssec:gbn}
\begin{proposition}
Given a graph $G$, as $k$ goes to infinity, $\beta(G,w,k)$ converges to the weighted independence number $\alpha(G,w)$.
More precisely, for any nonnegative weight vector $w$,
\begin{equation}
	\lim_{k\to \infty} \beta(G,w,k) = \alpha(G,w).
\end{equation}    
\end{proposition}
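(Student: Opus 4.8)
The plan is to sandwich $\beta(G,w,k)$ between two copies of $\alpha(G,w)$ in the limit. First I would fix any realization $\{S_i\}$ of $G$ and record the easy bound $\beta(G,w,k)\ge\alpha(G,w)$, valid for \emph{every} $k$. Let $V'$ be a maximum-weight independent set; then the operators $\{S_i\}_{i\in V'}$ commute pairwise, hence admit a common eigenvector $|\psi\rangle$, and since each $S_i$ has spectrum $\{\pm1\}$ we get $|\langle S_i\rangle_\psi|=1$ for all $i\in V'$, so $\beta(G,w,k)\ge\sum_{i\in V'}w_i|\langle S_i\rangle_\psi|^k=\sum_{i\in V'}w_i=\alpha(G,w)$. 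Observing in addition that $t\mapsto t^k$ is nonincreasing on $[0,1]$ while $|\langle S_i\rangle_\rho|\le1$ shows $\beta(G,w,k)$ is nonincreasing in $k$, so the limit exists and is $\ge\alpha(G,w)$; it then remains only to prove $\limsup_{k\to\infty}\beta(G,w,k)\le\alpha(G,w)$.

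The crux I would isolate as a ``saturation is independence'' lemma: for any state $\rho$ the index set $T(\rho):=\{i:|\langle S_i\rangle_\rho|=1\}$ is an independent set of $G$. The argument is short: $|\langle S_i\rangle_\rho|=1$ forces $\rho$ to be supported in a single eigenspace of $S_i$, i.e.\ $P_i^{\pm}\rho P_i^{\pm}=\rho$ for one sign, where $P_i^{\pm}=(\id\pm S_i)/2$; and if $i\sim j$ then $S_iS_j=-S_jS_i$, so using $S_iS_jS_i=-S_j$ one gets $P_i^{\pm}S_jP_i^{\pm}=\tfrac14(S_j+S_iS_jS_i)=\tfrac14(S_j-S_j)=0$, whence $\langle S_j\rangle_\rho=\tr(P_i^{\pm}S_jP_i^{\pm}\rho)=0\ne\pm1$. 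Hence adjacent vertices cannot both lie in $T(\rho)$, so $\sum_{i\in T(\rho)}w_i\le\alpha(G,w)$.

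To finish I would write $\beta(G,w,k)=\max_{u\in\mathcal J^+}\sum_iw_iu_i^k$, where $\mathcal J^+:=\{(|\langle S_1\rangle_\rho|,\dots,|\langle S_n\rangle_\rho|):\rho\in\mathcal S(\mathcal H)\}\subseteq[0,1]^n$ is compact (a continuous image of the compact set of density matrices). Pick maximizers $u^{(k)}$ and, along a subsequence $k_m\to\infty$, a limit $u^*=\lim_m u^{(k_m)}\in\mathcal J^+$, realized by some state $\rho^*$. Splitting the objective: for coordinates with $u^*_i<1$ one has $u^{(k_m)}_i\le c<1$ eventually, so $w_i(u^{(k_m)}_i)^{k_m}\to0$; the remaining coordinates form $A:=\{i:u^*_i=1\}=T(\rho^*)$, and there $w_i(u^{(k_m)}_i)^{k_m}\le w_i$. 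Hence $\limsup_m\beta(G,w,k_m)\le\sum_{i\in A}w_i$, which by the lemma is $\le\alpha(G,w)$. Together with the lower bound and the monotonicity in $k$, this gives $\lim_{k\to\infty}\beta(G,w,k)=\alpha(G,w)$.

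The step I expect to need the most care is precisely this last interchange of the limit in $k$ with the maximization over states: both the exponent and the optimal state drift with $k$, so one cannot pass the limit through naively. The device that resolves it is to pass to the compact set $\mathcal J^+$, extract a convergent subsequence of maximizers, and split the sum according to whether each coordinate of the limiting point equals $1$ (contribution saturates, but is controlled by the independence lemma) or is strictly below $1$ (contribution decays geometrically). The lower bound, the monotonicity remark, and the anticommutator computation are all routine.
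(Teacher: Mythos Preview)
Your proof is correct and follows the same overall structure as the paper's: establish $\beta(G,w,k)\ge\alpha(G,w)$ via a common eigenstate on a maximum independent set, note monotonicity in $k$, then obtain the matching upper bound by combining compactness with the ``saturation is independence'' observation. The only real difference is in how compactness is invoked. The paper argues by contradiction: it forms the nested compact sets $\Lambda_k=\{\rho:\sum_i w_i|\langle S_i\rangle_\rho|^k\ge\alpha(G,w)+\epsilon\}$ and uses Cantor's intersection theorem to produce a \emph{single} state $\rho$ that beats $\alpha(G,w)+\epsilon$ for every $k$, then lets $k\to\infty$ for that fixed $\rho$. You instead take a maximizer for each $k$, extract a convergent subsequence in the compact set $\mathcal J^+$, and split the sum at the limit point. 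Both devices resolve the same limit/sup interchange; yours is perhaps the more elementary (Bolzano--Weierstrass rather than Cantor), while the paper's avoids tracking two moving parts simultaneously. For the independence lemma you compute $P_i^\pm S_j P_i^\pm=0$ directly, whereas the paper appeals to the known inequality $\langle S_i\rangle^2+\langle S_j\rangle^2\le1$ for anticommuting pairs; these are equivalent.
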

\begin{proof}
To begin, notice that $\beta(G,w,k)$ decreases as $k$ increases since $|\langle S_i\rangle|\le 1$, and $\beta(G,w,k) \ge 0$ by definition.
Thus, $\beta(G,w,k)$ converges, and denote by $\tilde{\alpha}(G,w)$ the limit. 

Besides, $\{S_i\}_{i\in I}$ is a set of commuting Pauli strings for an independent set $I$ of $G$, which implies that there is a state $\rho$ such that $|\langle S_i\rangle_\rho| = 1$ for $i\in I$ and the other expectation values . Consequently, $\beta(G,w,k)\ge \max_{I} \sum_{i\in I} w_i = \alpha(G,w)$ for any $k$, by the definition of the weighted independence number. Due to the arbitrariness of $k$, we have $\tilde{\alpha}(G,w) \ge \alpha(G,w)$.

If  $\tilde{\alpha}(G,w)$ is strictly larger than $\alpha(G,w)$, there exists $\epsilon > 0$ such that $\tilde{\alpha}(G,w) \ge \alpha(G,w) + \epsilon$.
For each $k$, denote by $\Lambda_k$ the set of states $\rho$ such that $\sum_i w_i |\langle S_i\rangle^k_\rho| \ge \alpha(G,w)+\epsilon$. By assumption, $\Lambda_k$ is non-empty, closed, and compact. As $|\langle S_i\rangle^k_\rho|$ decreases with $k$, $\Lambda_{k+1} \subseteq \Lambda_k$.
From Cantor's intersection theorem, we know that $\Lambda \coloneqq \cap_{k=1}^\infty \Lambda_k$ is not empty. 

Denote by $\rho$ a state in $\Lambda$, then $\sum_i w_i |\langle S_i\rangle^k_\rho| \ge \alpha(G,w)+\epsilon$ for each $k$ by assumption. Then we have 
\begin{equation}
\begin{aligned}
	\alpha(G,w)+\epsilon & \le
	\lim_{k\to \infty} \sum\nolimits_i w_i |\langle S_i\rangle^k_\rho|\\
			     &= \sum\nolimits_i w_i \big(\lim_{k\to\infty} |\langle S_i\rangle^k_\rho|\big) \\
			     &=\sum\nolimits_{i\in \bar{I}} w_i,
\end{aligned}
\end{equation}
where $\bar{I}$ is the set of all $i$ such that $|\langle S_i\rangle_\rho|=1$. For any $i\not\in \bar{I}$, we have $\lim_{k\to\infty} |\langle S_i\rangle^k_\rho| = 0$ due to $|\langle S_i\rangle_\rho| < 1$.
We claim that $\bar{I}$ must be an independence set, in which case $\sum_{i\in \bar{I}} w_i \le \alpha(G,w)$ results in a contradiction. If $\bar{I}$ is not an independence set, it contains two adjacent vertices $i_1$ and $i_2$. Then $|\langle S_{i_1}\rangle_\rho|$ and  $|\langle S_{i_2}\rangle_\rho|$ cannot be $1$ simultaneously, since the sum of squares of them is no more than $1$, which contradicts the definition of $\bar{I}$. This finishes the proof.
\end{proof}
\section{Numerical methods}\label{sec:numtools}
\subsection{A new hierarchy of SDP relaxations}\label{ssec:spo}
The calculation of the weighted beta number $\beta(G,w)$ can be formulated as a nonlinear optimization problem over the expectations $\langle\cdot\rangle_\rho$:
\begin{equation}\label{eq:spop}
\beta(G,w) \coloneqq \begin{cases}\sup_\rho &\sum_{i=1}^n w_i\langle S_i \rangle_\rho^2\\
\textrm{s.t.}&S_i^2=1,\quad\text{ for }i=1,\ldots,n,\\
&S_iS_j=-S_jS_i,\quad\text{ for }i\sim_{G} j,\\
&S_iS_j=S_jS_i,\quad\text{ for }i\nsim_{G} j,
\end{cases}
\end{equation}
where $i\sim_{G} j$ (reps. $i\nsim_{G} j$) means that the vertices $i$ and $j$ are connected (reps. not connected) in graph $G$.
The problem \eqref{eq:spop} is an instance of state polynomial optimization problems \cite{klep2024state}.
A complete hierarchy of moment-sum-of-Hermitian-squares relaxations was established in \cite{klep2024state} for solving state polynomial optimization problems. We begin by outlining its main idea. Consider letters $\{x_1,\ldots,x_n\}$ and words $u=x_{i_1}\cdots x_{i_t}$ built from these letters. The involution of a word is defined as $(x_{i_1}\cdots x_{i_t})^*=x_{i_t}\cdots x_{i_1}$.
We further consider pseudo-expectations of words $\langle u\rangle=\langle u^*\rangle\in\mathbb{R}$ which behave as scalars. A state monomial in words and pseudo-expectations is of the form $u=u_0\langle u_1\rangle\cdots\langle u_s\rangle$ where $u_0,\ldots,u_s$ are words in $\{x_1,\ldots,x_n\}$. 
When specifying to \eqref{eq:spop}, the constraint $S_i^2=1$ gives rise to $x_i^2=1$ for $i=1,\ldots,n$, and the constraint $S_iS_j=-S_jS_i$ (resp. $S_iS_j=S_jS_i$) for $i\sim_{G} j$ (resp. $i\nsim_{G} j$) gives rise to $x_ix_j=-x_jx_i$ (resp. $x_ix_j=x_jx_i$) for $i\sim_{G} j$ (resp. $i\nsim_{G} j$). Such relations allow us to reduce any word to the normal form $cx_{i_1}\cdots x_{i_t}$ with $i_1<\cdots<i_t$ and $c\in\{-1,1\}$. 

For a state monomial $u=u_0\langle u_1\rangle\cdots\langle u_s\rangle$, its degree is the sum of the lengths of the words $u_i$. Now, given a positive integer $r$, the $r$-th moment matrix $M_r$ is indexed by state monomials of degree at most $r$ and its entry $[M_r]_{u,v}\coloneqq\langle u^*v\rangle$ where $u^*v$ is assumed to be reduced to the normal form. Then, the $r$-th order moment relaxation for \eqref{eq:spop} is given by the following SDP:
\begin{equation}\label{mom}
\lambda_r(G,w)\coloneqq\begin{cases}
\sup &\sum_{i=1}^n w_i\langle x_i\rangle^2\\
\rm{s.t.}&M_r\succeq0,\\
&[M_r]_{1,1}=1,\\
&[M_r]_{u,v}=[M_r]_{a,b}, \text{ if }\langle u^*v\rangle=\langle a^*b\rangle,
\end{cases}
\end{equation}
where $1$ denotes the empty word. It was shown in \cite{klep2024state} that the sequence of upper bounds $(\lambda_r(G,w))_{r\ge1}$ converges monotonically to the optimum of the state polynomial optimization problem \eqref{eq:spop}, i.e., $$\lim_{r\to\infty}\lambda_r(G,w)=\beta(G,w).$$
However, the size of the moment relaxation grows rapidly with the size of the original problem as well as the relaxation order $r$, and so it soon becomes intractable. To circumvent this difficulty, a simplified Lov\'asz-type hierarchy was proposed in \cite{moran2024Uncertainty}, where one considers the reduced moment matrix which is indexed by state monomials of the forms
\begin{gather}
1,\,x_{i}\langle x_{i}\rangle,\,x_{i}x_{j}\langle x_{i}\rangle\langle x_{j}\rangle,\nonumber\\
\cdots\label{lovasz}\\
x_{1}\cdots x_{n}\langle x_{1}\rangle\cdots\langle x_{n}\rangle\nonumber.
\end{gather}
As observed in \cite{moran2024Uncertainty}, the simplified Lov\'asz-type hierarchy may not converge to $\beta(G,w)$ for graphs with $7$ vertices.

Here, to improve the bounds while maintaining scalability, we propose to extend the monomial basis in \eqref{lovasz} by including two-body pseudo-expectations $\langle x_{i}x_{j}\rangle$.
Specifically, we form the monomial basis by selecting state monomials of the forms
\begin{gather}
1,\,x_{i}\langle x_{i}\rangle,x_{i}x_{j}\langle x_{i}\rangle\langle x_{j}\rangle,x_{i}\langle x_{j}\rangle\langle x_{i}x_{j}\rangle,\nonumber\\
x_{i}x_{j}x_{k}\langle x_{i}\rangle\langle x_{j}\rangle\langle x_{k}\rangle,x_{i}x_{j}x_{k}\langle x_{i}x_{j}\rangle\langle x_{k}\rangle,x_{i}x_{k}\langle x_{i}x_{j}\rangle\langle x_{j}\rangle\langle x_{k}\rangle,x_{k}\langle x_{i}x_{j}\rangle\langle x_{i}\rangle\langle x_{j}\rangle\langle x_{k}\rangle,\label{new}\\
\cdots\nonumber
\end{gather}
Note that for $i\sim_{G} j$, it holds that $\langle x_{i}x_{j}\rangle=-\langle x_{j}x_{i}\rangle$ which implies $\langle x_{i}x_{j}\rangle=0$. Thus, we can exclude $\{\langle x_{i}x_{j}\rangle\}_{i\sim_{G} j}$ when constructing the monomial basis \eqref{new}. Moreover, if $\langle u^*v\rangle=-\langle v^*u\rangle$, then we set $\langle u^*v\rangle=\langle v^*u\rangle=0$ in the moment matrix.

Among graphs with $8$ and $9$ vertices, there are $100$ and $2963$ graphs whose $\hbar$-(im)perfectness requires numerical verification, respectively. 
It turns out that by virtue of the new hierarchy we can verify $\hbar$-perfectness for all those graphs with $8$ vertices and verify $\hbar$-perfectness for all but $78$ those graphs with $9$ vertices (up to a $<1\times10^{-5}$ gap between lower and upper bounds). In Table~\ref{tab:g9}, we show the top $12$ graphs that yield the largest gaps. The calculations were performed on a desktop with 128G RAM and {\tt Mosek 10.2} was employed as the underlying SDP solver. Our code for reproducing the results is available at \url{https://github.com/wangjie212/BetaNumber}.

\begin{table}[tbp]
\begin{center}
\def \scale {1.2}
\def \zoom {0.4}
\def \nth {9}
\def \an {360/\nth}
\def \bline {0.1cm}
\def \acolor {black}
\begin{tabular}{c@{\hskip 15pt}c@{\hskip 15pt}c@{\hskip 15pt}c@{\hskip15pt}c@{\hskip15pt}c}
 &\\
  &
 \begin{tikzpicture}[scale=\scale, baseline=\bline]
    \foreach \i in {1,2,...,\nth} {
\coordinate (A\i) at ({-cos(\i*\an)},{sin(\i*\an)}) {};
\node[party, fill=white, scale=\zoom] (A\i) at (A\i) {};
}
\draw (A1)--(A3) (A1)--(A5) (A1)--(A6) (A1)--(A7) (A1)--(A8) (A1)--(A9) (A2)--(A4) (A2)--(A5) (A2)--(A6) (A2)--(A7) (A2)--(A8) (A2)--(A9) (A3)--(A5) (A3)--(A7) (A3)--(A9) (A4)--(A6) (A4)--(A7) (A4)--(A8) (A4)--(A9) (A5)--(A7) (A5)--(A8) (A6)--(A8) (A6)--(A9) (A7)--(A9);
 \end{tikzpicture}
 &
\begin{tikzpicture}[scale=\scale, baseline=\bline]
\foreach \i in {1,2,...,8} {
\coordinate (A\i) at ({-cos(\i*\an)},{sin(\i*\an)}) {};
\node[party, fill=white, scale=\zoom] (A\i) at (A\i) {};
}
\coordinate (A9) at ({-cos(9*\an)},{sin(9*\an)}) {};
\node[draw, party, fill=black, minimum width=1mm, scale=\zoom] (A9) at (A9) {};
\draw (A1)--(A3) (A1)--(A5) (A1)--(A6) (A1)--(A7) (A1)--(A9) (A2)--(A4) (A2)--(A5) (A2)--(A7) (A2)--(A8) (A3)--(A5) (A3)--(A6) (A3)--(A8) (A3)--(A9) (A4)--(A6) (A4)--(A7) (A4)--(A8) (A4)--(A9) (A5)--(A7) (A5)--(A9) (A6)--(A8) (A6)--(A9) (A7)--(A9) (A8)--(A9);
 \end{tikzpicture}
 &
 \begin{tikzpicture}[scale=\scale, baseline=\bline]
    \foreach \i in {1,2,...,8} {
\coordinate (A\i) at ({-cos(\i*\an)},{sin(\i*\an)}) {};
\node[party, fill=white, scale=\zoom] (A\i) at (A\i) {};
}
\coordinate (A9) at ({-cos(9*\an)},{sin(9*\an)}) {};
\node[draw, circle, fill=black, minimum width=1mm, scale=\zoom] (A9) at (A9) {};
\draw (A1)--(A3) (A1)--(A5) (A1)--(A6) (A1)--(A7) (A1)--(A9) (A2)--(A4) (A2)--(A5) (A2)--(A7) (A2)--(A8) (A2)--(A9) (A3)--(A5) (A3)--(A6) (A3)--(A8) (A3)--(A9) (A4)--(A6) (A4)--(A7) (A4)--(A8) (A4)--(A9) (A5)--(A7) (A5)--(A8) (A5)--(A9) (A6)--(A8) (A6)--(A9) (A7)--(A9) (A8)--(A9);
 \end{tikzpicture}
&
 \begin{tikzpicture}[scale=\scale, baseline=\bline]
    \foreach \i in {1,2,...,8} {
\coordinate (A\i) at ({-cos(\i*\an)},{sin(\i*\an)}) {};
\node[party, fill=white, scale=\zoom] (A\i) at (A\i) {};
}
\coordinate (A9) at ({-cos(9*\an)},{sin(9*\an)}) {};
\node[draw, circle, fill=black, minimum width=1mm, scale=\zoom] (A9) at (A9) {};
\draw (A1)--(A3) (A1)--(A5) (A1)--(A6) (A1)--(A7) (A1)--(A9) (A2)--(A4) (A2)--(A5) (A2)--(A7) (A2)--(A8) (A2)--(A9) (A3)--(A5) (A3)--(A6) (A3)--(A8) (A3)--(A9) (A4)--(A6) (A4)--(A7) (A4)--(A8) (A4)--(A9) (A5)--(A7) (A5)--(A8) (A5)--(A9) (A6)--(A8) (A6)--(A9) (A7)--(A9) (A8)--(A9);
 \end{tikzpicture}
\\
 & \\
 $\omega$&$(1, 1, 1, 1, 1, 1, 1, 1, 1)$&$(1, 1, 1, 1, 1, 1, 1, 1, 2)$&$(1, 1, 1, 1, 1, 1, 1, 1, 2)$&$(1, 1, 1, 1, 1, 1, 1, 1, 2)$\\
 $\gamma_{\mathrm{ub}}$ & 2.00048 & 2.00227 & 2.00175&2.00118\\
 see-saw& 2.00000 & 2.00000 & 2.00000&2.00000\\
 $\alpha$ & 2 & 2 & 2&2\\
 &\\
  &
 \begin{tikzpicture}[scale=\scale, baseline=\bline]
    \foreach \i in {1,2,...,8} {
\coordinate (A\i) at ({-cos(\i*\an)},{sin(\i*\an)}) {};
\node[party, fill=white, scale=\zoom] (A\i) at (A\i) {};
}
\coordinate (A9) at ({-cos(9*\an)},{sin(9*\an)}) {};
\node[draw, circle, fill=black, minimum width=1mm, scale=\zoom] (A9) at (A9) {};
\draw (A1)--(A3) (A1)--(A5) (A1)--(A6) (A1)--(A7) (A1)--(A8) (A1)--(A9) (A2)--(A4) (A2)--(A5) (A2)--(A7) (A2)--(A8) (A2)--(A9) (A3)--(A5) (A3)--(A6) (A3)--(A8) (A3)--(A9) (A4)--(A6) (A4)--(A7) (A4)--(A8) (A4)--(A9) (A5)--(A7) (A5)--(A8) (A5)--(A9) (A6)--(A7) (A6)--(A8) (A6)--(A9) (A7)--(A9) (A8)--(A9);
 \end{tikzpicture}
 &
\begin{tikzpicture}[scale=\scale, baseline=\bline]
    \foreach \i in {1,2,...,\nth} {
\coordinate (A\i) at ({-cos(\i*\an)},{sin(\i*\an)}) {};
\node[party, fill=white, scale=\zoom] (A\i) at (A\i) {};
}
\draw (A1)--(A3) (A1)--(A5) (A1)--(A6) (A1)--(A7) (A1)--(A8) (A1)--(A9) (A2)--(A4) (A2)--(A5) (A2)--(A6) (A2)--(A7) (A2)--(A8) (A2)--(A9) (A3)--(A5) (A3)--(A7) (A3)--(A8) (A3)--(A9) (A4)--(A6) (A4)--(A7) (A4)--(A8) (A4)--(A9) (A5)--(A6) (A5)--(A7) (A6)--(A8) (A7)--(A9) (A8)--(A9);
 \end{tikzpicture}
 &
\begin{tikzpicture}[scale=\scale, baseline=\bline]
    \foreach \i in {1,2,...,\nth} {
\coordinate (A\i) at ({-cos(\i*\an)},{sin(\i*\an)}) {};
\node[party, fill=white, scale=\zoom] (A\i) at (A\i) {};
}
\draw (A1)--(A3) (A1)--(A5) (A1)--(A6) (A1)--(A7) (A1)--(A8) (A1)--(A9) (A2)--(A4) (A2)--(A5) (A2)--(A7) (A2)--(A8) (A2)--(A9) (A3)--(A5) (A3)--(A6) (A3)--(A7) (A3)--(A9) (A4)--(A6) (A4)--(A7) (A4)--(A8) (A4)--(A9) (A5)--(A6) (A5)--(A8) (A5)--(A9) (A6)--(A8) (A7)--(A9) (A8)--(A9);
 \end{tikzpicture}
&
\begin{tikzpicture}[scale=\scale, baseline=\bline]
    \foreach \i in {1,2,...,8} {
\coordinate (A\i) at ({-cos(\i*\an)},{sin(\i*\an)}) {};
\node[party, fill=white, scale=\zoom] (A\i) at (A\i) {};
}
\coordinate (A9) at ({-cos(9*\an)},{sin(9*\an)}) {};
\node[draw, circle, fill=black, minimum width=1mm, scale=\zoom] (A9) at (A9) {};
\draw (A1)--(A3) (A1)--(A4) (A1)--(A6) (A1)--(A7) (A1)--(A9) (A2)--(A4) (A2)--(A5) (A2)--(A6) (A2)--(A7) (A2)--(A8) (A2)--(A9) (A3)--(A5) (A3)--(A6) (A3)--(A8) (A3)--(A9) (A4)--(A6) (A4)--(A7) (A4)--(A8) (A4)--(A9) (A5)--(A7) (A5)--(A8) (A5)--(A9) (A6)--(A8) (A6)--(A9) (A7)--(A9) (A8)--(A9);
 \end{tikzpicture}
 \\
 & \\
 $\omega$&$(1, 1, 1, 1, 1, 1, 1, 1, 2)$&$(1, 1, 1, 1, 1, 1, 1, 1, 1)$&$(1, 1, 1, 1, 1, 1, 1, 1, 1)$&$(1, 1, 1, 1, 1, 1, 1, 1, 2)$\\
 $\gamma_{\mathrm{ub}}$ & 2.00056 & 2.00125 & 2.00058&2.00227\\
 see-saw& 2.00000 & 2.00000 & 2.00000&2.00000\\
 $\alpha$ & 2 & 2 & 2&2\\
 &\\
  &
\begin{tikzpicture}[scale=\scale, baseline=\bline]
    \foreach \i in {1,2,...,\nth} {
\coordinate (A\i) at ({-cos(\i*\an)},{sin(\i*\an)}) {};
\node[party, fill=white, scale=\zoom] (A\i) at (A\i) {};
}
\draw (A1)--(A3) (A1)--(A4) (A1)--(A6) (A1)--(A7) (A1)--(A8) (A1)--(A9) (A2)--(A4) (A2)--(A5) (A2)--(A6) (A2)--(A7) (A2)--(A8) (A2)--(A9) (A3)--(A5) (A3)--(A6) (A3)--(A8) (A3)--(A9) (A4)--(A6) (A4)--(A7) (A4)--(A8) (A4)--(A9) (A5)--(A7) (A5)--(A8) (A5)--(A9) (A6)--(A8) (A7)--(A9) (A8)--(A9);
 \end{tikzpicture}
 &
\begin{tikzpicture}[scale=\scale, baseline=\bline]
    \foreach \i in {1,2,...,\nth} {
\coordinate (A\i) at ({-cos(\i*\an)},{sin(\i*\an)}) {};
\node[party, fill=white, scale=\zoom] (A\i) at (A\i) {};
}
\draw (A1)--(A3) (A1)--(A4) (A1)--(A6) (A1)--(A7) (A1)--(A8) (A1)--(A9) (A2)--(A4) (A2)--(A5) (A2)--(A6) (A2)--(A7) (A2)--(A8) (A2)--(A9) (A3)--(A5) (A3)--(A6) (A3)--(A7) (A3)--(A8) (A3)--(A9) (A4)--(A6) (A4)--(A7) (A4)--(A8) (A4)--(A9) (A5)--(A7) (A5)--(A8) (A5)--(A9) (A6)--(A8) (A6)--(A9) (A7)--(A9);
 \end{tikzpicture}
 &
\begin{tikzpicture}[scale=\scale, baseline=\bline]
    \foreach \i in {1,2,...,8} {
\coordinate (A\i) at ({-cos(\i*\an)},{sin(\i*\an)}) {};
\node[party, fill=white, scale=\zoom] (A\i) at (A\i) {};
}
\coordinate (A9) at ({-cos(9*\an)},{sin(9*\an)}) {};
\node[draw, circle, fill=black, minimum width=1mm, scale=\zoom] (A9) at (A9) {};
\draw (A1)--(A3) (A1)--(A4) (A1)--(A6) (A1)--(A7) (A1)--(A8) (A1)--(A9) (A2)--(A4) (A2)--(A5) (A2)--(A6) (A2)--(A7) (A2)--(A8) (A2)--(A9) (A3)--(A5) (A3)--(A6) (A3)--(A7) (A3)--(A8) (A3)--(A9) (A4)--(A6) (A4)--(A7) (A4)--(A8) (A4)--(A9) (A5)--(A7) (A5)--(A8) (A5)--(A9) (A6)--(A8) (A6)--(A9) (A7)--(A9) (A8)--(A9);
 \end{tikzpicture}
 &
 \begin{tikzpicture}[scale=\scale, baseline=\bline]
    \foreach \i in {1,2,...,\nth} {
\coordinate (A\i) at ({-cos(\i*\an)},{sin(\i*\an)}) {};
\node[party, fill=white, scale=\zoom] (A\i) at (A\i) {};
}
\draw (A1)--(A3) (A1)--(A4) (A1)--(A5) (A1)--(A7) (A1)--(A8) (A1)--(A9) (A2)--(A4) (A2)--(A5) (A2)--(A6) (A2)--(A7) (A2)--(A8) (A2)--(A9) (A3)--(A5) (A3)--(A6) (A3)--(A7) (A3)--(A8) (A3)--(A9) (A4)--(A6) (A4)--(A7) (A4)--(A8) (A4)--(A9) (A5)--(A7) (A5)--(A8) (A6)--(A8) (A6)--(A9) (A7)--(A9);
 \end{tikzpicture}
 \\
 & \\
 $\omega$&$(1, 1, 1, 1, 1, 1, 1, 1, 1)$&$(1, 1, 1, 1, 1, 1, 1, 1, 1)$&$(1, 1, 1, 1, 1, 1, 1, 1, 2)$&$(1, 1, 1, 1, 1, 1, 1, 1, 1)$\\
 $\gamma_{\mathrm{ub}}$ & 2.00073 & 2.00062 & 2.00083&2.00094\\
 see-saw& 2.00000 & 2.00000 & 2.00000&2.00000\\
 $\alpha$ & 2 & 2 & 2&2\\
\end{tabular}
\end{center}
\caption{The top $12$ graphs with the largest gaps, where black vertices are of weights $2$. Here, $\gamma_{\mathrm{ub}}$ denotes the upper bound given by SDP relaxations and ``see-saw'' denotes the lower bound given by the see-saw method.}\label{tab:g9}
\end{table}

\subsection{Retrieving an approximately optimal state}\label{ssec:seesaw_state}
After solving the moment relaxation for \eqref{eq:spop}, it is possible to retrieve an approximately optimal state to \eqref{eq:spop} from the SDP solution. Suppose that the length of $S_1,\ldots,S_n$ is $m$ and $\{\langle x_{i}\rangle^2\}_{i=1}^n$ are extracted from the moment matrix. Then, for any binary vector $(a_i)_{i=1}^n\in\{-1,1\}^n$, we consider the following nonlinear SDP:
\begin{equation}\label{extract}
\begin{cases}
\min_{\rho\in\mathbb{S}^{2^m}} &\sum_{i=1}^n \left(\tr(\rho S_i)-a_i\sqrt{\langle x_{i}\rangle^2}\right)^2\\
\rm{s.t.}&\rho\succeq0,\\
&\tr(\rho)=1.\\
\end{cases}
\end{equation}
The objective of \eqref{extract} is a convex quadratic function in $\rho$ and could be solved by SDP solvers, e.g., {\tt COSMO} \cite{Garstka_2021}. In practice, we need to traverse all binary vectors $(a_i)_{i=1}^n$ and solve the corresponding SDPs \eqref{extract} until the optimal value of \eqref{extract} is approximately equal to zero. By doing so, we could retrieve an approximately optimal state to \eqref{eq:spop} as long as the relaxation is tight enough. Such an approximately optimal state can be then used as the starting point for the see-saw method.

We take the graph $G_9$ in Fig.~\ref{fig:HCRfeyz} for the illustration, which can be realized as a frustration graph of Pauli strings in $4$ qubits in order, i.e.,
\begin{equation}\label{eq:repg9}
    X\id\id\id, \id X\id\id, \id\id X\id, Z\id\id\id, \id Z\id\id, ZZZ\id, YZYX, YYXX, YXZZ.
\end{equation}
For the weight vector $w = (1,1,1,1,1,1,1,2,2)$, the weighted independence number $\alpha(G_9,w) = 3$ and the weighted beta number $\beta(G_9,w) = 3.044815$. By running the see-saw method $10$ times with the realization in Eq.~\eqref{eq:repg9} and random initial states, we only obtain the optimal lower bound for $\beta(G,w)$ which is trivially $3.0 = \alpha(G,w)$. Meanwhile, the SDP relaxation in the monomial basis $\{1,\,x_{i}\langle x_{i}\rangle\}_i$ results in an upper bound $3.236068$, for which the state $\rho = |s\rangle\langle s|$ can be extracted by the SDP in Eq.~\eqref{extract} with 
\begin{align}
\langle s| = (
& 0.065955i, -0.065955, -0.174514i, 0.174514, 0.204966i, -0.204966, 0.028492i, -0.028492,\nonumber\\ &  0.130338, -0.130338i, -0.166030, 0.166030i, -0.236494, 0.236494i, 0.567352, -0.567352i)
\end{align}
With such a state $\rho$ as an initial state, the initial value is $2.992458 < \alpha(G_9,w)$. However, the see-saw method with $40$ iterations already leads to the value $3.044815 = \beta(G_9,w)$. Moreover, the SDP relaxation in the monomial basis $\{1,\,x_{i}\langle x_{i}\rangle,x_{i}x_{j}\langle x_{i}\rangle\langle x_{j}\rangle,x_{i}\langle x_{j}\rangle\langle x_{i}x_{j}\rangle\}_{i,j}$ also results in the value $3.044815$.
\begin{figure}
\centering
\begin{tikzpicture}[scale=1, baseline=0]
\foreach \i in {2,3,...,8} {\pgfmathsetmacro{\angle}{450 - (\i - 1) * (360/9)}
    \node[party, fill=white, inner sep=2.5pt] (N\i) at (\angle:2cm) {};}
\foreach \i in {1,,9} {\pgfmathsetmacro{\angle}{450 - (\i - 1) * (360/9)}
    \node[party, fill=black, inner sep=2.5pt] (N\i) at (\angle:2cm) {};}
\draw[line width=0.5] (N1)--(N3) (N1)--(N4) (N1)--(N5) (N1)--(N7) (N1)--(N8) (N1)--(N9) (N2)--(N5) (N2)--(N6) (N2)--(N7) (N2)--(N8) (N2)--(N9) (N3)--(N6) (N3)--(N7) (N3)--(N8) (N4)--(N7) (N4)--(N9) (N5)--(N8) (N5)--(N9) (N6)--(N9) (N8)--(N9);
\end{tikzpicture}
\caption{The graph $G_{9}$ with $9$ vertices, where black vertices are of weights $2$.}
\label{fig:HCRfeyz}
\end{figure}
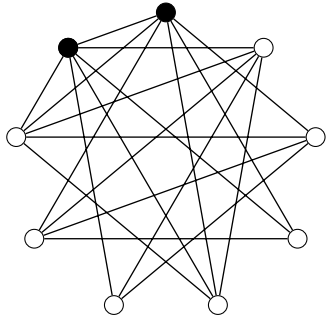

\subsection{Typicality of different types of graphs }\label{ssec:typical}
With the theoretical and numerical tools developed in the previous sections,
we investigate all graphs with up to $9$ vertices for perfectness, $h$-perfectness and $\hbar$-perfectness. 
We also include claw-free graphs~\cite{faudree1997} here, which are often used in solvable models~\cite{chapman2020characterization,chapman2023unified}. In comparison, $\hbar$-perfect graphs correspond to the models where an efficient estimation is possible as discussed in the next section of applications.
The results are presented in Table~\ref{tab:kinds_ghs2} and displayed in Fig.~\ref{fig:percentage}. 

\begin{table}[ht]
    \centering
        \begin{tabular}{r|l|l|l|l|l|l|l}
    \hline\hline
        $n$ & $3$ & $4$ & $5$ & $6$ & $7$ & $8$ & $9$ \\ \hline
        connected & $2$ & $6$ & $21$ & $112$ & $\bf{853}$ & $11117$ & $261080$ \\ \hline
        $\hbar$-perfect & $2$ & $6$ & $21$ & $\bf{112}$ & $852$ & $11099$ & $[259583,259661]$  \\ \hline
        $h$-perfect & $2$ & $6$ & $\bf{21}$ & $109$ & $780$ & $8689$ & $146375$ \\ \hline
        perfect & $2$ & $\bf{6}$ & $20$ & $105$ & $724$ & $7805$ & $126777$  \\ \hline
        claw-free & $\bf{2}$ & $5$ & $14$ & $50$ & $191$ & $881$ & $4494$  \\ \hline\hline
    \end{tabular}
    \caption{The numbers of different classes of graphs, where $n$ is the number of vertices. For the case of $9$ vertices, there are $78$  undetermined graphs by current analytical and numerical tools. See Sec.~\ref{sec:numtools} for more details.}\label{tab:kinds_ghs2}
\end{table}

\begin{figure}[ht]
    \centering
    \includegraphics[width=0.45\textwidth]{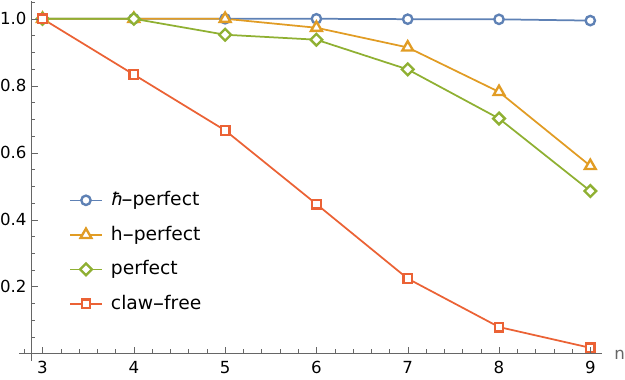}
    \caption{The percentages of claw-free graphs, perfect graphs, $h$-perfect graphs and $\hbar$-perfect graphs in all connected graphs for the fixed number $n$ of vertices.}
    \label{fig:percentage}
\end{figure}

The procedure for determining the number of $\hbar$-perfect graphs on $8$ and $9$ vertices, respectively, is as follows. Based on the fact that the graph $G_{15}$ introduced in Appendix~\ref{ssec:hper2} is $\hbar$-perfect, we can use Property~\ref{thm:subgraph2} to filter out all the $\hbar$-perfect graphs isomorphic to subgraphs of $G_{15}$.

For all the $h$-imperfect graphs with $8$ vertices, we first check whether $G$ has $\bar{C}_7$ as its induced subgraph. If this is the case, then $G$ is $\hbar$-imperfect. In total, there are $17$ graphs in this case. Otherwise, we claim that $G$ is $\hbar$-perfect in the following three cases:
\begin{enumerate}
    \item $G$ contains a pair of vertices $i$ and $j$ such that $j$ is either a copy or a split of $i$;
    \item $G$ is a fully connected join of the other two graphs;
    \item All the normal vectors of the facets of ${\rm STAB}(G)$ contain $0$.
\end{enumerate}
This claim holds for the following reasons.
For convenience, we call graphs in those three cases \textit{reducible}.
In the first case, the resulting graph by removing $j$ from $G$ is $\hbar$-perfect since it cannot be $\bar{C}_7$. Properties~\ref{thm:splitting2} and \ref{thm:copy2} ensure that $G$ is also $\hbar$-perfect.
In the second case, each component is $\hbar$-perfect since it cannot be $\bar{C}_7$. Property~\ref{thm:connected_union2} ensures that $G$ is also $\hbar$-perfect.
In the third case, all facet-induced subgraphs are $\hbar$-perfect since they cannot be $\bar{C}_7$, and hence all the corresponding weighted beta numbers are equal to the weighted independence numbers. Consequently, $G$ is $\hbar$-perfect.

There are only $102$ graphs beyond those $3$ classes, $2$ among which are induced subgraphs of $G_{15}$ and hence $\hbar$-perfect. Then there are $100$ undetermined graphs, for which we apply the numerical tools and find them to be $\hbar$-perfect (up to a $<1\times10^{-5}$ gap between the lower and upper bounds), except one graph as in Fig.~\ref{fig:otherghs}(a) which is named as $G_8$.
Hence, there are $18$ $\hbar$-imperfect graphs in total for the case of $8$ vertices.

\begin{figure}[ht]
\includegraphics[width=0.9\linewidth]{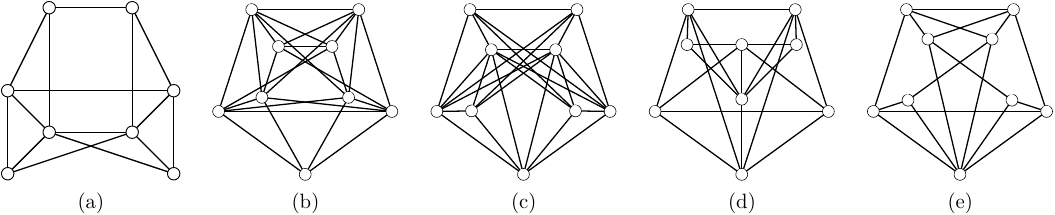}
\caption{The $\hbar$-imperfect graphs with no more than $9$ vertices which do not have other $\hbar$-imperfect graphs as induced subgraphs. Anticycles $\bar{C}_7$ and $\bar{C}_9$ are not listed here.}\label{fig:otherghs}
\end{figure}

Similarly, for all $h$-imperfect graphs with $9$ vertices, we first check whether $G$ has $\bar{C}_7$ or $G_8$ as its induced subgraphs. If this is the case, then $G$ is $\hbar$-imperfect. In total, there are $1414$ graphs in this case. Otherwise, $G$ cannot have other $\hbar$-imperfect graphs as induced subgraphs, since all $\hbar$-imperfect graphs with no more than $9$ vertices have either $\bar{C}_7$ or $G_8$ as an induced subgraph. Hence, if $G$ is reducible, then $G$ is $\hbar$-perfect.

There are only $2963$ graphs which are neither reducible nor induced subgraphs of $G_{15}$, for which we apply the numerical tools and find $2880$ ones to be $\hbar$-perfect (up to a $<1\times10^{-5}$ gap between the lower and upper bounds), $\bar{C}_9$ and other $4$ graphs as illustrated in Fig.~\ref{fig:otherghs}(b-e) to be $\hbar$-imperfect, and $78$ undetermined ones. For those $78$ undetermined ones, the gap between the upper bound of the weighted beta number and the corresponding weighted independence number is less than $3\times10^{-3}$.

\subsection{Further properties of the hierarchy}
\label{ssec:hierprop}

\begin{lemma}\label{lm:repdown}
For a given set of observables $\{S_i\}_{i=1}^n$ with a frustration graph $G$, let $S'_i = S_i\otimes M_i$ with $M_i = \otimes_{j=1}^n M_{ji}$ and $M_{ji}=Z$ if $j=i$, otherwise $M_{ji} = \id$. Then $\downarrow\hspace{-0.3em}{\cal Q}(\{S_i\}) \subseteq {\cal Q}(\{S'_i\}) \subseteq \be(G)$.
\end{lemma}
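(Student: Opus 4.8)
The plan is to establish the two inclusions separately, exploiting the fact that the $M_i$ are commuting diagonal $\pm1$-valued operators that pairwise commute, so adjoining them to $S_i$ does not change the frustration graph $G$ (each $M_i$ commutes with every $M_j$, hence $S_i' = S_i\otimes M_i$ and $S_j' = S_j\otimes M_j$ anti-commute iff $S_i,S_j$ do). In particular $\{S_i'\}$ is a legitimate realization of $G$, so Proposition~\ref{ob:beta} applies to it and gives $\be(G) = \conv(\downarrow\hspace{-0.3em}{\cal Q}(\{S_i'\})) \cap \mathbb{R}_+^n$; this immediately yields the rightmost inclusion ${\cal Q}(\{S_i'\}) \subseteq \be(G)$ once we note every point of ${\cal Q}(\{S_i'\})$ already lies in $\mathbb{R}_+^n$. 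So the real content is the leftmost inclusion $\downarrow\hspace{-0.3em}{\cal Q}(\{S_i\}) \subseteq {\cal Q}(\{S_i'\})$.

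For that, I would first show ${\cal Q}(\{S_i\}) \subseteq {\cal Q}(\{S_i'\})$ and then show ${\cal Q}(\{S_i'\})$ is already a down-set (closed under decreasing coordinates), which together give the claim. The first part: given a state $\rho$ on the original space, take $\rho' = \rho \otimes \ket{\phi}\bra{\phi}$ where $\ket{\phi}$ is any state with $\langle M_i\rangle_{\ket{\phi}} = 1$ for all $i$ simultaneously --- e.g. the all-$\ket{0}$ state on the $n$ ancilla qubits, since each $M_i$ is a single-qubit $Z$ tensored with identities and $\langle 0|Z|0\rangle = 1$. Then $\langle S_i'\rangle_{\rho'} = \langle S_i\rangle_\rho \cdot \langle M_i\rangle_{\ket\phi} = \langle S_i\rangle_\rho$, so the squared-expectation tuple of $\rho'$ equals that of $\rho$, proving ${\cal Q}(\{S_i\}) \subseteq {\cal Q}(\{S_i'\})$.

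The key step --- and the one requiring actual work --- is showing ${\cal Q}(\{S_i'\})$ is a down-set: given $\rho'$ achieving $(\langle S_1'\rangle_{\rho'}^2,\dots,\langle S_n'\rangle_{\rho'}^2) = (v_1,\dots,v_n)$ and any target $0 \le u_i \le v_i$, I must produce a state realizing $(u_1,\dots,u_n)$. Here is where the ancilla structure is essential: for each coordinate $i$ I want to independently dampen $|\langle S_i'\rangle|$ by an arbitrary factor without touching the others. Writing $\rho' = \sigma \otimes \tau$ with $\tau$ on the ancillas, I would replace $\tau$ by the classically-mixed state $\tau_p = \sum_{\epsilon \in \{\pm1\}^n} \big(\prod_i p_i^{(1-\epsilon_i)/2}(1-p_i)^{(1+\epsilon_i)/2}\big)\, \tau_\epsilon$, where $\tau_\epsilon$ is $\tau$ with a bit-flip $X$ applied to the $j$-th ancilla whenever $\epsilon_j = -1$ (this flips the sign of $\langle M_j\rangle$ while leaving $\langle M_i\rangle$ for $i \ne j$ unchanged, since the $M$'s act on distinct qubits). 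Since the ancilla marginals factorize over the $n$ qubits, one computes $\langle M_i\rangle_{\tau_p} = (2p_i - 1)\langle M_i\rangle_\tau$, hence $\langle S_i'\rangle_{\sigma\otimes\tau_p} = (2p_i-1)\langle S_i'\rangle_{\rho'}$; choosing $p_i = \tfrac12(1 + \sqrt{u_i/v_i})$ when $v_i > 0$ (and $p_i$ arbitrary, or a more careful argument via mixing in a maximally mixed ancilla, when $v_i = 0$) gives exactly $\langle S_i'\rangle^2 = u_i$ for all $i$. The $v_i = 0$ edge case needs the extra remark that we may first perturb to make all $v_i$ positive, or directly exhibit a state with a prescribed small nonzero value using the same sign-mixing trick on a fresh pure-state input. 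Combining the down-set property with ${\cal Q}(\{S_i\}) \subseteq {\cal Q}(\{S_i'\})$ yields $\downarrow\hspace{-0.3em}{\cal Q}(\{S_i\}) \subseteq {\cal Q}(\{S_i'\})$, completing the chain.
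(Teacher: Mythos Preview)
Your approach is essentially the paper's: embed via $\rho\mapsto\rho\otimes|0\rangle\langle0|^{\otimes n}$, then dampen each coordinate independently by mixing on the ancilla qubits. The paper does this slightly more directly, taking $\sigma_{\vec x}=\rho\otimes\bigotimes_i\bigl(x_i|0\rangle\langle0|+(1-x_i)\id/2\bigr)$ so that $\langle S_i'\rangle_{\sigma_{\vec x}}=x_i\langle S_i\rangle_\rho$, which immediately yields $\downarrow\hspace{-0.3em}{\cal Q}(\{S_i\})\subseteq{\cal Q}(\{S_i'\})$ without the detour through showing that ${\cal Q}(\{S_i'\})$ is itself a down-set.

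One remark on that detour: you write ``$\rho'=\sigma\otimes\tau$'' for an \emph{arbitrary} state $\rho'$ on the enlarged space, which is unjustified. This does not harm the lemma, because the only $\rho'$ you actually need to dampen are the product states $\rho\otimes|0\rangle\langle0|^{\otimes n}$ coming from your first step, and those \emph{are} of product form. If you want the stronger claim that ${\cal Q}(\{S_i'\})$ is a down-set in full generality, replace ``write $\rho'=\sigma\otimes\tau$ and modify $\tau$'' by ``apply the bit-flip channel $\Phi_i^{p_i}(\cdot)=p_i(\cdot)+(1-p_i)X_i(\cdot)X_i$ on the $i$-th ancilla qubit of $\rho'$'', which works for any $\rho'$ and gives the same scaling $\langle S_i'\rangle\mapsto(2p_i-1)\langle S_i'\rangle$. (The edge case $v_i=0$ is trivial: then $u_i=0$ as well, so that coordinate needs no adjustment.)
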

\begin{proof}
	By definition, for any point $p$ in ${\cal Q}(\{S_i\})$, it corresponds to a state $\rho$. More explicitly, the point can be represented as 
	$p = (\langle S_1\rangle_{\rho}^2, \ldots, \langle S_n\rangle_{\rho}^2)$.

	Let $\sigma = \rho \otimes (|0\rangle\langle 0|)^{\otimes n}$. The direct calculation shows that $\langle S'_i\rangle_{\sigma} = \langle S_i\rangle_{\rho}$. 
Hence,
	$p = (\langle S'_1\rangle_{\sigma}^2, \ldots, \langle S'_n\rangle_{\sigma}^2) \in {\cal Q}(\{S'_i\})$.
	Furthermore, let $\tau_x = x |0\rangle\langle 0| + (1-x)\id/2$ with $x\in [0,1]$, and $\sigma_{\vec{x}} = \rho\otimes(\otimes_{i=1}^n \tau_{x_i})$. Then we have $\langle S'_i\rangle_{\sigma_{\vec{x}}} = x_i \langle S_i\rangle_{\rho}$ which results in the point $p_{\vec{x}} = (x_1^2 \langle S_1\rangle_{\rho}^2, \ldots, x_n^2 \langle S_n\rangle_{\rho}^2)$. By varying the coefficient vector $\vec{x}$, we can obtain any point elementwise no more than $p$. That is, all those points are in ${\cal Q}(\{S'_i\})$. Consequently, we conclude that $\downarrow\hspace{-0.3em}{\cal Q}(\{S_i\}) \subseteq {\cal Q}(\{S'_i\})$.

 Finally, the two sets $\{S_i\}$ and $\{S'_i\}$ share the same frustration graph $G$, due to the fact that all matrices $M_i$ commute with each other. Consequently, ${\cal Q}(\{S'_i\}) \subseteq \be(G)$.
\end{proof}

\begin{theorem}
Given a graph $G$ with $n$ vertices, let $T_r$ be the projection of the feasible set of \eqref{mom} on the coordinates $\left(\langle x_i\rangle^2\right)_{i=1}^n$, that is,
\begin{equation}
    T_r\coloneqq\left\{\left(\langle x_i\rangle^2\right)_{i=1}^n\,\middle|\, M_r\succeq0, [M_r]_{1,1}=1, [M_r]_{u,v}=[M_r]_{a,b}\ (\forall\langle u^*v\rangle=\langle a^*b\rangle)\right\}.
\end{equation}
Then it holds that $\be(G) \subseteq T_r$. Besides, as $r$ goes to infinity, $T_r$ converges to $\be(G)$ in all nonnegative directions.
\end{theorem}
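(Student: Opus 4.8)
The plan is to work entirely with support functions, exploiting that the objective of \eqref{mom} depends on a feasible $M_r$ only through the projected tuple $\left(\langle x_i\rangle^2\right)_{i=1}^n$. Hence for every $w\in\mathbb{R}_+^n$ one has $\lambda_r(G,w)=\max_{v\in T_r}\sum_i w_i v_i$, i.e.\ $\lambda_r(G,\cdot)$ \emph{is} the support function of $T_r$ restricted to nonnegative directions. I would also record at the outset that $T_r$ is a compact convex subset of $[0,1]^n$: convexity is inherited from the feasible set of \eqref{mom} (an affine slice of the PSD cone) under the linear projection, the $2\times2$ principal submatrix of $M_r$ indexed by $\{1,\,x_i\langle x_i\rangle\}$ forces $0\le\langle x_i\rangle^2\le1$, and a routine boundedness argument for the remaining entries of a feasible $M_r$ gives compactness.

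For the inclusion $\be(G)\subseteq T_r$, the point is that the naive support-function argument ($\langle w,v\rangle\le\beta(G,w)\le\lambda_r(G,w)$ for all $w\ge0$) only certifies membership in the down-closure of $T_r$ inside $\mathbb{R}_+^n$, so I need genuine feasible points. First I would note that any quantum realization $\{S_i\}$ of $G$ on a state $\rho$ induces a feasible moment matrix via $[M_r]_{u,v}=\langle u^*v\rangle_\rho$: the algebraic relations $S_i^2=1$, $S_iS_j=\pm S_jS_i$ make the equality constraints hold, and $M_r\succeq0$ because $c^\dagger M_r c=\langle A^*A\rangle_\rho\ge0$ for the operator $A$ built from the vector $c$ (the scalar pseudo-expectations of the inner words only contribute real coefficients). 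Thus $\mathcal{Q}(\{S_i\})\subseteq T_r$ for \emph{every} realization. Applying this to the auxiliary realization $\{S_i'\}=\{S_i\otimes M_i\}$ of Lemma~\ref{lm:repdown} gives $\mathcal{Q}(\{S_i'\})\subseteq T_r$, and since $(\downarrow\hspace{-0.3em}\mathcal{Q}(\{S_i\}))\cap\mathbb{R}_+^n\subseteq\mathcal{Q}(\{S_i'\})$ by that lemma, convexity of $T_r$ upgrades this to $\conv\big((\downarrow\hspace{-0.3em}\mathcal{Q}(\{S_i\}))\cap\mathbb{R}_+^n\big)\subseteq T_r$. A short check — replacing generators in a convex combination by their componentwise positive parts, which stay in $(\downarrow\hspace{-0.3em}\mathcal{Q}(\{S_i\}))\cap\mathbb{R}_+^n$, and using that this set is down-closed in $\mathbb{R}_+^n$ — identifies the left-hand side with $\conv(\downarrow\hspace{-0.3em}\mathcal{Q}(\{S_i\}))\cap\mathbb{R}_+^n=\be(G)$ by Proposition~\ref{ob:beta}.

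For the convergence statement I would first observe $T_{r+1}\subseteq T_r$, since a feasible $M_{r+1}$ restricts to a feasible $M_r$; hence $T_\infty:=\bigcap_r T_r$ is a nonempty compact convex set containing $\be(G)$. For each $w\in\mathbb{R}_+^n$ the support functions $h_{T_r}(w)=\lambda_r(G,w)$ decrease to $\beta(G,w)$ by the completeness of the state-polynomial hierarchy proved in \cite{klep2024state}, and for a decreasing sequence of nonempty compact convex sets the support function of the intersection is the pointwise limit of the support functions (pick maximizers $v_r$, extract a convergent subsequence, and note the limit lies in every $T_{r_0}$ hence in $T_\infty$). Therefore $h_{T_\infty}(w)=\beta(G,w)=h_{\be(G)}(w)$ for all $w\ge0$, which is exactly the asserted convergence ``in all nonnegative directions.'' If the sharper set equality $T_\infty=\be(G)$ is wanted, I would argue that a putative $v^*\in T_\infty\setminus\be(G)$ can be separated from the convex corner $\be(G)$ by a \emph{nonnegative} functional: replace a separating $w$ by its positive part $w^+$, using $\langle w^+,v^*\rangle\ge\langle w,v^*\rangle$ (as $v^*\ge0$) and $h_{\be(G)}(w^+)=h_{\be(G)}(w)$ (as $\be(G)$ is down-closed), contradicting $h_{T_\infty}(w^+)=h_{\be(G)}(w^+)$.

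The main obstacle I anticipate is not the limit — which is essentially handed to us by the cited completeness $\lambda_r(G,w)\to\beta(G,w)$ — but the inclusion $\be(G)\subseteq T_r$: the relaxation set $T_r$ is not obviously down-closed, so one genuinely needs the detour through the auxiliary realization $\{S_i\otimes M_i\}$ of Lemma~\ref{lm:repdown} together with the convexity of $T_r$, and one must be careful that $T_r\subseteq\mathbb{R}_+^n$ throughout so that the ``positive part'' manipulations of generators stay inside the relevant orthant. Making this bookkeeping precise is where the real work lies.
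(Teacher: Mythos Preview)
Your proposal is correct and follows essentially the same route as the paper: feasibility of genuine quantum realizations gives $\mathcal{Q}(\{S_i\})\subseteq T_r$, Lemma~\ref{lm:repdown} supplies the down-closure via the auxiliary realization $\{S_i\otimes M_i\}$, convexity of $T_r$ then yields $\be(G)\subseteq T_r$, and the convergence in nonnegative directions is exactly the cited completeness $\lambda_r(G,w)\to\beta(G,w)$. Your write-up is more careful (compactness of $T_r$, the identification of $\conv((\downarrow\mathcal{Q})\cap\mathbb{R}_+^n)$ with $\be(G)$, and the optional separation argument for $T_\infty=\be(G)$), but the key ideas and their order match the paper's proof.
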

\begin{proof}
Let $\{S_i\}$ be any realization of the graph $G$. Then, the corresponding expectations with respect to any quantum state $\rho$ fulfill the constraints of \eqref{mom} and hence, ${\cal Q}(\{S_i\}) \subseteq T_r$.
Due to the arbitrariness of $\{S_i\}$, Lemma~\ref{lm:repdown} implies that $\downarrow\hspace{-0.3em} {\cal Q}(\{S_i\}) \subseteq T_r$. Then the convexity of $T_r$ leads to $\conv(\downarrow\hspace{-0.3em}{\cal Q}(\{S_i\})) \subseteq T_r$.
That is, $\be(G) \subseteq T_r$.

On the other hand, for each direction $w \ge 0$, the optimal value $\lambda_r(G,w)$ converges to $\beta(G,w)$. In other words, $T_r$ converges to $\be(G)$ in each nonnegative direction.
\end{proof}

\begin{theorem}
	The following program with $\lambda_r(G,w)$ defined in Eq.~\eqref{mom} is equivalent to an SDP:
\begin{align}\label{eq:deltaapp_inappendix}
	\omega_r \coloneqq\ &\max \sum\nolimits_i w_i\nonumber\\
	{\rm s.t.}\ & \lambda_r(G,w) \le 1,\nonumber\\
	     & w\ge 0.
\end{align}
\end{theorem}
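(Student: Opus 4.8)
The plan is to use the fact that $\lambda_r(G,w)$ is itself the optimal value of a semidefinite program whose \emph{feasible set does not depend on} $w$ and whose objective is linear in $w$; consequently $\{\,w : \lambda_r(G,w)\le 1\,\}$ is exactly the polar of the projected spectrahedron $T_r$ from the previous theorem, and $\omega_r$ is the maximum of the all-ones functional over $T_r^{\circ}\cap\mathbb{R}^n_+$. Since the polar of a spectrahedral shadow is again (essentially) a spectrahedral shadow, the strategy is to make this explicit by dualizing the inner SDP and then substituting.

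First I would put the moment SDP \eqref{mom} in standard primal form. The normalization $[M_r]_{1,1}=1$ together with the consistency constraints $[M_r]_{u,v}=[M_r]_{a,b}$ allow the admissible moment matrices to be parametrized as an affine pencil $M(\mathbf{y})=M_0+\sum_k y_k M_k$, where $\mathbf{y}$ ranges over the independent reduced moments and $M_0$ carries the $(1,1)$-entry equal to $1$. Among the coordinates are the entries $\langle x_i\rangle^2=[M_r]_{1,\,x_i\langle x_i\rangle}$, so the objective is $\sum_i w_i\langle x_i\rangle^2=c(w)^{\top}\mathbf{y}$ with $c(w)$ linear in $w$, and
\[
\lambda_r(G,w)=\sup\bigl\{\,c(w)^{\top}\mathbf{y}\ :\ M(\mathbf{y})\succeq 0\,\bigr\}.
\]
This supremum is finite (each $\langle x_i\rangle^2\in[0,1]$ by the $2\times2$ principal minor on the indices $\{1,\,x_i\langle x_i\rangle\}$) and the problem is feasible.

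Next I would form the Lagrangian dual: pairing $M(\mathbf{y})\succeq0$ with $Z\succeq0$ and eliminating the free variable $\mathbf{y}$ forces the adjoint conditions $\langle Z,M_k\rangle=-c(w)_k$ for all $k$ and leaves the dual objective $\langle Z,M_0\rangle=Z_{1,1}$. By weak duality, any $Z\succeq0$ with $\langle Z,M_k\rangle=-c(w)_k$ for all $k$ and $Z_{1,1}\le1$ already certifies $\lambda_r(G,w)\le1$; by strong duality and dual attainment the converse also holds. Substituting this characterization into the definition of $\omega_r$ yields
\[
\omega_r=\max_{w,\,Z}\ \sum_i w_i \quad\text{s.t.}\quad w\ge0,\ \ Z\succeq0,\ \ Z_{1,1}\le1,\ \ \langle Z,M_k\rangle=-c(w)_k\ \ \forall k,
\]
which is a semidefinite program in $(w,Z)$: a linear objective, linear (in)equalities, and the single conic constraint $Z\succeq0$.

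The hard part will be justifying strong duality and dual attainment for the inner SDP: the primal moment program need not be strictly feasible — for large $r$ there are more state monomials than the Hilbert-space dimension, so every admissible moment matrix is rank-deficient and Slater's condition can fail — so one must instead argue via facial reduction, i.e.\ restrict to the affine hull (minimal face) of the primal feasible spectrahedron, on which Slater does hold; this restriction preserves the SDP structure of the final program. Everything else — the affine parametrization $M(\mathbf{y})$, the explicit adjoint map $Z\mapsto(\langle Z,M_k\rangle)_k$, and the boundedness and feasibility of the inner program — is routine bookkeeping, as is the identical argument applied to the analogous programs \eqref{eq:deltaapp} and \eqref{gs-sdp}.
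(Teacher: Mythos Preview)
Your proposal is correct and follows essentially the same route as the paper: rewrite the inner moment SDP in standard primal form, dualize, use strong duality to replace the implicit constraint $\lambda_r(G,w)\le 1$ by the existence of a dual certificate $Z\succeq 0$ with $Z_{1,1}\le 1$ and the appropriate adjoint equalities, and read off a single SDP in $(w,Z)$. The paper handles the strong-duality step by invoking the Josz--Henrion result on zero duality gap for moment relaxations, whereas you flag it explicitly and propose facial reduction as the fallback; these are two ways of addressing the same technical point.
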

\begin{proof}
Notice that the SDP in Eq.~\eqref{mom} can be reformulated as
\begin{align}\label{eq:original}
	\lambda_r(G,w) =\ &\sup \sum\nolimits_{i=1}^n w_i y_i\nonumber\\
	{\rm s.t.}\ & M_r = A_0 + \sum\nolimits_{j=1}^k y_j A_j \succeq 0,
\end{align}
where $A_j$'s are appropriate constant matrices encoding the equivalence relations and the Hermitian property of $M_r$ in Eq.~\eqref{mom}, and $A_0$ is the matrix with $[A_0]_{1,1}=1$ and all other elements being $0$. 
Then the dual SDP of Eq.~\eqref{eq:original} is
\begin{align}\label{eq:dual}
	\tilde{\lambda}_r(G,w)	\coloneqq\ &\inf \tr(Z A_0)\nonumber\\
	{\rm s.t.}\ & \tr(ZA_i) = -w_i, \ i=1, \ldots, n, \nonumber\\
		    & \tr(ZA_i) = 0, \ i=n+1, \ldots, k,\nonumber\\
            &\ Z\succeq0.
\end{align}
With tricks similar to those in \cite{josz2016strong}, one can show that there is no duality gap between Eq.~\eqref{eq:original} and Eq.~\eqref{eq:dual}, i.e. $\tilde{\lambda}_r(G,w) = \lambda_r(G,w)$. Then, for any feasible solution $w$ to Eq.~\eqref{eq:deltaapp_inappendix}, we have $\tilde{\lambda}_r(G,w)\le1$ and hence there exists a matrix $Z$ satisfying $\tr(ZA_0) \le 1$ as well as the conditions in Eq.~\eqref{eq:dual}.
Consequently, such a pair $(w,Z)$ is feasible to the following SDP:
\begin{align}\label{eq:omega}
	\tilde{\omega}_r\coloneqq\ &\sup \sum\nolimits_{i=1}^n w_i\nonumber\\
	{\rm s.t.}\ &\tr(ZA_i) = -w_i,\ i=1,\ldots,n,\nonumber\\
		    &\tr(ZA_i) = 0,\ i=n+1,\ldots,k,\nonumber\\
		    &\tr(ZA_0) \le 1,\nonumber\\
		    &\ w_i \ge 0,\quad Z\succeq0.
\end{align}
Conversely, any feasible solution $(w,Z)$ to Eq.~\eqref{eq:omega} is also a feasible solution to Eq.~\eqref{eq:dual}, implying $\lambda_r(G,w) =\tilde{\lambda}_r(G,w)  \le 1$. Therefore, such $w$ is feasible to Eq.~\eqref{eq:deltaapp_inappendix}. We thus conclude that $\omega_r = \tilde{\omega}_r$ and finish the proof.
\end{proof}
\subsection{Asymptotic prevalence of $\hbar$-imperfectness}
\label{ssec:asym}
\begin{theorem}
	The number of $\hbar$-perfect graphs with $n\gg 1$ vertices is at most $2^{c n(n-1)/2}$ for some constant $c<1$.
\end{theorem}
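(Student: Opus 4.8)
The plan is to reduce the estimate to a standard counting principle for hereditary graph classes, exploiting that the anti-heptagon $\bar{C}_7$ is a forbidden induced subgraph for $\hbar$-perfectness. The point is that $\hbar$-perfectness is a \emph{hereditary} property: by Property~\ref{thm:subgraph} every induced subgraph of an $\hbar$-perfect graph is again $\hbar$-perfect, so, contrapositively, any graph that contains an $\hbar$-imperfect induced subgraph is itself $\hbar$-imperfect. Since $\bar{C}_7$ is $\hbar$-imperfect~\cite{xu2023bounding}, every $\hbar$-perfect graph lies in the hereditary class $\mathcal{P}$ of graphs with no induced $\bar{C}_7$, so the number $P(n)$ of $\hbar$-perfect graphs on $n$ vertices satisfies $P(n)\le|\mathcal{P}_n|$, the number of $\bar{C}_7$-free graphs on $n$ vertices; it remains to bound $|\mathcal{P}_n|$.

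For this I would invoke the Alekseev--Bollob\'as--Thomason estimate for the size of a hereditary property (see \cite{Balogh2004TheNO} and the references therein): $|\mathcal{P}_n| = 2^{(1 - 1/r(\mathcal{P}) + o(1))\binom{n}{2}}$, where $r(\mathcal{P})$ is the \emph{coloring number} of $\mathcal{P}$, i.e.\ the largest $r$ such that, for some $0\le s\le r$, the class $\mathcal{E}(s,r-s)$ of all graphs whose vertex set partitions into $s$ cliques and $r-s$ independent sets is contained in $\mathcal{P}$. The crucial point is that $r(\mathcal{P})$ is \emph{finite}: if $H$ is any forbidden induced subgraph with $h$ vertices, then $H\in\mathcal{E}(s,r-s)$ whenever $r\ge h$ (split $V(H)$ into at most $h$ singleton parts of either kind, padding with empty parts), so $\mathcal{E}(s,r-s)\not\subseteq\mathcal{P}$ for every such $s$, giving $r(\mathcal{P})\le h-1$. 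For $\mathcal{P}$ the $\bar{C}_7$-free graphs this already yields $r(\mathcal{P})\le 6$ and hence $P(n)\le|\mathcal{P}_n|\le 2^{c\,n(n-1)/2}$ for any fixed $c\in(5/6,1)$ and all sufficiently large $n$, which proves the statement. A short additional case check --- using $\chi(\bar{C}_7)=4$, so that all $3$-colorable graphs are $\bar{C}_7$-free (whence $r(\mathcal{P})\ge 3$), together with the fact that $\bar{C}_7$ itself partitions into $s$ cliques and $4-s$ independent sets for each $0\le s\le 4$ (whence $r(\mathcal{P})<4$) --- pins down $r(\mathcal{P})=3$ and sharpens the constant to any $c\in(2/3,1)$.

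The only substantive ingredient is the Alekseev--Bollob\'as--Thomason estimate, and this is also where the difficulty sits: the elementary approach of partitioning $[n]$ into $\lfloor n/7\rfloor$ disjoint $7$-blocks and forbidding an induced $\bar{C}_7$ inside each block removes only a \emph{linear} term $\Theta(n)$ from the exponent, not a constant fraction of $\binom{n}{2}$, so the quadratic saving genuinely requires the hereditary-property machinery (or an equivalent Szemer\'edi-regularity / hypergraph-container argument) rather than a hands-on block count. The computation of $r(\mathcal{P})$ itself is routine, and even the crude bound $r(\mathcal{P})\le|V(\bar{C}_7)|-1$ suffices for the qualitative claim $c<1$. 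Finally, the same argument applies verbatim to any graph property defined by a nonempty set of forbidden induced subgraphs, and adjoining further forbidden graphs such as $\bar{C}_9$, $G_8$, etc.\ can only decrease the admissible constant.
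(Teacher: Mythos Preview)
Your proof is correct but takes a different route from the paper's. Both arguments start from the same observation---$\hbar$-perfectness is hereditary (Property~\ref{thm:subgraph}), so every $\hbar$-perfect graph is $\bar{C}_7$-free---and then bound the number of $\bar{C}_7$-free graphs. You invoke the Alekseev--Bollob\'as--Thomason speed theorem for hereditary properties and compute the coloring number $r(\mathcal{P})=3$, yielding any $c>2/3$. The paper instead applies McDiarmid's bounded-differences inequality to the random variable $f=\#\{\text{induced }\bar{C}_7\text{'s in }G(n,1/2)\}$: since flipping one edge changes $f$ by at most $\binom{n-2}{5}$, concentration gives $\Pr\{f=0\}\le\exp\bigl(-2p_0^2 N/\binom{7}{2}^2\bigr)$, where $p_0$ is the probability that $G(7,1/2)\cong\bar{C}_7$. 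Your route is the standard extremal-combinatorics move and delivers a vastly sharper constant (the paper's $c$ comes out to roughly $1-2\times10^{-10}$), while the paper's route is more self-contained, needing only a one-line concentration bound rather than the full hereditary-property machinery. One small correction to your commentary: you write that ``the quadratic saving genuinely requires the hereditary-property machinery (or an equivalent Szemer\'edi-regularity / hypergraph-container argument)''---in fact the McDiarmid argument is a third, genuinely elementary way to get the $\Theta(N)$ saving in the exponent, so this remark undersells the range of available tools.
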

\begin{proof}
	Here we utilize the Erd\"{o}s-R\'{e}nyi random graph model, $G(n,p)$. In this model, there are $N = n(n-1)/2$ potential edges in $G(n,p)$, each of which is included with an independent and identical probability $p$. 

	We define a function $f(X_1, \ldots, X_N)$ as the number of induced subgraphs of $G(n,p)$ that are isomorphic to a specific $\hbar$-imperfect graph, $H_k$, on $k$ vertices, e.g., the anti-cycle graph $\bar{C}_7$ on $7$ vertices. 
	Notice that there are $C(n,k)$ ways to choose a subgraph on $k$ vertices from the random graph $G(n,p)$ on $n$ vertices, each of which is again a random graph $G(k,p)$. Thus, the expectation of $f(X_1, \ldots, X_N)$ is given by:
\begin{equation}
	\langle f(X_1, \ldots, X_N) \rangle = p_0\, C(n,k),
\end{equation}
where $p_0$ is the probability that the random graph $G(k,p)$ with $k$ vertices is isomorphic to $H_k$.

For an instance of the random graph to be $\hbar$-perfect, it must not contains any induced subgraph isomorphic to $H_k$, which means that the corresponding value of $f(X_1,\ldots,X_N)$ is $0$. Besides, the change in $f(X_1,\ldots,X_N)$ when the adjacent relation between two vertices switches is at most $c_e = C(n-2,k-2)$, since this operation can affect at most $c_e$ induced subgraphs containing these two vertices. Applying McDiarmid's inequality, we obtain the following upper bound on the probability of $f(X_1,\ldots,X_N) = 0$:
\begin{equation}\label{eq:mcd_ineq}
\begin{aligned}
	\text{Pr}\{f=0\} \le& \text{Pr}\{f\le \langle  f\rangle - p_0 C(n,k)\}\\
	\le& \exp({-2p_0^2 C(n,k)^2/[N c_e^2]})\\
	=& \exp({-2p_0^2 N/C(k,2)^2}),
\end{aligned}
\end{equation}
where the second inequality is due to McDiarmid's inequality ${\rm Pr}\{f\le \langle f\rangle - \epsilon\} \le \exp(-2\epsilon^2/[N c_e^2])$ with $\epsilon=p_0\, C(n,k)$ and $c_e$ being the upper bound of change in $f$ for the switch of each $X_i$.

For the case of $p = 1/2$, the expression in Eq.~\eqref{eq:mcd_ineq} represents the ratio of $\hbar$-perfect graphs to the total number of graphs, that is, $2^N$. The resulting number of $\hbar$-perfect graphs is upper bounded by $2^N  \exp(-{2p_0^2 N/C(k,2)^2} ) = 2^{c  N}$, with the constant $c$ given by
$1 - 2 p_0^2 \log e/C(k,2)^2$.
Since $p_0 > 0$, by setting $k$ to any number not less than $7$, it is clear that $c < 1$.
\end{proof}

\section{Applications}
\label{sec:app2}
\subsection{Entanglement detection and estimation}
\label{ssec:entanglement}
Here we use the Euclidean distance $d(\rho)$ from the point $\{\langle O_i\rangle\}$ to {$\be(G)$, with $G$ the frustration graph of Pauli strings $\{O_i\}$}, to provide a lower bound of ${\cal E}_{\rm HS}(\rho)$, i.e., the entanglement measured by the Hilbert-Schmidt distance~\cite{witte1999new}:
\begin{equation}
    {\cal E}_{\rm HS}(\rho) \coloneqq \min_{\sigma \in {\rm SEP}} (\tr[(\rho-\sigma)^2])^{1/2},
\end{equation}
with ${\rm SEP}$ stands for the set of separable states.

Since $\be(G)$ is a superset of the set of all points generated from separable states with the same observables $\{O_i\}$, for any separable state $\sigma$, it holds that
\begin{align}
	d^2(\rho) &= \min_{p\in \be(G)} \sum_i ||\langle O_i\rangle_{\rho}| - p_i |^2 \nonumber\\
    &\le \sum\nolimits_i ||\langle O_i\rangle_{\rho}| - |\langle O_i\rangle_{\sigma}| |^2 \nonumber\\
            &\le \sum\nolimits_i |\langle O_i\rangle_{\rho} - \langle O_i\rangle_{\sigma} |^2 \nonumber\\
	    &= \sum\nolimits_i (\tr[O_i (\rho - \sigma)])^2 \nonumber\\
	    &= \tr\big[ \big(\sum\nolimits_i O_i\otimes O_i\big) (\rho-\sigma)\otimes(\rho-\sigma)  \big]\nonumber\\
	    &\le \Big(\tr\big[ \big(\sum\nolimits_i\! O_i\otimes O_i\big)^2\big] \tr\big[ (\rho-\sigma)^2\otimes(\rho-\sigma)^2  \big] \Big)^{\frac{1}{2}}\nonumber\\
	    &\le \Big(\tr\big[ \big(\sum\nolimits_i O_i\otimes O_i\big)^2\big] \Big)^{\frac12}\tr\big[ (\rho-\sigma)^2\big] ,
\end{align}
which implies that
\begin{equation}
	{\cal E}_{\rm HS}(\rho) \ge d(\rho) / \Big(\tr\big[ \big(\sum\nolimits_i O_i\otimes O_i\big)^2\big] \Big)^{\frac{1}{4}}.
\end{equation}

For convenience, we relabel the four $2$-dimensional Bell states as
\begin{align}
	|\Psi_1\rangle = (|00\rangle+|11\rangle)/\sqrt{2},\,
	|\Psi_2\rangle = (|00\rangle-|11\rangle)/\sqrt{2}, \nonumber\\
	|\Psi_3\rangle = (|01\rangle+|10\rangle)/\sqrt{2},\,
	|\Psi_4\rangle = (|01\rangle-|10\rangle)/\sqrt{2}.
\end{align}
Then, by representing the $2^k$-dimensional space by $k$ qubits, we obtain the $4^k$ maximally entangled states represented by 
\begin{equation}
	|\Psi_{\vec{t}}\rangle_{1,\ldots,2i-1;2,\ldots,2i} \coloneqq \otimes_{i=1}^k |\Psi_{t_i}\rangle_{2i-1,2i},
\end{equation}
where $t_i \in \{1, 2, 3, 4\}$, $|\Psi_{t_i}\rangle_{2i-1,2i}$ is the maximally entangled state $|\Psi_{t_i}\rangle$ between the particles $2i-1$ and $2i$, the particles with odd labels belong to the first party, and those with even labels belong to the second party.

Denote by $\{\sigma_j\}_{j=0}^3$ the Pauli matrices $I, X, Y, Z$ in order.
Notice that $|\Psi_t\rangle$ is the common eigenstate of $\{\sigma_j\otimes\sigma_j\}_{j=0}^3$ with the eigenvalue being $\pm 1$ for any $t\in \{1,2,3,4\}$.
This implies that the state $|\Psi_{\vec{t}}\rangle$ is the common eigenstate of $\{S_l\otimes S_l\}_{l=0}^{4^k-1}$ with eigenvalues being $\pm 1$ for any $\vec{t}$, where $S_l$ is the Pauli string $\otimes_{i=1}^k \sigma_{l_i}$ and $l_i$ is the $i$-th digit of $l$ in the base of $4$.
Hence, $|\langle \Psi_{\vec{t}}| S_l\otimes S_l|\Psi_{\vec{t}}\rangle| = 1$, $\forall l$, which is impossible for separable states. Here we take $k=2$ as an example. Denote by $G$ the frustration graph of the $15$ operators $\{S_l\}_{l=1}^{15}$ by omitting the identity $S_0$, then $\beta(G)=3$. Thus, for any separable state $\sigma$, $\sum_{l=1}^{15} |\langle S_l\otimes S_l\rangle_{\sigma}| \le 3$.
\begin{theorem}
	For any Bell diagonal state $\rho = \sum_{i=1}^4 p_i |\Psi_i\rangle\langle \Psi_i|$ with $\sum_{i=1}^4 p_i = 1$ and $p_i \ge 0$, it is entangled if and only if $p({\cal S}, {\cal S}', \rho)$ is outside the set $\{(x,y,z)|x+y+z\le 1\} \cap \mathbb{R}^3_+$, where ${\cal S} = {\cal S'} = \{X, Y, Z\}$.
\end{theorem}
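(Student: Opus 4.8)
The plan is to pin down exactly which triples $\bigl(|\langle X\otimes X\rangle_\rho|,|\langle Y\otimes Y\rangle_\rho|,|\langle Z\otimes Z\rangle_\rho|\bigr)$ can occur for Bell-diagonal $\rho$, and to match the polytope $\be(G)=\stab(G)=\{x+y+z\le 1\}\cap\mathbb{R}^3_+$ (valid since the triangle graph $G$ is perfect) against the standard separability criterion for such states. First I would compute how $X\otimes X$, $Y\otimes Y$, $Z\otimes Z$ act on the Bell basis; a one-line calculation shows that each $|\Psi_i\rangle$ is a joint eigenvector with eigenvalue triples
\begin{equation*}
 (+,-,+),\quad(-,+,+),\quad(+,+,-),\quad(-,-,-)\qquad\text{for }i=1,2,3,4,
\end{equation*}
respectively. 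Hence for $\rho=\sum_i p_i|\Psi_i\rangle\langle\Psi_i|$ the three correlators are
\begin{align*}
 a&:=\langle X\otimes X\rangle_\rho=p_1-p_2+p_3-p_4,\\
 b&:=\langle Y\otimes Y\rangle_\rho=-p_1+p_2+p_3-p_4,\\
 c&:=\langle Z\otimes Z\rangle_\rho=p_1+p_2-p_3-p_4,
\end{align*}
so $p({\cal S},{\cal S}',\rho)=(|a|,|b|,|c|)$ automatically lies in $\mathbb{R}^3_+$, and membership $p({\cal S},{\cal S}',\rho)\in\be(G)$ is precisely the inequality $|a|+|b|+|c|\le 1$.

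Next I would invert the linear change of variables. Since $\sum_i p_i=1$, the map $(p_1,\dots,p_4)\mapsto(1,a,b,c)$ is a rescaled Hadamard transform, hence invertible, with inverse
\begin{equation*}
 p_1=\tfrac{1+a-b+c}{4},\quad p_2=\tfrac{1-a+b+c}{4},\quad p_3=\tfrac{1+a+b-c}{4},\quad p_4=\tfrac{1-a-b-c}{4}.
\end{equation*}
The positivity constraints $p_i\ge 0$ hold automatically because $\rho$ is a state; written in $(a,b,c)$ they are exactly the four inequalities $\varepsilon_1 a+\varepsilon_2 b+\varepsilon_3 c\le 1$ with $\varepsilon\in\{\pm1\}^3$ having an \emph{even} number of negative entries, whereas the additional constraints $p_i\le\tfrac12$ are exactly the four such inequalities with an \emph{odd} number of negative entries. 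Together these eight inequalities are equivalent to $|a|+|b|+|c|\le 1$. Therefore, for a Bell-diagonal state, $\max_i p_i\le\tfrac12\iff|a|+|b|+|c|\le1\iff p({\cal S},{\cal S}',\rho)\in\be(G)$.

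It then remains to identify ``$\max_i p_i\le\tfrac12$'' with separability, for which I would invoke the standard characterization of separable Bell-diagonal states. One route within the present framework: by the Horodecki PPT criterion a two-qubit state is separable iff its partial transpose is positive; writing $\rho=\tfrac14\bigl(\id\otimes\id+a\,X\otimes X+b\,Y\otimes Y+c\,Z\otimes Z\bigr)$ and using $Y^{T}=-Y$, the partial transpose is the Bell-diagonal state with correlators $(a,-b,c)$, whose four eigenvalues $\tfrac14(1+\varepsilon_1 a+\varepsilon_2 b+\varepsilon_3 c)$ are nonnegative exactly under the odd-$\varepsilon$ inequalities above, i.e.\ exactly when $\max_i p_i\le\tfrac12$. (Alternatively one exhibits the explicit convex decomposition of $\rho$ into the product states $|\pm_v\rangle\langle\pm_v|^{\otimes2}$, $v\in\{x,y,z\}$, which is valid whenever $\max_i p_i\le\tfrac12$.) Chaining the equivalences gives $\rho$ separable $\iff p({\cal S},{\cal S}',\rho)\in\be(G)$, i.e.\ $\rho$ entangled $\iff p({\cal S},{\cal S}',\rho)\notin\be(G)$, which is the assertion; the ``only if'' half alone already follows from the nonlinear witness \eqref{eq:nlwitness} with $w=(1,1,1)$ and $\beta(G,w)=1$. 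No step here is genuinely hard: the only place demanding care is the bookkeeping of sign patterns (in particular that four of the eight octahedron inequalities are automatic and four correspond to $p_i\le\tfrac12$) and the invocation of the Bell-diagonal separability fact; everything else is a short linear-algebra computation.
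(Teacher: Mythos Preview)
Your proof is correct and follows essentially the same strategy as the paper: compute the three correlators $(a,b,c)$ of a Bell-diagonal state, show that $|a|+|b|+|c|\le 1$ is equivalent to $\max_i p_i\le\tfrac12$, and invoke the standard separability criterion for Bell-diagonal states. The differences are in execution. Where the paper orders the $p_i$ and simplifies the absolute values by a direct case analysis (arriving at $\max\{4p_1-1,1-4p_4\}\le 1$), you invert the Hadamard change of variables and observe that the eight octahedron faces $\varepsilon_1 a+\varepsilon_2 b+\varepsilon_3 c\le 1$ split by sign parity into the four automatic constraints $p_i\ge 0$ and the four constraints $p_i\le\tfrac12$; this is a tidier and more symmetric way to see the same equivalence. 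You also supply a self-contained derivation of the separability criterion via the PPT test (noting that partial transposition sends $(a,b,c)\mapsto(a,-b,c)$ and hence swaps the even and odd sign patterns), whereas the paper simply cites it as known. Both routes are short; yours is slightly more structural and does not require a WLOG ordering.
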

\begin{proof}
An known necessary and sufficient condition for $\rho$ to be entangled is $\max_i p_i > 1/2$. Without loss of generality, we assume that $p_1 \ge p_2 \ge p_3 \ge p_4$.
Notice that
\begin{align}
	|\Psi_1\rangle\langle \Psi_1| = (\id + XX - YY + Z Z)/4, \ 
	|\Psi_2\rangle\langle \Psi_2| = (\id - XX + YY + Z Z)/4, \nonumber\\
	|\Psi_3\rangle\langle \Psi_3| = (\id + XX + YY - Z Z)/4, \ 
	|\Psi_4\rangle\langle \Psi_4| = (\id - XX - YY - Z Z)/4.
\end{align}
The state $\rho$ can be reformulated as
\begin{equation}
	\rho = [\id + (p_1+p_3-p_2-p_4)XX + (p_2+p_3-p_1-p_4)YY + (p_1+p_2-p_3-p_4)Z Z]/4.
\end{equation}
Then the condition that $p({\cal S}, {\cal S}', \rho)$ is in the set $\{(x,y,z)|x+y+z\le 1\} \cap \mathbb{R}^3_+$  is equivalent to
\begin{equation}
	|p_1+p_3-p_2-p_4| + |p_2+p_3-p_1-p_4| + |p_1+p_2-p_3-p_4| \le 1.
\end{equation}
Under the assumption that $p_1\ge p_2 \ge p_3 \ge p_4$, it can be simplified to
\begin{align}
	1 &\ge p_1+p_3-p_2-p_4 + |p_1+p_4 - p_2 -p_3| + p_1 + p_2 - p_3 -p_4\nonumber\\
	  &= p_1+p_3-p_2-p_4 + |p_1+p_4 - p_2 -p_3| + p_1 + p_2 - p_3 -p_4\nonumber\\
	  &= 2p_1-2p_4 + |p_1+p_4-p_2-p_3|\nonumber\\
	  &= \max \{4p_1-1,1-4p_4\},
\end{align}
which is equivalent to $1/2 \ge p_1 = \max_i p_i$, i.e., $\rho$ is not entangled.
\end{proof}
We continue to consider the multipartite entanglement exemplified by GHZ states.
Let
\begin{align}
|\text{GHZ}_1\rangle = (|000\rangle + |111\rangle)/\sqrt{2},
|\text{GHZ}_2\rangle = (|000\rangle - |111\rangle)/\sqrt{2},\nonumber\\
|\text{GHZ}_3\rangle = (|100\rangle + |011\rangle)/\sqrt{2},
|\text{GHZ}_4\rangle = (|100\rangle - |011\rangle)/\sqrt{2},\nonumber\\
|\text{GHZ}_5\rangle = (|010\rangle + |101\rangle)/\sqrt{2},
|\text{GHZ}_6\rangle = (|010\rangle - |101\rangle)/\sqrt{2},\nonumber\\
|\text{GHZ}_7\rangle = (|110\rangle + |001\rangle)/\sqrt{2},
|\text{GHZ}_8\rangle = (|110\rangle - |001\rangle)/\sqrt{2}.
\end{align}
\begin{theorem}
	For any tripartite GHZ diagonal state $\rho = \sum_{i=1}^8 p_i |{\rm GHZ}_i\rangle\langle {\rm GHZ}_i|$ with $\sum_{i=1}^8 p_i = 1$ and $p_i\ge 0$, it is genuinely tripartite entangled if and only if  $\sum_{i=1}^7 |\langle O_i\rangle_{\rho}| > 3$, where $O_i$ are  
\begin{equation}\label{eq:stabilizer_inappendix}
	Z Z\id, Z\id Z, \id Z Z, XXX,-YYX, -YXY, -XYY.
\end{equation}
\end{theorem}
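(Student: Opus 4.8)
The plan is to exploit the stabilizer structure of the GHZ basis. First I would record that $O_1,\dots,O_7$ in Eq.~\eqref{eq:stabilizer_inappendix}, together with the identity, form an abelian group $\mathcal G\cong\mathbb Z_2^3$ (generated, say, by $ZZ\id$, $\id ZZ$, $XXX$), every element of which is diagonal in the GHZ basis: the $Z$-type strings act by the same sign on $|abc\rangle$ and on its bit-flip $|\bar a\bar b\bar c\rangle$, while $XXX$ and $\pm YYX,\pm YXY,\pm XYY$ interchange the two members of such a pair up to a phase and hence preserve each combination $(|abc\rangle\pm|\bar a\bar b\bar c\rangle)/\sqrt2$. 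So every $|{\rm GHZ}_j\rangle$ is a simultaneous $\pm1$-eigenvector; put $s_{ij}:=\langle{\rm GHZ}_j|O_i|{\rm GHZ}_j\rangle\in\{\pm1\}$ and $s_{0j}:=1$. Since $\mathcal G$ is maximal abelian in the three-qubit Pauli group its joint eigenspaces are one-dimensional, so the eight vectors $(s_{ij})_{i=0}^{7}$ exhaust the characters of $\mathcal G$ and form (the rows of) an $8\times8$ Hadamard matrix; hence $\sum_{i=0}^{7}s_{ij}s_{ik}=8\delta_{jk}$, i.e.\ $\sum_{i=1}^{7}s_{ij}s_{ik}=8\delta_{jk}-1$. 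For GHZ-diagonal $\rho=\sum_jp_j|{\rm GHZ}_j\rangle\langle{\rm GHZ}_j|$ this gives $\langle O_i\rangle_\rho=\sum_jp_js_{ij}$, and multiplying by $s_{ij}$ and summing over $i=1,\dots,7$ yields the Hadamard inversion
\[
8p_j-1=\sum_{i=1}^{7}s_{ij}\langle O_i\rangle_\rho\ \le\ \sum_{i=1}^{7}\bigl|\langle O_i\rangle_\rho\bigr|\qquad\text{for every }j .
\]

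From here the two implications are short. The ``only if'' direction is immediate from the main-text criterion $\sum_i|\langle O_i\rangle_\sigma|\le3$ for every biseparable $\sigma$ (Eq.~\eqref{eq:gmecriterion} with unit weights, which is insensitive to the signs of the $O_i$): if $\sum_{i=1}^{7}|\langle O_i\rangle_\rho|>3$ then $\rho$ is not biseparable, i.e.\ is genuinely tripartite entangled. For the ``if'' direction I would combine the displayed inversion with the known characterization of GHZ-diagonal states: such a state with all eigenvalues $\le1/2$ is biseparable. Hence, if $\rho$ is genuinely tripartite entangled then $\max_jp_j>1/2$, and choosing that $j$ gives $\sum_{i=1}^{7}|\langle O_i\rangle_\rho|\ge 8p_j-1>3$.

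The only external ingredient is that biseparability threshold for GHZ-diagonal states (the three-qubit analogue of the Bell-diagonal fact used in the previous theorem), and making it self-contained is where the effort lies. I would prove it by a vertex argument: on the polytope $\{p\ge0:\sum_jp_j=1,\ p_j\le1/2\ \forall j\}$ the vertices are $\tfrac12(e_j+e_k)$, $j\ne k$, and each corresponding state $\tfrac12\bigl(|{\rm GHZ}_j\rangle\langle{\rm GHZ}_j|+|{\rm GHZ}_k\rangle\langle{\rm GHZ}_k|\bigr)$ is biseparable --- if $j,k$ share the same bit-string pair it is diagonal in the computational basis, hence fully separable, and otherwise a short case check (on whether the two bit strings agree in one or two positions) exhibits a bipartition $q|\hat q$ across which both GHZ states live in a common $2\times2$ subspace, where one verifies the PPT condition by a direct $4\times4$ computation and concludes separability. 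Convexity of biseparability then covers the whole polytope. The main obstacle is thus this separability bookkeeping, together with pinning down the sign table $s_{ij}$ and the group $\mathcal G$ correctly; everything else is linear algebra over $\mathbb F_2$ and the triangle inequality.
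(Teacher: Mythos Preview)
Your proposal is correct and follows essentially the same route as the paper: both combine the main-text biseparable bound $\sum_i|\langle O_i\rangle|\le 3$ with the characterization $\max_j p_j>1/2$ for genuine entanglement of GHZ-diagonal states, linked via the stabilizer/Hadamard identity $8p_j-1=\sum_i s_{ij}\langle O_i\rangle_\rho$ (the paper uses only the $j=1$ instance after a WLOG local-unitary reduction, and simply cites the $p_j>1/2$ threshold as known rather than proving it by your vertex argument). One minor slip: your ``if'' and ``only if'' labels are interchanged.
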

\begin{proof}
	It is known that $\rho$ is genuinely tripartite entangled if and only if $\max_i p_i > 1/2$. We first prove that the fact that $\rho$ is genuinely tripartite entangled implies $\sum_{i=1}^7 |\langle O_i\rangle_{\rho} | > 3$.

	Without loss of generality, we assume that $p_1 = \max_i p_i$, since all the $8$ GHZ states can be changed to each other via local unitaries, which does not affect entanglement.
	Notice that 
	\begin{equation}
		|{\rm GHZ}_1\rangle\langle {\rm GHZ}_1| = \big(\id_8 + \sum\nolimits_{i=1}^7 O_i\big)/8.
	\end{equation}
	Hence,
	\begin{equation}
		p_1 = \langle {\rm GHZ}_1|\rho|{\rm GHZ}_1\rangle = \big(1 + \sum\nolimits_{i=1}^7 \langle O_i\rangle_{\rho}\big)/8 > 1/2
	\end{equation}
	is equivalent to $\sum\nolimits_{i=1}^7 \langle O_i\rangle_{\rho} > 3$ and this implies that $\sum\nolimits_{i=1}^7 |\langle O_i\rangle_{\rho}| > 3$.

	Besides, whenever $\rho$ is not genuinely tripartite entangled, we know that $\sum\nolimits_{i=1}^7 |\langle O_i\rangle_{\rho}| \le 3$.
	Consequently, $\rho$ is genuinely tripartite entangled if and only if $\sum\nolimits_{i=1}^7 |\langle O_i\rangle_{\rho}| > 3$.
\end{proof}
As for the high-dimensional case, we use the following embedding technique when the dimension is not a power of $2$. Here we take the $3$-dimensional case to illustrate the idea.
Denote by $X^{(i)}, Y^{(i)}, Z^{(i)}$ the operators $X, Y, Z$ in the two-dimensional subspace without the $i$-th dimension. Then it holds for any separable state $\rho$  that
\begin{equation}\label{eq:subineq}
	\sum_{i,j=1}^3 \big[ |\langle X^{(i)} X^{(j)}\rangle_{\rho}| + |\langle Y^{(i)} Y^{(j)}\rangle_{\rho}| + |\langle Z^{(i)} Z^{(j)}\rangle_{\rho}|\big] \le 4.
\end{equation}
\begin{proof}
	Let $P_i = \id_3 - |i\rangle\langle i|$, $\rho_{ij} = (P_i\otimes P_j) \rho (P_i\otimes P_j) /\tr(P_i\otimes P_j \rho)$ if $\tr(P_i\otimes P_j \rho) \neq 0$. Otherwise, let $\rho_{ij} = \id_4/4$ the $4$-dimensional maximally mixed state. 
	By definition, $\rho_{ij}$ is a bipartite state with local dimensions being $2$. If $\rho$ is a separable state with the decomposition
	\begin{equation}
		\rho = \sum_k p_k \rho_{A,k} \otimes \rho_{B,k}.
	\end{equation}
	Then we have that
	\begin{equation}
		\rho_{ij} = \sum_k p_k (P_i\rho_{A,k}P_i) \otimes (P_j\rho_{B,k}P_j)/\tr(P_i\otimes P_j \rho)
	\end{equation}
	separable, too.
	This leads to
	\begin{equation}
		|\langle XX\rangle_{\rho_{ij}}|
		+|\langle YY\rangle_{\rho_{ij}}|
		+|\langle ZZ\rangle_{\rho_{ij}}| \le 1,
	\end{equation}
	which is equivalent to
\begin{equation}
	|\langle X^{(i)} X^{(j)}\rangle_{\rho}| + |\langle Y^{(i)} Y^{(j)}\rangle_{\rho}| + |\langle Z^{(i)} Z^{(j)}\rangle_{\rho}| \le \tr(P_i\otimes P_j \rho).
\end{equation}
Consequently,
\begin{equation}
	\sum_{i,j=1}^3 |\langle X^{(i)} X^{(j)}\rangle_{\rho}| + |\langle Y^{(i)} Y^{(j)}\rangle_{\rho}| + |\langle Z^{(i)} Z^{(j)}\rangle_{\rho}| \le \sum_{i,j=1}^3 \tr(P_i\otimes P_j \rho) = 4.
\end{equation}
\end{proof}
We first consider the state
\begin{equation}
    \rho(p) = (1-p) |\Psi_3\rangle\langle \Psi_3| + p |\Phi'\rangle\langle \Phi'|,
\end{equation}
where $|\Psi_3\rangle = (\sum_{i=1}^3 |ii\rangle)/\sqrt{3}$ and $|\Phi'\rangle = (|12\rangle + |21\rangle)/\sqrt{2}$. 
Direct calculation shows that the value of the left-hand side of Eq.~\eqref{eq:subineq} is
\begin{equation}
\frac{2}{3}(2 | 2-5 p| -5 p+8),
\end{equation}
whose minimum is $4$ and can only be achieved with $p=2/5$.

Then we consider a state that is entangled but unfaithful~\cite{Weilenmann2019EntanglementDB},
\begin{equation}
	\rho = 0.999 (0.50179 |\phi_1\rangle\langle \phi_1| + 0.49821|\phi_1\rangle\langle \phi_1|) + 0.001 \id/9,
\end{equation}
where
\begin{align}
	|\phi_1\rangle &= 0.628|11\rangle - 0.778 |22\rangle,\nonumber\\
	|\phi_2\rangle &= 0.807|01\rangle - 0.185|02\rangle  - 0.102 |10\rangle - 0.027|11\rangle + 0.011|12\rangle + 0.551|20\rangle - 0.024|21\rangle - 0.022|22\rangle.
\end{align}
Direct calculation shows that the value of the left-hand side of Eq.~\eqref{eq:subineq} is $4.8514 > 4$, i.e., a violation of the bound for separable states.
\subsection{Sample complexity in shadow tomography}\label{ssec:tomography}
\begin{theorem}
  Let ${\cal D}$ be the set of probability distributions and let $\alpha^*(G)$ and $\vartheta(G)$ be the fractional packing number~\cite{schrijver1979fractional} and the Lov\'{a}sz number~\cite{lovasz1979shannon} of graph $G$, respectively. It holds
  \begin{equation}
    \min_{w\in {\cal D}} \beta(G,w) \in [1/\alpha^*(\bar{G}), 1/\vartheta(\bar{G})],
  \end{equation}
where the lower bound is achieved whenever $G$ is $\hbar$-perfect.
\end{theorem}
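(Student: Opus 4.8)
The plan is to translate everything into the language of support functions and antiblockers, and reduce the two-sided estimate to a single inclusion chain of convex corners. First I would record the homogenization
\[
\delta \coloneqq \min_{w\in{\cal D}}\beta(G,w) \;=\; \frac{1}{\gamma}, \qquad \gamma\coloneqq\max\Bigl\{\textstyle\sum_i u_i \;:\; u\in\be(G)^{*}\Bigr\},
\]
where $\be(G)^{*}=\{u\in\mathbb{R}_+^n : \sum_i u_iv_i\le 1\ \forall v\in\be(G)\}$ is the antiblocker of the convex corner $\be(G)$ (which is a genuine convex corner by Proposition~\ref{ob:beta}). Indeed, for $u\in\be(G)^{*}\setminus\{0\}$ one has $\beta(G,u)=h_{\be(G)}(u)\le 1$, so $u/\!\sum_i u_i\in{\cal D}$ forces $\delta\le 1/\!\sum_i u_i$; conversely, if $w^{\star}\in{\cal D}$ attains $\delta$ then $h_{\be(G)}(w^{\star}/\delta)=1$, hence $w^{\star}/\delta\in\be(G)^{*}$ with $\sum_i w^{\star}_i/\delta=1/\delta$. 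Here $\delta>0$ because $e_i\in\stab(G)\subseteq\be(G)$, so $h_{\be(G)}(w)\ge\max_i w_i\ge 1/n$ on ${\cal D}$.

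Second I would establish the inclusion chain $\stab(G)\subseteq\be(G)\subseteq\th(G)$, where the first inclusion is Theorem~\ref{ob:stabbe} and the second follows from $\beta(G,w)\le\vartheta(G,w)$ for all $w\in\mathbb{R}_+^n$: since $\be(G)$ and $\th(G)$ are closed convex corners, domination of support functions on the nonnegative orthant is equivalent to set inclusion for such down-monotone bodies. Antiblocking duality (the bipolar theorem $A^{**}=A$ for convex corners, together with $A\subseteq B\Rightarrow B^{*}\subseteq A^{*}$) reverses the chain to $\th(G)^{*}\subseteq\be(G)^{*}\subseteq\stab(G)^{*}$. Now plug in the two classical identifications: $\stab(G)^{*}=\{u\ge 0 : u(K)\le 1\ \forall\ \text{clique }K\text{ of }\bar G\}$, whose support function in the direction $\mathbf 1$ is by definition the fractional packing number $\alpha^{*}(\bar G)$; and $\th(G)^{*}=\th(\bar G)$ (Gr\"otschel--Lov\'asz--Schrijver), whose support function in the direction $\mathbf 1$ is $\vartheta(\bar G)$. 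Applying the monotone functional $u\mapsto\max\{\sum_i u_i\}$ along the reversed chain yields $\vartheta(\bar G)\le\gamma\le\alpha^{*}(\bar G)$, i.e.\ $1/\alpha^{*}(\bar G)\le\delta\le 1/\vartheta(\bar G)$.

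For the tightness under $\hbar$-perfectness, Definition~\ref{basicdef} gives $\be(G)=\stab(G)$, hence $\be(G)^{*}=\stab(G)^{*}$ and $\gamma=\alpha^{*}(\bar G)$ exactly, so $\delta=1/\alpha^{*}(\bar G)$; using $\alpha^{*}(\bar G)=\chi_f(G)$ one may equivalently phrase this via the fractional chromatic number. I would also note a more self-contained route to the lower bound that avoids the theta body and the GLS identity entirely: $\beta\ge\alpha$ gives $\delta\ge\min_{w\in{\cal D}}\alpha(G,w)=\min_{w\in{\cal D}}\max_{S\ \text{stable}}w(S)$, and the minimax theorem (over the simplex in $w$ and the simplex of probability distributions over stable sets) turns the right-hand side into $1/\chi_f(G)=1/\alpha^{*}(\bar G)$; combined with $\delta\le\min_{w\in{\cal D}}\vartheta(G,w)$ this still needs $\th(G)^{*}=\th(\bar G)$ only for the upper bound.

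The main obstacle I anticipate is not depth but care with the positive-orthant bookkeeping: one must use the antiblocker (which lives in $\mathbb{R}_+^n$) rather than the full polar, verify that $\stab(G)$, $\be(G)$, $\th(G)$ are honest convex corners so that $A^{**}=A$ and inclusion-reversal apply, and confirm that support-function domination on $\mathbb{R}_+^n$ upgrades to set inclusion for these down-monotone bodies. Once those points are pinned down, the homogenization $\delta=1/\gamma$, the identifications of $\stab(G)^{*}$ and $\th(G)^{*}$, and the specialization to the $\hbar$-perfect case are all routine.
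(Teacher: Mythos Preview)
Your proposal is correct and follows essentially the same approach as the paper: both rely on the sandwich $\alpha(G,w)\le\beta(G,w)\le\vartheta(G,w)$ (equivalently $\stab(G)\subseteq\be(G)\subseteq\th(G)$), the homogenization $\min_{w\in\mathcal D}h_{A}(w)=1/\max\{\sum_i u_i:u\in A^{*}\}$, and the identifications $\max\{\sum u_i:u\in\stab(G)^{*}\}=\alpha^{*}(\bar G)$ and $\th(G)^{*}=\th(\bar G)$. The only difference is organizational---the paper first sandwiches and then homogenizes each endpoint separately, whereas you first homogenize $\beta$ and then sandwich the antiblockers---but the mathematical content is identical.
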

\begin{proof}
  It is known that
  \begin{equation}
    \beta(G,w) \in [\alpha(G,w), \vartheta(G,w)],\, \forall w \in {\cal D}.
  \end{equation}
  We continue to prove that
  \begin{align}
    \min_{w\in{\cal D}} \alpha(G,w) &= 1/\alpha^*(\bar{G}),\label{eq:alphastar}\\
    \min_{w\in{\cal D}} \vartheta(G,w) &= 1/\vartheta(\bar{G}).
  \end{align}
  For $G$ to be $\hbar$-perfect, we have $\alpha(G,w) = \beta(G,w), \forall w\ge 0$, which implies that the lower bound is exact together with Eq.~\eqref{eq:alphastar}.
  Notice that
  \begin{align}
    \min_{w\in {\cal D}} \alpha(G,w) = \min_{w\ge 0} \frac{\alpha(G,w)}{\sum_i w_i}
                                     = \min_{w\ge 0, \alpha(G,w)=1} \frac{1}{\sum_i w_i}
                                     = \frac{1}{\alpha^*(\bar{G})},
  \end{align}
  where the last line is due to the definition of $\alpha^*(\bar{G})$~\cite{schrijver1979fractional}, i.e.,
  \begin{align}
    \alpha^*(G) \coloneqq \max &\quad\sum_i w_i\nonumber\\
    \mathrm{s.t.}&\quad w\ge 0,\nonumber\\
         &\quad\sum_{i\in C} w_i \le 1, \,\forall C \in {\rm Clis },
  \end{align}
  with ${\rm Clis}$ being the set of cliques of $\bar{G}$, or equivalently, the set of independence sets of $G$.

  Similarly,
  \begin{align}
    \min_{w\in {\cal D}} \vartheta(G,w) = \min_{w\ge 0} \frac{\vartheta(G,w)}{\sum_i w_i} = \min_{w\ge 0, \vartheta(G,w)=1} \frac{1}{\sum_i w_i} = \frac{1}{\vartheta(\bar{G})},
  \end{align}
  where the last line is due to the fact that~\cite{knuth1994sandwich}
  \begin{align}
                                     & {\max_{w\ge 0, \vartheta(G,w)=1} \sum_i w_i}\nonumber\\
  =& {\max_{w\ge 0, \sum_i w_i v_i\le 1, \forall v\in {\rm TH}(G)} \sum_i w_i}\nonumber\\
                                     =& {\max_{w\in {\rm TH}(\bar{G})} \sum_i w_i}\nonumber\\
                                     =& \,\vartheta(\bar{G}),
  \end{align}
  since
  \begin{equation}
  {\rm TH}(\bar{G}) = \{w\mid w\ge 0, \sum_i w_i v_i \le 1, \forall v \in {\rm TH}(G)\}.
  \end{equation}
\end{proof}

\begin{lemma}
  $\beta(G,w)$ is a convex function of $w \ge 0$.
\end{lemma}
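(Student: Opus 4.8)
The plan is to exhibit $\beta(G,w)$ as a pointwise supremum of linear functions of $w$, from which convexity is immediate. Fix a realization $\{S_i\}$ of $G$. For each state $\rho \in \mathcal{S}(\mathcal{H})$ define $f_\rho(w) \coloneqq \sum_i w_i \langle S_i\rangle_\rho^2$. For every fixed $\rho$, the map $w \mapsto f_\rho(w)$ is linear (hence affine, hence convex) in $w$, because the coefficients $\langle S_i\rangle_\rho^2$ do not depend on $w$. By the definition \eqref{eq:beta}, $\beta(G,w) = \sup_{\rho} f_\rho(w)$, so $\beta(G,\cdot)$ is the pointwise supremum of a family of affine functions and is therefore convex on $\mathbb{R}_+^n$.

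Concretely, I would carry out the one-line verification: for $w, w' \in \mathbb{R}_+^n$ and $\lambda \in [0,1]$, note that $\lambda w + (1-\lambda)w' \in \mathbb{R}_+^n$ and
\begin{align*}
\beta\bigl(G,\lambda w + (1-\lambda)w'\bigr)
  &= \sup_\rho \bigl[\lambda f_\rho(w) + (1-\lambda) f_\rho(w')\bigr] \\
  &\le \lambda \sup_\rho f_\rho(w) + (1-\lambda)\sup_\rho f_\rho(w') \\
  &= \lambda\,\beta(G,w) + (1-\lambda)\,\beta(G,w'),
\end{align*}
using only that the supremum of a sum is at most the sum of the suprema. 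No special structure of the frustration graph is needed.

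For completeness one should record that $\beta(G,w)$ is finite (so that the inequality above is not vacuous): since each $S_i$ is Hermitian with eigenvalues $\pm 1$ we have $\langle S_i\rangle_\rho^2 \le 1$, hence $0 \le \beta(G,w) \le \sum_i w_i < \infty$ for $w \ge 0$. Alternatively, one may simply invoke Proposition~\ref{ob:beta}, which identifies $\beta(G,w) = \max_{v \in \be(G)} \sum_i w_i v_i$ with the support function of the bounded convex corner $\be(G)$; support functions of nonempty sets are automatically convex (and positively homogeneous), giving the claim directly.

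I do not anticipate a genuine obstacle here; the only point requiring a word of care is finiteness/nonemptiness of the optimization, which is handled by the $\pm 1$ spectrum of the $S_i$ (or by Proposition~\ref{ob:beta}). One could additionally remark, if desired, that $\beta(G,\cdot)$ is in fact positively homogeneous of degree one, so convexity is equivalent to subadditivity, $\beta(G,w+w') \le \beta(G,w) + \beta(G,w')$, which follows from the same supremum-splitting argument.
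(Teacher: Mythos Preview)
Your proof is correct and essentially identical to the paper's: both verify convexity directly by writing $\beta(G,\lambda w+(1-\lambda)w')=\sup_\rho[\lambda f_\rho(w)+(1-\lambda)f_\rho(w')]$ and splitting the supremum. Your additional remarks on finiteness and the support-function viewpoint are sound but go slightly beyond what the paper records.
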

\begin{proof}
  By definition, for any $w\ge 0, \tilde{w}\ge 0$ and $x\in [0,1]$, we have
  \begin{align}
   & \,\beta(G,x w + (1-x) \tilde{w})\nonumber\\ 
    =& \max_{\rho}[x \sum_i w_{i} \langle S_i\rangle_{\rho}^2 + (1-x) \sum_i \tilde{w}_{i} \langle S_i\rangle_{\rho}^2]\nonumber\\
    \le& \,x\max_{\rho} \sum_i w_{i} \langle S_i\rangle_{\rho}^2 + (1-x) \max_{\rho}\sum_i \tilde{w}_{i} \langle S_i\rangle_{\rho}^2\nonumber\\
    =& \,x \beta(G,w) + (1-x)\beta(G,\tilde{w}).
  \end{align}
\end{proof}
\begin{theorem}
Denote ${\cal D}$ the set of all probability distributions and ${\cal D}_s$ the set of the ones which has the symmetry induced by graph $G$, then
  \begin{equation}
    \min_{w\in {\cal D}} \beta(G,w) = \min_{w\in {\cal D}_s} \beta(G,w),
  \end{equation}
  Especially, when $G$ is vertex-transitive, $\min_{w\in {\cal D}} \beta(G,w) = \beta(G)/n$ with $n$ being the number of vertices of $G$.
\end{theorem}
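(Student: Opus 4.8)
The plan is to combine the convexity of $\beta(G,\cdot)$ recorded in the preceding lemma with the fact that $\beta$ is a genuine graph parameter and hence invariant under relabelling by automorphisms of $G$. Write ${\cal P}=\mathrm{Aut}(G)$ for the automorphism group of $G$, and for $w\in\mathbb{R}_+^n$ let $w_P$ denote the vector obtained by permuting the coordinates of $w$ according to $P\in{\cal P}$; then ${\cal D}_s=\{w\in{\cal D}\,|\,w_P=w\ \forall P\in{\cal P}\}$. First I would record the key symmetry: for any realization $\{S_i\}$ of $G$ and any $P\in{\cal P}$, the set $\{S_{P(i)}\}_i$ is again a realization of $G$, since an automorphism preserves the anticommutation pattern. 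By the realization-independence established in Proposition~\ref{ob:beta}, it follows that $\beta(G,w_P)=\sup_\rho\sum_i w_{P^{-1}(i)}\langle S_i\rangle_\rho^2=\sup_\rho\sum_j w_j\langle S_{P(j)}\rangle_\rho^2=\beta(G,w)$ for every $w\ge0$ and every $P\in{\cal P}$.

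Next, given any $w\in{\cal D}$, form the group average $\bar w\coloneqq\frac{1}{|{\cal P}|}\sum_{P\in{\cal P}}w_P$. This $\bar w$ is again a probability distribution and is by construction ${\cal P}$-invariant, so $\bar w\in{\cal D}_s$. Using convexity of $\beta(G,\cdot)$ together with the symmetry just noted,
\[
\beta(G,\bar w)\;\le\;\frac{1}{|{\cal P}|}\sum_{P\in{\cal P}}\beta(G,w_P)\;=\;\beta(G,w).
\]
Hence $\min_{w\in{\cal D}_s}\beta(G,w)\le\beta(G,w)$ for every $w\in{\cal D}$, so $\min_{w\in{\cal D}_s}\beta(G,w)\le\min_{w\in{\cal D}}\beta(G,w)$; the reverse inequality is immediate from ${\cal D}_s\subseteq{\cal D}$, giving the asserted equality.

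For the vertex-transitive case I would observe that if ${\cal P}$ acts transitively on the $n$ vertices, then any ${\cal P}$-invariant probability vector has all entries equal, so ${\cal D}_s=\{(1/n,\dots,1/n)\}$. Since the definition $\beta(G,w)=\sup_\rho\sum_i w_i\langle S_i\rangle_\rho^2$ is homogeneous of degree one in $w$, we get $\beta(G,(1/n,\dots,1/n))=\tfrac1n\beta(G,(1,\dots,1))=\beta(G)/n$, and combining with the first part yields $\min_{w\in{\cal D}}\beta(G,w)=\beta(G)/n$, achieved at the uniform distribution. The only step that genuinely requires care is the invariance $\beta(G,w_P)=\beta(G,w)$: this is not a symmetry of a fixed operator tuple but rests on $\beta$ depending only on the frustration graph, so I would make the appeal to Proposition~\ref{ob:beta} explicit; the remainder is routine convexity bookkeeping.
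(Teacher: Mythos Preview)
Your proposal is correct and follows essentially the same route as the paper: invoke the convexity lemma for $\beta(G,\cdot)$, use automorphism invariance $\beta(G,w_P)=\beta(G,w)$, average over ${\cal P}$ to land in ${\cal D}_s$ without increasing the value, and specialise to the vertex-transitive case where ${\cal D}_s$ collapses to the uniform distribution. Your justification of the invariance step via Proposition~\ref{ob:beta} (that $\{S_{P(i)}\}$ is again a realization of $G$) is more explicit than the paper's sketch in Section~\ref{sec:shadow}, but the underlying argument is the same.
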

For example, all cycle graphs and their complements are vertex-transitive. For a given cycle graph $C_n$ with $n$ vertices, $\beta(C_n) = \alpha(C_n) = \lfloor n/2 \rfloor$~\cite{xu2023bounding}. Thus, the sample complexity parameter $\delta$ is $\beta(C_n)/n = \lfloor n/2 \rfloor/n \approx 1/2$ when $n$ is large enough. 
For a given anti-cycle graph $\bar{C}_n$ with $n$ vertices, $2= \alpha(\bar{C}_n) < \beta(\bar{C}_n) \le \vartheta(\bar{C}_n)$, where $\vartheta(\bar{C}_n) = 1+1/\cos(\pi/n)$ when $n$ is odd and $\vartheta(\bar{C}_n) = 2$ when $n$ is even~\cite{lovasz1979shannon}. Thus, $\beta(\bar{C}_n)\approx 2$ and the sample complexity parameter $\delta$ is $\beta(\bar{C}_n)/n \approx 2/n$ when $n$ is large enough. Since the sample complexity is in $[\Omega(1/[\epsilon^2 \delta]), O(\log n/[\epsilon^2 \delta])]$ for the precision $\epsilon$, the sample complexity for the anti-cycle $\bar{C}_n$ is higher than the one for the cycle $C_n$. More explicitly, the ratio is at least $\Omega(n/\log n)$ for the same precision.

For a random graph $G(n,p)$ with $n$ vertices and $p$ is the probability of each edge~\cite{coja2005lovasz}, its complement graph is $G(n,1-p)$. Then the sample complexity for $G=G(n,p)$ is lower bounded by 
\begin{equation}
    \Omega(1/[\epsilon^2 \delta]) = \Omega(\vartheta(\bar{G})/\epsilon^2)  = \Omega(\vartheta(G(n,1-p))/\epsilon^2) = \Omega(\sqrt{n/(1-p)}/\epsilon^2),
\end{equation}
where the last equality is from Theorem~4 in Ref.~\cite{coja2005lovasz}.

\subsection{Using classical graph algorithms for bounding ground state energies}\label{ssec:ground}
For a sequence of real-valued coefficients $a_i$, let $H=\sum_i a_i A_i$ be a Hamiltonian. 
It is a central, but at the same time very difficult, task to find or estimate the smallest or the highest energy of $H$ generally. Given $\beta(G,w)$ such an estimate from the outside, this is the direction not covered by variation methods, can be established. 
We have \cite{xu2023bounding}
\begin{align}
    \label{eq:approx2_inappendix}
   \sup_\rho \langle H \rangle_\rho  &\leq \sqrt{\inf_w  \Big( \sum_i a_i^2/w_i\Big)\beta(G,w)},
\end{align}
which is equivalent to 
\begin{align}
    \label{eq:approx2__inappendix2}
   \sup_\rho \langle H \rangle_\rho  &\leq \sqrt{\inf_{w, \beta(G,w)= 1}   \sum_i a_i^2/w_i}\nonumber\\
   &= \sqrt{\inf_{w, \beta(G,w)\le 1}   \sum_i a_i^2/w_i},
\end{align}
where the last equality is from the fact that the infimum is obtained when $\beta(G,w)=1$ in the region that $\beta(G,w)\le 1$.

Here, having a $\hbar$-perfect graph again simplifies the optimization over $w$ in the above to an optimization over the stable set polytope.
To find the upper bound in Eq.~\eqref{eq:approx2_inappendix}, it is equivalent to considering the following optimization problem:
\begin{align}
    \label{eq:approx3_inappendix}
    \inf_w   &\quad\sum_i a_i^2/w_i\nonumber\\
    \mathrm{s.t.} &\quad \sum_{i\in I} w_i \le 1,\, \forall I \in {\cal I},\nonumber\\
    & \quad w_i \ge 0,
\end{align}
where ${\cal I}$ contains all independence sets of $G$.
It is clear to see that all the constraints in Eq.~\eqref{eq:approx3_inappendix} are linear. Besides, $\sum_i a_i^2/w_i$ is a convex function of $w$ since $1/w_i$ is a convex function of $w_i$ for $w_i>0$. Consequently, the programming in Eq.~\eqref{eq:approx3_inappendix} is a convex program with linear constraints, which in principle can be solved efficiently.

Here, we consider the following model as an example:
\begin{equation}
H_n =	a X_1+b Z_1 + c Y_{n} + \sum_{i=1}^{n-1}(d_i X_iX_{i+1} + e_i Z_iZ_{i+1}),
\end{equation}
with coefficients $u = (a,b,c,d_1,\ldots,d_{n-1},e_1,\ldots,e_{n-1})$.

The frustration graph of the Pauli strings in $H_n$ is the cycle graph $C_{2n+1}$, which is $h$-perfect and hence $\hbar$-perfect. Then we have
\begin{align}
	\langle H_n \rangle^2 \le \inf_{w\ge 0}\sum_i u_i^2 /w_i \alpha(C_{2n+1},w) \le n\sum_i u_i^2,
\end{align}
where the last inequality holds since we simply set $w_i=1$.
In the case that $u_i=1$, that is, all coefficients are $1$, $\langle H_n\rangle \ge -\sqrt{n(2n+1)}$. 
For $n\in \{2,3,4,5,6\}$, we calculate numerically the ground state energy and compare it with the estimation. The ratio of the estimation over the exact value is
\begin{equation}
1.02749, 1.02943, 1.01672, 1.00589, 1.00416, 1.00121
.\end{equation}
The difference between the estimation and the exact value is
\begin{equation}
0.0845941, 0.13099, 0.0986868, 0.0434194, 0.036549, 0.0123609
.\end{equation}
As $n$ increases, the ratio tends to $1$ and the difference tends to $0$.

Another model is 
\begin{equation}
	\tilde{H}_n =	a X_1+b Z_1 + c Y_{n} + \sum_{i=1}^{n-1}(d_i X_iX_{i+1} + e_i Y_iY_{i+1} + f_i Z_iZ_{i+1}),
\end{equation}
with coefficients $u = (a,b,c,d_1,\ldots,d_{n-1},e_1,\ldots,e_{n-1},f_1,\ldots,f_{n-1})$.
As confirmed by direct calculation, the corresponding frustration graph is also $h$-perfect for $n\le 5$. Hence, there is also a convex program for the estimation of its ground state energy. We use the case of $n=2$ to illustrate the idea of obtaining the weight vector $w$ for a tighter bound. The idea is to increase the components of $w$ without increasing the weighted independence number. 
Let
\begin{equation}
	\tilde{H}_2 = X_1 + Z_1 + Y_2 + \left( X_1X_2 + Y_1Y_2 + Z_1Z_2 \right).
\end{equation}
Let $G$ be the frustration graph of the Pauli strings in $\tilde{H}_2$ in order.
Then
\begin{align}
	\langle \tilde{H}_2\rangle^2 \le \inf_{w\ge 0} \sum_{i=1}^6 1/w_i \beta(G,w) \le 6 \beta(G) = 18.
    \end{align}
However, by taking $w=(2,2,1,1,1,1)$, $\beta(G,w) = \beta(G)=3$ while $\sum_{i=1}^6 1 /w_i = 5$. This provides a tighter bound $\langle \tilde{H}_2\rangle^2 \le 15$. In comparison, the exact ground state energy is $-3.722935$, while the estimated lower bound is $-\sqrt{15} \approx -3.872983$, which is better than the one $-\sqrt{18} \approx -4.242641$. 

As far as we know, there is also a numerical method based on SDP relaxations to provide lower bounds on the ground state energy~\cite{wang2024certifying}. However, such a method is feasible for middle- and large-scale models only if there are lots of symmetries in the model, which ensures that a good enough estimation can be obtained within reasonable computational resources. In comparison, the graph-theoretic method still works even if there is not enough symmetry in the model at the price of precision, due to the relaxation from the Cauchy-Schwarz inequality. Notice that the symmetry in the model can also be employed to reduce the complexity in the calculation of weighted independence numbers, and hence speed up the graph-theoretic estimation.

\subsection{Quantum encoding of the independence number}\label{ssec:alpha}
\begin{theorem}
    Let a graph with $n$ vertices be encoded in $l=1/2 \log_2(n/c)$ qubits. Assume weights normalized by $\Vert w \Vert_1=n$. Then, the maximal eigenvalue of $\tilde{H}_S^{(m)}$ approximates $\beta(G,w)$ by 
    \begin{align}\label{eq:definetti-without-proj}
        \beta(G,w) \leq \lambda_{\max}\left( \tilde{H}_S^{(m)} \right) \leq \beta(G,w) + n\sqrt{ \frac{\log(n)-\log(c)}{m}}.
    \end{align}
 That is, we obtain a linear Hamiltonian encoding of the number $\beta(G,w)$ with an additive regularized error 
    \begin{align}
        \varepsilon =\left|\lambda_{\max}(\tilde{H}_S^{(m)})/n -\beta(G,w)/n\right|
    \end{align}
    by using $L$ qubits with
    \begin{align}
        L=(m+2)l\sim  O\left(\frac{\log(n/c)^2 }{\varepsilon^2}\right).
    \end{align}
\end{theorem}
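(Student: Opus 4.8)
The plan is to prove the two bounds in \eqref{eq:definetti-without-proj} separately. The lower bound $\beta(G,w)\le\lambda_{\max}(\tilde{H}_S^{(m)})$ is immediate: by construction $\tilde{H}_S^{(m)}$ is the uniform average over ordered pairs $a\neq b$ of the two-copy operator $H_S=\sum_i w_i S_i\otimes S_i$ acting on tensor factors $a,b$, so for any pure state $\psi$ on the $l$-qubit register one has $\langle\psi^{\otimes(m+2)}|\tilde{H}_S^{(m)}|\psi^{\otimes(m+2)}\rangle=\sum_i w_i\langle S_i\rangle_\psi^2$, since every ordered pair contributes equally; taking the supremum over $\psi$ equals $\beta(G,w)$ by \eqref{eq:derivation_deFinetti_argument}.

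For the upper bound I would first exploit the permutation symmetry of $\tilde{H}_S^{(m)}$: averaging any optimiser $\sigma$ over the symmetric group on $m+2$ letters leaves $\tr[\sigma\tilde{H}_S^{(m)}]$ unchanged, so we may take $\sigma$ permutation invariant. Then all its two-body marginals coincide with a single bipartite state $\sigma_{12}$ on two $l$-qubit copies, and $\lambda_{\max}(\tilde{H}_S^{(m)})=\tr[\sigma_{12}H_S]$. Next, record the elementary fact that $\max_{\tau\in\mathrm{SEP}}\tr[\tau H_S]=\beta(G,w)$: the maximum of the linear functional $\tau\mapsto\tr[\tau H_S]$ over separable states is attained at a pure product state $\rho_A\otimes\rho_B$, and $\sum_i w_i\langle S_i\rangle_{\rho_A}\langle S_i\rangle_{\rho_B}\le(\sum_i w_i\langle S_i\rangle_{\rho_A}^2)^{1/2}(\sum_i w_i\langle S_i\rangle_{\rho_B}^2)^{1/2}\le\beta(G,w)$ by Cauchy--Schwarz, with equality at $\rho_A=\rho_B$ optimal.

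The heart of the argument is the finite de Finetti theorem under one-way LOCC measurements \cite{li2015quantum,brandao2013quantum}. Since $\sigma_{12}$ is the two-body marginal of the $(m+2)$-partite permutation-invariant $\sigma$, there is a separable state $\tau$ on the two copies indistinguishable from $\sigma_{12}$ by any one-way LOCC protocol up to error $\sqrt{2\log_2 d/m}$, where $d=2^l$ is the local dimension (the cited bound carries a \emph{logarithmic}, not linear, dimension dependence---this is precisely why we use the incoherently symmetrised $\tilde{H}_S^{(m)}$ rather than the Bose-symmetric $H_S^{(m)}$ of the main text---and any $\ln$-versus-$\log_2$ or $m$-versus-$(m+2)$ discrepancy only loosens it). To convert this into a bound on $\tr[(\sigma_{12}-\tau)H_S]$, write $H_S=\|w\|_1\sum_i p_i\,S_i\otimes S_i$ with $p_i=w_i/\|w\|_1$, and observe that $S_i\otimes S_i=P_i^+-P_i^-$ where $P_i^\pm$ are the outcome projectors of the fully local (hence one-way LOCC) protocol in which both parties measure $\{(\id\pm S_i)/2\}$ and multiply their $\pm1$ outcomes. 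Hence $|\tr[(\sigma_{12}-\tau)(S_i\otimes S_i)]|\le\sqrt{2\log_2 d/m}$ for every $i$, and summing with weights, using $\|w\|_1=n$ and $d^2=n/c$ so that $2\log_2 d=\log_2(n/c)=\log n-\log c$, gives $|\tr[(\sigma_{12}-\tau)H_S]|\le n\sqrt{(\log n-\log c)/m}$. Therefore $\lambda_{\max}(\tilde{H}_S^{(m)})=\tr[\sigma_{12}H_S]\le\tr[\tau H_S]+n\sqrt{(\log n-\log c)/m}\le\beta(G,w)+n\sqrt{(\log n-\log c)/m}$, which with the lower bound is \eqref{eq:definetti-without-proj}.

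Finally I would read off the qubit count: $\tilde{H}_S^{(m)}$ acts on $m+2$ copies of the $l$-qubit register, so $L=(m+2)l$; demanding $\varepsilon=|\lambda_{\max}(\tilde{H}_S^{(m)})/n-\beta(G,w)/n|\le\sqrt{(\log n-\log c)/m}$ forces $m\le(\log n-\log c)/\varepsilon^2=\log_2(n/c)/\varepsilon^2$, and with $l=\tfrac12\log_2(n/c)$ this yields $L=(m+2)l=O(\log(n/c)^2/\varepsilon^2)$. The main obstacle is locating and stating precisely the version of the one-way LOCC de Finetti theorem that is actually needed---error $O(\sqrt{\log_2 d/m})$ for the two-body marginal of a permutation-invariant state, phrased for distinguishability under one-way measurements---and checking that the normalisations line up so that the constant is $\sqrt{2}$ (or, failing an exact match, absorbing a universal constant into the implied $O$-constant). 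The only other point requiring care is the routine translation between ``indistinguishable by one-way LOCC measurements,'' stated for outcome distributions, and ``close expectation values of $\pm1$ observables,'' which is immediate from the two-outcome local realisation of $S_i\otimes S_i$ above.
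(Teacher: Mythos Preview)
Your proposal is correct and follows essentially the same approach as the paper: reduce via permutation invariance to the two-body marginal, invoke the one-way LOCC de Finetti theorem of \cite{li2015quantum} using that each $S_i\otimes S_i$ is a local two-outcome observable, and sum the $n$ weighted terms. The only cosmetic difference is that the paper keeps the de Finetti approximant in the explicit form $\int\sigma^{\otimes 2}\,d\mu(\sigma)$ (so $\tr[\tau H_S]\le\beta(G,w)$ is immediate), whereas you treat $\tau$ as a general separable state and close the gap with a Cauchy--Schwarz step---either route works.
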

\begin{proof}
Our system $\mathcal{H}_L$ under consideration consists of $L=(m+2)l$ qubits which we will consistently regard as a tensor product $\mathcal{H}_l^{(m+2)}$ of $(m+2)$ copies of a system $\mathcal{H}_l = \mathbb{C}^{2^l}$. We will refer to each factor of type $\mathcal{H}_l$ as a local system.  

Let $\rho_{\max}$ be a state on such $L$ qubits that attains the maximal eigenvalue of  $\tilde{H}_S^{(m)}$. 
W.l.O.g. we assume that $\rho_{\max}$ inherits the permutation invariance from  $\tilde{H}_S^{(m)}$. 
We start by decomposing the maximal eigenvalue as
\begin{equation}
\lambda_{\max}\bigl(\tilde{H}_S^{(m)}\bigr)
= \tr\Bigl(\tilde{H}_S^{(m)}\rho_{\max}\Bigr)
= \tr\Bigl(\tilde{H}_S^{(m)}\bigl(\rho_{\max}-\!\!\int \sigma^{\otimes (m+2)}\,d\mu(\sigma)\bigr)\Bigr)
  + \tr\Bigl(\tilde{H}_S^{(m)}\!\int \sigma^{\otimes (m+2)}\,d\mu(\sigma)\Bigr).
\end{equation}
Since $\tilde{H}_S^{(m)}$ is a composition of ${m+2 \choose 2}n$ Pauli strings acting only on two local factors and each Pauli string $S$ corresponds to a measurement with two outcomes, it follows that
\begin{equation}
\tr\bigl(S(\rho_{\max}-\!\!\int \sigma^{\otimes (m+2)}\,d\mu(\sigma))\bigr)
= \tr\bigl(S^{(2)}(\rho^{(2)}_{\max}-\!\!\int \sigma^{\otimes 2}\,d\mu(\sigma))\bigr)
\le \bigl\lVert\rho^{(2)}_{\max}-\!\!\int \sigma^{\otimes 2}\,d\mu(\sigma)\bigr\rVert_{\mathrm{LOCC}_1},
\end{equation}
where $\bigl\lVert \tau\bigr\rVert_{\mathrm{LOCC}_1} = \max_{\{M_i\}\in \mathrm{LOCC}_1} \bigl\lVert \sum_i \tr(M_{i}\tau)|i\rangle\langle i| \bigl\lVert_1$ with $\mathrm{LOCC}_1$ being the set of one-way LOCC measurement~\cite{li2015quantum}, 
$S^{(2)}$ denotes $S$ on its bipartite support, $\rho^{(2)}_{\max}$ denotes the bipartite reduced state on the same support, which is well-defined since $\rho$ is permutationally invariant.

Summing over the $n$ non-equivalent contributions gives
\begin{equation}
\tr\Bigl(\tilde{H}_S^{(m)}\bigl(\rho_{\max}-\!\!\int \sigma^{\otimes (m+2)}\,d\mu(\sigma)\bigr)\Bigr)
\le n\,\bigl\lVert\rho^{(2)}_{\max}-\!\!\int \sigma^{\otimes 2}\,d\mu(\sigma)\bigr\rVert_{\mathrm{LOCC}_1}.
\end{equation}
For the second term, one has
\begin{align}
\tr\Bigl(\tilde{H}_S^{(m)}\!\int \sigma^{\otimes (m+2)}\,d\mu(\sigma)\Bigr)
&= \int \tr\bigl(\tilde{H}_S^{(m)}\sigma^{\otimes (m+2)}\bigr)\,d\mu(\sigma)
= \int \tr\bigl(\tilde{H}_S\sigma^{\otimes 2}\bigr)\,d\mu(\sigma)\nonumber\\
&\leq \max_{\sigma} \tr\bigl(\tilde{H}_S\sigma^{\otimes 2}\bigr)
=\beta(G,w).
\end{align}
Combining the two inequalities, we obtain
\begin{equation}
\lambda_{\max}\bigl(\tilde{H}_S^{(m)}\bigr)
\le n\,\bigl\lVert\rho^{(2)}_{\max}-\!\!\int \sigma^{\otimes 2}\,d\mu(\sigma)\bigr\rVert_{\mathrm{LOCC}_1}
+ \beta(G,w).
\end{equation}
By the finite de Finetti theorem for one-way LOCC measurements~\cite[Thm.~1]{li2015quantum}, the $\mathrm{LOCC}_1$ distance satisfies
\begin{equation}
\bigl\lVert\rho^{(2)}_{\max}-\!\!\int \sigma^{\otimes 2}\,d\mu(\sigma)\bigr\rVert_{\mathrm{LOCC}_1}
\le \sqrt{\frac{2\ln d}{m}}
= \sqrt{\frac{2\ln(\sqrt{n/c})}{m}}
= \sqrt{\frac{\ln(n/c)}{m}},
\end{equation}
and hence
\begin{equation}
\lambda_{\max}\bigl(\tilde{H}_S^{(m)}\bigr)
\le \beta(G,w) + n\sqrt{\frac{\ln(n/c)}{m}}.
\end{equation}

On the other hand, we have 
\begin{align}
   \beta(G,w) = \max_{\sigma} \tr\bigl(\tilde{H}_S\sigma^{\otimes 2}\bigr)= \max{\sigma} \tr\bigl(\tilde{H}_S^{(m)} \sigma^{\otimes (m+2) }\bigr)
&\leq \lambda_{\max} \left( \tilde{H}_S^{(m)} \right) 
\end{align}
which completes the proof of \eqref{eq:definetti-without-proj}.
Setting 
\begin{equation}
m \le \frac{\ln(n/c)}{\epsilon^2}.
\end{equation}
in \eqref{eq:definetti-without-proj} yields 
\begin{align}
    \lambda_{\max}\bigl(\tilde{H}_S^{(m)}\bigr) - \beta(G,w) \leq n \epsilon.
\end{align}

Since $l = \log d = \log(n/c)$, the total length $L = (m+2)l$ satisfies
\begin{equation}
L = (m+2)l = O\!\left(\frac{\log^2(n/c)}{\epsilon^2}\right),
\end{equation}
which completes the proof.
\end{proof}

\bibliography{ref.bib}

\end{document}